 \def\bm#1{\boldsymbol{#1}}
\title{Conserved quantities and Hamiltonization of nonholonomic systems}
\author{{\sc{Paula Balseiro}\thanks{
         Universidade Federal Fluminense, Instituto de Matem\'atica e Estat\'\i stica,  Rua Prof. Marcos Waldemar de Freitas Reis S/N (Campus do Gragoat\'a), CEP 24210-201, Niteroi, Rio de Janeiro, Brazil. \newline{\texttt{E-mail: pbalseiro@vm.uff.br}}}} \ \  and \
{\sc{Luis P. Yapu}\thanks{
        Universidade Federal Fluminense, Instituto de Matem\'atica e Estat\'\i stica, Rua Prof. Marcos Waldemar de Freitas Reis S/N (Campus do Gragoat\'a), CEP 24210-201,  Niteroi, Rio de Janeiro, Brazil. \newline{\texttt{E-mail: luis.yapu@gmail.com}}}} }
\theoremstyle{plain}
\newtheorem{theorem}{Theorem}[section]
\newtheorem{lemma}[theorem]{Lemma}
\newtheorem{proposition}[theorem]{Proposition}
\newtheorem{corollary}[theorem]{Corollary}
\newtheorem{remarkth}[theorem]{Remark}
\theoremstyle{definition}
\newtheorem{definition}[theorem]{Definition}
\newenvironment{remark}{\begin{remarkth}\upshape}{\hfill$\diamond$\end{remarkth}}
\newcommand{\g}{\mathfrak{g}}
\def\W{\mathcal{W}}
\def\M{\mathcal{M}}
\def\V{\mathcal{V}}
\def\S{\mathcal{S}}
\def\C{\mathcal{C}}
\def\Ham{\mathcal{H}}
\def\R{\mathbb{R}}
\def\red{{\mbox{\tiny{red}}}}
\def\nh{{\mbox{\tiny{nh}}}}
\def\B{{\mbox{\tiny{$B$}}}}
\def\subW{{\mbox{\tiny{$\W$}}}}
\def\subC{{\mbox{\tiny{$\C$}}}}
\def\subS{{\mbox{\tiny{$\S$}}}}
\def\subM{{\mbox{\tiny{$\M$}}}}
\def\vecOm{\boldsymbol{\Omega}}
\def\I{\mathbb{I}}
\def\a{\alpha}
\def\vecgamma{\boldsymbol{\gamma}}
\def\vecalpha{\boldsymbol{\alpha}}
\def\vecbeta{\boldsymbol{\beta}}
\date{}
\begin{document}
\maketitle

\begin{abstract}
%In this paper we study hamiltonization of nonholonomic systems from a geometric view-point.  We find a globally defined instrinsic 2-form so that the gauge transformation of the so-called nonholonomic bracket generates a new bracket in the reduced space describing the dynamics and with an almost symplectic foliation.  In order to define such a 2-form we use the symmetries of the system and the existence of certain type  of first integrals, and in suitable coordinates it coincides with the 2-form locally defined in \cite{LGN-JM-16}.    To illustrate the theory, we recall some known examples and  we also present the hamiltonization of a homogeneous ball rolling without sliding in the interior side of a convex surface of revolution. %Following our theory, we show that it is {\it hamiltonizable} after a reduction process by a Lie group.  Moreover, we compute explicitly the suitable 2-form inducing a {\it dynamical gauge transformation} that generates a reduced bracket describing the reduced dynamics with the aforementioned properties. In particular, since the reduced bracket is rank-two on a dense set of the reduced phase space, then, it is Poisson, i.e., it has a symplectic foliation determined by the level sets of the two {\it horizontal gauge momenta}.
%The existence of such a Poisson bivector was already known \cite{BKM-02}, \cite{FGS-05}, \cite{Ramos-04}, but not by a reduction process.

%*****

This paper studies hamiltonization of nonholonomic systems using geometric tools.  By making use of symmetries and suitable first integrals of the system,  we explicitly define a global 2-form for which the gauge transformed nonholonomic bracket gives rise to a new bracket on the reduced space codifying the nonholonomic dynamics and carrying an almost symplectic foliation (determined by the common level sets of the first integrals). 
In appropriate coordinates, this 2-form is shown to agree with the one previously introduced locally in \cite{LGN-JM-16}. We use our  coordinate-free viewpoint to study various geometric features of the reduced brackets. We apply our formulas to obtain a new geometric proof of the hamiltonization of a homogeneous ball rolling without sliding in the interior side of a convex surface of revolution using our formulas.

%In \cite{LGN-JM-16}, this 2-form was defined locally in order to prove that the first integrals become Casimirs of the reduced bracket. Here we show that, in appropriate coordinates both 2-forms coincide, however the intrinsic version of this 2-form, defined independently of \cite{LGN-JM-16}, allows us to study important geometric features of the reduced bracket.  Second,  we present the hamiltonization of a homogeneous ball rolling without sliding in the interior side of a convex surface of revolution via a gauge transformation using our formulas.

%******

%In \cite{LGN-JM-16}, the authors defined, in local coordinates, a 2-form so that the gauge transformation of the so-called nonholonomic bracket generates a new bracket in the reduced space describing the dynamics and with the first integrals of the system being Casimirs.  In this paper,  by making use of the symmetries and these first integrals of the system,  we define a coordinate free 2-form, that in the appropriate coordinates, coincides with the one in \cite{LGN-JM-16}.  The intrinsic version of this 2-form allows us to prove that the reduced bracket has an integrable characteristic distribution and to characterize the almost symplectic foliation, observing also that it is reduced bracket arises as a gauge transformation of a Poisson bracket.  

\end{abstract}

\tableofcontents

\section{Introduction} \label{S:Intro}

%\subsection{Overview of results}

It is well known that nonholonomic systems are not hamiltonian; instead, they are geometrically described by
%, and their geometric description is in terms of 
{\em almost} Poisson structures, the so-called {\it nonholonomic brackets}  \cite{IdLMM, Marle, SchaftMaschke1994}.
The {\it hamiltonization} problem studies whether a nonholonomic system admits a hamiltonian formulation after a reduction by symmetries, and much work has been done on this problem in recent years, see e.g.
\cite{Bal-16, PL2011,  BorMa2001, BorMa2002a, BorMa2002b,EhlersKoiller, FedorovJova-2004, LGN-JM-16, Jova-2003, Jova-2010, Kozlov-2002, Veselov-a-1988}.   The possibility of writing the reduced equations of motion in hamiltonian form is
central in the study  of various aspects of nonholonomic systems, such as integrability, Hamilton-Jacobi theory and even numerical methods (e.g. variational integrators), see \cite{Bol-Bor-Ma-11, Bol-Kil-Kaz-14, Cortes-2002, Cortes-Martinez-01, FGS-05, Hall-Leok-2015, Hall-Leok-2017, dL-MdD-SM-04, OhsawaBloch-09, Oscar-11}.

In the approach developed in \cite{PL2011,Naranjo2008}, the hamiltonization problem is translated into finding a suitable 2-form $B$ that is used to modify the nonholonomic bracket in a way that preserves its dynamical properties, while the corresponding reduced bracket is Poisson, or has at least an underlying almost symplectic foliation. In a recent paper \cite{LGN-JM-16}, the authors proved the existence of a 2-form $B$  through a local construction, so that certain types of first integrals become Casimirs of the modified reduced bracket. This article has two main goals. First, building on \cite{Bal-14} and under the dimension assumption \eqref{dimension_assumption}, we provide a more intrinsic and coordinate-free viewpoint to this 2-form,  proving  that the corresponding reduced bracket is a simple modification (or, more precisely, a ``gauge transformation'') of a genuine Poisson bracket with leaves determined by the level set of the first integrals. Our proofs are independent of those in \cite{LGN-JM-16}, with local formulas being recovered once appropriate coordinates are chosen. Second, we use our construction to give a new geometric proof of the hamiltonization of the problem of a ball rolling on a surface of revolution \cite{BKM-02, FGS-05, Hermans-95, Ramos-04, Routh-55, Zenkov-95}.

We now pass to a more detailed description of our results.  We start by recalling our geometric framework.

 %In this article, we give an intrinsic, coordinate free,  formulation the  2-form generating a bracket on the reduced space  admitting an almost symplectic foliation and showing that this bracket is a simple modification of a truly Poisson bracket.  Our formulation allows to  clarify the geometric view-point of the hamiltonization procedure and it is illustrated in several examples.   % The modification of the nonholonomic bracket by a 2-form  has been used to treat many examples \cite{} but now we develop the tools to explain them as well as to treat new examples such as the problem of a ball rolling on a surface of revolution.
% The proofs are independent of the ones presented in \cite{LGN-JM-16} recovering their local formulation when choosing the appropriate coordinates.

%\subsection{Nonholonomic system and hamiltonization}

\noindent{\bf Nonholonomic systems and hamiltonization}.
Nonholonomic systems are mechanical systems with constraints in the velocities defining a nonintegrable distribution on a configuration manifold $Q$. Using the Legendre transform, we obtain a constraint submanifold $\M \subset T^*Q$ which inherits an almost Poisson bracket $ \{ \cdot , \cdot \}_\nh$, the {\it nonholonomic bracket}.
The nonholonomic dynamics is governed by the vector field $X_\nh = \{ \cdot , \Ham_\subM \}_\nh$ on $\M$, where $\Ham_\subM$ is the associated hamiltonian function (see Section~\ref{Sec:geometric_approach_nh} for details).
When the system admits symmetries given by a Lie group $G$, 
the reduced dynamics is given by the projection of $X_\nh$  to the quotient space $\M/G$, denoted by
$X_\red$. The bracket $\{\cdot, \cdot\}_\nh$ and hamiltonian $\Ham_\subM$ also descend to the quotient, giving rise to an almost Poisson manifold  $(\M/G, \{\cdot, \cdot\}_\red)$  and hamiltonian $\Ham_\red$ so that
%Also the geometric structure $(\M, \{\cdot, \cdot\}_\nh)$ descends inducing an almost Poisson manifold $(\M/G, \{\cdot, \cdot\}_\red)$ describing the reduced dynamics:
\begin{equation}
\label{reduced_dynamics_intro}
X_\red = \{ \cdot, \Ham_\red \}_\red.
\end{equation}
When the action is not free, we endow the quotient $\M/G$ with a differential structure \cite{Cushman-Bates-15}.
%where $\Ham_\red$ is the induced reduced hamiltonian function on $\M/G$.   

While it is known that the nonholonomic bracket is never a Poisson bracket, since it has a nonintegrable characteristic distribution, the reduced bracket $\{\cdot, \cdot\}_\red$ may or may not be Poisson, depending on how the ``Jacobiator'' of $\{\cdot, \cdot\}_\nh$ --the cyclic sum that vanishes when the Jacobi identity holds-- interacts with the symmetries, see \cite{Bal-14}.
But even when $\{\cdot, \cdot\}_\red$ is not Poisson, it has been observed in many examples that 
there could be other brackets on $\M/G$ relating $X_\red$ and $\Ham_\red$ as in \eqref{reduced_dynamics_intro}, and 
which are Poisson (or close to being Poisson, e.g. conformally Poisson, or {\it twisted Poisson} \cite{KlimStro-2002,SeveraWeinstein}), see \cite{Bal-14, Bal-16,PL2011,BorMa2001,BorMa2002a,Naranjo2008,LGN-JM-16,Ramos-04}. 
The issue is then finding systematic ways to produce such brackets on $\M/G$.

%The key point of a nonholonomic system is that the nonholonomic bracket $\{\cdot, \cdot\}_\nh$ is not a Poisson bracket since the {\it characteristic distribution}, i.e., the distribution generated by the hamiltonian vector fields, is nonintegrable.  On the other hand, the reduced bracket $\{\cdot, \cdot\}_\red$ might be Poisson or not, depending on the relation between the failure of the Jacobi identity and the symmetries, \cite{Bal-14}.% In particular, it was studied in \cite{Bal-14} that the failure of the Jacobi identity of $\{\cdot, \cdot\}_\red$ depends on the 2-form given $\langle J,\mathcal{K}_\subW\rangle$ defined as the pairing between  the momemtum map $J:\M\to \g^*$ and the so-called $\W$-curvarture $\mathcal{K}_\subW$ (a $\g$-valued 2-form).

The technique of using a 2-form $B$ to modify the nonholonomic bracket and produce new reduced brackets 
 first appeared in \cite{Naranjo2008}, and it was later formalized in \cite{PL2011} using the concept of {\it gauge transformation} by $2$-forms from \cite{SeveraWeinstein}. In the context of hamiltonization, one is interested in {\it dynamical gauge transformations}, which have the additional property that the modified bracket $\{\cdot, \cdot\}_\B$ still describes the nonholonomic dynamics, in the sense that $X_\nh = \{\cdot, \Ham_\subM\}_\B$, see \cite{PL2011}. In \cite{Bal-14,Bal-16,Bal-Fer,GNThesis,Naranjo2008}, and specially in \cite{LGN-JM-16}, it has been observed
that the existence of such a 2-form $B$ producing Poisson-type brackets on $\M/G$ is related to the presence of conserved quantities of the system called {\it horizontal gauge momenta} \cite{BGM-96,FGS-08,FGS-2012}.

% In order to apply this concept to our hamiltonization problem, in \cite{PL2011} was defined a {\it dynamical gauge transformation}, that is a gauge transformation generating a bracket $\{\cdot, \cdot\}_\B$ that still describes the dynamics: $X_\nh = \{\cdot, \Ham_\subM\}_\B$.

%\subsection{Main results}

\noindent{\bf Main result.}
In this paper, under the presence of suitable first integrals (horizonal gauge momenta), we present a coordinate-free formulation of a 2-form $B$ which defines a reduced bracket $\{\cdot, \cdot\}_\red^\B$ that describes the reduced nonholonomic dynamics and
admits an almost symplectic foliation, with leaves given by the common level sets of the first integrals; see Theorem \ref{T:GlobalB}. As we observe, these leaves actually form the symplectic foliation of a genuine Poisson bracket on $\M/G$, and $\{\cdot, \cdot\}_\red^\B$ arises as a gauge transformation of it.

%so that the resulting bracket $\{\cdot, \cdot\}_\B$ induces on the quotient space $\M/G$ a reduced bracket $\{\cdot, \cdot\}_\red^\B$ describing the dynamics but with an integrable characteristic distribution given by the common level sets of these first integrals. Moreover, we study the almost symplectic foliation and the associated gauge related Poisson bracket.

The coordinate-free formulation of the 2-form $B$ relies on the choice of a suitable distribution $W$ complementing the constraint distribution $D$ in $TQ$ (as in \cite{Bal-14}) as well as a principal connection whose horizontal distribution lies in $D$. %Another key ingredient is the momentum   $J: \M \to \g^*$. 
The 2-form $B$ is then written in terms of the 2-form $\langle J,\mathcal{K}_\subW\rangle$, already defined in \cite{Bal-14} to measure the failure of the Jacobi identity of the nonholonomic bracket (see Section \ref{section_splittings}), the first integrals,  the principal curvature and the kinetic energy metric; see \eqref{Eq:GlobalB}. We also verify that $B$ agrees with the local formulas of \cite{LGN-JM-16} in suitable coordinates.

It is insightful to write the 2-form $B$ as a sum
$$
B = B_1 + \mathcal{B},
$$
with each term playing a different role. First,
the gauge transformation of the nonholonomic bracket by $B_1$ produces, in the reduced space, a genuine Poisson bracket $\{\cdot , \cdot\}_\red^1$ whose symplectic leaves are the common level sets of the first integrals. However, this 2-form $B_1$ is not necessarily {\it dynamical}, that is, the induced bracket might not describe the nonholonomic dynamics as in \eqref{reduced_dynamics_intro}.  This is fixed by a second gauge transformation by $\mathcal{B}$; moreover, $\mathcal{B}$ is basic, so it induces
a 2-form $\bar{\mathcal B}$ on the quotient space $\M/G$ which gauge relates $\{ \cdot , \cdot \}_\red^1$ and $\{ \cdot , \cdot \}_\red^\B$. In particular, these brackets have the same characteristic distributions, their leafwise 2-forms differ by $\bar{\mathcal B}$, and $\{ \cdot , \cdot \}_\red^\B$ is a twisted Poisson bracket by the 3-form $(-d\bar{\mathcal B})$.
%, showing that $\bar{\mathcal B}$ is the analogous 2-form used in \cite{BS-93,Koiller1992} and plays the same role.

The following diagram illustrates these two steps:
\begin{equation}\label{Diagram}
   \xymatrix{ (\M, \{\cdot , \cdot\}_\nh, \Ham_\subM)  \ar@/^2pc/[rr]^{B_1+\mathcal{B} \mbox{\scriptsize{ (dynamical gauge)}} }  \ar[d]^{\rho}  \ar[r]^{B_1}  & (\M, \{\cdot , \cdot\}_1)  \ar[d]^{\rho} \ar[r]^{\mathcal{B}} & (\M, \{\cdot , \cdot\}_\B, \Ham_\subM)  \ar[d]^{\rho} \\
               (\M/G, \{\cdot , \cdot\}_\red, \Ham_\red)  & (\M/G, \{\cdot , \cdot\}_\red^1) \ar[r]^{\bar{\mathcal{B}}} & (\M/G, \{\cdot , \cdot\}_\red^\B, \Ham_\red) }
\end{equation}
In this diagram, we omit the hamiltonian functions $\Ham_\subM$ and $\Ham_\red$ in the middle column 
to indicate that these brackets do not necessarily describe the dynamics. If $\textup{rank}(TQ) - \textup{rank}(V) =1$ then  $\mathcal{B} = 0$ and thus we need only one step to hamiltonize the problem: $B= B_1$ and the resulting Poisson bracket $\{\cdot ,\cdot\}_\red^1$ (with 2-dimensional symplectic leaves) describes the dynamics.

%On the other hand, the left hand side of the diagram does not commute while the right hand side of it does.

Following our construction, we revisit three examples that were known to be hamiltonizable via gauge transformations by 2-forms.  We show how the previous diagram explains the hamiltonization procedure in each situation, explaining why we get a twisted Poisson bracket in the cases of the Chaplygin ball \cite{Naranjo2008,PL2011,Bal-14} and the snakeboard \cite{Bal-14,Bal-Fer}, and a Poisson bracket in the case of the solids of revolution rolling on a plane \cite{Bal-16,LGN-JM-16}. We then present a new example of hamiltonization via gauge transformations: a homogeneous ball rolling without sliding on a convex surface of revolution \cite{BKM-02, FGS-05, Hermans-95, Ramos-04, Routh-55, Zenkov-95}, showing that the dynamics is described by the Poisson bracket $\{\cdot ,\cdot\}_\red^1$.

%Our theory explains also the hamiltonization of other examples such as solids of revolution, ball on a cylinder as well as the two mentioned above.

%We remark that our construction could also shed light on the search of conformal factors.  The characterization of the almost symplectic foliations allows us to apply the  Chaplygin's Reducing Multiplier Theorem on each leaf and then, under suitable regularity conditions, to extend the conformal factor to the entire foliation, see \cite{Chaplygin, Bal-Fer}.
%Finally, this intrinsic formulation also clarifies the role played by the momentum map and, in a future, we hope to extend this theory for the case where the horizontal gauge momenta are not necessarily $G$-invariant.

%\subsection{Outline of the paper}

\noindent{\bf Outline of the paper}.
In Section \ref{Sec:geometric_approach_nh} we recall the geometric set-up  of nonholonomic systems with symmetries.
In Section \ref{Sec:Gauge_transformations_conserved_quantities}, we define the 2-form $B$ and state the main results in Theorem \ref{T:GlobalB}. The proofs are presented in two steps (Prop.~\ref{L:B_1} and Prop.~\ref{Prop:gauge2}), following Diagram \eqref{Diagram}.   
In Section \ref{ExamplesRevisited} we illustrate the theory revisiting some examples. 
Finally, in Section \ref{Example_BallOnSurface}, we present the hamiltonization of the ball rolling on a surface of revolution.

%We show that the first integrals $J_1$ and $J_2$ of our system are $G$-invariant horizontal gauge momenta, and compute the $2$-form $B$ giving the gauge transformation of the nonholonomic bracket whose reduction by the symmetry group is Poisson. We have chosen to present the results in Sections 3 and 4 in a way which link the theoretical results with the application to our mechanical example.
%Therefore we present first the theory from the literature and the new results and then we apply them to the mechanical example of the homogeneous ball on a convex surface of revolution.

\noindent {\bf Acknowledgements:} The authors would like to thank to Nicola Sansonetto for useful conversations and to Luis Garcia-Naranjo for his comments on this manuscript. P.B. thanks CNPq (Brasil) and L.P.Y. thanks Faperj (Brasil) for financial support. % P.B. would like to thank CNPq (Brazil) for financial support and Nicola Sansonetto for useful conversations. L.P.Y. wishes to thank Faperj (Brazil) for financial support and Henrique Bursztyn and Luis Garc\'{i}a-Naranjo for motivating lectures and conversations.

%%%%%%%%%%%%%%%%%%%%%%%%%%%%%%%%%%%%%%%%%%%%%%%%%%%%%%%%%%%%%%%%%%%%%%%%
%%%%%%%%%%%%%%%%%%%%%%%%%%%%%%%%%%%%%%%%%%%%%%%%%%%%%%%%%%%%%%%%%%%%%%%
%   section 2 model of the ball  
%%%%%%%%%%%%%%%%%%%%%%%%%%%%%%%%%%%%%%%%%%%%%%%%%%%%%%%%%%%%%%%%%%%%%%%%
%%%%%%%%%%%%%%%%%%%%%%%%%%%%%%%%%%%%%%%%%%%%%%%%%%%%%%%%%%%%%%%%%%%%%%%%%%

%%%%%%%%%%%%%%%%%%%%%%%%%%%%%%%%%%%%%%%%%%%%%%%%%%%%%%%%%%%%%%%%%%%%%%%%
%%%%%%%%%%%%%%%%%%%%%%%%%%%%%%%%%%%%%%%%%%%%%%%%%%%%%%%%%%%%%%%%%%%%%%%
%   section 2 model of the ball  
%%%%%%%%%%%%%%%%%%%%%%%%%%%%%%%%%%%%%%%%%%%%%%%%%%%%%%%%%%%%%%%%%%%%%%%%
%%%%%%%%%%%%%%%%%%%%%%%%%%%%%%%%%%%%%%%%%%%%%%%%%%%%%%%%%%%%%%%%%%%%%%%%%%

\section{The geometric approach to nonholonomic systems with symmetries}
\label{Sec:geometric_approach_nh}

In this section we present the geometric framework to write the nonholonomic equations of motion before and after reduction. We follow the previous works in \cite{Bal-14, PL2011, BS-93, BKMM-96, Naranjo2008, IdLMM, SchaftMaschke1994}.

\subsection{The nonholonomic bracket}
\label{Sec:Preliminaries_Nonholonomic}

Consider a mechanical system on the $n$-dimensional configuration manifold $Q$ with a Lagrangian $L:TQ\to \R$ of mechanical type, i.e., $L$ is of the form $L = \frac{1}{2} \kappa - \tau^*_{TQ} U$, where $\kappa$ is the kinetic energy metric, $U: Q \rightarrow \mathbb{R}$ the potential energy and $\tau_{TQ}:TQ\to Q$ the canonical projection.

Let $D$ be a (constant rank) non-integrable distribution on $Q$ representing the nonholonomic constraints, that is, at the configuration point $q \in Q$ the permitted velocities belong to a linear subspace $D_q$ of $T_qQ$.
%Let us assume that the system admits $k$ linear constraints in the velocities --which in the case of rolling bodies are the non-sliding conditions (Sec.\ref{ExamplesRevisited} and \ref{Example_BallOnSurface})-- that are geometrically represented by a (constant rank) non-integrable distribution $D$ on $Q$, that is, at the configuration point $q \in Q$, the permitted velocities belong to a linear subspace $D_q$ of $T_qQ$. In other words, the constraints are also represented by 1-forms $\{\epsilon^1,...,\epsilon^k\}$ on $Q$, called the {\it constraint 1-forms} and $D$ is viewed as the annihilator of these 1-forms. 
Under the assumption of a mechanical-type Lagrangian, the Legendre transform $Leg = \kappa^\flat:TQ \to T^*Q$ is a global diffeomorphism, where $\kappa^\flat (X)(Y) = \kappa (X, Y)$ for $X, Y\in \mathfrak{X}(Q)$. Hence, following \cite{BS-93}, the distribution $D$ on $Q$ induces a submanifold $\M$ of $T^*Q$,
\begin{equation}
\label{definition_M}
\M := Leg(D) \subset T^*Q,
\end{equation}
called the \textit{constraint manifold}. Since $Leg:TQ\to T^*Q$ is linear on the fibers, $\M\to Q$ is also a subbundle of $\tau_{T^*Q}: T^*Q \rightarrow Q$ of rank $n-k$. %Note that, as a manifold, $\M$ is of dimension $2n-k$.  
Let us denote by $\iota_\subM : \M \hookrightarrow T^*Q$ the inclusion and by $\tau_\subM : \M \rightarrow Q$ the canonical projection.

The distribution $D$ on $Q$ also induces a non-integrable (constant rank) distribution $\C$ on $\M$ given, at a point $m \in \M$, by
\begin{equation}
	\label{definition_C}
	\C_m = \{ v_m \in T_m \M \  | \ T \tau_\subM(v_m) \in D_{\tau_\subM(m)} \}.
\end{equation} 
% Note that each $\C_m$ is a $2n-2k$ dimensional vector space, i.e. the rank of $\C$ is $2(n-k)$. 

We denote by $\Omega_\subM$ the pull back to $\M$ of the canonical $2$-form $\Omega_Q$ in $T^*Q$ and by $\Omega_\subC$ the point-wise restriction of $\Omega_\subM$ to $\C$, i.e.
\begin{equation}
\label{def:Omega_M_and_Omega_C}
\Omega_\subM := \iota^*_\subM \Omega_Q \quad \text{and} \quad\Omega_\subC:= \Omega_\subM |_\C. 
\end{equation}
Since the $2$-section $\Omega_\subC$ is nondegenerate (see \cite{BS-93}), it induces the so-called {\it nonholonomic bracket} $\{\cdot, \cdot\}_\nh$ on $\M$ (see \cite{IdLMM, Marle, SchaftMaschke1994}) given, at each $f,g \in C^\infty(\M)$, by
\begin{equation}
\label{definition_nh_bracket}
 \{f , g\}_\nh = - X_f(g),% \quad \mbox{where} \quad X_f \in \Gamma(\C) \quad \mbox{is such that} \quad \mathbf{i}_{X_f} \Omega_\subC = df |_\C,
\end{equation}
where $X_f$ is the (unique) vector field on $\M$ taking values in $\C$  so that $\mathbf{i}_{X_f} \Omega_\subC = df |_\C$.  In particular, $X_f$ is a section of the bundle $\C\to \M$ and we denote it by $X_f\in \Gamma(\C)$.  The nonholonomic bracket is an almost Poisson bracket, that is, it is bilinear, skew-symmetric, it verifies the Leibniz property but does not necessarily satisfy the Jacobi identity \cite{CdLMD}. Moreover, the distribution on $\M$ generated by the hamiltonian vector fields $X_f$ --called {\it characteristic distribution}-- is the non-integrable distribution $\C$ in \eqref{definition_C}. % of constant rank $2(n-k)$.

The nonholonomic bracket defines a bivector field $\pi_\nh$ on $\M$ by: 
\begin{equation}
\label{definition_pi_nh}
\pi_\nh(df,dg) = \{f,g\}_\nh, \qquad \mbox{for } f,g\in C^\infty(\M).
\end{equation}
We denote by $\pi_\nh^\sharp:T^*\M \to T\M$ the map given by $\beta(\pi_\nh^\sharp(\alpha) )= \pi_\nh(\alpha,\beta)$ for each $\alpha,\beta \in T^*\M$.  

The dynamics of the nonholonomic system is given by the integral curves of the {\it nonholonomic vector field} $X_\nh$ on $\M$, that takes values on $\C$, and is defined by 
\begin{equation}
\label{nonholonomic_dynamics}
{\bf i}_{X_\nh} \Omega_\subC = d \Ham_\subM |_\C ,
\end{equation}
where the function $\Ham_\subM \in C^\infty (\M)$ is the restriction to $\M$ of the Hamiltonian function $H:T^*Q\to \R$ induced by the Lagrangian $L:TQ\to \R$, \cite{BS-93}. Equivalently, the vector field $X_\nh$ is given by 
$$
X_\nh =  - \pi_{\nh}^\sharp (d \Ham_\subM) = \{\cdot , \Ham_\subM\}_\nh.
$$

We say that a nonholonomic system is {\it described by} the triple $(\M, \pi_\nh, \Ham_\subM)$, or that the nonholonomic dynamics is {\it described by} the bivector field $\pi_\nh$.

\subsection{Symmetries and reduction for proper actions}
\label{symmetries_reduction_proper_actions}

Consider a $m$-dimensional connected Lie group $G$ acting {\it properly} on the manifold $Q$ such that the tangent lift of the action leaves invariant the Lagrangian $L$ and the distribution $D$. As a consequence, the corresponding cotangent lift of the action leaves invariant the constraint manifold $\M \subset T^*Q$ and the restricted hamiltonian $\Ham_\subM$. Considering the $G$-action on $\M$, the reduced nonholonomic dynamics is determined by the integral curves of the vector field $X_\red$ on $\M/G$ given by $X_\red = T\rho (X_\nh)$ where $\rho:\M \to \M/G$ is the orbit projection.  In this section, we study the geometry behind this reduction process emphasizing the fact that the $G$-action is not necessarily free.   %the bivector $\pi_\nh$ and the restricted hamiltonian $\Ham_\subM$. This action is called a $G$-{\it symmetry} of the nonholonomic system $( \M , \pi_\nh , \Ham_\subM )$ and it induces the reduced dynamics on the quotient space $\M / G$.
In particular, the quotient space $\M / G$ is not necessarily a smooth manifold but it can be understood as a {\it stratified differential space}, see for instance \cite{Cushman-Bates-15, CDS-2010, Sniatycki-13}. Indeed, firstly, the quotient space $ \M / G$ is a topological space with the quotient topology and the properness of the action implies that $\M / G$ is Hausdorff. Secondly, the space $ \M / G$ can be endowed with a differential space structure by declaring the ring of smooth functions on $\M / G$ as being the ring of smooth $G$-invariant functions on $\M$, denoted by $C^\infty(\M)^G$. That is, 
%if we denote by $\rho : \M \rightarrow \M /G$ the orbit projection, then 
a smooth function $\bar{f} \in C^\infty(\M /G)$ is identified with the corresponding $G$-invariant function $f$ on $\M$ such that $\rho^* \bar{f} = f$.

The $G$-action on $\M$ leaves also the nonholonomic bracket $\{ \cdot, \cdot \}_\nh$ invariant and thus we say that the nonholonomic system $( \M , \pi_\nh , \Ham_\subM )$ admits a $G$-{\it symmetry}. Hence, the differential space inherits an almost Poisson structure given, at each $\bar{f}, \bar{g} \in C^\infty(\M / G)$, by
   \begin{equation}
   \label{definition_reduced_bracket}
      \{ \bar{f} , \bar{g} \}_\red \circ \rho = \{ f,g \}_\nh,
   \end{equation}
where $f= \rho^* \bar{f}$ and $g= \rho^* \bar{g}$ belong to $C^\infty(\M)^G$.

Since the Hamiltonian $\Ham_\subM$ is $G$-invariant, the reduced bracket $\{ \cdot , \cdot \}_\red$ describes the reduced dynamics:   
\begin{equation}\label{Eq:ReducedDynamics}
X_\red = \{ \cdot , \Ham_\red \}_\red,
\end{equation}
%\in \mathfrak{X}(\M / G)$  
where $\Ham_\red : \M / G \rightarrow \R$ is the reduced Hamiltonian. % (see  \cite{Bal-16, BS-15}).

\begin{remark}
For a proper $G$-action, the space $\M / G$ has more properties in addition of being a differential space. 
First, it is a {\it subcartesian space}, meaning that it is a Hausdorff differential space, locally diffeomorphic (as a differential space) to an open subset of the Cartesian space $\R^n$. On the other hand $\M / G$ is a {\it stratified space} given by the {\it orbit type stratification} associated to the $G$-action. 
%Recall that a stratification of a subcartesian space $S$ is a partition of $S$ on a locally finite family of locally closed manifolds called {\it strata} such that, if $M$, $N$ are strata and $M \cap \bar{N} \neq \emptyset$, then either $M = N$ or $M \subset \bar{N}-N$, where $\bar{N}$ denotes the topological closure of $N$.
A derivation $X$ on a subcartesian space is called a {\it vector field} if the unique maximal integral curve passing by a given point induces a local one-parameter group of local diffeomorphisms.
%For families of vector fields, Sniatycki \cite{Sniatycki-03} showed an analogue of Stefan-Sussmann orbit theorem for subcartesian spaces, and showed in particular
Sniatycki \cite{Sniatycki-03} has
shown that the orbits of the family of {\it all} vector fields on $\M / G$ coincide with the strata of the orbit type stratification of $\M / G$.
In the case of a nonholonomic system $(\M , \pi_\nh, \Ham_\subM )$ with a proper $G$-symmetry, the space $\M / G$ is an almost Poisson differential space with bracket $\{ \cdot, \cdot \}_\red$. It is stratified by orbit type, where each stratum is an almost Poisson manifold and $X_\red$ is a vector field in $\M / G$ inducing  smooth vector fields on each strata and preserving the orbit type stratification of $\M / G$, see Chapter 8 of \cite{Sniatycki-13}. For more details on the structure of the quotient $\M / G$ see the books \cite{Cushman-Bates-15, CDS-2010, DuisKo2000:Book, Sniatycki-13} and the paper \cite{Sniatycki-03}.
\end{remark}
%Given a family $\F$ of vector fields in a subcartesian space $S$, we call an orbit of $\F$ passing through $p$ the subset of $S$ formed by points which can be attained from $p$ following during finite times the integral curves of a finite but arbitrary number of vector fields of the family $\F$. For such families $\F$, Sniatycki \cite{Sniatycki-03} has shown an analogue of Stefan-Sussmann orbit theorem for subcartesian spaces. In particular, the orbits of the family $\mathfrak{X}(S)$ of all vector fields in a subcartesian space $S$ are smooth manifolds immersed in $S$ and gives rise to a generalized foliation of $S$ where the leaves are the orbits of $\mathfrak{X}(S)$. The orbits of a (sub-)family of vector fields are also manifolds contained in the orbits of all vector fields $\mathfrak{X}(S)$ but are not necessarily maximal.

We are interested in the integrability properties of the reduced bracket $\{ \cdot, \cdot \}_\red$ in $\M / G$, which are cast by the failure of the Jacobi identity. In order to compute the Jacobiator,  we use the formulas proven in \cite{Bal-14} which are based on certain splittings of the tangent bundle $TQ$ explained in the next section.

\subsection{The failure of the Jacobi identity}
\label{section_splittings}

Following Section \ref{symmetries_reduction_proper_actions}, let us consider a nonholonomic system $(\M , \pi_\nh, \Ham_\subM)$ with a  $G$-symmetry. Let us denote by $V$ the (generalized) smooth distribution on $Q$, called {\it vertical distribution}, given, at each $q \in Q$,  by the tangent space to the orbit by $G$ passing through the point $q$, i.e. $V_q := T_q Orb(q)$. In this paper we suppose that the \textit{dimension assumption} \cite{BKMM-96} holds, i.e., at each $q \in Q$,
\begin{equation}
\label{dimension_assumption}
T_q Q = D_q + V_q.
\end{equation}

Since the action is not necessarily free, the rank of the vertical distribution $V$ may vary.
Now let us define the (generalized) smooth distribution $S$ in $Q$ given, at each $q \in Q$, by
\begin{equation}
\label{S_definition}
S_q := D_q \cap V_q.
\end{equation}

It was shown in \cite[Prop.~2.2]{Bal-16} that the dimension assumption implies the existence of a constant rank smooth distribution $W$ on $Q$ such that, for all $q \in Q$,
\begin{equation}
\label{splitting_TQ_DW}
T_q Q = D_q \oplus W_q \quad \mbox{and} \quad W_q \subset V_q,
\end{equation}
which is equivalent to the following splitting of the vertical distribution: $V_q = S_q \oplus W_q$.
In fact, let $\g$ denote the ($m$-dimensional) Lie algebra of $G$. Following  \cite{BKMM-96}, for each $q \in Q$, let us define the vector subspace
$(\g_S)_q$ of $\g$ given by
\begin{equation}
\label{defbundle_gS}
\zeta_q \in (\g_S)_q \Leftrightarrow (\zeta_q)_Q (q) \in S_q, 
\end{equation}
where $(\zeta_q)_Q$ denotes the infinitesimal generator of the $G$-action on $Q$ associated to the element $(\zeta_q) \in \g$.
First, it was shown in \cite[Prop.~2.2]{Bal-16} that the dimension assumption implies that $\g_S \rightarrow Q$ is a vector subbundle of the trivial bundle $\g \times Q \rightarrow Q$.  
Then a section $\zeta$ of the bundle $\g\times Q \to Q$ is a section of the subbundle $\g_S\to Q$ if $\zeta_Q \in \Gamma(S)$, where for $q\in Q$, $\zeta_Q (q) := (\zeta |_q)_Q (q)$.
Second, observe that the bundle $\g_S \rightarrow Q$ admits a bundle complement $\g_W \rightarrow Q$ such that, for $q \in Q$,
\begin{equation}
\label{splitting_g_times_Q}
(\g \times Q)_q = (\g_S)_q \oplus (\g_W)_q.
\end{equation}
Hence, a distribution $W \subset V$ as in (\ref{splitting_TQ_DW}) is induced by the choice of such a subbundle $\g_W \to Q$, so that
\begin{equation}
\label{definition_W_q}
W_q = \textup{span} \{ (\xi_q)_Q (q) : (\xi_q) \in (\g_W)_q \},
\end{equation}
and it has constant rank and is smooth. 

Next, we will see that the bundle $\g_W\to Q$ in \eqref{splitting_g_times_Q} can be chosen so that the distribution $W$ is $G$-invariant.  First, observe that if the $G$-action is free and proper, then $W$ can be defined so that $W= S^\perp \cap V$ where $S^\perp$ is the orthogonal complement of $S$ with respect to the kinetic energy metric.  In this case, $W$ is automatically constant rank, smooth and $G$-invariant.

%Recall that $k$ denotes the number of constraints, then, by (\ref{splitting_TQ_DW}) and (\ref{splitting_g_times_Q}), $k$ is also equal to the rank of $W$ and $g_W \rightarrow Q$. Let us denote by $l = m - k$ the rank of the bundle $g_S \rightarrow Q$.

\begin{proposition}
\label{prop:W_G_invariant}  
Let $G$ be a Lie group so that its Lie algebra $\g$ admits an $Ad$-invariant scalar product. If the dimension assumption \eqref{dimension_assumption} is satisfied, then there exists a $G$-invariant complement $W \subset V$ verifying \eqref{splitting_TQ_DW} (or equivalently
there exists an $Ad$-invariant subbundle $\g_W \to Q$ of $\g \times Q \to Q$ verifying 
\eqref{splitting_g_times_Q}).
\end{proposition}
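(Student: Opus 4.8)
The plan is to build the $G$-invariant complement $W$ (equivalently, the $Ad$-invariant subbundle $\g_W\to Q$) directly from the $Ad$-invariant scalar product on $\g$, mimicking the orthogonal-complement construction that works in the free case but now performed fiberwise inside the trivial bundle $\g\times Q\to Q$. First I would fix an $Ad$-invariant inner product $\langle\cdot,\cdot\rangle_\g$ on $\g$. By Proposition~2.2 of \cite{Bal-16} (invoked in the excerpt), the dimension assumption guarantees that $\g_S\to Q$ is a smooth constant-rank vector subbundle of $\g\times Q\to Q$. I then define, for each $q\in Q$,
\begin{equation*}
(\g_W)_q := (\g_S)_q^{\perp},
\end{equation*}
the orthogonal complement of $(\g_S)_q$ inside $\g$ with respect to the fixed $Ad$-invariant product. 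This is a smooth constant-rank subbundle (orthogonal complementation of a smooth subbundle with respect to a fixed metric is smooth), and by construction $(\g\times Q)_q=(\g_S)_q\oplus(\g_W)_q$, so \eqref{splitting_g_times_Q} holds. The induced distribution $W$ defined by \eqref{definition_W_q} then satisfies \eqref{splitting_TQ_DW} and has constant rank and is smooth, exactly as in the general discussion preceding the proposition.

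The substantive point is $G$-invariance, and here is where the $Ad$-invariance of the inner product enters. I would first check that the subbundle $\g_S\to Q$ is itself $Ad$-equivariant: for $g\in G$ and $q\in Q$, the relation $(\mathrm{Ad}_g\zeta)_Q(gq)=Tg\bigl(\zeta_Q(q)\bigr)$ together with the $G$-invariance of $D$ (hence of $S=D\cap V$, since $V$ is automatically $G$-invariant) shows $\mathrm{Ad}_g\bigl((\g_S)_q\bigr)=(\g_S)_{gq}$. Now, because $\langle\cdot,\cdot\rangle_\g$ is $Ad$-invariant, $\mathrm{Ad}_g$ is an isometry of $\g$, so it carries orthogonal complements to orthogonal complements:
\begin{equation*}
\mathrm{Ad}_g\bigl((\g_W)_q\bigr)=\mathrm{Ad}_g\bigl((\g_S)_q^{\perp}\bigr)=\bigl(\mathrm{Ad}_g(\g_S)_q\bigr)^{\perp}=\bigl((\g_S)_{gq}\bigr)^{\perp}=(\g_W)_{gq}.
\end{equation*}
Thus $\g_W\to Q$ is $Ad$-invariant. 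Translating back through \eqref{definition_W_q} and using once more the equivariance identity $(\mathrm{Ad}_g\xi)_Q(gq)=Tg(\xi_Q(q))$, I get $Tg(W_q)=W_{gq}$ for all $g\in G$, i.e. $W$ is $G$-invariant, which is the claim.

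I do not expect a genuine obstacle here; the argument is essentially a fiberwise orthogonalization plus the observation that $Ad$-invariance of the metric is precisely what makes orthogonal complements equivariant. The only points requiring a little care are (i) confirming that $\g_S$ is $Ad$-equivariant as a subbundle — which reduces to the standard infinitesimal-generator transformation law and the assumed $G$-invariance of $D$ — and (ii) noting that $V$ (and hence $S$) is $G$-invariant so that the equivariance of $\g_S$ is well posed; both are routine. One should also remark that such an $Ad$-invariant product always exists when $G$ is compact (averaging), so the hypothesis covers that case, but no proof of existence is needed here since it is assumed. The smoothness and constant-rank assertions are inherited verbatim from the construction in \cite{Bal-16} applied to the complement, so no extra work is required for those.
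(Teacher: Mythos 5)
Your proposal is correct and follows essentially the same route as the paper's proof: take $\g_W$ to be the fiberwise orthogonal complement of $\g_S$ inside $\g\times Q$ with respect to the $Ad$-invariant metric, use the $Ad$-equivariance of $\g_S$ (which you verify directly from the infinitesimal-generator transformation law, where the paper cites \cite[Lemma~4.4.8]{Cortes-2002}), and conclude that $Ad_g$, being an isometry, preserves orthogonal complements, so $\g_W$ is $Ad$-invariant and the induced $W$ is $G$-invariant.
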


%Let $G$ be a compact (and connected) Lie group acting properly on $Q$. 

\begin{proof}
%Since the group $G$ is compact, the Lie algebra $\g$ admits an $Ad$-invariant scalar product (\cite[Proposition 4.24]{Knapp-2002}), which induces an $Ad$-invariant bundle metric in the trivial bundle $\g \times Q \to Q$.
For any $g \in G$, let us consider the bundle map $Ad_g : \g \times Q \rightarrow \g \times Q$, over the action diffeomorphism $\psi_g : Q \rightarrow Q$, given at each $\eta \in \Gamma(\g \times Q)$ by $Ad_g (\eta) |_{\psi_g(q)} := Ad_g (\eta |_q)$.
By the dimension assumption $\g_S \to Q$ is a subbundle of the trivial bundle $\g \times Q \to Q$ \cite{Bal-16}, and 
since the distribution $S$ is $G$-invariant then $\g_S$ is $Ad$-invariant and thus $Ad_g$ restricts to a bundle map $Ad_g : \g_S \rightarrow \g_S$, see \cite[Lemma.~4.4.8]{Cortes-2002}. By $Ad$-invariance of the bundle metric on $\g \times Q \to Q$, if $\g_W$ in (\ref{splitting_g_times_Q}) is chosen to be the orthogonal complement of the bundle $\g_S$ with respect to this metric, then $\g_W \rightarrow Q$ is an $Ad$-invariant subbundle of $\g \times Q \rightarrow Q$. Defining the distribution $W$ as in (\ref{definition_W_q}), we obtain that it is $G$-invariant.
\end{proof}

%\begin{remark}
%The Killing form $K$ of a compact Lie algebra is an $Ad$-invariant, negative semidefinite bilinear form. If $G$ is semisimple then $K$ is non-degenerate and thus for compact semisimple Lie algebras $-K$ works well as an $Ad$-invariant scalar product. In particular, our main example (see Section \ref{Example_BallOnSurface}) has symmetry group $S^1 \times SO(3)$ which is not semisimple because the factor $S^1$ is abelian.
%\end{remark}
\begin{remark}
Recall that if $G$ is connected and  isomorphic to the Cartesian product of a compact group and a vector space, then the Lie algebra $\g$ admits an $Ad$-invariant scalar product.
%There is a general theory of Lie algebras admitting $Ad$-invariant scalar products. In fact, the following statements are equivalent for a connected Lie group $G$ with Lie algebra $\g$: $G$ admits a bi-invariant metric, $G$ is isomorphic to the Cartesian product of a compact group and a vector space, the Lie algebra $\g$ admits an $Ad$-invariant scalar product, see for instance \cite[Chap.~18]{Gal-17}. 
Therefore, Prop.~\ref{prop:W_G_invariant} can be used in examples in Sec.~\ref{ExamplesRevisited} and Sec.~\ref{Example_BallOnSurface}.
\end{remark}

Following \cite{Bal-14}, the splitting (\ref{splitting_TQ_DW}) on $TQ$ induces a splitting on $T\M$. More precisely, the $G$-action on $\M$ defines a (generalized) vertical distribution $\V$ on $\M$ given by $\V_m := T_m Orb(m)$, for $m \in \M$. Then the distribution $\W$ given, at each $m\in\M$, by
\begin{equation}
\label{distribution_W_in_M}
\W_m = \textup{span} \{ (\xi_q)_\subM (m) : (\xi_q) \in (\g_W)_q , \: q = \tau_\subM(m) \},
\end{equation}
is $G$-invariant as long as $\g_W \to Q$ is $Ad$-invariant, as in Prop.~\ref{prop:W_G_invariant}. Moreover, observe that $\C\cap \W =\{0\}$ since, if there exists a non-zero element $\xi_q\in (\g_W)_q$ so that $(\xi_q)_\subM (m) \in \C_m$, for $\tau_\subM(m)=q$,  then $(\xi_q)_Q(q) \in D_q\cap W_q$ which contradicts \eqref{splitting_TQ_DW}. Therefore, we obtain the following definition:

\begin{definition} \label{Def:W}
 The distribution $W$ on $Q$ is called a {\it vertical complement of the constraints} $D$ if $W$ is $G$-invariant and satisfies that $TQ = D\oplus W$ with $W\subset V$.
In this case, the distribution $\W$ on $\M$ defined in \eqref{distribution_W_in_M} is a {\it vertical complement of the constraints} $\C$ since 
\begin{equation}
\label{splitting_TM_CW}
T \M = \C \oplus \W \quad \text{and} \quad \W \subset \V .
\end{equation}
\end{definition}

Following \eqref{S_definition}, we define the (generalized) distribution $\S$ on $\M$ given, at each point $m \in \M$, by
\begin{equation}
\label{def:S_script}
\S_m := \C_m \cap \V_m,
\end{equation}
and we see that $\V_m = \S_m \oplus \W_m$.

Now, we recall from \cite{Bal-14} the formulas that cast the failure of the Jacobi identity of the nonholonomic bracket $\{ \cdot, \cdot \}_\nh$ and the reduced bracket $\{ \cdot, \cdot \}_\red$.  First, we define the so-called {\it $\W$-curvature} and study some properties to then write down the formulas of the Jacobiator.

A  vertical complement $W$ of the constraints $D$ induces a $\g$-valued 1-form $A_W$ on $Q$ given, at each $X \in \mathfrak{X}(Q)$, by
\begin{equation}
\label{definition_A_W}
A_W(X) = \xi \in \Gamma(Q\times \g) \quad \mbox{if and only if} \quad P_W(X) = \xi_Q,
\end{equation}
where $P_W : TQ \rightarrow W$ is the projection on the second factor in the splitting (\ref{splitting_TQ_DW}). Analogously we denote by $P_D : TQ \rightarrow D$ 
%and $P_\C : T\M \rightarrow \C$ 
the projection onto the first factor in the splitting (\ref{splitting_TQ_DW}) and we define the $\g$-valued $2$-form $K_W$ on $Q$ given by
$$
K_W (X,Y) = d A_W (P_D (X), P_D(Y)), \quad X,Y \in \mathfrak{X}(Q).
$$
Using the natural projection $\tau_\subM :\M\to Q$, we denote by $\mathcal{A}_\subW$ the $\g$-valued 1-form on $\M$ given by $\mathcal{A}_\subW := \tau_\subM^* A_W$ and observe that $\textup{Ker}\mathcal{A}_\subW = \C$.
\begin{definition}\label{Def:definition_K_W}\cite{Bal-14}
The {\it $\W$-curvature} ${\mathcal K}_\subW$ is the $\g$-valued $2$-form on $\M$ given by $\mathcal{K}_\subW :=  \tau_\subM^* K_W$, that is, 
$$
\mathcal{K}_\subW (X,Y):=   d\mathcal{A}_\subW(P_\C(X), P_\C(Y)),
$$
where $P_\C: T\M\to \C$ the projection associated to the splitting \eqref{splitting_TM_CW}.
\end{definition}
%It can be shown that $K_\subW \equiv 0$ if and only if $\C$ is involutive. Moreover $K_\subW$ is $Ad$-equivariant, \cite{Bal-14}.
%It has been proved in \cite{Bal-15} that definition (\ref{definition_K_W}) is equivalent to the definition given in \cite{Bal-14}.

It will be useful to set the following notation.  For a $k$-form $\alpha$ on $\M$, we denote by $d^\C$ the differential of $\alpha$ on elements of $\C$, that is, at each $ X_1, ..., X_{k+1} \in \mathfrak{X}(\M)$,
\begin{equation}\label{Eq:dC}
d^\C\alpha(X_1,..., X_{k+1}) := d\alpha(P_\C(X_1),..., P_\C( X_{k+1})).
\end{equation}
  In other words,  $d^\C\alpha|_\C = d\alpha|_\C$ while ${\bf i}_Zd^\C\alpha = 0$ for all $Z\in \W$.  Observe that $\Omega_\C = - d^\C\Theta_\subM |_\C$ and $\mathcal{K}_\subW = d^\C \mathcal{A}_\subW$.

\begin{lemma}\label{L:dK=0}
 The $\g$-valued 2-form ${\mathcal K}_\subW$ satisfies that $d^\C \mathcal{K}_\subW=0$.%, that is $d\mathcal{K}_\subW |_\C = 0$. 
\end{lemma}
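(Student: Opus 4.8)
The plan is to exploit the fact that $\mathcal{K}_\subW = d^\C\mathcal{A}_\subW$, so that proving $d^\C\mathcal{K}_\subW = 0$ amounts to showing $d^\C d^\C \mathcal{A}_\subW = 0$, i.e. that the operator $d^\C$ behaves like an honest differential on this particular form even though $\C$ is non-integrable (so $d^\C d^\C$ need not vanish in general). I would first reduce the statement on $\M$ to a statement on $Q$: since $\mathcal{K}_\subW = \tau_\subM^* K_W$ and $\mathcal{A}_\subW = \tau_\subM^* A_W$, and since $T\tau_\subM$ maps $\C$ onto $D$, it suffices to prove the corresponding identity downstairs, namely that $d(K_W)$ vanishes when all its arguments are taken in $D$ — equivalently $dK_W(P_D\cdot,P_D\cdot,P_D\cdot) = 0$. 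This is the genuinely computational core.

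To carry this out I would use the explicit formula $K_W(X,Y) = dA_W(P_D X, P_D Y)$ together with the Cartan formula for the exterior derivative of a $\g$-valued $2$-form: expand $dK_W(X,Y,Z)$ as the alternating sum of $X\cdot K_W(Y,Z)$ terms minus the $K_W([X,Y],Z)$ terms, with $X,Y,Z\in\Gamma(D)$ (so that $P_D X = X$, etc.). The key algebraic manipulation is to rewrite $K_W([X,Y],Z) = dA_W(P_D[X,Y],Z)$ and observe that for $X,Y\in\Gamma(D)$ one has $[X,Y] = P_D[X,Y] + P_W[X,Y]$, so $P_D[X,Y] = [X,Y] - P_W[X,Y]$. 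Substituting, the terms involving $[X,Y]$ (etc.) reorganize into $dA_W([X,Y],Z) + \text{cyclic}$, which is precisely $d(dA_W)(X,Y,Z) = 0$ by $d^2=0$, plus correction terms of the form $dA_W(P_W[X,Y],Z) + \text{cyclic}$. Thus the whole expression collapses to showing that the cyclic sum of the correction terms $dA_W(P_W[X,Y],Z)$ vanishes for $X,Y,Z\in\Gamma(D)$.

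The main obstacle is therefore handling these correction terms. Here I would use that $P_W[X,Y] = (A_W([X,Y]))_Q$ takes values in $W\subset V$, i.e. it is an infinitesimal generator, and combine this with the structural properties of $A_W$: namely $A_W$ restricted to $D$ vanishes, and $A_W$ reproduces infinitesimal generators of $W$-type, so that $dA_W$ evaluated on one vertical ($W$-valued) argument and one horizontal ($D$-valued) argument can be re-expressed via Lie derivatives and the identity ${\bf i}_{\zeta_Q} dA_W = \mathcal{L}_{\zeta_Q}A_W - d(A_W(\zeta_Q))$. The $G$-invariance of $W$ (hence of $A_W$ on the relevant arguments, via $\mathcal{L}_{\zeta_Q}A_W$ controlled by $\mathrm{ad}$) together with the fact that $Z\in\Gamma(D) = \Gamma(\operatorname{Ker}A_W)$ kills the remaining terms. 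I would also remark that the cleaner route, if available, is to invoke the formula from \cite{Bal-14} relating $\mathcal{K}_\subW$ to the nonholonomic analogue of the Maurer–Cartan structure equation and then differentiate that identity, using $d^\C\Omega_\C = 0$-type relations; but the self-contained Cartan-calculus argument above is the fallback.

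Finally, I would assemble the pieces: translate the $Q$-level vanishing back to $\M$ via pullback by $\tau_\subM$ and the surjectivity of $T\tau_\subM|_\C : \C \to D$, and note that ${\bf i}_Z d^\C\mathcal{K}_\subW = 0$ for $Z\in\W$ holds automatically by the definition \eqref{Eq:dC} of $d^\C$, so it is enough to have checked the identity on triples of $\C$-vectors. This yields $d^\C\mathcal{K}_\subW = 0$.
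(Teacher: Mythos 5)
Your overall strategy is the paper's strategy: use $\mathcal{K}_\subW=d^\C\mathcal{A}_\subW$, expand $d\mathcal{K}_\subW$ on arguments in $\C$ (working on $Q$ instead of $\M$ is a harmless repackaging, since $\mathcal{K}_\subW=\tau_\subM^*K_W$ and $T\tau_\subM$ maps $\C$ onto $D$), invoke $d^2=0$ to collapse everything to the cyclic sum of correction terms $dA_W(P_W[X,Y],Z)$, and then try to kill these by showing that $dA_W$ vanishes whenever one argument is of $W$-type and the other of $D$-type. Up to that point your outline reproduces the paper's proof, and the closing remark that ${\bf i}_Z d^\C\mathcal{K}_\subW=0$ for $Z\in\W$ holds by definition of $d^\C$ is also fine.

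The gap is in how you propose to kill the correction terms. You argue via ${\bf i}_{\zeta_Q}dA_W=\pounds_{\zeta_Q}A_W-d(A_W(\zeta_Q))$, equivariance of $A_W$, and $Z\in\Gamma(D)=\Gamma(\operatorname{Ker}A_W)$, asserting that these ``kill the remaining terms.'' They do not, as stated. Note first that $P_W[X,Y]$ is the generator of a \emph{point-dependent} section $\zeta=A_W([X,Y])$ of $\g_W$, not of a fixed element of $\g$. If you freeze $\zeta$ at a point so as to work with a genuine infinitesimal generator, then equivariance indeed gives $(\pounds_{\zeta_Q}A_W)(Z)=-[\zeta,A_W(Z)]=0$, but the other term $d(A_W(\zeta_Q))(Z)=Z\big(A_W(\zeta_Q)\big)$ is the derivative along $Z$ of the pointwise $\g_W$-component of $\zeta$ -- a non-constant $\g$-valued function -- and the hypothesis $Z\in\operatorname{Ker}A_W$ says nothing about it. If instead you keep $\zeta$ as a section, then $\pounds_{\zeta_Q}A_W$ is no longer controlled by $\mathrm{ad}_\zeta$ alone: writing $\zeta=\zeta^J\chi_J$ in a constant basis of $\g$ produces extra terms of the form $A_W((\chi_J)_Q)\,Z(\zeta^J)$, and after the cancellations one is left with the $\g_S$-component of $Z(\zeta)$, which must still be dealt with. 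In other words, the sentence ``$G$-invariance of $W$ together with $Z\in\operatorname{Ker}A_W$ kills the remaining terms'' is precisely the point that requires proof, not a consequence of what precedes it. This is exactly where the paper's proof does its real work: it writes $\mathcal{A}_\subW=h_a^I\tilde\epsilon^a\otimes\chi_I$ with $\chi_I$ \emph{constant}, uses that $[(\chi_I)_\subM,X]\in\Gamma(\C)$ for $X\in\Gamma(\C)$ (this is where $G$-invariance actually enters, through the invariance of $\C$ rather than of $W$) together with $\tilde\epsilon^a|_\C=0$, and then explicitly tracks the derivative-of-coefficient terms $X(h_b^I)$ against $h_a^I X(h_b^J)\bar h_J^a$. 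To complete your argument you must carry out this bookkeeping (or an equivalent explicit cancellation of the $Z(A_W(\zeta_Q))$-type terms); the magic formula plus equivariance alone does not close the proof.
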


\begin{proof}
 Following \eqref{splitting_g_times_Q}, let us consider a basis of sections $\{ \zeta_i, \xi_a \}$ of $\g \times Q \rightarrow Q$ where $\zeta_i \in \Gamma(\g_S)$ and $\xi_a \in \Gamma(\g_W)$, for $i=1,..., l$; $a = l+1,...,m$ and $\textup{dim} (\g) = m$. %In the following we use indexes $i,j$ for objects related to $\g_S$, indexes $a,b$ for objects related to $\g_W$, and uppercase indexes $I, J, K$ when considering objects both in $\g_S$ and in $\g_W$.
For each $a=l+1,..., m$, let us define the 1-forms $\epsilon^a$ on $Q$ such that $\epsilon^a((\xi_b)_Q) = \delta_b^a$ (where $\delta_b^a = 1$ if $a=b$ and $\delta_b^a=0$ if $a\neq b$) and $\epsilon^a|_D = 0$ for all $a$.  Then let us denote $\tilde{\epsilon}^a := \tau_\subM^* \epsilon^a$, where as usual $\tau_\subM : \M \rightarrow Q$ is the canonical projection.
Now consider a basis of $\g$  given by $\{ \chi_1, ... , \chi_m \}$ (i.e., $\chi_I$ are constant sections of the bundle $Q\times \g\to Q$) and thus
\begin{equation}
\label{basis_xi_eta_chi}
	\zeta_i = h_i^I \chi_I, \quad \xi_a = h_a^I \chi_I, \quad \mbox{for } I=1,...,m, 
\end{equation}
for functions $h_i^I$, $h_a^I \in C^\infty(Q)$.  Therefore, the $\g$-valued $1$-form $\mathcal{A}_\subW$ is written as 
$\mathcal{A}_\subW = \tilde \epsilon^a \otimes \xi_a = h_a^I \tilde \epsilon^a \otimes \chi_I$ and thus $\mathcal{K}_\subW = d^\C(h_a^I \tilde \epsilon^a) \otimes \chi_I$. Let us denote by $\mathcal{A}_{\subW}^I = h_a^I \tilde \epsilon^a$ and by $\mathcal{K}_{\subW}^I =d^\C \mathcal{A}_\subW^I = d^\C( h_a^I \tilde\epsilon^a)$.  In order to show that $d\mathcal{K}_\subW|_\C = 0$ it is enough to prove that $d\mathcal{K}_{\subW}^I|_\C = 0$ for all $I=1,...,m$. 
Let us consider $X_1, X_2, X_3\in \Gamma(\C)$ and then
\begin{equation*}
 \begin{split}
  d\mathcal{K}_{\subW}^I(X_1, X_2, X_3) & = cyclic [ X_1(\mathcal{K}_{\subW}^I(X_2,X_3)) -\mathcal{K}_{\subW}^I([X_1, X_2],X_3)  ] \\
  & = cyclic [ d (d\mathcal{A}_{\subW}^I)(X_1,X_2,X_3) + d\mathcal{A}_{\subW}^I(P_\subW[X_1, X_2],X_3)  ] = cyclic [d\mathcal{A}_{\subW}^I(P_\subW[X_1, X_2],X_3)  ] 
 \end{split}
\end{equation*}
where $cyclic$ indicates cyclic permutations of the parameters.  Next, we show that $d\mathcal{A}_{\subW}^I(Z, X)=0$ for all $Z\in \Gamma(\W)$ and $X\in \Gamma(\C)$.  First, observe that, $[(\chi_I)_\subM, X] \in \Gamma(\C)$ by the $G$-invariance of $\C$. Now, denoting  $Z_b = (\xi_b)_\subM$ we get
\begin{equation*}
\begin{split}
  d\mathcal{A}_{\subW}^I(Z_b, X) & =  - X(\mathcal{A}_{\subW}^I(Z_b) )  -\mathcal{A}_{\subW}^I([Z_b,X])  = - X(h_b^I)- h_a^I\tilde \epsilon^a([h_b^J (\chi_J)_\subM, X] ) \\ & = - X(h_b^I) + h_a^I X(h_b^J) \tilde \epsilon^a((\chi_J)_\subM)  =  - X(h_b^I) + h_a^I X(h_b^J) \bar{h}^a_J = 0,
 \end{split}
\end{equation*}
where $\bar{h}_J^a$ are the functions such that $\chi_J = \bar{h}_J^a \xi_a + \bar{h}_J^i \zeta_i$.  Then, we obtain that $d^\C \mathcal{K}_{\subW}^I = 0$ showing that $d^\C \mathcal{K}_\subW=0$. 
\end{proof}

On the other hand,  let $J : \M \rightarrow \g^*$ be the restriction to $\M$ of the canonical momentum map on $T^* Q$, i.e., for all $m\in \M$,  $\langle J(m), \eta \rangle = \mathbf{i}_{\eta_{\M}(m)}\Theta_\subM (m)$, where $\eta \in \g$ and $\Theta_\subM$ is the Liouville $1$-form restricted to $\M$. 
%It is well-known that this map is $Ad^*$-equivariant.
Then, following \cite{Bal-14},  we define the $G$-invariant 2-form on $\M$ given by 
\begin{equation}\label{Def:JK}
\langle J, \mathcal{K}_\subW\rangle
\end{equation}
where $\langle \cdot, \cdot \rangle$ denotes the pairing between $\g^*$ and $\g$.  By Lemma \ref{L:dK=0}, we also have the 3-form on $\M$ given, on $X,Y,Z \in \Gamma(\C),$ by 
\begin{equation*}
%\label{definition_dJ_KW}
d\langle J, \mathcal{K}_\subW \rangle (X,Y,Z) = dJ \wedge {\mathcal K}_\subW (X,Y,Z) := cyclic [ \langle dJ(X), \mathcal{K}_\subW(Y,Z) \rangle ],
\end{equation*}
(see \cite{Bal-14} and \cite{Bal-15} for more details). In other words,  $d^\C\langle J, \mathcal{K}_\subW \rangle  = d^\C J \wedge {\mathcal K}_\subW$.
% Moreover, using Prop.~\ref{prop:W_G_invariant} we observe that, even when the action is not free (but under the hypothesis of the dimension assumption), we still have a constant rank $G$-invariant vertical complement $W$ so that the $\W$-curvature and the 3-form $d\langle J, \mathcal{K}_\subW$ are well defined.

\medskip

\begin{proposition}\textup{\cite{Bal-14}} \label{Prop:Jac} The almost Poisson manifold $(\M, \{ \cdot , \cdot \}_{\emph\nh})$ associated to a nonholonomic system with symmetries satisfying the dimension assumption \eqref{dimension_assumption} verifies that, for all $f, g, h \in C^\infty(\M)$,
\begin{equation*}
%\label{Jacobiator_unreduced}
% \frac{1}{2} [ \pi_\nh , \pi_\nh ] = - \pi_\nh (dJ \wedge K_\subW) - \psi_{\pi_\nh},
cyclic[\{ f ,\{ g , h \}_{\emph\nh} \}_{\emph\nh}] = d\langle J , \mathcal{K}_\subW\rangle(\pi_{\emph\nh}^\sharp (d f), \pi_{\emph\nh}^\sharp (d g), \pi_{\emph\nh}^\sharp (d h )) - \psi_{\pi_{\emph\nh}} (df,dg,dh), 
\end{equation*}
where 
$\psi_{\pi_{\emph\nh}}$ is the trivector given by $\psi_{\pi_{\emph\nh}} (\alpha,\beta,\gamma) = cyclic \left[ \gamma \left(  ( \mathcal{K}_\subW(\pi_{\emph\nh}^\sharp (\alpha),\pi_{\emph\nh}^\sharp (\beta))\, )_\subM \right) \right] $, for $\alpha$, $\beta$, $\gamma$ $1$-forms on $\M$.
%, and which vanishes if and only if the distribution $\C$ is involutive.
Moreover, the induced reduced nonholonomic bracket $\{ \cdot, \cdot \}_{\emph\red}$ on $\M / G$ satisfies, for all $\bar{f}, \bar{g}, \bar{h} \in C^\infty(\M / G)$,
$$
cyclic [ \{ \bar{f} ,\{ \bar{g} ,\bar{h} \}_{\emph\red} \}_{\emph\red} \circ \rho ] = d\langle J , {\mathcal K}_\subW\rangle (\pi_{\emph\red}^\sharp (d\rho^* \bar{f}), \pi_{\emph\red}^\sharp (d \rho ^* \bar{g}), \pi_{\emph\red}^\sharp (d \rho^* \bar{h})).
$$
\end{proposition}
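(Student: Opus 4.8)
## Proof proposal for Proposition~\ref{Prop:Jac}

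The plan is to compute the Jacobiator of $\{\cdot,\cdot\}_\nh$ directly from its definition as an almost Poisson bracket coming from the nondegenerate $2$-section $\Omega_\subC$, and then to push the formula down to $\M/G$. The key point is that, although $\Omega_\subC$ is not a closed $2$-form (it is only a section over the nonintegrable distribution $\C$), there is a standard identity relating the Jacobiator of the bracket associated to a $2$-section on a distribution $\C$ to (i) the ``$d^\C$-differential'' of that $2$-section and (ii) the failure of $\C$ to be involutive. Concretely, writing $X_f, X_g, X_h\in\Gamma(\C)$ for the hamiltonian vector fields and using $\mathbf{i}_{X_f}\Omega_\subC=df|_\C$, I would expand $cyclic[\{f,\{g,h\}_\nh\}_\nh]$ using the Leibniz rule and the Cartan-type formula for $d\Omega_\subM$ evaluated on $X_f,X_g,X_h$, keeping careful track of the terms $P_\W[X_f,X_g]$ that arise because $[X_f,X_g]$ need not lie in $\C$. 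The term $P_\W[X_i,X_j]$ is, by definition of $\mathcal{A}_\subW$, exactly $\big(\mathcal{A}_\subW([X_i,X_j])\big)_\M=-\big(\mathcal{K}_\subW(X_i,X_j)\big)_\M$ on sections of $\C$ (using $d^\C\mathcal{A}_\subW=\mathcal{K}_\subW$ and $\mathcal{A}_\subW|_\C=0$); this is what produces the trivector $\psi_{\pi_\nh}$.

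The main steps, in order: First, recall $\Omega_\subC=-d^\C\Theta_\subM|_\C$ and that for $X,Y,Z\in\Gamma(\C)$ one has the Cartan formula
\begin{equation*}
d\Omega_\subM(X,Y,Z)=cyclic\big[X(\Omega_\subM(Y,Z))-\Omega_\subM([X,Y],Z)\big].
\end{equation*}
Since $\Omega_\subM=-d\Theta_\subM$ is closed, the left side vanishes, so the ``naive'' cyclic sum built from the $X(\Omega_\subC(Y,Z))$ terms is compensated exactly by the $\Omega_\subM([X,Y],Z)$ terms; the latter, upon splitting $[X,Y]=P_\C[X,Y]+P_\W[X,Y]$, leave a residual contribution involving $\Omega_\subM(P_\W[X,Y],Z)$. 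Second, evaluate $\{f,\{g,h\}_\nh\}_\nh=-X_f\big(X_g(h)\big)$ and expand the full cyclic sum; the ``closed'' part telescopes against the Cartan identity and one is left precisely with the terms coming from $P_\W[X_i,X_j]$. Third, rewrite each such residual term: pairing $\Omega_\subM\big((\mathcal{K}_\subW(X_i,X_j))_\M,\,X_k\big)$ against the momentum map identity $\mathbf{i}_{\eta_\M}\Omega_\subM = -d\langle J,\eta\rangle + (\text{terms vanishing on }\C)$ splits it into a piece that assembles into $d\langle J,\mathcal{K}_\subW\rangle(X_f,X_g,X_h)=dJ\wedge\mathcal{K}_\subW(X_f,X_g,X_h)$ (using $d^\C\langle J,\mathcal{K}_\subW\rangle=d^\C J\wedge\mathcal{K}_\subW$, which relies on Lemma~\ref{L:dK=0}) and a piece that assembles into $-\psi_{\pi_\nh}(df,dg,dh)$, where the $\mathcal{K}_\subW$-valued infinitesimal generator appears because $dJ$ evaluated on an infinitesimal generator $\zeta_\M$ contributes the structure-constant/derivative terms that precisely match $\gamma\big((\mathcal{K}_\subW(\cdots))_\M\big)$. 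Finally, since $X_f=\pi_\nh^\sharp(df)$ and everything in sight is $G$-invariant, I would project along $\rho:\M\to\M/G$: the trivector $\psi_{\pi_\nh}$ is built from infinitesimal generators $(\cdot)_\M$, which are vertical and hence killed by $T\rho$, so only the $d\langle J,\mathcal{K}_\subW\rangle$ term survives in the reduced formula, giving the second displayed identity.

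The hard part will be the bookkeeping in the middle step: correctly isolating, inside the fourfold cyclic expansion of $cyclic[\{f,\{g,h\}_\nh\}_\nh]$, exactly which terms telescope via $d\Omega_\subM=0$ and which survive, and then matching the surviving $\Omega_\subM(P_\W[X_i,X_j],X_k)$ terms against the two target expressions without sign or cyclic-permutation errors. In particular one must be careful that $\Omega_\subM$ restricted to $\W$-directions does \emph{not} vanish (only $\Omega_\subM|_\C$ is the object controlling the bracket), so the decomposition $[X_i,X_j]=P_\C[X_i,X_j]+P_\W[X_i,X_j]$ genuinely contributes through $\Omega_\subM$ paired with a $\W$-vector, and this is where both the $\langle J,\mathcal{K}_\subW\rangle$ term and the obstruction trivector $\psi_{\pi_\nh}$ are born. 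Everything else — the reduction step, the use of $G$-invariance, and the vanishing of $d^\C\mathcal{K}_\subW$ — is then essentially formal, and I would cite \cite{Bal-14,Bal-15} for the detailed verification since the statement is quoted from there.
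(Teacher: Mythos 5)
The paper offers no proof of Proposition \ref{Prop:Jac}: it is quoted from \cite{Bal-14}, so there is no internal argument to compare with. Your sketch reconstructs essentially the argument of that reference, and its skeleton is sound: closedness of $\Omega_\subM=-d\Theta_\subM$ plus the Cartan formula on $X_f,X_g,X_h\in\Gamma(\C)$ gives $cyclic[\{f,\{g,h\}_\nh\}_\nh]=-cyclic[\Omega_\subM([X_f,X_g],X_h)]$, and the whole content lies in splitting $[X_f,X_g]=P_\C[X_f,X_g]+P_\subW[X_f,X_g]$ with $P_\subW[X_f,X_g]=-(\mathcal{K}_\subW(X_f,X_g))_\subM$; $G$-invariance of the functions then annihilates $\psi_{\pi_\nh}$ and yields the reduced identity, exactly as you say.

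Two points in your third step need repair, and they sit precisely in the bookkeeping you yourself flag as the hard part. First, with the paper's conventions ($\Omega_\subM=-d\Theta_\subM$, $J_\xi={\bf i}_{\xi_\subM}\Theta_\subM$, and $\pounds_{\xi_\subM}\Theta_\subM=0$ for constant $\xi\in\g$) the momentum identity reads ${\bf i}_{\xi_\subM}\Omega_\subM=+\,d\langle J,\xi\rangle$, not $-d\langle J,\xi\rangle$; moreover, since $\Omega_\subM(\eta_\subM(m),\cdot)$ depends only on the value $\eta(m)\in\g$, the identity applies pointwise to the $\g$-valued function $\mathcal{K}_\subW(X_f,X_g)$ with no derivative or structure-constant corrections: $\Omega_\subM\bigl((\mathcal{K}_\subW(X_f,X_g))_\subM,X_h\bigr)=\langle dJ(X_h),\mathcal{K}_\subW(X_f,X_g)\rangle$. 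Second, $\psi_{\pi_\nh}$ is not produced by ``$dJ$ evaluated on infinitesimal generators'': after converting the $P_\C$-part of $\Omega_\subM([X_f,X_g],X_h)$ back into Jacobiator terms, one is left with two distinct families, namely $cyclic[dh(P_\subW[X_f,X_g])]$, which is exactly $-\psi_{\pi_\nh}(df,dg,dh)$, and $cyclic[\Omega_\subM(P_\subW[X_f,X_g],X_h)]$, which by the pointwise identity above is exactly $d\langle J,\mathcal{K}_\subW\rangle(\pi_\nh^\sharp df,\pi_\nh^\sharp dg,\pi_\nh^\sharp dh)$ (Lemma \ref{L:dK=0} only serves to make sense of $d\langle J,\mathcal{K}_\subW\rangle|_\C=d^\C J\wedge\mathcal{K}_\subW$). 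With these two families kept separate, your plan closes without further input, and the reduction step is as you describe.
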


\medskip

We observe that Prop.~\ref{Prop:Jac} also holds in the case where the quotient $\M/G$ is a differential space.

It is also important to note here that the reduced bracket $\{\cdot, \cdot\}_\red$ is not necessarily Poisson. However, in some particular situations, it could become Poisson even when $\pi_\nh$ is not: if
\begin{equation}\label{Eq:Jacnh=0}
d\langle J, \mathcal{K}_\subW\rangle |_{\mathcal U} = 0,
\end{equation}
where $\mathcal{U}$ is the distribution on $\M$ given by 
$$
\mathcal{U}=\textup{span} \{\pi_\nh^\sharp(df)  \ : \ f \in C^\infty(\M)^G \},
$$
then the reduced bracket $\{\cdot, \cdot\}_\red$ on $\M/G$ is Poisson.

Recall that a form $\alpha$ on $\M$ is a {\it semi-basic} form with res\-pect to $\rho: \M\to \M/G$ if ${\bf i}_{X}\alpha = 0$ when $X\in T\M$ such that $T\rho (X) = 0$.  In other words, a section in the distribution $\mathcal{U}$ can be written as $\pi_\nh^\sharp(\alpha)$ where $\alpha$ is a semi-basic 1-form on $\M$.  
On the other hand, we may consider the almost Poisson manifold $(\M_{reg}/G, \{\cdot, \cdot\}_\red)$ where $\M_{reg}$ denotes the submanifold of $\M$ where the $G$-action is free. 
In this case, the distribution $\mathcal{U}$ on $\M_{reg}$ was originally defined in \cite{BS-93} from where it can be concluded that the characteristic distribution $\C_\red$ of $\{\cdot, \cdot \}_\red$ is given by $\C_\red = T\rho({\mathcal U})$.  
If $\{\cdot, \cdot\}_\red$ is a Poisson bracket on $\M_{reg}/G$ then it carries a symplectic foliation as it is the case, for example, of the nonholonomic particle (in this case, $\M_{reg}=\M$, see \cite{Bal-14,BS-93}).    

However, there is a class of brackets defined on manifolds whose characteristic distribution is integrable but the foliation carries an almost symplectic structure on each leaf.  These are the so-called {\it twisted Poisson structures} that were defined in \cite{KlimStro-2002,SeveraWeinstein}. We are interested in this class of brackets since, as we will see, they can appear in the reduction process and they have stronger integrability properties than a general almost Poisson bracket.  

\begin{definition}\label{Def:twisted} \cite{KlimStro-2002, SeveraWeinstein}
Given an almost Poisson bracket $\{\cdot, \cdot\}$ on a manifold $M$, we say that $\{\cdot, \cdot\}$ is $\phi${\it-twisted Poisson} if there is closed 3-form $\phi$ on $M$ such that, for all $f,g,h\in C^\infty(M)$,  
$$
cyclic [ \{ f ,\{g ,h \} \} ] = \phi (\pi^\sharp (df), \pi^\sharp (dg), \pi^\sharp (dh)),
$$
where $\pi$ is the bivector field on $M$ associated to $\{\cdot, \cdot\}$.  Equivalently,  $\frac{1}{2}[\pi,\pi] = \pi^\sharp(\phi)$ where $[\cdot, \cdot]$ here represents the Schouten bracket.
\end{definition}

A twisted Poisson bracket has an integrable characteristic distribution endowed with an almost symplectic foliation. % When the twisted Poisson bracket is regular, then the 3-form $\phi$ is exact and, on the characteristic distribution, coincides with the differential of the almost symplectic 2-form on each leaf (for more details on regular twisted Poisson structures in this context see \cite{PL2011}). 

\noindent {\bf Examples.}
\begin{enumerate}
 \item[$(i)$] If $\Omega$ is a nondegenerate 2-form on $M$, then the almost Poisson bracket given, at each $f,g\in C^\infty(M)$, by $\{f, g\} = \Omega(X_f,X_g)$ is a $(-d\Omega)$-twisted Poisson bracket.  
 \item[$(ii)$] If an almost Poisson bivector field $\pi$ on $M$ has an almost symplectic foliation $(L_\mu, \Omega_\mu)$ associated, then $\pi$ is $\phi$-twisted Poisson for $\phi$ any 3-form that coincides with $-d\Omega_\mu$ on the leaves.
 \item[$(iii)$] A ``gauge transformation'' of a Poisson bracket $\pi$ by a non-closed 2-form $B$ gives rise to a twisted Poisson bracket with $\phi = (-dB)$ (see Sec.~\ref{Sec:Gauge_transformations_conserved_quantities}, in particular Rmk.~\ref{R:Gauge}$(i)$). 
\end{enumerate}

Returning to our framework, we see that $\pi_\nh$ is not a twisted Poisson bracket since its characteristic distribution $\C$ is not integrable. However, there are some cases in which, after a reduction by symmetries, $\pi_\red$ becomes twisted Poisson on $\M_{reg}/G$, for example the case of the snakeboard (where we also have $\M_{reg} = \M$, see \cite[Sec.~5]{Bal-14} and Sec.\ref{Example_Snake}  ).  What occurs in these cases is that the 3-form $d\langle J, \mathcal{K}_\subW\rangle$ on $\M_{reg}$ is the pull-back of a well defined 3-form $\phi$ on $\M_{reg}/G$ and thus the failure of the Jacobi identity in Prop.~\ref{Prop:Jac} becomes 
\begin{equation*}
cyclic [ \{ \bar{f} ,\{ \bar{g} ,\bar{h} \}_\red \}_\red  ] = \phi (\pi_\red^\sharp (d\bar{f}), \pi_\red^\sharp (d \bar{g}), \pi_\red^\sharp (d \bar{h})).
\end{equation*}
%where $\phi$ is the 3-form on $\M_{reg}/G$ such that $\rho^*\phi |_{\M_{reg}} = d \langle J , {\mathcal K}_\subW\rangle  |_{\M_{reg}} $.

We remark that finding a twisted Poisson bracket on the reduced space enlightens the process of searching for a conformal factor for the bracket.  The characterization of the almost symplectic foliations allows us to compute a conformal factor on each leaf and then, under suitable regularity conditions, to extend the conformal factor to the entire foliation \cite{Bal-Fer} (for the study of conformal factors on almost symplectic manifolds, see also \cite{Chaplygin,LGN-Ma-18,LGN-18}.

However, in general, the reduced nonholonomic bracket $\{\cdot, \cdot\}_\red$ does not admit an almost symplectic foliation, i.e., it is not twisted Poisson.  In this paper, we state a formula to produce a new bracket on the reduced space that describes the dynamics \eqref{Eq:ReducedDynamics} and that is twisted Poisson with its almost symplectic foliation given by the level sets (of certain types) of first integrals of the nonholonomic vector field $X_\nh$.

\bigskip

%%%%%%%%%%%%%%%%%%%%%%%%%%%%%%%%%%%%%%%%%%%%%%%%%%%%%%%%%%%%%%%%%%%%%%%%%%%%%%
%%%%%%%%%%%%%%%%%%%%%%%%%%%%%%%%%%%%%%%%%%%%%%%%%%%%%%%%%%%%%%%%%%%%%%%%%%%%%%
% Section 4 GAUGE TRANSF 
%%%%%%%%%%%%%%%%%%%%%%%%%%%%%%%%%%%%%%%%%%%%%%%%%%%%%%%%%%%%%%%%%%%%%%%%%%%%%%%
%%%%%%%%%%%%%%%%%%%%%%%%%%%%%%%%%%%%%%%%%%%%%%%%%%%%%%%%%%%%%%%%%%%%%%%%%%%%%%

\section{Gauge transformations and conserved quantities}
\label{Sec:Gauge_transformations_conserved_quantities}

Consider a nonholonomic system $(\M,\pi_\nh,\Ham_\subM)$ with a proper $G$-symmetry satisfying the dimension assumption \eqref{dimension_assumption}.  The question we address in this paper is whether there is a Poisson (or twisted Poisson) bracket on $\M/G$ describing the reduced dynamics $X_\red$ as in \eqref{Eq:ReducedDynamics}.  %In the previous section we analyzed the failure of the Jacobi identity of the reduced bracket $\{\cdot ,\cdot\}_\red$ induced by the nonholonomic bracket $\{\cdot ,\cdot\}_\nh$ using Prop.~\ref{Prop:Jac}.
Following \cite{Bal-14, PL2011, Naranjo2008}, we use {\it gauge transformations by 2-forms} of the nonholonomic bivector $\pi_\nh$ in order to generate a new almost Poisson bivector $\pi_\B$ on $\M$ describing the dynamics so that the reduced bracket $\{\cdot, \cdot \}_\red^\B$ has the desired properties. %, on the differential space $\M / G$, has better properties that $\{\cdot, \cdot\}_\red$. % In particular, we present a global 2-form $B$ so that $\{\cdot, \cdot \}_\red^\B$ has an integrable characteristic distribution (given by the foliation defined by the level sets of the (reduced) first integrals of the system). 

\subsection{Gauge transformations}
\label{GaugeTransformations}

We start by recalling the concept of a {\it gauge transformation of $\pi_{\emph\nh}$ by a $2$-form}, which is a ``deformation" of the bivector field $\pi_\nh$ that generates a new bivector $\pi_\B$.% that also {\it describes the dynamics}: $\pi_\B^\sharp (d \Ham_\subM) = - X_\nh$.

Recall that a regular almost Poisson manifold $(M,\pi)$ is determined by a distribution $F$ on $M$ (its characteristic distribution) and a nondegenerate $2$-section $\Omega$ on $F$ such that
\begin{equation*}
\pi^\sharp (\alpha) = -X \Leftrightarrow \mathbf{i}_X \Omega |_F = \alpha |_F , \qquad \mbox{for } \alpha \in T^*M.
\end{equation*}

Let us now consider a $2$-form $B$ on $M$ such that the $2$-section $(\Omega + B)|_F$ is nondegenerate. The {\it gauge transformation of $\pi$ by the $2$-form $B$} \cite{SeveraWeinstein} induces a new bivector field $\pi_\B$ on $M$ defined by 
\begin{equation}
\label{def:pi_B}
\pi_\B^\sharp (\alpha) = -X \Leftrightarrow \mathbf{i}_X (\Omega + B ) |_F = \alpha |_F , \qquad \mbox{for } \alpha \in T^*M.
\end{equation}

So, the new bivector $\pi_\B$ is defined by the same distribution $F$ on $M$ and the $2$-section $(\Omega + B)|_F$ and we say that $\pi$ and $\pi_\B$ are {\it gauge related}.

\begin{remark}\label{R:Gauge}

\begin{enumerate}
  \item[$(i)$] If $\pi$ is a Poisson (or twisted Poisson) bracket, then a gauge transformation by $B$ produces a new bracket with the same characteristic foliation but now the 2-form on each leaf will have the additional term $B$ --restricted to the leaf--.  In particular, if $\pi$ is Poisson, then the gauge transformation of $\pi$ by a 2-form $B$ produces a $(-dB)$-twisted Poisson bivector field $\pi_\B$. 
  \item[$(ii)$]  If $\pi_1$ and $\pi_2$ are two bivector fields on $\M$ with different characteristic distributions, then there is no 2-form $B$ that relates them. 
  %\item[$(iii)$]  Note that, even if the characteristic distributions coincide for two gauge related brackets $\pi_1$ and $\pi_2$, this does not mean that the hamiltonian vector fields agree.  More precisely, for $\alpha$ a 1-form on $M$, $\pi_1^\sharp(\alpha)$ and $\pi_2^\sharp(\alpha)$ are usually different. 
  \item[$(iii)$] If $\pi$ is a $G$-invariant bivector field on $M$ and $B$ is also $G$-invariant, then the gauge related bivector field $\pi_\B$ is $G$-invariant as well.  However, the reduced bivector fields $\pi_\red$ and $\pi_\red^\B$ on the quotient $M/G$ might not be gauge related. Indeed their characteristic distributions may be different. The reduced bivector fields $\pi_\red$ and $\pi_\red^\B$ will be gauge related only when the 2-form $B$ is basic with respect to the orbit projection $\rho: M \to M/G$. In that case,  the gauge transformation related $\pi_\red$ and $\pi_\red^\B$ is given by $\bar{B}$, the (unique) 2-form on $M/G$ such that $\rho^*\bar{B} = B$. 
 %\item[$(iv)$] In \cite{KlimStro-2002,SeveraWeinstein}, any 2-form defines a gauge transformation and, if $(\Omega+B)|_F$ is degenerate, then it is induced a {\it Dirac structure} instead of a bivector field. Here, since we are interested in bivector fields, we restrict our definition.  
 \end{enumerate}
\end{remark}

Consider now the scenario given by a nonholonomomic system $(\M, \pi_\nh, \Ham_\subM)$ as in Sect.~\ref{Sec:Preliminaries_Nonholonomic}. From  \eqref{definition_nh_bracket} we see that $\pi_\nh$ is determined by the (non-integrable) distribution $\C$  and the nondegenerate $2$-section $\Omega_\subC$ on $\C$ given in (\ref{definition_C}) and (\ref{def:Omega_M_and_Omega_C}) respectively. 
Now, using gauge transformations by 2-forms, our goal is to generate new bivector fields $\pi_\B$ describing the nonholonomic dynamics on $\M$, i.e. $\pi_\B^\sharp(d\Ham_\subM) = - X_\nh$. From \cite{PL2011}, we recall the concept of a {\it dynamical gauge transformation}.

\begin{definition}\label{Def:DynGauge} \cite{PL2011} We say that a $2$-form $B$ defines a {\it dynamical gauge transformation} if
\begin{equation*}
\begin{split}
 & (i) \ \ \Omega_\subC + B|_\C \mbox{ is a nondegenerate $2$-section,}\\
 & (ii) \ \mathbf{i}_{X_\nh} B = 0.
\end{split}
\end{equation*}
Then, the {\it dynamically gauge related bracket} $\pi_\B$ is given by 
 \begin{equation}\label{Eq:GaugedNH}
 \pi_\B^\sharp(\alpha) = -X \Leftrightarrow \mathbf{i}_X (\Omega_{\subM} +B ) |_\C = \alpha |_\C, \qquad \mbox{for } \alpha \in T^*\M,
\end{equation}
and satisfies that $\pi_\B^\sharp (d\Ham_\subM)= - X_\nh$.  We call condition $(ii)$ the {\it dynamical condition} for $B$.  
\end{definition}

\begin{remark}
\label{R:remarks_def_of_B}
\begin{enumerate} 
\item[$(i)$] If we consider a $2$-form $B$ that is semi-basic with respect to the bundle $\tau_\subM : \M \rightarrow Q$, then $\Omega_\subC + B|_\C$ is automatically nondegenerate, see \cite{PL2011}.
 \item[$(ii)$] Let $B_1$ and $B_2$ be semi-basic 2-forms (with respect to $\tau_\subM:\M\to Q$).  Performing a gauge transformation of $\pi_\nh$ by the 2-form $B_1$ and subsequently by $B_2$, is the same as performing a gauge transformation of $\pi_\nh$ by the 2-form $B = B_1 + B_2$.  
\item[$(iii)$] From \eqref{Eq:GaugedNH}, we see that the bivector field $\pi_\B$ only depends on the restriction of $B$ to the characteristic distribution $\C$. Therefore, following \cite{Bal-14}, once we choose a vertical complement $\W$ so that $T \M = \C \oplus \W$, we will also ask that 
\begin{equation}
\label{condition_B_W}
\mathbf{i}_X B \equiv 0, \mbox{ for } \quad X \in \Gamma(\W) .
\end{equation}
\end{enumerate}
\end{remark}

As observed in Remark \ref{R:Gauge} $(iii)$, if the nonholonomic system $(\M, \pi_\nh, d\Ham_\subM)$ has a $G$-symmetry then a $G$-invariant 2-form $B$ defining a gauge transformation generates a $G$-invariant bivector field $\pi_\B$ as in \eqref{Eq:GaugedNH}.  %Our goal is to study the properties of the induced reduced bracket $\pi_\red^\B$ coming from $\pi_\B$ and to find a 2-form $B$ so that $\pi_\red^\B$ is Poisson (or twisted Poisson) even if $\pi_\red$ was not.  
%In particular, it was already observed in previous examples  \cite{Bal-14, Bal-16, PL2011, LGN-JM-16} that the integrability properties of $\{\cdot,\cdot\}_\red^\B$ can be very different compared with the original reduced bracket $\{\cdot,\cdot\}_\red$. 
So, the  gauge related bivector $\pi_\B$ induces a reduced almost Poisson bracket $\{ \cdot , \cdot \}_\red^\B$ on the differential space ${\mathcal M}/G$ given, for $\bar{f}, \bar{g} \in C^\infty(\mathcal{M}/G)$, by
\begin{equation}
\label{eq:reduced_bracket_B}
\{ \bar{f}, \bar{g}\}_\red^\B \circ \rho = \{ \bar{f} \circ \rho, \bar{g}\circ \rho\}_\B,
\end{equation}
where, as usual, $\rho : \M \rightarrow \M / G$ is the orbit projection and $\{ \cdot , \cdot \}_\B$ is the bracket associated to $\pi_\B$. Moreover, if $B$ defines a dynamical gauge transformation, then the reduced bracket $\{ \cdot , \cdot \}_\red^\B$ describes the nonholonomic reduced dynamics:
\begin{equation}
\label{eq:ReducedDescriptionDynamics}
X_\red = \{ \cdot , \Ham_\red \}_\red^\B \in \mathfrak{X}(\M / G),
\end{equation}
where $\Ham_\red : \M / G \rightarrow \R$ is the reduced Hamiltonian.

Following \cite{Bal-14}, in order to analyze the failure of the Jacobi identity of $\{ \cdot , \cdot \}_\red^\B$ we use an analogous formulation as in Prop.~\ref{Prop:Jac} but now considering the gauge transformation. 

\begin{proposition}\label{Prop:JacB}\textup{\cite{Bal-14}} Consider the nonholonomic system $(\M, \pi_{\emph\nh}, d\Ham_\subM)$ with a $G$-symmetry acting properly satisfying the dimension assumption. Let $\W$ be a $G$-invariant vertical complement of the constraints $\C$ and let $\pi_\B$ be the bivector field on $\M$ gauge related  to $\pi_{\emph\nh}$ by a $G$-invariant $2$-form $B$ satisfying condition \eqref{condition_B_W}.
%that $\mathbf{i}_Z B \equiv 0$, for $Z \in \Gamma(\W)$. 
Then:
\begin{enumerate}
\item[$(i)$] The gauge related bracket $\{ \cdot , \cdot \}_\B$ on $\M$ satisfies that, for $f, g, h \in C^\infty(\M)$,
\begin{equation*}
%\label{Jacobiator_unreduced_B}
% \frac{1}{2} [ \pi_\B , \pi_\B ] = - \pi_\nh (dJ \wedge K_\subW) - \psi_{\pi_\nh},
cyclic[\{ f ,\{ g , h \}_\B \}_\B] = (d\langle J , \mathcal{K}_\subW \rangle - d B)(\pi_\B^\sharp (d f), \pi_\B^\sharp (d g), \pi_\B^\sharp (d h )) 
 - \psi_{\pi_\B} (df,dg,dh), 
\end{equation*}
where $\psi_{\pi_\B}$ is the trivector given by $\psi_{\pi_\B} (\alpha,\beta,\gamma) = cyclic \left[ \gamma \left(  ( {\mathcal K}_\subW(\pi_{\B}^\sharp (\alpha),\pi_{\B}^\sharp (\beta))\, )_\subM \right) \right]$, for $\alpha$, $\beta$, $\gamma$ 1-forms on $\M$.
%, and which vanishes if and only if the distribution $\C$ is involutive.
\item[$(ii)$] The reduced bracket $\{\cdot, \cdot\}_{\emph\red}^\B$ on the differential space $\M/G$ induced by $\{\cdot, \cdot\}_\B$, as in \eqref{eq:reduced_bracket_B}, satisfies that, for $\bar{f}, \bar{g}, \bar{h} \in C^\infty(\M / G)$,
$$
cyclic [ \{ \bar{f} ,\{ \bar{g} ,\bar{h} \}_{\emph\red}^\B \}_{\emph\red}^\B \circ \rho ] = (d \langle J ,\mathcal{K}_\subW \rangle - d B)(\pi_\B^\sharp (d\rho^* \bar{f}), \pi_\B^\sharp (d \rho ^* \bar{g}), \pi_\B^\sharp (d \rho^* \bar{h})).
$$
\end{enumerate}
\end{proposition}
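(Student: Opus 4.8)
The plan is to prove Proposition~\ref{Prop:JacB} by adapting the strategy already used for Proposition~\ref{Prop:Jac} in \cite{Bal-14}, exploiting the observation in Remark~\ref{R:remarks_def_of_B}$(ii)$--$(iii)$ that a gauge transformation by a $2$-form $B$ satisfying \eqref{condition_B_W} replaces the defining $2$-section $\Omega_\subC$ of $\pi_\nh$ by $(\Omega_\subM + B)|_\C$, while keeping the characteristic distribution $\C$ and its vertical complement $\W$ unchanged. Concretely, I would first record the identity $\pi_\B^\sharp(df)(g) = -(\Omega_\subM+B)^{-1}$-type relation, i.e., that $X_f^\B := -\pi_\B^\sharp(df)$ is the unique section of $\C$ with $\mathbf{i}_{X_f^\B}(\Omega_\subM+B)|_\C = df|_\C$, and that the dynamical vector field is still recovered because $\mathbf{i}_{X_\nh}B = 0$ and $X_\nh \in \Gamma(\C)$.

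For part $(i)$ I would compute the Jacobiator of $\{\cdot,\cdot\}_\B$ directly. Since $\{f,g\}_\B = (\Omega_\subM+B)(X_f^\B, X_g^\B)$ with all hamiltonian vector fields taking values in $\C$, the cyclic sum $cyclic[\{f,\{g,h\}_\B\}_\B]$ can be expanded using the standard Cartan-type formula for the ``Jacobiator'' of a $2$-section: for a (possibly non-closed, possibly restricted) $2$-form, the obstruction to Jacobi splits into a term involving $d(\Omega_\subM+B)$ evaluated on the three hamiltonian vector fields, plus a correction coming from the fact that $\C$ is non-integrable, i.e., that $P_\subW[X,Y]$ need not vanish for $X,Y\in\Gamma(\C)$. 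The first term gives $(d\Omega_\subM + dB)$ on $(\pi_\B^\sharp df, \pi_\B^\sharp dg, \pi_\B^\sharp dh)$; but $\Omega_\subM = \iota_\subM^*\Omega_Q$ is itself closed, so $d\Omega_\subM|_\C$ contributes nothing directly — instead the non-integrability term, handled exactly as in \cite{Bal-14} via $\Omega_\subC = -d^\C\Theta_\subM|_\C$ and $d^\C\mathcal{A}_\subW = \mathcal{K}_\subW$, produces the $d\langle J,\mathcal{K}_\subW\rangle$ term together with the trivector $\psi_{\pi_\B}$ accounting for the $\mathcal{K}_\subW$-valued vertical part of the brackets $[X_f^\B,X_g^\B]$. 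Assembling these pieces yields the stated formula with $(d\langle J,\mathcal{K}_\subW\rangle - dB)$ replacing $d\langle J,\mathcal{K}_\subW\rangle$; the only genuinely new input relative to \cite{Bal-14} is keeping track of the extra $-dB$ term, which is straightforward since $B$ satisfies \eqref{condition_B_W} so that $\mathbf{i}_Z B = 0$ for $Z\in\Gamma(\W)$ and hence $dB$ interacts with the $P_\subW$-corrections in a controlled way.

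For part $(ii)$ I would pass to the quotient. Because $B$ and $\pi_\nh$ are $G$-invariant and $B$ satisfies \eqref{condition_B_W}, the bivector $\pi_\B$ is $G$-invariant (Remark~\ref{R:Gauge}$(iii)$), so it descends to $\{\cdot,\cdot\}_\red^\B$ on $\M/G$ via \eqref{eq:reduced_bracket_B}. Evaluating the formula of part $(i)$ on $G$-invariant functions $f=\rho^*\bar f$, $g=\rho^*\bar g$, $h=\rho^*\bar h$, I would show the trivector term $\psi_{\pi_\B}$ drops out: the vectors $\pi_\B^\sharp(df)$ etc. are tangent to $\C\cap\mathcal{U}$, and — exactly as in the reduction argument for Prop.~\ref{Prop:Jac} — the $\mathcal{K}_\subW$-vertical correction $(\mathcal{K}_\subW(\pi_\B^\sharp df,\pi_\B^\sharp dg))_\subM$ lies in $\V$, on which the differentials of invariant functions vanish, so $\gamma((\cdots)_\subM)=0$ for $\gamma = d\rho^*\bar h$. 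What survives is precisely $(d\langle J,\mathcal{K}_\subW\rangle - dB)$ on the three hamiltonian vector fields, giving the reduced identity. I would need to check this is well-defined on the differential space $\M/G$ (not just on $\M_{reg}/G$), but this is the same remark already made after Prop.~\ref{Prop:Jac}, so it requires no new work.

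The main obstacle I anticipate is bookkeeping rather than conceptual: one must carefully verify that introducing $B$ does not disturb the derivation of the $\psi_{\pi_\B}$ term and the cancellation of the closed part $d\Omega_\subM$, i.e., that the vertical components $P_\subW[X_f^\B, X_g^\B]$ enter in exactly the same way as in \cite{Bal-14} despite the deformed $2$-section. This hinges on \eqref{condition_B_W} (so that $B$ does not ``see'' $\W$) and on the fact that $X_f^\B\in\Gamma(\C)$, which together guarantee that the only modification to the \cite{Bal-14} computation is the clean replacement $d\langle J,\mathcal{K}_\subW\rangle \mapsto d\langle J,\mathcal{K}_\subW\rangle - dB$. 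Since all of this is a transparent variant of an already-published computation, I expect the proof to be short, essentially a pointer to \cite{Bal-14} with the $B$-dependent terms made explicit.
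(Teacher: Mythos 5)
Your proposal is correct and follows essentially the same route as the argument the paper relies on: the paper gives no proof of Prop.~\ref{Prop:JacB} but quotes it from \cite{Bal-14}, and your sketch reconstructs exactly that computation — the Jacobiator of the bivector determined by the $2$-section $(\Omega_\subM+B)|_\C$ on the nonintegrable distribution $\C$, with the condition \eqref{condition_B_W} guaranteeing that the $\W$-correction terms are unchanged so the only modification is the replacement $d\langle J,\mathcal{K}_\subW\rangle \mapsto d\langle J,\mathcal{K}_\subW\rangle - dB$, and the reduction step killing $\psi_{\pi_\B}$ because its argument is vertical while the differentials of $G$-invariant functions annihilate $\V$.
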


%Proposition \ref{Prop:JacB}  appeared in \cite{Bal-14} for free and proper actions. However, analogously as in Prop. \ref{Prop:Jac},  this formulation holds even when the quotient space $\M/G$ is a differential space instead of a manifold (i.e., when the $G$-action is proper). 

It follows that the reduced bracket $\{ \cdot, \cdot \}_\red^\B$ is Poisson if
$$
(d\langle J ,  {\mathcal K}_\subW\rangle - dB)|_{{\mathcal U}_\B} = 0 ,
$$ 
where $\mathcal{U}_\B$ is the distribution on $\M$ given by
\begin{equation}
\label{eq:def_U_B}
{\mathcal U}_\B = \textup{span} \{ \pi_\B^\sharp(d f) \: : \: f \in C^\infty(\M)^G \}.
\end{equation}
Recall that, on $\M_{reg}/G$, the characteristic distribution of $\{\cdot, \cdot \}_\red^\B$ is given by $\C_\red^\B := T\rho({\mathcal U}_\B)$.  Moreover, if the 3-form $d\langle J ,  {\mathcal K}_\subW\rangle - dB$ is semi-basic with respect to the orbit projection $\rho:\M_{reg}\to \M_{reg}/G$, then $\{\cdot,\cdot \}_\red^\B$ is a twisted Poisson bracket (Def.~\ref{Def:twisted}) on the regular strata $\M_{reg}/G$. In fact, since $d\langle J ,  {\mathcal K}_\subW\rangle - dB $ is $G$-invariant, then being semi-basic implies that it is basic and we get that  
$$
cyclic [ \{ \bar{f} ,\{ \bar{g} ,\bar{h} \}_\red^\B \}_\red^\B ] = \phi((\pi_\red^\B)^\sharp (d \bar{f}), (\pi_\red^\B)^\sharp (d  \bar{g}), (\pi_\red^\B)^\sharp (d\bar{h})),
$$
where $\phi$ is the 3-form on $\M_{reg}/G$ such that $\rho^*\phi = d\langle J ,  {\mathcal K}_\subW\rangle - dB$.  %Moreover, if the reduced bracket  $\{\cdot, \cdot\}_\red^\B$ is regular, to guarantee that it is twisted Poisson it is only necessary to ask that the 3-section $(d\langle J ,  {\mathcal K}_\subW\rangle - dB)|_{\mathcal{U}_\B}$ is basic (or semi-basic). In this case, we can define the 3-form $\phi$ on $\M_{reg}/G$ to coincide with $d\langle J ,  {\mathcal K}_\subW\rangle - dB $ on $\mathcal{U}_\B$ and to be zero on a transversal direction.   

Here, we also remark the importance of finding a twisted Poisson bracket on the reduced manifold $\M_{reg}/G$ describing the dynamics in order to compute a conformal factor for the bracket.  In particular, in \cite{Bal-Fer}, it was studied the Chaplygin ball using a gauge transformation to obtain a twisted Poisson bracket and afterwards, it was computed a conformal factor on each almost symplectic leaf to then extend it to the foliation and conclude that the twisted Poisson bracket was, in fact, a conformally Poisson bracket.

\subsection{Horizontal gauge symmetries and the choice of a gauge transformation}
\label{NonholonomicMomentMap}

In order to find an appropriate 2-form so that the reduced bracket is (twisted) Poisson,  we study properties of the first integrals of the system that are $G$-invariant {\it horizontal gauge momenta} \cite{BGM-96,FGS-05}, see \cite{Bal-16, LGN-JM-16}.

Consider a nonholonomic system $(\M, \pi_\nh, \Ham_\subM)$ with a proper $G$-symmetry satisfying the dimension assumption \eqref{dimension_assumption}.
The {\it nonholonomic momentum map} \cite{BKMM-96} is the map $J^\nh : \M \rightarrow \g_S^*$ defined, at each $m \in \M$ and $\eta \in \Gamma(\g_S)$, by
\begin{equation}
\label{NH-moment_map}
J_\eta (m) = \langle J^\nh, \eta \rangle (m) = \langle J^\nh (m), \eta (\tau_\subM(m)) \rangle := \mathbf{i}_{\eta_\subM} \Theta_\subM (m),
\end{equation}
where $\Theta_\subM$ is the restriction to $\M$ of the Liouville $1$-form $\Theta_Q$ on $T^*Q$.
Observe that the function $J_\eta$ on $\M$ is linear on the fibers of the bundle $\tau_\subM : \M \rightarrow Q$. However,
contrarily to the canonical momentum map for Hamiltonian systems, $J_\eta$ is not necessarily a first integral of the dynamics $X_\nh$. In fact, using the $G$-invariance of the restricted Hamiltonian $\Ham_\subM$, we have that for any $\eta\in\Gamma(\g_S)$, 
\begin{equation*}
X_\nh ( J_\eta  ) = ( \pounds_{\eta_\subM} \Theta_\subM ) (X_\nh),
\end{equation*}
where $\pounds$ denotes the Lie derivative. 

\begin{definition}\label{Def:HGM}\cite{BGM-96, FGS-08} A {\it horizontal gauge momentum} of $X_\nh$ is a function of type $J_\eta = \langle J^\nh , \eta \rangle \in C^\infty(\M)$, for $\eta\in\Gamma(\g_S)$, that is a first integral of $X_\nh$. The (global) section $\eta \in \Gamma(\g_S)$ is called a {\it horizontal gauge symmetry}. 
\end{definition}

Even if $J_\eta$ is a horizontal gauge momentum for the nonholonomic dynamics, the vector field $\pi_\nh^\sharp(dJ_\eta)$ might not be equal to the infinitesimal generator $- \eta_\subM$ or, even more, $\pi_\nh^\sharp(dJ_\eta)$ might not be a section of the vertical distribution $\V$. In fact, for $\eta\in\Gamma(\g_S)$ let us denote by $\Lambda_\eta$ the 1-form on $\M$ such that
\begin{equation}
\label{Eq:Lambda}
\Lambda_\eta |_\C =  - \mathbf{i}_{\eta_\subM} \Omega_\subC + d J_\eta |_\C = \pounds_{\eta_\subM} \Theta_\subM |_\C  \quad \mbox{and} \quad \Lambda_\eta |_{\W} = 0.
\end{equation}
Note that  $\pi_\nh^\sharp ( d J_\eta - \Lambda_\eta ) = - \eta_\subM$.  When $J_\eta$ is a first integral, we conclude that $\Lambda_\eta(X_\nh) = 0$ (see also \cite{Bal-San-2016}). 

%\begin{remark}
%In \cite{Bal-San-2016} the authors introduced the concept of $\M$-cotangent lift $\eta_Q^\M$ of the vector field $\eta_Q$ on $Q$ and proved that (see \cite[Lemma 3.2]{Bal-San-2016}) $\eta_Q^\M = - \pi_\nh^\sharp (d J_\eta) = \eta_\subM - \pi_\nh^\sharp ( \Lambda_\eta)$.
%(\zeta_i)_\subM - (\mathbf{i}_{{(\chi_j)}_\subM} \Theta_\subM) \pi_\nh^\sharp (dg_{ij}). Moreover, in \cite{Bal-San-2016} it was also shown that $\eta_i$ is a horizontal gauge symmetry if $(\eta_i)_Q^\M (\Ham_\subM) = 0$, which is equivalent to ask for $\Lambda_\eta(X_\nh) = 0$.
%\end{remark}

Following the idea that $\pi_\nh^\sharp(dJ_\eta)$ might not be a vertical vector field (with respect to $\rho:\M\to \M/G$), our goal is to find a 2-form $B$ so that the (dynamically) gauge related bracket $\pi_\B$ satisfies  $\pi_{\B}^\sharp ( d J_\eta) = - \eta_\subM$.  Then, if $J_\eta$ is $G$-invariant, it will become a Casimir of the reduced bracket $\{\cdot, \cdot\}_\red^\B$.

\begin{proposition}
\label{prop:Lambda}
Let $(\M, \pi_{\emph\nh}, \Ham_\subM)$ be a nonholonomic system with a (proper) $G$-symmetry satisfying the dimension assumption and assume that $J_\eta$ is a horizontal gauge momentum with $\eta \in \Gamma(\g_S)$ its associated horizontal gauge symmetry.  If  the 2-form $B$ defines a dynamical gauge transformation so that
$$
\mathbf{i}_{\eta_\subM} B |_\C = \Lambda_\eta |_\C,
$$
where $\Lambda_\eta |_\C$ is given in \eqref{Eq:Lambda}, then $\pi_\B$ is an almost Poisson bracket on $\M$ describing the dynamics satisfying:
\begin{enumerate}
%\label{pi_B_(dJ)_vertical}
\item[$(i)$] $\pi_\B^\sharp (d J_\eta) = - \eta_\subM.$
\item[$(ii)$]  If $J_\eta$ and the 2-form $B$ are $G$-invariant, then the induced function $\bar J_\eta$ on the quotient space $\M/G$ is a Casimir of the reduced bracket $\{\cdot, \cdot\}_{\emph\red}^\B$.
\end{enumerate}
\end{proposition}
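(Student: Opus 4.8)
\textbf{Proof plan for Proposition~\ref{prop:Lambda}.}

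The plan is to verify directly that the hypothesis $\mathbf{i}_{\eta_\subM} B|_\C = \Lambda_\eta|_\C$ forces $\pi_\B^\sharp(dJ_\eta) = -\eta_\subM$, and then to deduce the Casimir property from this identity together with $G$-invariance. First I would recall from Definition~\ref{Def:DynGauge} that the gauge-related bivector $\pi_\B$ is characterized by $\pi_\B^\sharp(\alpha) = -X \Leftrightarrow \mathbf{i}_X(\Omega_\subM + B)|_\C = \alpha|_\C$, where $X \in \Gamma(\C)$. So to prove $(i)$ it suffices to show that the vector field $-\eta_\subM$ --- more precisely, its $\C$-component, since $\pi_\B^\sharp$ takes values in $\C$ --- satisfies $\mathbf{i}_{-(-\eta_\subM)}(\Omega_\subM + B)|_\C = dJ_\eta|_\C$, i.e. that $\mathbf{i}_{\eta_\subM}(\Omega_\subM + B)|_\C = dJ_\eta|_\C$. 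Here I would use the defining relation \eqref{Eq:Lambda}, namely $\Lambda_\eta|_\C = -\mathbf{i}_{\eta_\subM}\Omega_\subC + dJ_\eta|_\C$, which rearranges to $\mathbf{i}_{\eta_\subM}\Omega_\subM|_\C = dJ_\eta|_\C - \Lambda_\eta|_\C$. Adding $\mathbf{i}_{\eta_\subM} B|_\C = \Lambda_\eta|_\C$ (the hypothesis) gives precisely $\mathbf{i}_{\eta_\subM}(\Omega_\subM + B)|_\C = dJ_\eta|_\C$, as desired. One subtlety to address: $\eta_\subM$ need not itself lie in $\C$ (it lies in $\S \subset \C$ only if $\eta \in \Gamma(\g_S)$, which is exactly the standing hypothesis on $\eta$), so in fact $\eta_\subM \in \Gamma(\S) \subset \Gamma(\C)$ and no projection is needed --- this is where the assumption $\eta \in \Gamma(\g_S)$ is used, and it should be stated explicitly. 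That $\pi_\B$ is a well-defined almost Poisson bracket describing the dynamics is already guaranteed by Definition~\ref{Def:DynGauge}, since $B$ is assumed to define a dynamical gauge transformation.

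For part $(ii)$, the plan is to pass to the quotient. Since $J_\eta$ and $B$ are $G$-invariant, so is $\pi_\B$ (Remark~\ref{R:Gauge}$(iii)$), hence $J_\eta$ descends to $\bar J_\eta \in C^\infty(\M/G)$ and $\{\cdot,\cdot\}_\red^\B$ is well defined via \eqref{eq:reduced_bracket_B}. For any $\bar g \in C^\infty(\M/G)$ with $g = \rho^*\bar g \in C^\infty(\M)^G$, I would compute
\begin{equation*}
\{\bar J_\eta, \bar g\}_\red^\B \circ \rho = \{J_\eta, g\}_\B = \pi_\B(dJ_\eta, dg) = dg\big(\pi_\B^\sharp(dJ_\eta)\big) = dg(-\eta_\subM) = -\eta_\subM(g),
\end{equation*}
using part $(i)$ in the fourth equality. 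But $\eta_\subM$ is a vertical vector field (a combination of infinitesimal generators of the $G$-action, with coefficients that are functions on $Q$ pulled back to $\M$), and $g$ is $G$-invariant, so $\eta_\subM(g) = 0$. Here I would note that $\eta_\subM(g) = (h^I \circ \tau_\subM)\,(\chi_I)_\subM(g)$ for the basis decomposition as in \eqref{basis_xi_eta_chi}, and each $(\chi_I)_\subM(g) = 0$ by $G$-invariance of $g$; alternatively, invariance of $g$ under the flow of any fundamental vector field gives this directly. Hence $\{\bar J_\eta, \bar g\}_\red^\B = 0$ for all $\bar g$, which is exactly the statement that $\bar J_\eta$ is a Casimir.

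I do not expect a serious obstacle here; the proof is essentially a rearrangement of the definitions. The only point requiring a little care is making sure the identity $\mathbf{i}_{\eta_\subM} B|_\C = \Lambda_\eta|_\C$ is applied on the correct bundle ($\C$, not all of $T\M$) and that one invokes the standing hypothesis $\eta\in\Gamma(\g_S)$ to ensure $\eta_\subM \in \Gamma(\C)$ so that the characterization \eqref{Eq:GaugedNH} of $\pi_\B^\sharp$ applies without an extra projection step. A remark worth adding at the end: the hypothesis is also \emph{necessary} in the sense that, given that $B$ is dynamical, $\pi_\B^\sharp(dJ_\eta) = -\eta_\subM$ holds if and only if $\mathbf{i}_{\eta_\subM}B|_\C = \Lambda_\eta|_\C$, since the computation above is reversible; this explains why this particular constraint on $B$ is the right one to impose when searching for the 2-form in Theorem~\ref{T:GlobalB}.
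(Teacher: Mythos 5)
Your proof is correct and follows essentially the same route as the paper: rearranging \eqref{Eq:Lambda} and adding the hypothesis $\mathbf{i}_{\eta_\subM}B|_\C=\Lambda_\eta|_\C$ to obtain $\mathbf{i}_{\eta_\subM}(\Omega_\subM+B)|_\C = dJ_\eta|_\C$, hence $(i)$ via \eqref{def:pi_B}, and then deducing $(ii)$ by evaluating the reduced bracket on $G$-invariant functions using that $\eta_\subM$ is vertical. The extra remarks (that $\eta\in\Gamma(\g_S)$ guarantees $\eta_\subM\in\Gamma(\S)\subset\Gamma(\C)$, and that the condition on $B$ is also necessary) are accurate and harmless additions.
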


\begin{proof} By definition of the nonholonomic bivector field $\pi_\nh$ and using \eqref{Eq:Lambda} we have that $\mathbf{i}_{\eta_\subM} \Omega_\subC = d J_\eta |_\C - \Lambda_\eta |_\C$.  Therefore, if $\mathbf{i}_{\eta_\subM} B |_\C = \Lambda_\eta |_\C $ then, by \eqref{def:pi_B}, $\pi_{\B}^\sharp ( d J_\eta) = - \eta_\subM$.

Following \eqref{eq:reduced_bracket_B}, since $B$ is $G$-invariant, the reduced bracket  $\{\cdot, \cdot\}_{\red}^\B$ is well defined on $\M/G$ (and, of course, it describes the reduced dynamics). Denoting $\bar{J}_\eta \in C^\infty(\M/G)$ the reduced function such that $\rho^*(\bar{J}_\eta) = J_\eta$,  then for any $\bar{f}\in C^\infty(\M/G)$ we have  
\begin{equation*}
\{ \bar{f} , \bar{J}_\eta \}_\red^\B \circ \rho = \{ \rho^* \bar{f} , J_\eta \}_\B = \pi_\B^\sharp (d J_\eta ) ( \rho^* \bar{f} ) = 0,
\end{equation*}
showing that $\bar{J}_\eta$ is a Casimir.  
\end{proof}

\subsection{The case of $l=\textup{rank}(\g_S)$ horizontal gauge momenta}
\label{section_system_for_B}

In this section we consider a nonholonomic system $(\M, \pi_\nh, \Ham_\subM)$  with a free and proper action on $Q$ given by a connected Lie group $G$ defining a $G$-symmetry.   If the nonholonomic system has a symmetry given by a proper action, then our conclusions can be applied to the submanifold $\M_{reg}$ of $\M$, which is the submanifold where the $G$-action is free, and to the quotient manifold $\M_{reg}/G$. In Sec.~\ref{Sec:Solids} and \ref{Example_BallOnSurface}, we study examples where the action is not free, however the 2-form $B$ can be easly defined on the entire manifold $\M$ and then the reduced bracket is defined on the differential space $\M/G$ as in \eqref{eq:reduced_bracket_B}.  

From now on, we assume that the free and proper $G$-symmetry satisfies the dimension assumption \eqref{dimension_assumption} and, if $l=\textup{rank}(\g_S$), we have $J_1, ..., J_l$ horizontal gauge momenta such that their associated horizontal gauge symmetries $\{\eta_1, ..., \eta_l\}$ form a (global) basis of sections of $\g_S$.  

Following \eqref{Eq:Lambda}, for each $i=1,...,l$ we denote by $\mathrm{Y}_i:= (\eta_i)_\subM$ and define the 1-forms $\Lambda_i$ on $\M$ such that
\begin{equation}\label{Eq:Pinh(dJ)}
\pi_\nh^\sharp(dJ_i) = - \mathrm{Y}_i + \pi_\nh^\sharp(\Lambda_i), 
\end{equation}
Equivalently, each $\Lambda_i$ is given by the condition that $\Lambda_i |_\C = - {\bf i}_{\mathrm{Y}_i} \Omega_\subC + dJ_i|_\C = \pounds_{\mathrm{Y}_i} \Theta_\subM |_\C$. 

In order to obtain a (dynamically gauge related) bivector field $\pi_\B$ so that $\pi_\B^\sharp(dJ_i) = -\mathrm{Y}_i\in \Gamma(\V)$,  we look for a 2-form $B$ defining a dynamical gauge transformation so that
\begin{equation}\label{Eq:BonSi}
 %{\bf i}_{X_\nh} B |_\C = 0   \qquad \mbox{and}  \qquad   
 {\bf i}_{\mathrm{Y}_i} B |_\C = \Lambda_i|_\C , \qquad \mbox{for each} \ i=1,...,l,
\end{equation}
as in Prop.~\ref{prop:Lambda}. For that purpose, we consider a $G$-invariant distribution $H$ on $Q$ so that $D= H\oplus S$ (using the kinetic energy metric, the distribution $H$ can be chosen to be $H = S^\perp \cap D$, but it is not necessarily of this form).  Then, we obtain a decomposition of the tangent bundle given by 
\begin{equation}\label{Eq:Splitting:H+S+W}
TQ = H \oplus S \oplus W = H\oplus V.
\end{equation}
%Hence, $H$ defines a principal connection with respect to the principal bundle  $\rho_Q : Q \to Q/G$. 
and we denote by $P_H :TQ \to H$ and $P_V:TQ \to V$ the projections associated to the splitting $TQ = H\oplus V$.  Now, consider the principal connection 1-form $A_V :TQ \to \g$ given, at each $X\in \mathfrak{X}(Q)$, by
$$
A_V(X) = \xi \in \Gamma(Q\times\g\to Q)  \qquad \mbox{if and only if} \qquad P_V(X) = \xi_Q,
$$
where $H= \textup{Ker}A_V$. Note that the splitting $V=S\oplus W$ induces another $\g$-valued 1-form $A_S$ so that, for $X\in \mathfrak{X}(Q)$, 
$$
A_S (X) = \eta\in \Gamma(\g_S) \qquad \mbox{if and only if} \qquad P_S(X) = \eta_Q,
$$
where $P_S :TQ \to S$ is the projection associated to the splitting $TQ=H\oplus S \oplus W$. Observe that $A_V = A_S + A_W$, for $A_W$ the $\g$-valued 1-form defined in \eqref{definition_A_W}.

We denote by $\mathcal{A}_\V := \tau_\subM^*A_V:T\M\to \g$ the $\g$-valued 1-form on $\M$, and we see that it is a principal connection on $\rho:\M\to \M/G$  inducing a splitting of the tangent bundle
\begin{equation}
\label{eq:spliting_TM_HV}
T\M = \mathcal{H} \oplus {\mathcal V} ,
\end{equation}
where $\mathcal{H} = \textup{Ker}\mathcal{A}_\V$. Let us denote by $\mathcal{A}_\subS := \tau_\subM^*A_S$ the corresponding $\g$-valued 1-form on $\M$.  Since the horizontal gauge symmetries  $\{\eta_1,...,\eta_l\}$ form a global basis of $\g_S$, then 
\begin{equation*}%\label{Eq:AiS} 
\mathcal{A}_\subS = \mathcal{A}^i_\subS \otimes \eta_i = \sum_{i=1}^l \mathcal{A}^i_\subS \otimes \eta_i,
\end{equation*}
 for $\mathcal{A}_\subS^i$ standard 1-forms on $\M$ such that $\mathcal{A}_\subS^i(\mathrm{Y}_j) = \delta_{ij}$ and $\mathcal{A}_\subS^i |_{\mathcal H} = 0 = \mathcal{A}_\subS^i|_\subW$. 
Considering again the momentum map as a $\g^*$-valued function $J:\M\to \g^*$, we denote by $\langle J, d^\C\!\mathcal{A}^i_\subS \otimes \eta_i\rangle$ the 2-form given by the natural pairing between $J$ and the $\g$-valued 2-form $(d^\C\!\mathcal{A}^i_\subS )\otimes \eta_i$.  

We now have the ingredients to define the 2-form $B_1$ that will play an important role in the hamiltonization process:
\begin{equation} \label{Eq:B1}
  B_1  :=   \langle J, \mathcal{K}_\subW \rangle +  \langle J, d^\C\!\mathcal{A}^i_\subS \otimes \eta_i \rangle =  \langle J, \mathcal{K}_\subW \rangle +  J_i\, d^\C\!\mathcal{A}^i_\subS . 
 \end{equation}
 
Next, following Diagram \eqref{Diagram}, we define the basic (with respect to the principal bundle $\rho:\M\to \M/G$) 2-form $\mathcal{B}$. 
 First, consider the curvature $\mathcal{K}_\V$ associated to the principal connection $\mathcal{A}_\V$, that is, $\mathcal{K}_\V$ is the $\g$-valued 2-form on $\M$ given, at each $X,Y\in T\M$, by 
 $$
\mathcal{K}_\V (X,Y) = d^{\mathcal H} \mathcal{A}_\V (X,Y) = d\mathcal{A}_\V(P_{\mathcal H} (X), P_{\mathcal H}(Y)),
$$
where $P_{\mathcal H} :T\M \to \mathcal{H}$ is the projection associated to the splitting \eqref{eq:spliting_TM_HV}.
Analogously as it was done in \eqref{Eq:B1}, we may consider the 2-form $\langle J, \mathcal{K}_\V \rangle$ given by the natural pairing between $J$ and the $\g$-valued 2-form $\mathcal{K}_\V$.  
 We also define $ \kappa_\g $ as the $\g^*$-valued 1-form on $\M$ given, at each $X\in T\M$, by
 \begin{equation}\label{Def:kappag}
\langle \kappa_\g (X), \chi\rangle  = \kappa (T\tau_\subM(X), \chi_Q), \qquad \mbox{for } \chi\in\g.
 \end{equation}
 Considering the $\g$-valued 1-form on $\M$ given by ${\bf i}_{P_\V(X_\nh)} [\mathcal{K}_\subW + d^\C \mathcal{A}_\subS^i\otimes \eta_i]$,  where $P_\V:T\M\to \V$ denotes the projection associated to the decomposition \eqref{eq:spliting_TM_HV}, we define the 2-form $(\kappa_\g \wedge {\bf i}_{P_\V(X_\nh)} [\mathcal{K}_\subW + d^\C \mathcal{A}_\subS^i\otimes \eta_i] )_\mathcal{H}$ on $\M$ given, at each $X,Y\in T\M$, by 
 \begin{equation*}
  \begin{split}
   (\kappa_\g \wedge {\bf i}_{P_\V(X_\nh)} [\mathcal{K}_\subW + d^\C \mathcal{A}_\subS^i\otimes \eta_i])_\mathcal{H} (X,Y) =  & \  \kappa_\g \wedge {\bf i}_{P_\V(X_\nh)} [\mathcal{K}_\subW + d^\C \mathcal{A}_\subS^i\otimes \eta_i]\, (P_{\mathcal H}(X),P_{\mathcal H}(Y)) \\  
  =  & \  \langle \kappa_\g(P_\mathcal{H}(X)) , [\mathcal{K}_\subW + d \mathcal{A}_\subS^i\otimes \eta_i ]( P_\V(X_\nh), P_\mathcal{H}(Y) ) \rangle \\
   & - \langle \kappa_\g(P_\mathcal{H}(Y)) , [\mathcal{K}_\subW + d \mathcal{A}_\subS^i\otimes \eta_i] ( P_\V(X_\nh), P_\mathcal{H}(X) ) \rangle.
  \end{split}
 \end{equation*}
Finally, we define the 2-form $\mathcal{B}$ on $\M$ given by
 \begin{equation} \label{Eq:mathcalB}
  \mathcal{B} := - \langle J, \mathcal{K}_\V\rangle - \frac{1}{2}(\kappa_\g \wedge {\bf i}_{P_\V(X_\nh)} [\mathcal{K}_\subW + d^\C \mathcal{A}_\subS^i\otimes \eta_i] )_\mathcal{H}.
 \end{equation}

 Observe that, by construction, the 2-forms $B_1$ and $\mathcal{B}$ are semi-basic forms with respect to the bundle $\tau_\subM:\M\to Q$. 
 
\begin{theorem}\label{T:GlobalB}
 Consider a nonholonomic system $(\M, \pi_{\emph\nh}, \Ham_\subM)$ carrying a free and proper $G$-action by symmetries  satisfying the dimension assumption. Suppose that the system admits $l=\textup{rank} (\g_\subS)$ $G$-invariant horizontal gauge momenta $\{J_1,...,J_l\}$ so that the associated symmetries $\{\eta_1,...,\eta_l\}$ form a basis of $\Gamma(\g_S)$.  Then the gauge transformation of $\pi_{\emph\nh}$ by the 2-form $B$ given by
 \begin{equation}\label{Eq:GlobalB}
 \begin{split}
 B & =  B_1 + \mathcal{B} \\
 & = \langle J, \mathcal{K}_\subW +d^\C\!\mathcal{A}^i_\subS \otimes \eta_i \rangle - \langle J, {\mathcal K}_\V \rangle  - \frac{1}{2}(\kappa_\g \wedge {\bf i}_{P_\V(X_{\emph\nh})} [\mathcal{K}_\subW + d^\C \mathcal{A}_\subS^i\otimes \eta_i] )_\mathcal{H},
 \end{split}
 \end{equation}
 defines a bracket $\pi_\B$ on $\M$ such that:
 \begin{enumerate}
  \item[$(i)$] $\pi_\B$ describes the dynamics: $\pi_\B^\sharp(d\Ham_\subM) = - X_\nh$. 
  \item[$(ii)$] $\pi_\B$ is $G$-invariant and the reduced bracket $\pi_{\emph\red}^\B$ on $\M/G$ is a twisted Poisson bracket by the exact 3-form $(-d\bar{\mathcal B})$, where $\bar{\mathcal B}$ is the 2-form on $\M/G$ such that $\rho^*\bar{\mathcal B} = \mathcal{B}$,  and it verifies $(\pi_{\emph\red}^\B)^\sharp(d\Ham_{\emph\red}) = - X_{\emph\red}$.
  \item[$(iii)$]  $\pi_{\emph\red}^\B$  has an (integrable) characteristic distribution tangent to a regular foliation of dimension  $2 \: \textup{dim}(Q/G)$ defined by the common level sets of the (reduced) horizontal gauge momenta $\{\bar{J}_1,...,\bar{J}_l\}$, where $\bar{J}_i\in C^\infty(\M/G)$ such that $\rho^*\bar{J}_i = J_i$. 
 \end{enumerate}
\end{theorem}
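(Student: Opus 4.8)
The plan is to follow the two-step scheme of Diagram~\eqref{Diagram}: first the gauge transformation of $\pi_\nh$ by the semi-basic $2$-form $B_1$, which after reduction will give a genuine Poisson bracket, and then the basic correction $\mathcal{B}$ that turns the whole transformation into a dynamical one.

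\emph{Step 1.} Since $B_1$ is semi-basic with respect to $\tau_\subM:\M\to Q$, the $2$-section $\Omega_\subC+B_1|_\subC$ is nondegenerate (Rmk.~\ref{R:remarks_def_of_B}$(i)$), so $B_1$ defines a gauge transformation; I would let $\pi_1$ be the resulting $G$-invariant bivector (with characteristic distribution $\C$). The identity I would check first is
\begin{equation*}
\mathbf{i}_{\mathrm{Y}_i}B_1|_\subC=\Lambda_i|_\subC\,,\qquad i=1,\dots,l\,,
\end{equation*}
which, together with $\Lambda_i|_\subC=-\mathbf{i}_{\mathrm{Y}_i}\Omega_\subC+dJ_i|_\subC$ and \eqref{def:pi_B}, yields $\pi_1^\sharp(dJ_i)=-\mathrm{Y}_i$; its verification should rest on $\mathcal{A}_\subS^i(\mathrm{Y}_j)=\delta_{ij}$, $\mathcal{A}_\subS^i|_{\mathcal H}=\mathcal{A}_\subS^i|_\subW=0$ and $\Lambda_i|_\subC=\pounds_{\mathrm{Y}_i}\Theta_\subM|_\subC$. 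Since the $J_i$ are $G$-invariant, the $\bar J_i$ become Casimirs of $\{\cdot,\cdot\}_\red^1$ (as in Prop.~\ref{prop:Lambda}$(ii)$), so the characteristic distribution $\C_\red^1$ of $\{\cdot,\cdot\}_\red^1$ annihilates $d\bar J_1,\dots,d\bar J_l$. Next I would run a dimension count using \eqref{dimension_assumption}: it gives $\C+\V=T\M$, hence $\pi_1^\sharp$ is injective on $\rho$-semi-basic $1$-forms, so $\mathcal{U}_1:=\mathrm{span}\{\pi_1^\sharp(df):f\in C^\infty(\M)^G\}$ has rank $\dim(\M/G)$ with $\mathcal{U}_1\cap\V=\S$, and therefore $\mathrm{rank}(\C_\red^1)=\dim(\M/G)-l=2\dim(Q/G)$, which equals the dimension of the common level sets of the $\bar J_i$; thus $\C_\red^1$ is integrable and tangent exactly to the regular foliation by these level sets. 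Finally, because $B_1=\langle J,\mathcal{K}_\subW\rangle+J_i\,d^\C\mathcal{A}_\subS^i$, the $3$-form governing the Jacobiator of $\{\cdot,\cdot\}_\red^1$ via Prop.~\ref{Prop:JacB}$(ii)$ is $d\langle J,\mathcal{K}_\subW\rangle-dB_1=-dJ_i\wedge d^\C\mathcal{A}_\subS^i-J_i\,d(d^\C\mathcal{A}_\subS^i)$; on $\mathcal{U}_1$ the first summand vanishes because $dJ_i(\pi_1^\sharp(df))=\mathrm{Y}_i(f)=0$ for $G$-invariant $f$, and for the second I would prove an identity analogous to Lemma~\ref{L:dK=0}, namely $d(d^\C\mathcal{A}_\subS^i)|_\C=0$. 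This gives that $\{\cdot,\cdot\}_\red^1$ is Poisson, with symplectic leaves the common level sets of the $\bar J_i$.

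\emph{Step 2.} I would then check that $\mathcal{B}$ is $G$-invariant and semi-basic with respect to $\rho$ --- the term $\langle J,\mathcal{K}_\V\rangle$ kills vertical vectors because $\mathcal{K}_\V$ is the curvature of the principal connection $\mathcal{A}_\V$, and the remaining term carries the horizontal projection $(\cdot)_{\mathcal H}$ --- so $\mathcal{B}$ is basic and descends to $\bar{\mathcal{B}}$ on $\M/G$ with $\rho^*\bar{\mathcal{B}}=\mathcal{B}$. As $\mathcal{B}$ is also semi-basic with respect to $\tau_\subM$, the form $B=B_1+\mathcal{B}$ keeps $\Omega_\subC+B|_\subC$ nondegenerate and, by Rmk.~\ref{R:remarks_def_of_B}$(ii)$, the gauge transformation by $B$ is that by $B_1$ followed by that by $\mathcal{B}$, so $\pi_\B$ is the gauge transform of $\pi_1$ by $\mathcal{B}$. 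The crucial point --- and the reason $\mathcal{B}$ is defined the way it is --- is that $\mathbf{i}_{X_\nh}B=\mathbf{i}_{X_\nh}B_1+\mathbf{i}_{X_\nh}\mathcal{B}=0$: expanding $\mathbf{i}_{X_\nh}B_1$ in terms of $\mathcal{K}_\subW$ and $d^\C\mathcal{A}_\subS^i$ and splitting $X_\nh=P_{\mathcal H}(X_\nh)+P_\V(X_\nh)$ via $\mathbf{i}_{X_\nh}\Omega_\subC=d\Ham_\subM|_\subC$, the term $-\langle J,\mathcal{K}_\V\rangle$ should absorb the horizontal--horizontal contribution and $-\tfrac12(\kappa_\g\wedge\mathbf{i}_{P_\V(X_\nh)}[\mathcal{K}_\subW+d^\C\mathcal{A}_\subS^i\otimes\eta_i])_{\mathcal H}$ the mixed one. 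Once this is established, $B$ is a dynamical gauge transformation (Def.~\ref{Def:DynGauge}), giving $(i)$, $\pi_\B^\sharp(d\Ham_\subM)=-X_\nh$, and hence $(\pi_\red^\B)^\sharp(d\Ham_\red)=-X_\red$ by $G$-invariance. Since $\mathcal{B}$ is basic, Rmk.~\ref{R:Gauge}$(iii)$ identifies $\pi_\red^\B$ with the gauge transform of the Poisson bracket $\pi_\red^1$ by $\bar{\mathcal{B}}$, so by Rmk.~\ref{R:Gauge}$(i)$ it is $(-d\bar{\mathcal{B}})$-twisted Poisson with the same characteristic distribution as $\pi_\red^1$, namely the integrable rank-$2\dim(Q/G)$ distribution tangent to the foliation by common level sets of the $\bar J_i$; this yields $(ii)$ and $(iii)$.

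\emph{Main obstacle.} The hard part will be the identity $\mathbf{i}_{X_\nh}B=0$, i.e. verifying that $\mathcal{B}$ is precisely the correction making $B$ dynamical: one must compute $\mathbf{i}_{X_\nh}B_1$ explicitly, carefully decompose $X_\nh$ along both splittings $T\M=\C\oplus\W=\mathcal{H}\oplus\V$, and match the two terms of $\mathcal{B}$ piece by piece, with the kinetic-energy form $\kappa_\g$ entering exactly because of $\mathbf{i}_{X_\nh}\Omega_\subC=d\Ham_\subM|_\subC$. A secondary technical point is the identity $d(d^\C\mathcal{A}_\subS^i)|_\C=0$ needed for the Poisson property in Step~1.
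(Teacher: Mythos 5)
Your overall route is the paper's own (the two-step scheme of Diagram \eqref{Diagram}, with Step 2's reduction argument --- $\mathcal{B}$ basic, $\pi_\red^1$ and $\pi_\red^\B$ gauge related by $\bar{\mathcal{B}}$, hence twisted Poisson with the same characteristic foliation --- matching Prop.~\ref{Prop:gauge2}), but there is a concrete gap in your Step 1: the claimed identity $d(d^\C\mathcal{A}_\subS^i)|_\C=0$ is false in general, and the analogy with Lemma~\ref{L:dK=0} does not go through. That lemma rests on $\tilde\epsilon^a|_\C=0$, whereas $\mathcal{A}_\subS^i(\mathrm{Y}_j)=\delta_{ij}\neq 0$. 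Since $d(d\mathcal{A}_\subS^i)=0$, for arguments in $\Gamma(\C)$ the 3-form $d(d^\C\mathcal{A}_\subS^i)$ reduces to cyclic terms of the form $d\mathcal{A}_\subS^i(P_\subW[X_1,X_2],X_3)$; taking $X_1,X_2\in\Gamma(\mathcal{H})$ and $X_3=\mathrm{Y}_j$, one is left with $d\mathcal{A}_\subS^i(Z,\mathrm{Y}_j)=-\mathcal{A}_\subS^i([Z,\mathrm{Y}_j])$ for $Z=(\xi_a)_\subM$, which equals the $\eta_i$-component of the $\g_S$-part of $\pounds_{\mathrm{Y}_j}\xi_a$ and has no reason to vanish unless the subbundle $\g_W$ is preserved along $S$-directions. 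What is true, and all that is needed, is $d\mathcal{A}_\subS^i(Z,X)=0$ for $Z\in\Gamma(\W)$ and $X\in\Gamma(\mathcal{H})$ only; the slots lying in $\S$ must instead be disposed of by the Casimir property of the $\bar J_i$, i.e. by contracting the Jacobiator 3-form of Prop.~\ref{Prop:JacB} with $\pi_1^\sharp(dJ_i)=-\mathrm{Y}_i$ and two hamiltonian vector fields of $G$-invariant functions, exactly as in the proof of Prop.~\ref{L:B_1}. You have both ingredients available (you use the Casimir property for the foliation), so the repair is local, but as written the Poisson claim for $\pi_\red^1$ is not proved.

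The second issue is the dynamical condition ${\bf i}_{X_\nh}B=0$, which you correctly single out as the main obstacle but for which your sketch omits the decisive input: the hypothesis that the $J_i$ are first integrals. Pure term-matching of $\mathcal{B}$ against ${\bf i}_{X_\nh}B_1$ cannot close, since $B_1$ built from an arbitrary $G$-invariant basis of $\Gamma(\g_S)$ is generally not dynamical. In the paper's argument, on $\S$-directions one uses ${\bf i}_{\mathrm{Y}_i}B|_\C=\Lambda_i|_\C$ together with $\Lambda_i(X_\nh)=0$ (conservation of $J_i$); on $\mathcal{H}$-directions the same identity, rewritten as $\Theta_\subM([\mathrm{Y}_i,X_\nh])=0$ and combined with the second-order character of $X_\nh$, yields the skew-symmetry relations \eqref{eq:kappa_zeros}, equivalently \eqref{Eq:Proof-permutations}, and one also needs $\kappa_\g(X_\nh)=J$ as in \eqref{Eq:Proof-kappaJ}. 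Only with these three facts does the cancellation you describe (the curvature term absorbing the horizontal--horizontal part and the $\kappa_\g$-term the mixed part) actually take place. So the missing idea is precisely the conversion of the conservation laws, via the SODE property, into the pointwise symmetry relations on $\kappa$ that drive the computation.
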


% Since $\mathcal{B}$ is a basic form with respect to the bundle $\rho:\M\to \M/G$, we denote by $\bar{\mathcal B}$ the 2-form on $\M/G$ such that $\rho^*\bar{\mathcal B} = \mathcal{B}$. 
% 
% \begin{corollary} \label{C:ReducedGaugedTwisted}
% The reduced bracket $\pi_{\emph\red}^\B$ on $\M/G$ induced by $\pi_\B$ for $B$ given in \eqref{Eq:GlobalB} describes the nonholonomic reduced dynamics as in \eqref{eq:ReducedDescriptionDynamics} and it 
% \end{corollary}
% 
% 

In order to prove Theorem \ref{T:GlobalB} and to understand the role played by the 2-forms $B_1$ and $\mathcal{B}$, we follow Diagram \eqref{Diagram} using the decomposition $B = B_1 + \mathcal{B}$. 
The outline of the proof is as follows: we will see that the gauge transformation of $(\M, \pi_\nh)$  by $B_1$ gives a new $G$-invariant bivector field $\pi_1$ (described in Proposition \ref{L:B_1}).  Its reduced bracket $\pi_\red^1$ on $\M/G$ is Poisson and its characteristic foliation is given by the common level sets of the (reduced) horizontal gauge momenta.  The problem is that $\pi_\red^1$ might not describe the reduced dynamics (even though $X_\red$ is tangent to the symplectic foliation).  
 However, performing a gauge transformation by $\mathcal{B}$ of $\pi_1$ we gain two properties for the resulting gauge related bivector $\pi_\B$. First, Lemma \ref{L:Invariance} proves that $\mathcal{B}$ is basic with respect to the orbit projection $\rho:\M\to \M/G$, and thus $\pi_\red^1$ and $\pi_\red^\B$ are gauge related by the 2-form $\bar{\mathcal B}$ showing that their characteristic distributions coincide.
 Second, we show that $B=B_1 + \mathcal{B}$ satisfies the dynamical condition, so $\pi_\red^\B$ describes the reduced dynamics.

To prepare for the proof of Theorem \ref{T:GlobalB} we study some properties related to the $G$-invariance of the horizontal gauge momenta $J_i$ and the 2-forms $B_1$ and $\mathcal{B}$.

\begin{lemma}\label{L:Invariance}
Denote by $\rho_Q:Q \to Q/G$ and $\rho:\M\to \M/G$ the orbit projections of the $G$-action on $Q$ and $\M$, respectively.  Then
 \begin{enumerate}
  \item[$(i)$] for $\eta\in\Gamma(\g_S)$, the function $J_\eta ={\bf i}_{\eta_\subM} \Theta_\subM$ on $\M$ is $G$-invariant if and only if $\eta$ (regarded as a $\g$-valued function) is $G$-invariant, i.e., $[\eta, \chi] = 0$ for all $\chi\in \g$. 
\item[$(ii)$] If $X\in \Gamma(H)$ is $\rho_Q$-projectable, then $X$ is $G$-invariant, that is $[X, \chi_Q]=0$ for all $\chi\in\g$ and if $\mathrm{X}\in \Gamma(\mathcal{H})$ is $\rho$-projectable, then the function $p_{\mathrm{X}} : = {\bf i}_{\mathrm{X}} \Theta_\subM$ is $G$-invariant. 
\item[$(iii)$]  If the functions $J_i={\bf i}_{(\eta_i)_\subM} \Theta_\subM$ are $G$-invariant, then the 2-forms $B_1$ and $\mathcal{B}$ are $G$-invariant.  Since $\mathcal{B}$ is also semi-basic by construction, we conclude that it is basic and there is a 2-form $\bar{\mathcal{B}}$ on $\M/G$ such that $\rho^*\bar{\mathcal{B}} = \mathcal{B}$. 
\end{enumerate}
\end{lemma}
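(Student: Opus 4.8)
\textbf{Proof plan for Lemma \ref{L:Invariance}.}

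The three statements are increasingly concrete, and I would prove them in order, since $(iii)$ builds on $(i)$ and $(ii)$. For part $(i)$, the key is that $J_\eta = {\bf i}_{\eta_\subM}\Theta_\subM$ and $\Theta_\subM$ is $G$-invariant (it is the restriction of the canonical Liouville form, which is invariant under cotangent lifts). Hence $\pounds_{\chi_\subM} J_\eta = {\bf i}_{[\chi_\subM,\,\eta_\subM]}\Theta_\subM + {\bf i}_{\eta_\subM}\pounds_{\chi_\subM}\Theta_\subM = {\bf i}_{[\chi_\subM,\,\eta_\subM]}\Theta_\subM$. Now $[\chi_\subM,\eta_\subM]$ should be computed in terms of $[\chi,\eta]$ as sections of $\g\times\M\to\M$: because $\eta$ varies over $\M$ (it is a section, not a constant), one gets $[\chi_\subM,\eta_\subM] = -(\pounds_{\chi_\subM}\eta)_\subM$ where $\pounds_{\chi_\subM}\eta$ is computed pointwise using the adjoint action plus a derivative term; when $\chi$ is a constant section this reduces to the statement that $J_\eta$ is invariant iff $\pounds_{\chi_\subM}\eta = 0$ for all $\chi$, i.e. $\eta$ is $Ad$-equivariant as a $\g$-valued map — which for a connected $G$ is equivalent to $[\eta,\chi]=0$. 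Since $\Theta_\subM$ is semibasic and nondegenerate in the appropriate sense on fibers, ${\bf i}_{(\pounds_{\chi_\subM}\eta)_\subM}\Theta_\subM = 0$ for all $m$ forces $(\pounds_{\chi_\subM}\eta)_Q = 0$, hence $\pounds_{\chi_\subM}\eta = 0$ since infinitesimal generators of a nonzero Lie algebra element at points where the action has the relevant isotropy vanish only if the element itself vanishes — here one uses that $\eta\in\Gamma(\g_S)$ and the fiberwise linearity of $J_\eta$.

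For part $(ii)$, the first assertion is a general fact about principal connections: if $X\in\Gamma(H)$ with $H=\mathrm{Ker}\,A_V$ a $G$-invariant horizontal distribution and $X$ is $\rho_Q$-projectable to some $\bar X$ on $Q/G$, then for any $\chi\in\g$, the bracket $[X,\chi_Q]$ is again $\rho_Q$-projectable (it projects to $[\bar X, 0] = 0$) and is vertical-free because $[X,\chi_Q]$ lies in $H$ (the flow of $\chi_Q$ preserves the $G$-invariant distribution $H$, so $\pounds_{\chi_Q}X \in \Gamma(H)$); a section of $H$ projecting to zero must be zero since $\rho_Q|_H$ is fiberwise injective. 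Hence $[X,\chi_Q]=0$. The second assertion follows: $p_{\mathrm X} = {\bf i}_{\mathrm X}\Theta_\subM$, and by the same argument on $\M$ with the connection $\mathcal{A}_\V$, a $\rho$-projectable $\mathrm X\in\Gamma(\mathcal H)$ is $G$-invariant, so $\pounds_{\chi_\subM} p_{\mathrm X} = {\bf i}_{[\chi_\subM,\mathrm X]}\Theta_\subM + {\bf i}_{\mathrm X}\pounds_{\chi_\subM}\Theta_\subM = 0$.

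For part $(iii)$, I would examine each building block of $B_1$ and $\mathcal{B}$ in \eqref{Eq:B1} and \eqref{Eq:mathcalB}. The form $\langle J,\mathcal{K}_\subW\rangle$ is already known to be $G$-invariant (stated just after \eqref{Def:JK}). For $\langle J, d^\C\!\mathcal{A}^i_\subS\otimes\eta_i\rangle = J_i\, d^\C\!\mathcal{A}^i_\subS$: the functions $J_i$ are $G$-invariant by hypothesis, so by part $(i)$ the sections $\eta_i$ are $G$-invariant; the forms $\mathcal{A}^i_\subS$ are then $G$-invariant because $\mathcal{A}_\subS = \mathcal{A}^i_\subS\otimes\eta_i$ is $G$-invariant (it comes from the $G$-invariant splitting $TQ = H\oplus S\oplus W$) and the $\eta_i$ are invariant, and $d^\C$ commutes with the $G$-action since $\C$ is $G$-invariant. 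Similarly $\langle J,\mathcal{K}_\V\rangle$ is $G$-invariant because $\mathcal{A}_\V$ is a principal connection (hence $\mathcal{K}_\V$ is $Ad$-equivariant and $\langle J,\mathcal{K}_\V\rangle$, pairing $J$ which is $Ad$-equivariant with $\mathcal{K}_\V$, is invariant). For the last term of $\mathcal{B}$, one needs $\kappa_\g$ to be $Ad$-equivariant — which follows from the $G$-invariance of the kinetic energy metric $\kappa$ — together with $G$-invariance of $X_\nh$, of $P_\V$ (since the splitting \eqref{eq:spliting_TM_HV} is $G$-invariant), and of $\mathcal{K}_\subW + d^\C\mathcal{A}^i_\subS\otimes\eta_i$, which we have just established; the wedge-and-horizontal-projection operation preserves invariance since $P_{\mathcal H}$ is $G$-equivariant. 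Once $\mathcal{B}$ is $G$-invariant and, as noted after \eqref{Eq:mathcalB}, semi-basic with respect to $\tau_\subM:\M\to Q$, it is in particular semi-basic with respect to $\rho:\M\to\M/G$ (vertical vectors for $\rho$ project to zero under $T\tau_\subM$ only up to the $S$-part, so here one uses instead that the term is built from horizontal projections $P_{\mathcal H}$ and annihilates $\mathcal V\supset\ker T\rho$) — hence basic, yielding the quotient form $\bar{\mathcal B}$.

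\textbf{Main obstacle.} The delicate point is part $(i)$, specifically the precise bookkeeping of $[\chi_\subM,\eta_\subM]$ when $\eta$ is a genuine (non-constant) section of $\g_S\to Q$ rather than a fixed Lie algebra element, and then arguing that the vanishing of ${\bf i}_{(\pounds_{\chi_\subM}\eta)_\subM}\Theta_\subM$ on all of $\M$ forces $\pounds_{\chi_\subM}\eta = 0$ — this uses the fiberwise linearity of $J_\eta$ and the structure of $\g_S$ rather than being a formal manipulation. The rest of $(ii)$ and $(iii)$ are systematic equivariance checks, tedious but routine once the correct invariance statements about the connections, the splittings, and the metric are assembled.
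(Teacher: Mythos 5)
Parts (ii) and (iii) of your plan are essentially the paper's argument and go through (one wording correction in (iii): $\mathcal{A}_\subS$ is $Ad$-equivariant rather than invariant; invariance of the scalar forms $\mathcal{A}_\subS^i$ is extracted from equivariance of $\mathcal{A}_\subS$ together with equivariance of the basis $\{\eta_i\}$, which is what the paper does in an adapted frame, and your remark on why $\mathcal{B}$ annihilates $\ker T\rho$ is correct). The genuine problem is in part (i), at precisely the step you flag as the main obstacle and then assert rather than prove. The correct bookkeeping is $[\chi_\subM,\eta_\subM]=\bigl(\pounds_{\chi_Q}\eta-[\chi,\eta]\bigr)_\subM$ (your formula $[\chi_\subM,\eta_\subM]=-(\pounds_{\chi_\subM}\eta)_\subM$ drops the algebraic term, and the non-constancy issue concerns $\eta$, not $\chi$, which is a fixed element of $\g$), so that $\pounds_{\chi_\subM}J_\eta=\langle J,\zeta\rangle$ with $\zeta:=\pounds_{\chi_Q}\eta-[\chi,\eta]$. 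The step that fails as stated is the claim that vanishing of $\langle J,\zeta\rangle$ on all of $\M$ forces $\zeta_Q=0$: for $m\in\M_q=\kappa^\flat(D_q)$ one has $\langle J(m),\zeta\rangle=\kappa(v,\zeta_Q(q))$ with $v$ ranging only over $D_q$, so "nondegeneracy of $\Theta_\subM$ on the fibers of $\M$" yields only $\zeta_Q\in\Gamma(D^\perp)$; since $D^\perp\cap V$ need not be trivial (the dimension assumption gives $D+V=TQ$, not $D+V^\perp=TQ$), you cannot conclude $\zeta_Q=0$ from this alone. This is exactly the ``only if'' direction of (i), which is the direction used later (in Prop.~\ref{L:B_1} and in part (iii)), so the gap is not cosmetic.

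There are two ways to close it. The paper's way: because $\eta\in\Gamma(\g_S)$, $\eta_Q\in\Gamma(D)$, so the fiberwise-linear function $\langle\tilde J,\eta\rangle$ on $T^*Q$ vanishes identically on the $G$-invariant complement $\kappa^\flat(D^\perp)$ and restricts to $J_\eta$ on $\M$; hence $G$-invariance of $J_\eta$ extends, by linearity along the invariant splitting $T^*Q=\M\oplus\kappa^\flat(D^\perp)$, to $G$-invariance of $\langle\tilde J,\eta\rangle$ on all of $T^*Q$, and there ${\bf i}_Z\Theta_Q$ pairs against every covector, so $T\tau_{T^*Q}\bigl([\chi_{T^*Q},\eta_{T^*Q}]\bigr)=[\chi_Q,\eta_Q]=0$, and freeness of the action finishes the argument. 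Alternatively, staying on $\M$ as you wanted: on the free stratum $S=D\cap V$ is a $G$-invariant subbundle of constant rank, so $[\chi_Q,\eta_Q]=\zeta_Q\in\Gamma(S)\subset\Gamma(D)$, and then $\kappa$-orthogonality to all of $D$ does force $\zeta_Q=0$ since $\kappa|_D$ is definite. Either of these is the missing idea; your appeal to ``fiberwise linearity of $J_\eta$ and the structure of $\g_S$'' points in the right direction but is never turned into an argument. Finally, tidy up the chain of equivalences at the end of your part (i): what one actually obtains, and what is used downstream, is $[\chi_Q,\eta_Q]=0$, i.e.\ infinitesimal $Ad$-equivariance $\pounds_{\chi_Q}\eta=[\chi,\eta]$ of the section; this is not the same as $\pounds_{\chi_Q}\eta=0$ nor as the pointwise condition $[\eta,\chi]=0$ taken separately (the paper's own phrasing is loose on this point, but your proof should not inherit the conflation).
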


\begin{proof}
 $(i)$ Denote by $\tilde J : T^*Q \to \g^*$ the canonical momentum map on $T^*Q$. Note that $J = {\bf i}_{\eta_\subM} \Theta_\subM =  \iota^*(\langle \tilde J, \eta \rangle )$, where, as usual, $\iota: \M \to T^*Q$ is the inclusion.  Since $\langle \tilde J, \eta \rangle(m)= J_\eta (m)$ for $m\in\M$ and $\langle \tilde J, \eta \rangle (\bar m) = 0$ for $\bar m \in \kappa^\sharp (D^\perp)$, we see that $J_\eta$ is $G$-invariant if and only if $\langle \tilde J, \eta \rangle$ is $G$-invariant.  
 Now, we observe that, for any $\chi\in \g$, 
 $$
 \pounds_{\chi_{T^*Q}} \langle \tilde J, \eta \rangle = \pounds_{\chi_{T^*Q}} {\bf i}_{\eta_{T^*Q}} \Theta_Q = {\bf i}_{\eta_{T^*Q}} \pounds_{\chi_{T^*Q}}  \Theta_Q +  {\bf i}_{[\chi_{T^*Q}, \eta_{T^*Q}]} \Theta_Q = {\bf i}_{[\chi_{T^*Q}, \eta_{T^*Q}]} \Theta_Q.
 $$  
Then $\pounds_{\chi_{T^*Q}} \langle \tilde J, \eta \rangle = 0$ if and only if $0 = T\tau_\subM ( [\chi_{T^*Q}, \eta_{T^*Q}]) = [\chi_{Q}, \eta_{Q}] = - [\chi, \eta]_Q$, which is equivalent to $[\chi, \eta]=0$.
 
 $(ii)$ It is straightforward to see that if $X\in \Gamma(H)$ is $\rho_Q$-projectable then, for all $g\in G$,  $T\psi_g(X(q)) = X(\psi_g(q))$,  where $\psi_g: Q \to Q$ represents the $G$-action on $Q$.
 Now, for $\mathrm{X}\in \Gamma(\mathcal{H})$, we denote by $T\tau_\subM(\mathrm{X}) = X\in \Gamma(H)$. Then for $\chi\in \g$, we have 
 $ \pounds_{\chi_{\M}} p_{\mathrm{X}} = {\bf i}_{[\chi_{\subM}, \mathrm{X}]} \Theta_\subM$. Since $X$ is $\rho_Q$-projectable,  $[\chi_Q, X]=0$ and thus we get that $p_{\mathrm{X}}$ is $G$-invariant.  

 $(iii)$ In order to see that $B_1$ is $G$-invariant, from \eqref{Eq:B1}, we observe that it only remains to prove that $\langle J , d^\C\mathcal{A}_\subS^i \otimes \eta_i \rangle$ is $G$-invariant since $\langle J, \mathcal{K}_\subW\rangle$ has already been proven to be $G$-invariant.  
Let us show that the $\g$-valued 2-form  $d^\C\mathcal{A}_\subS^i \otimes \eta_i $ is $ad$-equivariant.  First, we observe that the horizonal gauge symmetries $\eta_i$ can be seen as $ad$-equivariant sections of $Q\times \g \to Q$ since, for all $\chi \in \g$, we have $\pounds_{\chi_Q} \eta_i = [\chi, \eta_i] = ad_\chi (\eta_i).$
On the other hand, let us consider a basis of sections $\{X_\alpha, Y_j \}$ of $D$ adapted to the splitting $D=H\oplus S$, where $X_\alpha\in \Gamma(H)$ are $\rho_Q$-projectable and $Y_j= (\eta_j)_Q$. Then, $X \in \Gamma(D)$ can be written as $X = x_\alpha X_\alpha + y_j Y_j$ and thus we have that $[\chi_Q,X] = \chi_Q (x_\alpha) X_\alpha + \chi_Q (y_j) Y_j$, where we used that $[\chi_Q,Y_j]=0$ and $[\chi_Q,X_\alpha]=0$ by items $(i)$ and $(ii)$, respectively. Hence, we get that
$(\pounds_{\chi_Q} A_S^i) (X) = \pounds_{\chi_Q} y_i - \chi_Q (y_i) = 0$  and as a consequence, using that $\C$ is a $G$-invariant distribution, we have 
$\pounds_{\chi_\subM} (d^\C\mathcal{A}_\subS^i \otimes \eta_i ) =  d^\C\mathcal{A}_\subS^i \otimes ad_\chi ( \eta_i)  = ad_\chi (d^\C\mathcal{A}_\subS^i \otimes \eta_i )$.
Finally, for all $\chi\in\g$, we conclude that
$$
\pounds_{\chi_\subM} \langle J , d^\C\mathcal{A}_\subS^i \otimes \eta_i \rangle = \langle - ad^*_\chi J, d^\C\mathcal{A}_\subS^i \otimes \eta_i \rangle + \langle J, ad_\chi (d^\C\mathcal{A}_\subS^i \otimes \eta_i) \rangle = 0.
$$

Next, we see that $\mathcal{B}$ is $G$-invariant.  It is straightforward to see that $\langle J, \mathcal{K}_\V\rangle$ is $G$-invariant since $J$ is $-ad^*$-equivariant while the principal curvature $\mathcal{K}_\V$ is $ad$-equivariant.  On the other hand, since $\kappa$ is the $G$-invariant kinetic energy metric, we conclude that the $\g^*$-valued 1-form $\kappa_\g$ is $-ad^*$-equivariant. Using that $\mathcal{K}_\subW$ and $d^\C \mathcal{A}_\subS^i\otimes \eta_i$ are   $ad$-equivariant, it is straightforward to see that ${\bf i}_{P_\V(X_{\nh})} [\mathcal{K}_\subW + d^\C \mathcal{A}_\subS^i\otimes \eta_i]$ is also $ad$-equivariant, and we conclude that $\kappa_\g \wedge{\bf i}_{P_\V(X_{\nh})} [\mathcal{K}_\subW + d^\C \mathcal{A}_\subS^i\otimes \eta_i]$ is $G$-invariant. 
 
 \end{proof}

\begin{proposition} \label{L:B_1}
 Under the hypothesis of Theorem \ref{T:GlobalB}, the gauge transformation of $\pi_{\emph\nh}$ by the 2-form $B_1= \langle J, \mathcal{K}_\subW \rangle +  \langle J, d^\C\mathcal{A}_\subS^i \otimes \eta_i \rangle$  induces a bivector field $\pi_1$ on $\M$ such that 
 \begin{enumerate}
  \item[$(i)$] $\pi_1^\sharp(dJ_i) = -(\eta_i)_\subM$,  for all $i=1,...,l$, 
  \item[$(ii)$] $\pi_1$ is $G$-invariant and the reduced bracket $\pi_{\emph\red}^1$ on $\M/G$ is Poisson with regular symplectic foliation of dimension  $2 \: \textup{dim}(Q/G)$, given by the common level sets of the functions $\{\bar{J}_1,...,\bar{J}_l\}$ where $\bar{J}_i \in C^\infty(\M/G)$ such that $\rho^*\bar{J}_i=J_i$. 
 \end{enumerate}

\end{proposition}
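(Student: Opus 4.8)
The plan is to follow the top row of Diagram~\eqref{Diagram} and read off both properties of $\pi_1$. Since $B_1$ is semi-basic with respect to $\tau_\subM:\M\to Q$, Remark~\ref{R:remarks_def_of_B}$(i)$ guarantees that $\Omega_\subC+B_1|_\C$ is nondegenerate and, being assembled out of $\mathcal{K}_\subW=d^\C\!\mathcal{A}_\subW$ and $d^\C\!\mathcal{A}^i_\subS$, that $B_1$ also satisfies \eqref{condition_B_W}; hence $B_1$ defines a gauge transformation, producing a bivector $\pi_1$ with the same characteristic distribution $\C$ as $\pi_\nh$. For $(i)$ I would reproduce the argument in the proof of Prop.~\ref{prop:Lambda}$(i)$ (which does not use the dynamical condition): by \eqref{Eq:Pinh(dJ)} one has ${\bf i}_{\mathrm{Y}_i}\Omega_\subC=dJ_i|_\C-\Lambda_i|_\C$, so once the pointwise identity ${\bf i}_{\mathrm{Y}_i}B_1|_\C=\Lambda_i|_\C$ is established, \eqref{def:pi_B} immediately gives $\pi_1^\sharp(dJ_i)=-\mathrm{Y}_i$.

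Proving ${\bf i}_{\mathrm{Y}_i}B_1|_\C=\Lambda_i|_\C$ is the technical core, and I would verify it by computing both sides. Writing $\eta_i=f_i^I\chi_I$ in a basis $\{\chi_I\}$ of $\g$ and using $\pounds_{(\chi_I)_\subM}\Theta_\subM=0$ together with ${\bf i}_{(\chi_I)_\subM}\Theta_\subM=\langle J,\chi_I\rangle$, the Lie derivative of the Liouville form along the infinitesimal generator of the section $\eta_i$ expands to $\pounds_{\mathrm{Y}_i}\Theta_\subM=\langle J,\tau_\subM^*d\eta_i\rangle$, so $\Lambda_i|_\C=\pounds_{\mathrm{Y}_i}\Theta_\subM|_\C=\langle J,\tau_\subM^*d\eta_i\rangle|_\C$. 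On the other side, $B_1=\langle J,\mathcal{K}_\subW+d^\C\!\mathcal{A}^j_\subS\otimes\eta_j\rangle$ gives ${\bf i}_{\mathrm{Y}_i}B_1=\langle J,{\bf i}_{\mathrm{Y}_i}(\mathcal{K}_\subW+d^\C\!\mathcal{A}^j_\subS\otimes\eta_j)\rangle$, so it is enough to check that ${\bf i}_{\mathrm{Y}_i}(\mathcal{K}_\subW+d^\C\!\mathcal{A}^j_\subS\otimes\eta_j)|_\C=\tau_\subM^*d\eta_i|_\C$. For this I would use that $\mathcal{A}_\V=\mathcal{A}_\subW+\mathcal{A}^j_\subS\otimes\eta_j$ is a principal connection with ${\bf i}_{\mathrm{Y}_i}\mathcal{A}_\V=\eta_i$ and $\mathcal{A}_\V=\mathcal{A}_\subS$ on $\C$: the Cartan formula and the $ad$-equivariance of $\mathcal{A}_\V$ yield ${\bf i}_{\mathrm{Y}_i}d\mathcal{A}_\V=\pounds_{\mathrm{Y}_i}\mathcal{A}_\V-\tau_\subM^*d\eta_i=-[\eta_i,\mathcal{A}_\V(\cdot)]$, which restricts to zero on $\C$ because the $G$-invariance of the momenta forces $[\eta_i,\eta_j]=0$ (Lemma~\ref{L:Invariance}$(i)$). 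Since $d\mathcal{A}_\V$ restricted to $\C$ equals $\mathcal{K}_\subW+d^\C\!\mathcal{A}^j_\subS\otimes\eta_j-\mathcal{A}^j_\subS\wedge d\eta_j$, and ${\bf i}_{\mathrm{Y}_i}(\mathcal{A}^j_\subS\wedge d\eta_j)|_\C=\tau_\subM^*d\eta_i|_\C$ (using $\mathcal{A}^j_\subS(\mathrm{Y}_i)=\delta_{ij}$ and $d\eta_j(\mathrm{Y}_i)=0$, again a consequence of $G$-invariance), one obtains exactly the desired identity. Keeping track of which commutator and directional-derivative terms are annihilated by $G$-invariance, and of the bookkeeping $\mathcal{K}_\subW=d^\C\!\mathcal{A}_\subW$ versus $d\mathcal{A}_\V$, is the delicate point, and I expect this to be the main obstacle; the rest of the proof is essentially dimension-counting.

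For $(ii)$: the $G$-invariance of $\pi_1$ follows from that of $\pi_\nh$ and of $B_1$ (Lemma~\ref{L:Invariance}$(iii)$) via Remark~\ref{R:Gauge}$(iii)$, so $\pi_\red^1$ is well defined. From $(i)$, $\pi_1^\sharp(dJ_i)=-\mathrm{Y}_i$ is vertical, hence $\{f,J_i\}_1=-\mathrm{Y}_i(f)=0$ for every $f\in C^\infty(\M)^G$, i.e. each $\bar J_i$ is a Casimir of $\pi_\red^1$; moreover the $d\bar J_i$ are pointwise independent since the $\mathrm{Y}_i$ are, $dJ_i\in\V^\circ$, and $\ker\pi_1^\sharp=\C^\circ$ meets $\V^\circ$ trivially (the dimension assumption gives $\C+\V=T\M$, because $T\M=\C\oplus\W$ with $\W\subset\V$). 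To pin down the characteristic foliation I would count ranks: $\mathcal{U}_{B_1}:=\pi_1^\sharp(\V^\circ)$ has rank $\dim\M-\dim G$ (as $\ker\pi_1^\sharp=\C^\circ$ and $\C+\V=T\M$), it sits inside $\C$, and it contains $\S=\textup{span}\{\mathrm{Y}_i\}$ by $(i)$; therefore $\mathcal{U}_{B_1}\cap\V=\C\cap\V=\S$ and $\C_\red^1:=T\rho(\mathcal{U}_{B_1})$ has rank $\dim\M-\dim G-l=2\dim(Q/G)$. Being tangent to the common level sets of the Casimirs $\bar J_i$ and of exactly their dimension, $\C_\red^1$ equals the tangent distribution of that regular foliation, which is in particular integrable.

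Finally, to see that $\pi_\red^1$ is genuinely Poisson I would use Prop.~\ref{Prop:JacB}$(ii)$ with $B=B_1$: the reduced Jacobiator equals $(d\langle J,\mathcal{K}_\subW\rangle-dB_1)$ evaluated on $\pi_1^\sharp$ of differentials $\alpha_1,\alpha_2,\alpha_3\in\V^\circ$ of $G$-invariant functions, and since $B_1-\langle J,\mathcal{K}_\subW\rangle=J_i\,d^\C\!\mathcal{A}^i_\subS$ this is $-\big(dJ_i\wedge d^\C\!\mathcal{A}^i_\subS+J_i\,d\,d^\C\!\mathcal{A}^i_\subS\big)(\pi_1^\sharp\alpha_1,\pi_1^\sharp\alpha_2,\pi_1^\sharp\alpha_3)$. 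The first term vanishes because $dJ_i(\pi_1^\sharp\alpha_a)=\alpha_a(\mathrm{Y}_i)=0$; for the second, $\mathcal{U}_{B_1}\subset\C$ forces $d\,d^\C\!\mathcal{A}^i_\subS|_{\mathcal{U}_{B_1}}=d^\C d^\C\!\mathcal{A}^i_\subS|_{\mathcal{U}_{B_1}}$, which I would show to be zero by a Bianchi-type identity analogous to Lemma~\ref{L:dK=0} (here $(i)$ is again useful, as it identifies $\mathcal{U}_{B_1}$ with $\C\cap\ker dJ_1\cap\cdots\cap\ker dJ_l$, the intersection of $\C$ with the tangent spaces to the level sets $\{J_i=\textup{const}\}$). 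Hence $cyclic[\{\bar f,\{\bar g,\bar h\}_\red^1\}_\red^1]=0$, so $\pi_\red^1$ is Poisson with the regular symplectic foliation of dimension $2\dim(Q/G)$ found above.
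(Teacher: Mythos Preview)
Your overall strategy---gauge by $B_1$, use Prop.~\ref{prop:Lambda} for $(i)$, then Prop.~\ref{Prop:JacB}$(ii)$ plus rank-counting for $(ii)$---matches the paper's. For $(i)$ you take a different, more structural route (Cartan's formula on the principal connection $\mathcal{A}_\V$) instead of the paper's direct expansion of $B_1(\mathrm{Y}_i,\mathrm{X})$ and $\pounds_{\mathrm{Y}_i}\Theta_\subM(\mathrm{X})$ via the splitting $\mathrm{X}=P_{\mathcal H}\mathrm{X}+P_{\mathcal V}\mathrm{X}$ and Lemma~\ref{L:Invariance}. Your route is valid, but the reason you give is not: Lemma~\ref{L:Invariance}$(i)$ actually encodes $Ad$-equivariance of $\eta_i$ (i.e.\ $\pounds_{\chi_Q}\eta_i=[\chi,\eta_i]$), and it does \emph{not} force the pointwise brackets $[\eta_i,\eta_j]$ to vanish. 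Concretely, one finds ${\bf i}_{\mathrm{Y}_i}d\mathcal{A}_\V|_\C(\mathrm{X})=\mathcal{A}^j_\subS(\mathrm{X})[\eta_i,\eta_j]$ and ${\bf i}_{\mathrm{Y}_i}(\mathcal{A}^j_\subS\wedge d\eta_j)|_\C(\mathrm{X})=d\eta_i(\mathrm{X})-\mathcal{A}^j_\subS(\mathrm{X})[\eta_i,\eta_j]$ (using $d\eta_j(\mathrm{Y}_i)=[\eta_i,\eta_j]$, which follows from the $Ad$-equivariance of $\eta_j$); the two bracket contributions \emph{cancel} rather than vanish separately, and you indeed obtain ${\bf i}_{\mathrm{Y}_i}B_1|_\C=\langle J,d\eta_i\rangle|_\C=\Lambda_i|_\C$.

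The genuine gap is in the Jacobi-identity step of $(ii)$. The ``Bianchi-type identity analogous to Lemma~\ref{L:dK=0}'' that you invoke, namely $d^\C d^\C\mathcal{A}^i_\subS|_{\mathcal{U}_{B_1}}=0$, fails along $\S$: for $Z\in\Gamma(\W)$ one has $d\mathcal{A}^i_\subS(Z,\mathrm{Y}_j)=-\mathcal{A}^i_\subS([Z,\mathrm{Y}_j])$, and the $\S$-component of $[Z,\mathrm{Y}_j]$ is generally nonzero, so the key term $d\mathcal{A}^i_\subS(P_\W[X_1,X_2],X_3)$ does not vanish when $X_3\in\S$. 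Your decomposition $dJ_i\wedge(\cdot)+J_i\,d(\cdot)$ disposes of the first piece correctly, but it does not substitute for the missing argument on the second. The paper handles this by splitting $\mathcal{U}_{B_1}\subset\S\oplus(\mathcal{H}\cap\mathcal{U}_{B_1})$: when one argument is $\mathrm{Y}_i\in\S$, the Casimir property of $\bar J_i$ (already established from $(i)$) and Prop.~\ref{Prop:JacB}$(ii)$ force ${\bf i}_{\mathrm{Y}_i}(d\langle J,\mathcal{K}_\subW\rangle-dB_1)$ to vanish on $\mathcal{U}_{B_1}\times\mathcal{U}_{B_1}$ automatically (the Jacobiator with a Casimir is identically zero); only then is the Bianchi computation run on $\mathcal{H}\cap\mathcal{U}_{B_1}$, where one does have $d\mathcal{A}^i_\subS(Z,X)=0$ for $X\in\mathcal{H}$ (from $d\mathcal{A}_\V(Z,X)=0$ and Lemma~\ref{L:dK=0}). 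You need this $\S$/$\mathcal{H}$ split; the level-set description of $\mathcal{U}_{B_1}$ you mention does not by itself yield the identity.
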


\begin{proof}
 First, observe that $B_1$ is a semi-basic 2-form with respect to the bundle $\tau_\subM:\M \to Q$ and thus, by Remark \ref{R:remarks_def_of_B}, the gauge transformation of $\pi_\nh$ by $B$ gives another bivector field denoted by $\pi_1$. 
 
 $(i)$ In virtue of Prop.~\ref{prop:Lambda}, it is enough to check that, for each $\mathrm{Y}_i = (\eta_i)_\subM$,   $\mathbf{i}_{\mathrm{Y}_i} B_1 |_\C = \Lambda_i |_\C$ as in \eqref{Eq:BonSi}. 
 Then, for $\mathrm{X}\in \Gamma(\C)$, we have
 \begin{equation*} %\label{eq:B1(Yi,Y)}
\begin{split}
 B_1(\mathrm{Y}_i,\mathrm{X}) &= \langle J, \mathcal{K}_\subW (\mathrm{Y}_i,\mathrm{X}) \rangle +
 \langle J, d\mathcal{A}_\subS^k (\mathrm{Y}_i,\mathrm{X})\otimes \eta_k \rangle\\ 
 & = -\langle J, \mathcal{A}_\subW ([\mathrm{Y}_i,\mathrm{X}]) \rangle     +  \langle J, (\mathrm{Y}_i (\mathcal{A}_\subS^k (\mathrm{X})) -  \mathcal{A}_\subS^k( [\mathrm{Y}_i, \mathrm{X}]) ) \otimes \eta_k\rangle = \langle J ,  \mathrm{Y}_i (\mathcal{A}_\subS^k (\mathrm{X})) \eta_k \rangle - \langle J , \mathcal{A}_\V([\mathrm{Y}_i, \mathrm{X}]) \rangle,
\end{split}
\end{equation*} 
where we used that $\mathcal{A}_\V = \mathcal{A}_\subS + \mathcal{A}_\subW$.
Following \eqref{eq:spliting_TM_HV}, the vector field $\mathrm{X}$ on $\M$ can be decomposed as $\mathrm{X} = P_{\mathcal H}(\mathrm{X}) + P_{\mathcal V}(\mathrm{X})$ where $P_{\mathcal H}: T\M \to  \mathcal{H}$ and $P_{\mathcal V}: T\M \to  \mathcal{V}$ are the corresponding projections. Since $\mathcal{H}$ is $G$-invariant, then one can write $P_{\mathcal H} (\mathrm{X}) = x_n X_n$,  where $X_n$ are $\rho$-projectable vector fields taking values in $\mathcal{H}$ and thus, from Lemma \ref{L:Invariance}$(ii)$, we get that $[\mathrm{Y}_i, \mathrm{X}] = P_{\mathcal V}([\mathrm{Y}_i, \mathrm{X}]) + \mathrm{Y}_i(x_n) X_n $.   
Moreover,  using again Lemma \ref{L:Invariance} $(i)$ and $(ii)$ we get that 
$$
\mathrm{Y}_i(\Theta_\subM(\mathrm{X}) ) = \mathrm{Y}_i (\langle J, \mathcal{A}_\subS(\mathrm{X})\rangle + \mathrm{Y}_i(\Theta_\subM(x_nX_n)) =  J_k \mathrm{Y}_i (\mathcal{A}_\subS^k (\mathrm{X}))  + \mathrm{Y}_i(x_n) \Theta_\subM( X_n).
$$
Then, we conclude that 
\begin{equation*} %\label{eq:B1(Yi,Y)}
\begin{split}
 B_1(\mathrm{Y}_i,\mathrm{X})   &= \langle J ,  \mathrm{Y}_i (\mathcal{A}_\subS^k (\mathrm{X})) \eta_k \rangle - {\bf i}_{P_{\mathcal V}([\mathrm{Y}_i, \mathrm{X}]) } \Theta_\subM =  J_k  \mathrm{Y}_i (\mathcal{A}_\subS^k (\mathrm{X})) - {\bf i}_{[\mathrm{Y}_i, \mathrm{X}] } \Theta_\subM +  \mathrm{Y}_i(x_n) \Theta_\subM(X_n) \\
 & =  \mathrm{X}(\Theta_\subM(\mathrm{Y}_i) ) + d\Theta_\subM (\mathrm{Y}_i, \mathrm{X}) = \Lambda_i(\mathrm{X}). 
\end{split}
\end{equation*} 

$(ii)$ By Lemma \ref{L:Invariance}$(iii)$,  $B_1$ is $G$-invariant and thus following Remark \ref{R:Gauge}$(iii)$, $\pi_1$ is also a $G$-invariant bivector field inducing a reduced bivector field $\pi_\red^1$ on $\M/G$. In order to show that $\pi_\red^1$ is Poisson, by  Prop. \ref{Prop:JacB}$(ii)$, we will check that $ ( d\langle J, {\mathcal K}_\subW\rangle - dB_1 ) |_{\mathcal{U}_{\B_1}} = 0$ for $\mathcal{U}_{\B_1} = \textup{span}\{\pi_1^\sharp(df) \ : \ f\in C^\infty(\M)^G\}$. 
First, using that $\pi_1^\sharp(dJ_i) = \mathrm{Y}_i$ and consequently that the reduced functions $\bar J_i$ on $\M/G$ are Casimirs of $\pi_\red^1$ (see Prop.\ref{prop:Lambda}), we see that, for any $f,g\in C^\infty(\M/G)$,  
$$
{\bf i}_{\mathrm{Y}_i} ( d\langle J, {\mathcal K}_\subW\rangle - dB_1) (\pi_1^\sharp(\rho^*f), \pi_1^\sharp(d\rho^*g)) \circ \rho =  cyclic[\{ \bar{J}_i ,\{ f , g \}_\red^1 \}_\red^1 ] = 0.
$$
Second, since $\mathcal{U}_{\B_1} \subset \C = \mathcal{H} \oplus \S$ (see \eqref{eq:spliting_TM_HV}), it remains to see that $d\langle J, {\mathcal K}_\subW\rangle - dB_1 |_{{\mathcal H} \cap \mathcal{U}_{\B_1}} = 0$. From \eqref{Eq:B1}, observe that $d\langle J, {\mathcal K}_\subW\rangle - dB_1 = - d \langle J , d^\C\mathcal{A}^i_\subS \otimes \eta^i \rangle = - d( J_i\, d^\C\!\mathcal{A}^i_\subS)$.   Hence for  $X_1, X_2, X_3\in \Gamma(\mathcal{H} \cap \mathcal{U}_{B_1}),$ we obtain that 
\begin{equation*}
 \begin{split}
d \langle J , d^\C\!\mathcal{A}^i_\subS \otimes \eta^i \rangle (X_1, X_2, X_3) & = cyclic \left[ X_1( J_i\, d\mathcal{A}^i_\subS (X_2, X_3)) - J_i\, d\mathcal{A}^i_\subS(P_\C([X_1, X_2]), X_3)  \right] \\
& = d(J_i\, d \mathcal{A}^i_S )(X_1, X_2, X_3) + cyclic \left[ J_i\, d \mathcal{A}_\subS^i (P_\subW ([X_1, X_2]), X_3 ) \right] \\
& = cyclic \left[ J_i\, d \mathcal{A}_\subS^i (P_\subW ([X_1, X_2]), X_3 ) \right], 
 \end{split}
\end{equation*}
where we used the fact that $dJ_i(X)=0$ for all $X\in \Gamma(\mathcal{U}_{\B_1})$. 

Note that  $d \mathcal{A}_\subS^i (Z, X)=0$ for all $Z\in \Gamma(\W)$ and $X\in \Gamma(\mathcal{H})$. In fact, on hte one hand, since $\mathcal{A}_\V$ is a principal connection, it is well known that $d\mathcal{A}_\V(Z, X)=0$ for all $Z\in \Gamma(\V)$ and $X\in \Gamma(\mathcal{H})$. On the other hand, it was already proven in Lemma \ref{L:dK=0} that $d\mathcal{A}_\subW(Z,X)=0$ for all $Z \in \Gamma(\W)$ and $X\in \Gamma(\mathcal{H})$.  Then, for $Z\in \Gamma(\W)$ and $X\in \Gamma(\mathcal{H})$,   $d\mathcal{A}_\subS(Z,X) =(d\mathcal{A}_\V - d\mathcal{A}_\subW)(Z,X) = 0$.  Finally, since $\mathcal{A}_\subS^i(X)= \mathcal{A}_\subS^i(Z) = 0$, we get that $0=d\mathcal{A}_\subS(Z,X) = (d \mathcal{A}_\subS^i \otimes \eta_i + \mathcal{A}^i_\subS \otimes d\eta^i)(Z,X) = d \mathcal{A}_\subS^i(Z,X) \otimes \eta_i$ which implies that $d \mathcal{A}_\subS^i (Z, X)=0$. Therefore, we proved that $d \langle J , d^\C\!\mathcal{A}^i_\subS \otimes \eta^i \rangle (X_1, X_2, X_3) = 0$ showing finally that $ ( d\langle J, {\mathcal K}_\subW\rangle - dB_1 ) |_{\mathcal{U}_{\B_1}} = 0$.% Hence, we conclude by Prop.~\ref{Prop:JacB}$(ii)$ that $\pi_\red^1$ is Poisson.

Next, we show that the characteristic foliation of $\pi_\red^1$ is determined by the common level sets of the $l$ (independent) Casimirs $\{\bar J_1,...,\bar J_l\}$. In fact, let us denote $\textup{dim} (Q) = n$, $\textup{rank} (D) = d$ and $\textup{dim} (G) = m$. Then, by the dimension assumption and using that the $G$-action is free, we have $\textup{rank} (S) = l = m+d-n$ and $\textup{dim} (\M/G) = n+d-m$. 
Since the distribution ${\mathcal U}_{\B_1}$ is generated by the hamiltonian vector fields given by $G$-invariant functions (see \eqref{eq:def_U_B}), we have that  $\textup{rank} ({\mathcal U}_{\B_1}) = n+d-m$.  Now, using that
% Now we count the rank of the characteristic distribution of $\pi_\red^\B$. We have $\textup{rank} ({\mathcal U}_\B) = n+d-m$, where ${\mathcal U}_\B$ was defined in . Since 
 the characteristic distribution of $\pi_\red^\B$, denoted by $\C_\red^\B$, is the projection of ${\mathcal U}_{\B_1}$ to the quotient $\M / G$, we get 
$$
\textup{rank} (\C_\red^\B) = \textup{rank} \: T\rho({\mathcal U}_{\B_1}) = \textup{rank} ({\mathcal U}_{\B_1}) - \textup{rank} (\S) = 2(n-m) = 2 \: \textup{dim} (Q/G).
$$
Then, the rank of the annihilator of $\C_\red^\B$ is given by $\textup{rank}(T^*(\M/G)) - \textup{rank} (\C^\B_\red) = l$. Since the 1-forms $\{d \bar J_1,...,d\bar J_l\}$ are independent, they generate this annihilator and thus the distribution $\C_\red^\B$ is exactly the integrable distribution with foliation given by the level sets of the reduced functions $\bar J_i$ induced by the horizontal gauge momenta $J_i$, for $i=1,...,l$.

\end{proof}

\begin{remark}
 Observe that Proposition \ref{L:B_1} explaining the first step of gauging $\pi_\nh$ is ``purely geometric'' is the sense that, given any global $G$-invariant basis $\{\zeta_1,...,\zeta_l\}$  of $\Gamma(\g_S)$, the associated 2-form $B_1$ (given in \eqref{Eq:B1}) defines a reduced Poisson bivector field $\pi_\red^1$ whose characteristic foliation is given by the associated functions $\mathcal{J}_i = {\bf i}_{\zeta_i}\Theta_\subM$ (considered as $G$-invariant functions).  This foliation is independent of the dynamics.  In fact, the reduced nonholonomic vector field $X_\red$ is not necessarily tangent to the foliation.  If the basis of $\g_S$ happens to be given by horizontal gauge symmetries, then the resulting foliation will be tangent to the dynamics, which is the case that we are interested in. 
\end{remark}

Following Diagram \eqref{Diagram}, we proceed with the second step of the gauge transformation by $B$.  By performing a gauge transformation of $\pi_1$ by the (basic) 2-form $\mathcal{B}$ we maintain the characteristic foliation but we modify the leafwise symplectic form by the 2-form $\mathcal{B}$.  As a result, we obtain a bivector field with an almost symplectic foliation.% and the resulting bivector field $\pi_\B$ describes the dynamics, $\pi_\B^\sharp(d\Ham_\subM)= - X_\nh$. 

\begin{proposition} \label{Prop:gauge2}
  Under the hypothesis of Theorem \ref{T:GlobalB}, the gauge transformation of $\pi_1$ by the 2-form $\mathcal{B} = - \langle J, \mathcal{K}_\V\rangle -  \frac{1}{2}\kappa_\g\wedge {\bf i}_{P_\V(X_{\emph\nh})} [\mathcal{K}_\subW + d\mathcal{A}_S^i \otimes \eta_i]$ 
 induces a $G$-invariant bivector field $\pi_\B$ on $\M$ such that the induced reduced bracket $\pi_{\emph\red}^\B$ on $\M/G$ satisfies:
 \begin{enumerate}
   \item[$(i)$]  $\pi_{\emph\red}^\B$  is a regular $(- d\bar{\mathcal{B}})$-twisted Poisson bracket.% where $\bar{\mathcal B}$ is the 2-form on $\M/G$ such that $\rho^*\bar{\mathcal B} = \mathcal{B}$.    
  \item[$(ii)$] Its characteristic foliation coincides with the one given by $\pi_{\emph\red}^1$, that is, it is given by the common level sets of the functions $\{\bar{J}_1,...,\bar{J}_l\}$.  
 \end{enumerate}
\end{proposition}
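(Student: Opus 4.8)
The plan is to exploit the decomposition already set up in Diagram~\eqref{Diagram}: since by Proposition~\ref{L:B_1} the bivector $\pi_1$ is the gauge transform of $\pi_\nh$ by $B_1$, and since both $B_1$ and $\mathcal{B}$ are semi-basic with respect to $\tau_\subM:\M\to Q$, Remark~\ref{R:remarks_def_of_B}$(ii)$ tells us that gauging $\pi_1$ by $\mathcal{B}$ is the same as gauging $\pi_\nh$ by $B = B_1+\mathcal{B}$, and the resulting bivector $\pi_\B$ is well defined (the $2$-section $\Omega_\subC + B|_\C$ is automatically nondegenerate because $B$ is semi-basic). By Lemma~\ref{L:Invariance}$(iii)$, $B_1$ and $\mathcal{B}$ are $G$-invariant, so $\pi_\B$ is $G$-invariant and descends to $\pi_\red^\B$ on $\M/G$. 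For part $(ii)$, the key point is that $\mathcal{B}$ is \emph{basic} (again Lemma~\ref{L:Invariance}$(iii)$): writing $\bar{\mathcal B}$ for the $2$-form on $\M/G$ with $\rho^*\bar{\mathcal B}=\mathcal B$, Remark~\ref{R:Gauge}$(iii)$ applies to the $G$-invariant bivector $\pi_1$ and the basic $2$-form $\mathcal{B}$, giving that $\pi_\red^1$ and $\pi_\red^\B$ are gauge related by $\bar{\mathcal B}$ on $\M/G$. In particular they have the same characteristic distribution, which by Proposition~\ref{L:B_1}$(ii)$ is the integrable, regular distribution tangent to the common level sets of $\{\bar J_1,\dots,\bar J_l\}$; this proves $(ii)$.

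For part $(i)$, I would argue as follows. By Proposition~\ref{L:B_1}$(ii)$, $\pi_\red^1$ is a genuine Poisson bivector with regular symplectic foliation given by the level sets of the $\bar J_i$. A gauge transformation of a Poisson bivector by a (not necessarily closed) $2$-form produces a $(-d\bar{\mathcal B})$-twisted Poisson bivector by Remark~\ref{R:Gauge}$(i)$: concretely, the characteristic foliation is unchanged and the leafwise symplectic form acquires the extra term $\bar{\mathcal B}$ restricted to the leaf, so $\pi_\red^\B$ has an almost symplectic foliation with leafwise $2$-form $\bar\omega_\mu + \bar{\mathcal B}|_{L_\mu}$, where $\bar\omega_\mu$ is the symplectic form of the leaf $L_\mu$ of $\pi_\red^1$. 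Since the leafwise $2$-form must be nondegenerate for the gauge transform to be defined, one should record here that $\Omega_\subC + B|_\C$ being nondegenerate (which holds because $B$ is semi-basic) guarantees this. By Example~$(iii)$ following Definition~\ref{Def:twisted} — or directly by the computation $\tfrac12[\pi_\red^\B,\pi_\red^\B] = (\pi_\red^\B)^\sharp(-d\bar{\mathcal B})$ — the bracket $\pi_\red^\B$ is $(-d\bar{\mathcal B})$-twisted Poisson, and it is regular because the characteristic distribution from $(ii)$ has constant rank $2\dim(Q/G)$.

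The main obstacle, and the step I expect to require genuine work rather than bookkeeping, is verifying the hypotheses that make the two citations above legitimate in the \emph{reduced, possibly non-free} setting. Specifically: (a) that $\mathcal B$ really is basic and not merely semi-basic and $G$-invariant — this is handed to us by Lemma~\ref{L:Invariance}$(iii)$, whose proof in turn rests on the $ad$-equivariance of $\kappa_\g$ and of ${\bf i}_{P_\V(X_\nh)}[\mathcal K_\subW + d^\C\mathcal A_\subS^i\otimes\eta_i]$, so I would double-check that the dynamical vector field $X_\nh$ enters only through $P_\V(X_\nh)$, which is $G$-invariant because $X_\nh$ and the splitting \eqref{eq:spliting_TM_HV} are; and (b) that the leafwise $2$-form of $\pi_\red^1$ plus $\bar{\mathcal B}$ remains nondegenerate on each leaf, i.e.\ that the gauge transformation by $\bar{\mathcal B}$ is actually admissible leaf-by-leaf — here I would invoke that $\pi_\B$ exists on $\M$ (from the semi-basic-ness of $B$) and push the nondegeneracy down via $\rho$, using that $\C_\red^\B = T\rho(\mathcal U_\B)$ has the expected rank. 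Note that the statement and proof should be phrased on $\M_{reg}/G$ when the action is not free, exactly as in the standing convention of Section~\ref{section_system_for_B}; the claim ``$\pi_\red^\B$ is a regular twisted Poisson bracket'' is then a statement about the regular stratum, with the bracket on the full differential space $\M/G$ defined as in \eqref{eq:reduced_bracket_B}.
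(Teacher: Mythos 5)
Your argument is correct and follows essentially the same route as the paper's own proof: use semi-basicness (Remark \ref{R:remarks_def_of_B}) to identify the gauge of $\pi_1$ by $\mathcal{B}$ with the gauge of $\pi_\nh$ by $B=B_1+\mathcal{B}$, then use that $\mathcal{B}$ is basic (Lemma \ref{L:Invariance}$(iii)$) so that $\pi_\red^\B$ is the gauge transform of the Poisson bivector $\pi_\red^1$ by $\bar{\mathcal B}$, keeping the foliation by the level sets of the $\bar J_i$ and twisting each leafwise form by $\iota_\mu^*\bar{\mathcal B}$, hence $(-d\bar{\mathcal B})$-twisted Poisson. Your extra remarks on leafwise nondegeneracy and on restricting to $\M_{reg}/G$ in the non-free case are consistent with the paper's standing conventions and do not change the argument.
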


\begin{proof}
First, we observe that since the 2-form $\mathcal{B}$ is semi-basic with respect to the bundle $\tau_\subM:\M\to Q$, then $\mathcal{B}$ defines a gauge transformation of $\pi_1$ inducing a new bivector field $\pi_\B$. From Remark \ref{R:remarks_def_of_B} we see that the gauge transformation of $\pi_1$ by the 2-form $\mathcal{B}$ is the same as performing a gauge transformation of $\pi_\nh$ by the 2-form $B = B_1 + \mathcal{B}$.  

Second, since $\mathcal{B}$ is basic with respect to the principal bundle $\rho:\M\to \M/G$, then the following diagram commutes:
$$
   \xymatrix{(\M, \pi_1)  \ar[d]^{\rho} \ar[r]^{\mathcal{B}} & (\M, \pi_\B)  \ar[d]^{\rho} \\
              (\M/G, \pi_\red^1) \ar[r]^{\bar{\mathcal{B}}} & (\M/G, \pi_\red^\B). }
$$
That is, $\pi_\red^\B$ is also obtained as the gauge transformation of $\pi_\red^1$ by the 2-form $\bar{\mathcal{B}}$. If we denote by $(L_\mu, \Omega_\mu)$ the symplectic foliation associated to $\pi_\red^1$, then  $\pi_\red^\B$ has the same characteristic foliation as $\pi_\red^1$ but it carries an almost symplectic form on each leaf given by $\tilde\Omega_\mu = \Omega_\mu + \iota^*_\mu \bar{\mathcal B}$, where $\iota_\mu: L_\mu\to \M/G$ is the natural inclusion (see Remark \ref{R:Gauge}$(i)$).  Hence, on each leaf $d\tilde \Omega_\mu = d\iota^*_\mu\bar{\mathcal B}$ and thus $\pi_\red^\B$ is a twisted Poisson bracket by the 3-form $(-d\bar{\mathcal B})$.  
\end{proof}

\noindent {\it Proof of Theorem \ref{T:GlobalB}.} 
By Props. \ref{L:B_1} and \ref{Prop:gauge2} we conclude that the gauge transformation of $\pi_\nh$ by $B$ gives the $G$-invariant almost Poisson bracket $\pi_\B$ described in Prop. \ref{Prop:gauge2}. 

Next, we show that $B$ defines a {\it dynamical gauge transformation}, i.e., ${\bf i}_{X_\nh} B |_\C = 0$.  
First, we observe that ${\bf i}_{X_\nh} B ({\mathrm Y}_i) = - {\bf i}_{{\mathrm Y}_i} B (X_\nh) = -\Lambda_i(X_\nh) = - \pounds_{(\eta_i)_\subM} \Theta_\subM (X_\nh) = 0$ since $\eta_i$ is a horizontal gauge symmetry.  Since $\mathrm{Y}_i$, $i=1,...,l$ generate $\Gamma(\S)$, it remains to prove that ${\bf i}_{X_\nh} B ({\mathrm X}) = 0$ for all ${\mathrm X}\in \Gamma(\mathcal{H})$ (recall that $\C= {\mathcal H} \oplus {\mathcal S}$).   Using that $\mathcal{H}$ is $G$-invariant, by linearity, it is enough to check that ${\bf i}_{X_\nh} B ({\mathrm X}) = 0$ for any $\rho$-projectable vector field  ${\mathrm X}$  taking values in $\mathcal{H}$.

From the definition of $\Lambda_i$ in \eqref{Eq:Lambda}, and using that $\Theta_\subM (X_\nh)$ is a $G$-invariant function and $J_i$ are first integrals, we have that 
\begin{equation}\label{Eq:LambdaTheta}
0=  \Lambda_i (X_\nh) =  d\Theta_\subM({\mathrm Y}_i,X_\nh) = - \Theta_\subM ( [  {\mathrm Y}_i , X_\nh]) - X_\nh (J_i) + {\mathrm Y}_i (\Theta_\subM (X_\nh)) = - \Theta_\subM ( [ {\mathrm Y}_i , X_\nh]), 
\end{equation}
If we consider a (local) basis of  $\rho_Q$-projectable sections of $H$ given by $\{X_\alpha\}$ so that $\{X_\alpha, Y_j\}$ defines a local basis of section of $D$, then, the fact that $X_\nh$ is a {\it second order differential equation}, is expressed, for $m\in \M$, as  $T\tau_\subM(X_\nh(m)) = v^\alpha X_\alpha + v^j Y_j$ where $\kappa^\sharp(v) = m$ for $v=v^\alpha X_\alpha + v^j Y_j$. Since the coordinates $v^i$ and $v^\alpha$ are $G$-invariant, we get, from \eqref{Eq:LambdaTheta}, that 
\begin{equation*}
\begin{split}
0 & = \langle m, T\tau_\subM ([\mathrm{Y}_i, X_\nh])\rangle\\
& =  v^k v^j (\kappa (Y_k, [Y_i,Y_j])) + v^k v^\alpha ( \kappa (Y_k, [Y_i,X_\alpha]) + \kappa (X_\alpha, [Y_i,Y_k]) ) + v^\alpha v^\beta ( \kappa (X_\beta, [Y_i,X_\alpha]) ) ,
\end{split}
\end{equation*}
concluding that the coefficients of the above second order polynomial are skew-symmetric, in particular
\begin{equation}
\label{eq:kappa_zeros}
\kappa (X_\beta, [Y_i,X_\alpha]) = -\kappa(X_\alpha, [Y_i, X_\beta]) \quad \mbox{and} \quad 
%\end{equation}
%and
%\begin{equation}
%\label{eq:kappa_permute}
\kappa (Y_j, [Y_i,X_\alpha]) + \kappa (X_\alpha, [Y_i,Y_j]) = 0.
\end{equation}

Now, we denote by $\sigma = [\mathcal{K}_\subW + d^\C \mathcal{A}_S^i\otimes \eta_i]$ the $\g$-valued 2-form on $\M$.  Consider a (local)  basis of sections $\{\xi_a\}$ of $\g_W$ so that $\{ \eta_i, \xi_a \}$ is a basis of of $\g \times Q \rightarrow Q$. Then we recall that $\mathcal{K}_\subW = d\tilde \epsilon^a \otimes \xi_a$ where $\tilde \epsilon^a$ are the constraint 1-forms on $\M$ so that $\tilde\epsilon^a((\xi_b)_\subM)=\delta^a_b $.  It follows that, for $\mathrm{X}_1,\mathrm{X}_2,\mathrm{X}_3\in T\M$ with $X_i = T\tau_\subM(\mathrm{X}_i)$ in $TQ$,  $\langle \kappa_\g(\mathrm{X}_1), \sigma(\mathrm{X}_2,\mathrm{X}_3)\rangle = \kappa(X_1, P_V ([X_2,X_3]) )$.  Therefore, \eqref{eq:kappa_zeros} implies that, for $\mathrm{Y}\in\Gamma(\S)$,
\begin{equation}\label{Eq:Proof-permutations}
\begin{split}
(i) \ & \ \langle \kappa_\g(\mathrm{X}_1), \sigma(\mathrm{Y}, \mathrm{X}_2)\rangle = - \langle \kappa_\g(\mathrm{X}_2), \sigma(\mathrm{Y}, \mathrm{X}_1)\rangle \mbox{ for } \mathrm{X}_1, \mathrm{X}_2 \in \Gamma(\mathcal{H}) \mbox{ $\rho$-projectable}, \\
(ii) & \ \langle \kappa_\g(\mathrm{Y}_1), \sigma(\mathrm{Y}, \mathrm{X})\rangle = - \langle \kappa_\g(\mathrm{X}), \sigma(\mathrm{Y}, \mathrm{Y}_1)\rangle \mbox{ for } \mathrm{X}\in \Gamma(\mathcal{H}) \mbox{ $\rho$-projectable},\mathrm{Y}_1 \in \Gamma(\S).
\end{split}
\end{equation}

Second, by the definition of $\kappa_\g$ \eqref{Def:kappag} and using again that  $X_\nh$ is a second order differential equation, we observe that, for $\chi\in\g$,
\begin{equation}\label{Eq:Proof-kappaJ}
\langle \kappa_\g(X_\nh(m)), \chi\rangle = \kappa(T\tau_\subM(X_\nh(m)) , \chi_Q\rangle = \langle m, \chi_Q\rangle = \langle J, \chi\rangle(m).
\end{equation}

Using the fact that $\sigma|_{\mathcal H} = \mathcal{K}_\V |_{\mathcal H}$ and also \eqref{Eq:Proof-permutations}(i) and \eqref{Eq:Proof-kappaJ}, we get for $\mathrm{X}\in \mathcal{H}$,
\begin{equation*}
\begin{split}
 {\bf i}_{X_\nh} B ({\mathrm X}) &= \langle J, \sigma (X_\nh, \mathrm{X}) \rangle - \langle J , \mathcal{K}_\V (X_\nh,{\mathrm X}) \rangle - {\textstyle \frac{1}{2}}\kappa_\g \wedge  {\bf i}_{P_\V(X_\nh)} \sigma (P_{\mathcal H} (X_\nh), \mathrm{X} )  \\
 &= \langle J, \sigma (P_\V(X_\nh), \mathrm{X}) \rangle - {\textstyle \frac{1}{2}}\langle \kappa_\g(P_{\mathcal H} (X_\nh)), \sigma( P_\V(X_\nh), \mathrm{X} ) \rangle+ {\textstyle \frac{1}{2}}\langle \kappa_\g(\mathrm{X} ), \sigma( P_\V(X_\nh), P_{\mathcal H} (X_\nh)) \rangle \\
 &=  \langle \kappa_\g(X_\nh), \sigma (P_\V(X_\nh), \mathrm{X}) \rangle - \langle \kappa_\g(P_{\mathcal H} (X_\nh)), \sigma( P_\V(X_\nh), \mathrm{X} ) \rangle \\
 & = \langle \kappa_\g(P_{\mathcal V} (X_\nh)), \sigma( P_\V(X_\nh), \mathrm{X} ) \rangle = - \langle \kappa_\g(\mathrm{X}), \sigma( P_\V(X_\nh),P_{\mathcal V} (X_\nh)  ) \rangle = 0,
\end{split}
\end{equation*}
where in the last equality we used \eqref{Eq:Proof-permutations}$(ii)$.
% 
% 
% Finally, writting this last equation in coordinates and using that $X_\nh$ is a second order differential equation, we obtain that $
% \langle \kappa_\g(P_{\mathcal V} (X_\nh)), \sigma( P_\V(X_\nh), \mathrm{X} ) \rangle = \kappa( T\tau_\subM( P_{\mathcal H} (X_\nh)), [T\tau_\subM( P_\V(X_\nh)) ,X ] ) = v^kv^j \kappa(Y_j,[Y_k,X])$, 
% where we are denoting $X = T\tau_\subM(\mathrm{X} ) \in \Gamma(H)$.  Again, using \eqref{eq:kappa_zeros} we get that 
% $$
% {\bf i}_{X_\nh} B ({\mathrm X}) = - v^kv^j \kappa(X,[Y_k,Y_j]) =0
% $$
% by skew-symmetry.  

\hfill $\square$

\begin{remark}\label{R:B:a_(alphabeta)} 
\begin{enumerate} 
 \item[$(i)$]  The 2-form $B$ in \eqref{Eq:GlobalB} is not the unique 2-form generating a reduced bracket $\{\cdot, \cdot\}_\red^\B$ satisfying   $(i)$ and $(ii)$ of Theorem \ref{T:GlobalB}. In fact, any 2-form $\tilde B$ defining a  dynamical gauge transformation that generates a bivector field $\pi_{\tilde{\B}}$  satisfying conditions $(i)$ and $(ii)$ of Theorem \ref{T:GlobalB} can be written as
\begin{equation*}
\label{form_B_basis_hgm_solution_general}
\tilde B = B +  \mathfrak{B},
\end{equation*}
where  $B$ is given in \eqref{Eq:GlobalB} and $\mathfrak{B}$ is a basic 2-form (with respect to $\rho:\M\to \M/G$) satisfying ${\bf i}_{X_\nh} {\mathfrak B} = 0$ (see Remark \ref{R:Gauge}$(iv)$).
\item[$(ii)$] The splitting $TQ=H\oplus S \oplus W$ in \eqref{Eq:Splitting:H+S+W} is not canonical.  First, the choice of the vertical complement of the constraints $W$ is not canonical, however, it can be checked that the 2-forms $B_1$ and $\mathcal{B}$ remain invariant for different choices of $W$.  On the other hand, we have also the choice of a principal connection giving rise to the horizontal space $H$.  If we denote by $B_1$ and $\bar{B}_1$ the 2-forms given in \eqref{Eq:B1} but computed for two different principal connections so that $TQ= H\oplus V$ and $TQ = \bar{H}\oplus V$ respectively, then $B_1$ and $\bar{B}_1$ differ from a basic term (with respect to $\rho:\M\to \M/G$) that is closed and thus their induced brackets $\{\cdot, \cdot \}_\red^1$ described in Prop.~\ref{L:B_1} are Poisson with the same foliation. Moreover, since $\mathcal{B}$ is already basic (no matter the horizontal space chosen), then the conclusions relative to the integrability of the brackets $\{\cdot, \cdot \}_\red^\B$ on $\M/G$ computed for different choices of $H$ remain the same.    
\item[$(iii)$] The almost Poisson bivector fields $\pi_1$ and $\pi_\B$ have the functions $\{\Ham_\subM, J_1,...,J_l\}$ in involution while the nonholonomic bracket $\pi_\nh$ does not.  
\end{enumerate}
\end{remark}

Following Remark \ref{R:B:a_(alphabeta)}$(i)$, when $\textup{rank}(H) = 1$, any basic 2-form (with respect to $\rho:\M\to \M/G$) is zero and thus $B$ coincides with $B_1$, in decomposition \eqref{Eq:GlobalB}.  On the other hand, if $\textup{rank}(H)=2$, then imposing the condition that ${\bf i}_{X_\nh} {\mathfrak B} = 0$ implies that $\mathfrak{B} = 0$. Therefore, from Theorem \ref{T:GlobalB} we obtain 

\begin{corollary}\label{C:l+1} Under the hypothesis of Theorem \ref{T:GlobalB}, the following holds:
 \begin{enumerate}
  \item[$(i)$] If $\textup{rank}(H)=1$, then $B_1$ is the unique (semi-basic with respect to $\tau_\subM: \M \to Q$) $G$-invariant 2-form such that the gauge related bracket $\pi_1$ verifies that $\pi^\sharp_1(dJ_i) = - (\eta_i)_\subM$.  Moreover, $B_1$ induces a dynamical gauge transformation and thus the Poisson bracket $\{\cdot,\cdot\}_{\emph\red}^1$ describes the reduced nonholonomic dynamics and has symplectic leaves of dimension  2. 
  \item[$(ii)$]   If $\textup{rank}(H)=2$, then $B= B_1 +\mathcal{B}$ is the unique (semi-basic with respect to $\tau_\subM: \M \to Q$) $G$-invariant 2-form inducing a dynamical gauge transformation so that $\pi_\B$ satisfies that $\pi^\sharp_\B(dJ_i) = - (\eta_i)_\subM$. In this case, the reduced bracket $\{\cdot,\cdot\}_{\emph\red}^\B$ (describing the reduced nonholonomic dynamics) is twisted Poisson with almost symplectic leaves of dimension 4. 
 \end{enumerate}
\end{corollary}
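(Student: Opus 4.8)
The plan is to read off both parts from Theorem~\ref{T:GlobalB}, Propositions~\ref{L:B_1} and~\ref{Prop:gauge2}, and the parametrization of admissible $2$-forms in Remark~\ref{R:B:a_(alphabeta)}$(i)$, together with a dimension count and one linear-algebra observation that I would isolate first. The observation is: if $\mathfrak{B}$ is a $2$-form on $\M$ that is semi-basic with respect to $\tau_\subM:\M\to Q$, is $G$-invariant, and annihilates $\V$ (equivalently, is basic with respect to $\rho:\M\to\M/G$), then it annihilates $\ker(T\tau_\subM)\oplus\V$, which for a free action is a subbundle of corank $\dim Q-\dim V=\textup{rank}(H)$ in $T\M$; hence $\mathfrak{B}$ is determined by an alternating form on a rank-$(n-m)$ complement of $\ker(T\tau_\subM)\oplus\V$, where $n=\dim Q$, $m=\dim G$. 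In particular, if $\textup{rank}(H)=1$ then $\mathfrak{B}=0$, while if $\textup{rank}(H)=2$ then $\mathfrak{B}$ is pinned down by a $2$-form on a $2$-plane, which is either zero or nondegenerate. For the dimension count, under the dimension assumption and freeness one has $\textup{rank}(S)=l=m+d-n$ with $d=\textup{rank}(D)$, so $\textup{rank}(H)=\textup{rank}(D)-\textup{rank}(S)=n-m=\textup{dim}(Q/G)$; thus the regular foliation of dimension $2\,\textup{dim}(Q/G)$ of Theorem~\ref{T:GlobalB}$(iii)$ and Proposition~\ref{L:B_1}$(ii)$ has leaves of dimension $2$ in case $(i)$ and $4$ in case $(ii)$.

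For part $(i)$ I would assume $\textup{rank}(H)=1$ and first note that the $2$-form $\mathcal{B}$ of~\eqref{Eq:mathcalB} is semi-basic with respect to $\tau_\subM$ by construction, and $G$-invariant and basic with respect to $\rho$ by Lemma~\ref{L:Invariance}$(iii)$; the observation then forces $\mathcal{B}=0$, so $B=B_1$ and $\pi_\B=\pi_1$. Proposition~\ref{L:B_1} gives that $\pi_1^\sharp(dJ_i)=-(\eta_i)_\subM$ for all $i$ and that $\{\cdot,\cdot\}_\red^1$ is Poisson with symplectic leaves of dimension $2$ equal to the common level sets of $\{\bar{J}_1,\dots,\bar{J}_l\}$, while Theorem~\ref{T:GlobalB}$(i)$ applied with $\mathcal{B}=0$ gives that it describes the reduced dynamics. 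For uniqueness I would take another $G$-invariant $\tilde B$, semi-basic with respect to $\tau_\subM$ and satisfying~\eqref{condition_B_W}, whose gauge-related bracket $\pi_1$ obeys $\pi_1^\sharp(dJ_i)=-(\eta_i)_\subM$: then $\mathfrak{B}:=\tilde B-B_1$ is semi-basic with respect to $\tau_\subM$, $G$-invariant, with $\mathbf{i}_Z\mathfrak{B}=0$ for $Z\in\Gamma(\W)$, and since the extra gauge transformation by $\mathfrak{B}$ leaves $\pi_1^\sharp(dJ_i)$ unchanged, $\mathbf{i}_{(\eta_i)_\subM}\mathfrak{B}|_\C=0$; as $\{(\eta_i)_\subM\}$ spans $\S$ and $\V=\S\oplus\W$, this yields $\mathbf{i}_Z\mathfrak{B}=0$ for all $Z\in\Gamma(\V)$, so the observation gives $\mathfrak{B}=0$, i.e.\ $\tilde B=B_1$.

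For part $(ii)$ I would assume $\textup{rank}(H)=2$; existence of $B$, the identity $(\pi_\red^\B)^\sharp(d\Ham_\red)=-X_\red$, the $(-d\bar{\mathcal{B}})$-twisted Poisson property and the $4$-dimensional almost symplectic foliation are exactly Theorem~\ref{T:GlobalB} and Proposition~\ref{Prop:gauge2}. For uniqueness, Remark~\ref{R:B:a_(alphabeta)}$(i)$ writes any admissible $\tilde B$ as $B+\mathfrak{B}$ with $\mathfrak{B}$ basic with respect to $\rho$ and $\mathbf{i}_{X_\nh}\mathfrak{B}=0$; since $B$ and $\tilde B$ are semi-basic with respect to $\tau_\subM$, so is $\mathfrak{B}$, and by the observation it is carried by an alternating $2$-form on a rank-$2$ complement of $\ker(T\tau_\subM)\oplus\V$. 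Because $X_\nh$ is a second-order equation, $T\tau_\subM(X_\nh(m))$ is the velocity at $m$, whose $H$-component vanishes precisely on the proper subbundle $Leg(S)\subset\M$; off $Leg(S)$ the component of $X_\nh$ along the rank-$2$ complement is nonzero, so $\mathbf{i}_{X_\nh}\mathfrak{B}=0$ forces that $2$-form to vanish there, and by density and continuity $\mathfrak{B}=0$, i.e.\ $\tilde B=B$.

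The main obstacle is the uniqueness in part $(ii)$: one must certify that a $\rho$-basic correction that is also semi-basic with respect to $\tau_\subM$ and annihilated by $X_\nh$ vanishes identically. The delicate points are the identification of such a correction with an alternating $2$-form on the rank-$(n-m)$ ``doubly horizontal'' quotient $T\M/(\ker T\tau_\subM\oplus\V)$, and the density argument --- using the second-order character of $X_\nh$ --- that the image of $X_\nh$ in this quotient is nonzero on a dense subset, so that nondegeneracy of a nonzero $2$-form on a $2$-plane can be invoked globally; part $(i)$ is the degenerate case in which the quotient is a line and supports no nonzero $2$-form, and the rest is the dimension bookkeeping recorded above.
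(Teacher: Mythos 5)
Your proposal is correct and follows essentially the paper's own route: the paper's argument is precisely that any admissible correction $\mathfrak{B}$ is basic with respect to $\rho$ (Remark~\ref{R:B:a_(alphabeta)}$(i)$) and, being also semi-basic with respect to $\tau_\subM$, is determined by a $2$-form on a quotient of rank $\textup{rank}(H)$, so it vanishes when $\textup{rank}(H)=1$ and is killed by ${\bf i}_{X_\nh}\mathfrak{B}=0$ when $\textup{rank}(H)=2$. Your write-up in fact supplies the details the paper leaves implicit --- the identification of the quotient $T\M/(\ker T\tau_\subM\oplus\V)$ with $H$ and the density/nondegeneracy argument using the second-order character of $X_\nh$ (the $H$-component of the velocity vanishes only on $Leg(S)$) --- which is exactly what is needed to justify the rank-two case.
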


In other words, following Prop.~\ref{prop:Lambda}, if $\textup{rank}(H)=1$, then $B_1$ is the unique 2-form satisfying that \eqref{Eq:BonSi} for all $i=1,...,l$. In this case, the dynamical condition comes for free.  On the other hand, if $\textup{rank}(H)=2$ the 2-form $B=B_1 + \mathcal{B}$ defined in \eqref{Eq:GlobalB} is the unique 2-form verifying simultaneously the dynamical condition and \eqref{Eq:BonSi}.

 It is worth noticing that the main example treated in this paper (Sec.~\ref{Example_BallOnSurface}) belongs to the special case $(i)$ (away from the singularities), as well as the examples of a ball in a cylinder \cite{Bal-14,BGM-96} and the solid of revolution rolling on a plane (\cite{CDS-2010,Bal-16,LGN-JM-16} and Sec.~\ref{Sec:Solids}). On the other hand, the Chaplygin ball (\cite{Chap-03,BorMa2001,Naranjo2008} and Sec.~\ref{Example_Chaplygin_ball}) and the snakeboard (\cite{Bloch:Book,Bal-14} and Sec.~\ref{Example_Snake}) belong to the case $(ii)$.

 \begin{remark}[Chaplygin systems]  If $\textup{rank}(H) = \textup{rank}(D)$ then $S = \{0\}$ and $\mathcal{K}_\subW$ is exactly the principal curvature $\mathcal{K}_\V$ (see also \cite{Bal-14}).   In this case, there are no horizontal gauge momenta and, since $P_\V(X_\nh) = 0$ we obtain that $B_1=-\mathcal{B} = \langle J, \mathcal{K}_\V \rangle$, i.e., $B=0$. Therefore,  $\pi_\red$ is a $(-d\langle J, \mathcal{K}_\V\rangle)-$twisted Poisson bracket on $\M/G$ with a $2.\textup{dim}(Q/G)$-dimensional foliation; that is, since $\textup{dim}(\M/G)=2\textup{dim}(Q/G)$ then $\pi_\red$ is determined by a 2-form $\Omega$ as in Remark \ref{R:Gauge}$(i)$ such that $d\Omega =d\langle J, \mathcal{K}_\V\rangle$.  Therefore we recover also the known theory for Chaplygin system \cite{BS-93,Koiller1992}.
 \end{remark}
 
 \begin{remark}[Horizontal symmetries] Suppose that all horizontal gauge symmetries are given by constant sections $\eta_1, ..., \eta_l$.  Then one can prove that  $B_1 = \langle J , d^\C\mathcal{A}_\V\rangle$ and, moreover, using Lemma \ref{L:Invariance} one shows that $B_1$ is basic with respect to $\rho:\M\to \M/G$ (in fact, $d^\C\mathcal{A}_\V$ is basic). Then, we conclude $B_1(X,Y)  = \langle J, d\mathcal{A}_\V \rangle(P_{\mathcal H}(X), P_{\mathcal H}(Y))$ while $\mathcal{B} =  - \langle J, \mathcal{K}_\V\rangle,$ and therefore $B=0$.

 \end{remark}

\subsection{Coordinate approach}
\label{S:coordinate_approach_gauge}

%\textcolor{red}{I didnt use that $H$ was orthogonal to $S$....  I will put the condition now to simplify computations but it is important to check that this condition is not used before....  }

%\textcolor{red}{[Mejorar la redaccion, aca pegue lo que ya estaba escrito]}

In this section, we choose (local) basis adapted to the splitting $TQ = H \oplus S \oplus W$ in order to write the 2-form $B$ inducing the dynamical gauge transformation in coordinates.   As we did in Sec. \ref{section_system_for_B}, here we also assume that the $G$-action is {\it free} and proper and that the system admits $l=\textup{rank}(\g_S)$ $G$-invariant horizontal gauge symmetries $\{ \eta_1,...,\eta_l\}$.

Recall, from \eqref{Eq:Splitting:H+S+W}, the decomposition of the tangent bundle $TQ = H \oplus V$, where the $G$-invariant horizontal distribution $H \subset D$ defines a principal connection with respect to the principal bundle $\rho_Q : Q \to Q/G$.
Let us consider a $G$-invariant (local) basis $\{X_\alpha\}$ of sections  of $H$.

Use that $\g\times Q = \g_S \oplus \g_W$ (see \eqref{splitting_g_times_Q}) and consider the basis of sections $\{\eta_i,\xi_a\}$ of $\g\times Q \to Q$ where $\eta_i\in \Gamma(\g_S)$ are the horizontal gauge symmetries and $\xi_a\in\Gamma(\g_W)$, for $i=1,...,l$ and $a=l+1,...,m$.
 Then, we obtain the (local) basis  of sections of $TQ$ adapted to the splitting $TQ = H \oplus S \oplus W$ with its dual basis given by
\begin{equation}
\label{eq:basis_TQ}
\mathfrak{B}_{TQ} = \{X_\alpha, Y_i :=(\eta_i)_Q , Z_a := (\xi_a)_Q\} \qquad \mbox{and} \qquad \mathfrak{B}_{T^*Q}= \{X^\alpha, Y^i, \epsilon^a\}. 
\end{equation}

For the kinetic energy metric $\kappa$, we use the the following notation: $\kappa_{\alpha\beta} = \kappa(X_\alpha, X_\beta)$, $\kappa_{ij} = \kappa(Y_i, Y_j)$, $\kappa_{ai} = (Z_a, Y_i)$ and $\kappa_{a\alpha} = \kappa(Z_a, X_\alpha)$. If we denote by $(p_\alpha, p_i, p_a)$ the coordinates on $T^*Q$ induced by the basis $\mathfrak{B}_{T^*Q}$, the constraint manifold $\M\subset T^*Q$ is expressed as
$$
\M=\{ (q; p_\alpha, p_i, p_a) \in T^*Q \ : \  p_a = \kappa_{ai} v^i + \kappa_{a\alpha} v^\alpha\},
$$
where $p_i = \kappa_{ij} v^j + \kappa_{i\alpha} v^\alpha$ and $p_\a = \kappa_{i\a} v^i + \kappa_{\alpha\beta} v^\beta$ with $(v^\alpha, v^i, v^a)$  the coordinates on $TQ$ associated with the basis $\mathfrak{B}_{TQ}$. It is straightforward to see that $p_a$ depends linearly on $p_i$ and $p_\alpha$ and then we use coordinates $(q; p_i,p_\alpha)$ to denote an element on $\M$ and, when we write $p_a$ we really mean $\iota^*(p_a)$, where $\iota:\M\to T^*Q$ is the natural inclusion.

On $\M$ we consider the 1-forms $\tau_\subM^*X^\alpha = \tilde X^\alpha, \ \tau_\subM^* Y^i = \tilde Y^i$ and $\tau_\subM^* \epsilon^a = \tilde \epsilon^a$ (as usual, $\tau_\subM:\M \to Q$ is the canonical projection), and obtain the following dual basis on $\M$:
\begin{equation}\label{Eq:basisT*M}
\mathfrak{B}_{T^*\M}= \{\tilde X^\alpha, \tilde Y^i, \tilde \epsilon^a, dp_\alpha, dp_i\} \qquad \mbox{and} \qquad
\mathfrak{B}_{T\M}= \{\tilde X_\alpha, \tilde Y_i, \tilde Z_a, \partial_{p_\alpha}, \partial_{p_i}\} .
\end{equation}
%Moreover, under the assumption that $p_\alpha$ and $p_i$ are $G$-invariant functions, following Lemma \ref{L:Lambda} we have that, f

%\textcolor{red}{should we say that $p_i = J_i$?}

Observe that the Liouville 1-form $\Theta_\subM$ on $\M$ is written in these coordinate as $\Theta_\subM = p_\alpha \tilde X^\alpha + p_i\tilde Y^i +\iota^*(p_a) \tilde Z^a$ and thus the horizontal gauge momenta $J_1,...,J_l$ (that are assumed $G$-invariant) are just 
\begin{equation}\label{Eq:pi=Ji}
J_i = \langle J^{\nh}, \eta_i \rangle = p_i.
\end{equation}

%\begin{remark}
%\label{R:p_alpha_G_invariant}
%As in Section \ref{section_system_for_B}, the principal connection $\mathcal{A}:=(\tau_\subM^*A) :T\M\to \g$ induces a splitting of the tangent bundle $T\M = \mathcal{H} \oplus {\mathcal V}$ where $\mathcal{H} = \textup{Ker}\mathcal{A}$. We observe that if $\tilde X \in \Gamma(\mathcal{H})$ is $\rho$-projectable, then $\tilde X$ is automatically $G$-invariant, i.e., for any $\eta \in \g$, we have $[ \eta_\subM , \tilde X ] = 0$ .
%\end{remark}

\begin{remark}
\label{R:Y_i_generador_inf}  Following Lemma \ref{L:Invariance}, since $\tilde X_\alpha$ are $\rho$-projectable, then the coordinates $p_\alpha$  are $G$-invariant. Since we also assume that $J_i=p_i$ are $G$-invariant,  the vector fields $\tilde Y_i$ and $(\eta_i)_\subM$ coincide. Indeed, by $G$-invariance we have $d p_\alpha ((\eta_i)_\subM) = d p_i ((\eta_i)_\subM) = 0$, and, by duality of the basis \eqref{Eq:basisT*M}, we also have $d p_\alpha (\tilde Y_i) = d p_i (\tilde Y_i) = 0$. Analogously, $\tilde Z_a = (\xi_a)_\subM$ also holds.
\end{remark}

% 
% 
% In these coordinates, for each $i=1,...,l$, the 1-forms $\Lambda_i$ on $\M$, defined in \eqref{Eq:Lambda}, can be written as 
% %\begin{equation}\label{Eq:LambdaCoord}
% $$
% \Lambda_i |_\C = (\Lambda_{i j} \tilde Y^j + \Lambda_{i \alpha} \tilde X^\alpha) |_\C, 
% $$
% %\end{equation}
% where $\Lambda_{ij}= \Lambda_i(\tilde Y_j)$ and $\Lambda_{i \alpha} = \Lambda_i(\tilde X_\alpha)$.  
% 
% %In the following we suppose that $p_i$ are $G$-invariant. This supposition is coherent with our previous results because of item $(ii)$ of the following Lemma and the fact that our horizontal gauge momenta $J_i$, $i=1,...,l$, are supposed $G$-invariant. On the other hand, by Lemma \ref{L:Invariance} $(iii)$, the coordinates $p_\alpha$ are also $G$-invariant and then by Remark \ref{R:Y_i_generador_inf} we have that $\tilde Y_i = (\eta_i)_\subM \in \Gamma(\V)$. 
Denoting by $\{\eta^i, \xi^a\}$ the basis of sections of $\g^*$ dual to $\{\eta_i, \xi_a\}$, we have  

\begin{lemma}
\label{lemma:K_W_in_sections} For $m=(q,p_i,p_\alpha,p_a) \in \M$, 
\begin{enumerate}
 \item[$(i)$] ${\mathcal K}_\subW = d^\C \tilde{\epsilon}^a \otimes \xi_a$ and $\langle J, \mathcal{K}_\subW\rangle = p_a d^\C \tilde \epsilon^a$.
 \item[$(ii)$] $\kappa_\g |_{\mathcal H} = \kappa_{\alpha i}\tilde X^\alpha \otimes \eta^i + \kappa_{\alpha a}\tilde X^\alpha \otimes \xi^a$.  
\end{enumerate}
\end{lemma}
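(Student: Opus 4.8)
The statement is a direct translation of the intrinsic objects $\mathcal{A}_\subW$, $\mathcal{K}_\subW$, $J$ and $\kappa_\g$ into the adapted basis \eqref{eq:basis_TQ}--\eqref{Eq:basisT*M}, so the plan is to compute each one in coordinates and read off the formulas.

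For part $(i)$, I would first note that on $Q$ the $\g$-valued 1-form $A_W$ of \eqref{definition_A_W} is simply $A_W = \epsilon^a\otimes\xi_a$: writing $X = x^\alpha X_\alpha + y^i Y_i + z^a Z_a$, one has $P_W(X) = z^a Z_a = (z^a\xi_a)_Q$, so $A_W(X) = z^a\xi_a = \epsilon^a(X)\,\xi_a$; pulling back by $\tau_\subM$ gives $\mathcal{A}_\subW = \tilde\epsilon^a\otimes\xi_a$. To compute $\mathcal{K}_\subW = d^\C\mathcal{A}_\subW$ I would use the decomposition $\xi_a = h_a^I\chi_I$ into constant sections $\chi_I$ as in the proof of Lemma \ref{L:dK=0}, so that $\mathcal{A}_\subW = h_a^I\tilde\epsilon^a\otimes\chi_I$ and $\mathcal{K}_\subW = d^\C(h_a^I\tilde\epsilon^a)\otimes\chi_I$. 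The key observation is that $\tilde\epsilon^a|_\C = 0$, because $\textup{Ker}\,\mathcal{A}_\subW = \C$ and the $\xi_a$ are pointwise independent; hence the Leibniz term $dh_a^I\wedge\tilde\epsilon^a$ vanishes on $\C$, giving $d^\C(h_a^I\tilde\epsilon^a) = h_a^I\,d^\C\tilde\epsilon^a$ and therefore $\mathcal{K}_\subW = (h_a^I\,d^\C\tilde\epsilon^a)\otimes\chi_I = d^\C\tilde\epsilon^a\otimes\xi_a$. For the pairing, I would use $(\xi_a)_\subM = \tilde Z_a$ (Remark \ref{R:Y_i_generador_inf}) together with the coordinate expression of $\Theta_\subM$ from Section \ref{S:coordinate_approach_gauge}, which yields $\langle J,\xi_a\rangle = {\bf i}_{\tilde Z_a}\Theta_\subM = \iota^*(p_a)$, and hence $\langle J,\mathcal{K}_\subW\rangle = \langle J,\xi_a\rangle\,d^\C\tilde\epsilon^a = p_a\,d^\C\tilde\epsilon^a$.

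For part $(ii)$, I would first identify the horizontal distribution: since $\mathcal{A}_\V = \tau_\subM^*A_V$ is semi-basic with respect to $\tau_\subM:\M\to Q$ and $X_\alpha\in H = \textup{Ker}\,A_V$, the space $\mathcal{H} = \textup{Ker}\,\mathcal{A}_\V$ is spanned by $\{\tilde X_\alpha,\partial_{p_\alpha},\partial_{p_i}\}$. Given $X\in\mathcal{H}$, write $X = \tilde X^\alpha(X)\,\tilde X_\alpha + (\text{terms along }\partial_{p_\alpha},\partial_{p_i})$; applying $T\tau_\subM$ annihilates the fiber directions, so $T\tau_\subM(X) = \tilde X^\alpha(X)\,X_\alpha\in H$. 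Then, for $\chi\in\g$ decomposed at the base point in the basis $\{\eta_i,\xi_a\}$ of $\g$, the definition \eqref{Def:kappag} of $\kappa_\g$ gives $\langle\kappa_\g(X),\eta_i\rangle = \tilde X^\alpha(X)\,\kappa(X_\alpha,Y_i) = \tilde X^\alpha(X)\,\kappa_{\alpha i}$ and $\langle\kappa_\g(X),\xi_a\rangle = \tilde X^\alpha(X)\,\kappa(X_\alpha,Z_a) = \tilde X^\alpha(X)\,\kappa_{\alpha a}$; in terms of the dual basis $\{\eta^i,\xi^a\}$ of $\g^*$ this is exactly $\kappa_\g|_{\mathcal H} = \kappa_{\alpha i}\,\tilde X^\alpha\otimes\eta^i + \kappa_{\alpha a}\,\tilde X^\alpha\otimes\xi^a$.

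\textbf{Main obstacle.} There is no real difficulty here; the two points that require care are (a) justifying that the non-constant section $\xi_a$ may be pulled out of $d^\C$, which works precisely because the dual forms $\tilde\epsilon^a$ annihilate $\C$, and (b) correctly pinning down $\mathcal{H}$ and remembering that the fiber directions $\partial_{p_\alpha},\partial_{p_i}$ lie in $\mathcal{H}$ but are killed by $T\tau_\subM$, so that only the $\tilde X_\alpha$-component of $X$ contributes to $\kappa_\g|_{\mathcal H}$. Everything else is bookkeeping in the adapted basis.
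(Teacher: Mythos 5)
Your proposal is correct and follows essentially the same route as the paper: write $\mathcal{A}_\subW=\tilde\epsilon^a\otimes\xi_a=h_a^I\tilde\epsilon^a\otimes\chi_I$, kill the Leibniz term $dh_a^I\wedge\tilde\epsilon^a$ on $\C$ (since $\tilde\epsilon^a|_\C=0$) to get $\mathcal{K}_\subW=d^\C\tilde\epsilon^a\otimes\xi_a$, pair with $J$ using $\langle J,\eta_i\rangle=p_i$, $\langle J,\xi_a\rangle=\iota^*(p_a)$, and verify $(ii)$ directly from \eqref{Def:kappag} with $(\eta_i)_Q=Y_i$, $(\xi_a)_Q=Z_a$. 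The only difference is that you spell out details the paper leaves implicit (the dual-basis expression of $J$ and the identification $\mathcal{H}=\textup{span}\{\tilde X_\alpha,\partial_{p_\alpha},\partial_{p_i}\}$), which is fine.
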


\begin{proof}
$(i)$ The $\g$-valued $1$-form $\mathcal{A}_\subW$ given in  (\ref{definition_A_W}) is written as 
 $\mathcal{A}_\subW = \tilde \epsilon^a \otimes \xi_a =  h_a^I \tilde \epsilon^a \otimes \chi_I$ and thus we get $d^\C \mathcal{A}_\subW |_\C =  d (\tilde \epsilon^a \otimes \xi_a) |_\C = (d \tilde \epsilon^a \otimes \xi_a +  d h_a^I \wedge \tilde \epsilon^a \otimes \chi_I )|_\C = d \tilde \epsilon^a \otimes \xi_a |_\C.$   Now, since the $\g^*$-valued function $J:\M\to \g^*$ is written as $J = p_i\otimes \eta^i + p_a \otimes \xi^a$, using \eqref{Def:JK} we get that $\langle J, \mathcal{K}_\subW\rangle = p_a d^\C \tilde \epsilon^a$.
 
$(ii)$  Using that $(\eta_i)_Q = Y_i$ and $(\xi_a)_Q = Z_a$, from the definition of $\kappa_\g$ in \eqref{Def:kappag} one can directly verify this formula.
\end{proof}

%Lemma \ref{lemma:K_W_in_sections} is used to compute the failure of the Jacobi identity of reduced brackets $\pi_{\red}$, $\pi_{\red}^\B$ based on Prop.~\ref{Prop:JacB}

We are interested now in writing the 2-form $B$ in suitable coordinates. Let us denote by $C_{\alpha \beta}^{\mbox{\tiny{$D$}}}$, $C_{\alpha i}^{\mbox{\tiny{$D$}}}$ and $C_{ij}^{\mbox{\tiny{$D$}}}$ the structure functions relative to the basis \eqref{eq:basis_TQ}, i.e., $C_{\alpha \beta}^\gamma = X^\gamma ([X_\alpha, X_\beta])$ and so on, for index ${\mbox{\scriptsize{$D$}}}$ including indices $\alpha, i, a$. 

\begin{proposition}\label{P:B-Coord}
Consider \ the \ nonholonomic \ system \ given \ by \ the \ triple \  $(\M, \pi_{\emph\nh}, \Ham_\subM)$ with a $G$-symmetry given by a free and proper action verifying the dimension assumption \eqref{dimension_assumption}. Assume that there are horizontal gauge symmetries $\{\eta_1,...,\eta_l\}$ that form a basis of sections of $\g_S$ with associated $G$-invariant horizontal gauge momenta $\{J_1,...,J_l\}$.  Then, in the basis  \eqref{Eq:basisT*M}, we get
\begin{enumerate}
 \item[$(i)$] the 2-form $B_1$ defined in \eqref{Eq:B1} is written as 
 $$
 B_1 = - \frac{1}{2} p_A C_{ij}^A \tilde Y^i \wedge \tilde Y^j - p_AC_{i \alpha}^A  \tilde Y^i \wedge \tilde X^\alpha  - p_A C_{\alpha\beta}^A \tilde X^\alpha \wedge \tilde X^\beta.
 $$
 \item[$(ii)$] The 2-form $\mathcal{B}$ given in \eqref{Eq:mathcalB} is written as  \ $\mathcal{B} = \frac{1}{2} (p_A C_{\alpha\beta}^A  + v^i\kappa_{\alpha A}C_{i \beta}^A )\tilde X^\alpha \wedge \tilde X^\beta$.
 \item[$(iii)$] The 2-form $B$ inducing the dynamical gauge transformation of Theorem \ref{T:GlobalB} is
\begin{equation} \label{Eq:B-Coord}
B = - \frac{1}{2} p_A C_{ij}^A \tilde Y^i \wedge \tilde Y^j - p_AC_{i \alpha}^A  \tilde Y^i \wedge \tilde X^\alpha + \frac{1}{2} v^i\kappa_{\alpha A}C_{i \beta}^A \tilde X^\alpha \wedge \tilde X^\beta,
\end{equation}
\end{enumerate}
where in all cases the index $A$ only includes indices $i$ and $a$.

%where $\Lambda_{ij}= \Lambda_i(\tilde Y_j)$ and $\Lambda_{i \alpha} = \Lambda_i(\tilde X_\alpha)$ for $\Lambda_i$ the 1-form defined in \eqref{Eq:Lambda}. 
\end{proposition}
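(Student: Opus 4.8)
\medskip

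The plan is to compute each of the $\g$-valued objects appearing in \eqref{Eq:B1} and \eqref{Eq:mathcalB} in the adapted dual basis \eqref{Eq:basisT*M} and then pair them with the momentum map $J$, which is the linear function $J = p_i\otimes\eta^i + p_a\otimes\xi^a$. For item~$(i)$, the key identity is that a $\g$-valued $1$-form whose image lies in the vertical distribution has exterior derivative (on $\C$) expressible through structure functions: writing $\mathcal{A}_\subW = \tilde\epsilon^a\otimes\xi_a$ and $\mathcal{A}^i_\subS = \tilde Y^i$ (up to the usual semi-basic corrections that die on $\C$, exactly as in the proof of Lemma~\ref{lemma:K_W_in_sections}), one gets $d^\C\tilde\epsilon^a$ and $d^\C\tilde Y^i$ in terms of $C^a_{\cdot\cdot}$ and $C^i_{\cdot\cdot}$ respectively via the Maurer--Cartan--type formula $d\tilde\epsilon^a(\tilde E_B,\tilde E_C) = -\tilde\epsilon^a([E_B,E_C]_Q)= -C^a_{BC}$ for basis vector fields $E_B$ of $D$ (indices ranging over $\alpha,i,a$ but $\tilde\epsilon^a|_\C$ and $\tilde Y^i|_\C$ kill the $Z_a$-directions, so only $\alpha,i$ survive as arguments). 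Summing the $\langle J,\mathcal{K}_\subW\rangle = p_a\,d^\C\tilde\epsilon^a$ contribution (from Lemma~\ref{lemma:K_W_in_sections}$(i)$) and the $J_i\,d^\C\tilde Y^i = p_i\,d^\C\tilde Y^i$ contribution, the index $a$ and $i$ combine into the single composite index $A\in\{i,a\}$, giving the stated formula for $B_1$ with coefficients $-\tfrac12 p_AC^A_{ij}$, $-p_AC^A_{i\alpha}$, $-p_AC^A_{\alpha\beta}$.

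\medskip

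For item~$(ii)$, I would split $\mathcal{B} = -\langle J,\mathcal{K}_\V\rangle - \tfrac12(\kappa_\g\wedge{\bf i}_{P_\V(X_\nh)}\sigma)_\mathcal{H}$ with $\sigma = \mathcal{K}_\subW + d^\C\mathcal{A}^i_\subS\otimes\eta_i$. The first term: since $\mathcal{K}_\V = d^\mathcal{H}\mathcal{A}_\V$ and $\mathcal{A}_\V = \tilde Y^i\otimes\eta_i + \tilde\epsilon^a\otimes\xi_a$ (modulo semi-basic terms vanishing on $\mathcal{H}$), pairing with $J$ gives $\langle J,\mathcal{K}_\V\rangle = -\tfrac12 p_AC^A_{\alpha\beta}\tilde X^\alpha\wedge\tilde X^\beta$, i.e. exactly the $\mathcal{H}$-part of $B_1$; this is the term that, with the opposite sign, will cancel the $\tilde X^\alpha\wedge\tilde X^\beta$ piece of $B_1$ when we add $B_1+\mathcal{B}$. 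The second term requires the computation, already essentially carried out in the proof of Theorem~\ref{T:GlobalB}, that for $\mathrm{X}_1,\mathrm{X}_2\in T\M$ with $X_j = T\tau_\subM(\mathrm{X}_j)$ one has $\langle\kappa_\g(\mathrm{X}_1),\sigma(\mathrm{X}_2,\mathrm{X}_3)\rangle = \kappa(X_1, P_V([X_2,X_3]))$; using $T\tau_\subM(P_\V(X_\nh)) = v^iY_i + v^aZ_a$ (second-order-equation property, with $v^a$ determined by the constraint) together with Lemma~\ref{lemma:K_W_in_sections}$(ii)$ for $\kappa_\g|_\mathcal{H} = \kappa_{\alpha i}\tilde X^\alpha\otimes\eta^i + \kappa_{\alpha a}\tilde X^\alpha\otimes\xi^a$, the wedge $(\kappa_\g\wedge{\bf i}_{P_\V(X_\nh)}\sigma)_\mathcal{H}$ reduces to a $\tilde X^\alpha\wedge\tilde X^\beta$ form with coefficient built from $\kappa_{\alpha A}C^A_{i\beta}v^i$, and collecting signs yields $\mathcal{B} = \tfrac12(p_AC^A_{\alpha\beta} + v^i\kappa_{\alpha A}C^A_{i\beta})\tilde X^\alpha\wedge\tilde X^\beta$. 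Item~$(iii)$ is then immediate by adding $(i)$ and $(ii)$: the $-\tfrac12 p_AC^A_{\alpha\beta}$ from $B_1$ cancels the $+\tfrac12 p_AC^A_{\alpha\beta}$ from $\mathcal{B}$, leaving \eqref{Eq:B-Coord}.

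\medskip

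The main obstacle I anticipate is bookkeeping: correctly tracking which semi-basic correction terms (the $dh^I_a\wedge\tilde\epsilon^a$-type pieces and the $\mathcal{A}^i_\subS\otimes d\eta_i$-type pieces) drop out because they are killed upon restriction to $\C$ or $\mathcal{H}$, or because they are contracted against the relevant projections --- this is exactly the mechanism used in Lemma~\ref{L:dK=0} and Lemma~\ref{lemma:K_W_in_sections}, so the same reasoning applies. A secondary delicate point is the factor-of-$\tfrac12$ accounting in the antisymmetrization $\kappa_\g\wedge(\cdot)$ versus the leafwise $2$-form convention, and the verification that, inside the coefficient of $\tilde X^\alpha\wedge\tilde X^\beta$, the symmetric-in-$(\alpha,\beta)$ part of $v^i\kappa_{\alpha A}C^A_{i\beta}$ genuinely drops out --- but this follows from the skew-symmetry relations \eqref{eq:kappa_zeros} established in the proof of Theorem~\ref{T:GlobalB}. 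Modulo these routine-but-careful manipulations, everything reduces to the Maurer--Cartan identity for the structure functions and the two pairing lemmas already proved.
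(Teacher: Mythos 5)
Your proposal is correct and follows essentially the same route as the paper's proof: express $d^\C\tilde Y^i$ and $d^\C\tilde\epsilon^a$ through the structure functions, use Lemma~\ref{lemma:K_W_in_sections} to write $\langle J,\mathcal{K}_\subW\rangle$, $\langle J,\mathcal{K}_\V\rangle$ and $\kappa_\g|_{\mathcal H}$ in the adapted basis, contract with $P_\V(X_\nh)$, and add $B_1+\mathcal{B}$ so that the $p_AC^A_{\alpha\beta}$ terms cancel. The only small imprecision is that $T\tau_\subM(P_\V(X_\nh))=v^iY_i$ exactly (there is no $v^aZ_a$ component, since $X_\nh$ takes values in $\C$ and the constraint forces $v^a=0$), which is what the paper uses; also the worry about the symmetric part of $v^i\kappa_{\alpha A}C^A_{i\beta}$ is unnecessary, since the wedge $\tilde X^\alpha\wedge\tilde X^\beta$ antisymmetrizes automatically.
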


\begin{proof}
% \textcolor{red}{Aca va la prueba que te mande. La primera prueba.  }
First, observe that $d\tilde Y^i |_\C = - (\frac{1}{2}C_{jk}^i\tilde Y^j\wedge\tilde Y^k + C_{j\alpha}^i\tilde Y^j \wedge \tilde X^\alpha + \frac{1}{2}C_{\alpha\beta}^i\tilde X^\alpha \wedge\tilde X^\beta)$ and, analogously, $d\tilde \epsilon^a |_\C = -( \frac{1}{2}C_{jk}^a\tilde Y^j\wedge\tilde Y^k + C_{j\alpha}^a\tilde Y^j \wedge \tilde X^\alpha + \frac{1}{2}C_{\alpha\beta}^a\tilde X^\alpha \wedge\tilde X^\beta)$.  Then, in the basis \eqref{Eq:basisT*M}, we have that $\mathcal{A}_\subS^i = \tilde Y^i$ and thus using Lemma \ref{lemma:K_W_in_sections} we get, 
\begin{equation*}
 \begin{split}
  B_1 & = \langle J,\mathcal{K}_\subW \rangle + \langle J, d^\C \mathcal{A}_S^i \otimes \eta_i\rangle = p_a d^\C \tilde \epsilon ^a + p_id^\C \tilde Y^i \\
  & = - \frac{1}{2}(p_iC_{jk}^i + p_a C_{jk}^a)\tilde Y^j\wedge\tilde Y^k - (p_iC_{j\alpha}^i + p_aC_{i\alpha}^a)\tilde Y^j \wedge \tilde X^\alpha - \frac{1}{2}(p_iC_{\alpha\beta}^i + p_aC_{\alpha\beta}^a) \tilde X^\alpha \wedge\tilde X^\beta.
 \end{split}
\end{equation*}
To compute $\mathcal{B}$ we observe that the principal curvature is given by $\mathcal{K}_\V = d^\mathcal{H} {\mathcal A}_\V$. Then, since $\mathcal{A}_\V = \tilde Y^i \otimes \eta_i + \tilde \epsilon^a \otimes \xi_a$, we get $\langle J, \mathcal{K}_\V \rangle = \langle J , d^\mathcal{H} \mathcal{A}_\V \rangle =  p_i d^\mathcal{H} \tilde Y^i + p_a d^\mathcal{H} \tilde \epsilon^a  = - \frac{1}{2}(p_iC_{\alpha\beta}^i + p_aC_{\alpha\beta}^a) \tilde X^\alpha \wedge\tilde X^\beta. $
On the other hand, 
$\mathcal{K}_\subW + d^\C\mathcal{A}_S^i \otimes \eta_i =  d^\C \tilde \epsilon^a \otimes \xi_a + d^\C \tilde Y^i \otimes \eta_ i$ and then, using that $T\tau_\subM (P_\V (X_\nh)) = v^iY_i$ and Lemma \ref{lemma:K_W_in_sections} we get that 
$$
( \kappa_\g \wedge {\bf i}_{P_\V(X_\nh)} [ \mathcal{K}_\subW + d^\C\mathcal{A}_S^i \otimes \eta_i] )_{\mathcal{H}} =   ( \kappa_{\alpha i}\tilde X^\alpha \wedge {\bf i}_{P_\V(X_\nh)} d\tilde Y^i + \kappa_{\alpha a}\tilde X^\alpha \wedge {\bf i}_{P_\V(X_\nh)} d\tilde \epsilon^a  )_{\mathcal H} = -   v^i\kappa_{\alpha A}C_{i \beta}^A \tilde X^\alpha \wedge \tilde X^\beta.$$
Finally, since $B = B_1 + \mathcal{B}$ we obtain the desired result.  

\end{proof}

% \textcolor{red}{ comente la prueba vieja pero la podes encontrar en el tex por si usas algo de lo que ya esta aca}

\begin{remark}\label{R:B:LuisJames}
%
%\textcolor{red}{ check that $B= 1/3 {\bf i}_{X_\nh} \kappa_\g \wedge [\mathcal{K}_\subW + dA_S^i \otimes \eta_i]$.  Me parece que deberia ser asi, hay que hacer la cuenta}
%
In \cite{LGN-JM-16}, Garcia-Naranjo and Montaldi find, in suitable coordinates, a dynamical gauge transformation which turns first integrals into Casimirs for the reduced bracket based on the decomposition of the tangent bundle by $TQ = D \oplus D^\perp$, where $D^\perp$ is the orthogonal complement with respect to the kinetic energy metric.  For that purpose, the authors do not rely on the dimension assumption. On the other hand, as we saw in Theorem \ref{T:GlobalB}, this assumption is fundamental in order to conclude the integrability properties of $\pi_\red^1$ and $\pi_\red^\B$.  

Although our global formula for $B$ relies on the complement $W$ being vertical (hence we cannot set $W$ to be $D^\perp$ in general),
we observe that our local formulation \eqref{Eq:B-Coord} of the 2-form $B$ does not depend on this property of $W$. In fact,  consider  any decomposition $TQ = D \oplus R$, where $R$ is not necessarily vertical.  Then $TQ = H \oplus S \oplus R$, and we may consider a local basis adapted to this splitting as in \eqref{eq:basis_TQ} and \eqref{Eq:basisT*M}.  In this case, the 2-form $B$ will be given (locally) by the formula \eqref{Eq:B-Coord} but with the indices $A$ including $\alpha,i,a$. % (when $R = W$ a vertical complement, the functions $C_{ij}^\alpha = C_{i\beta}^\alpha = 0$).  
If we set $R=D^\perp$, then $p_a = 0$ and we recover the local formula given in \cite{LGN-JM-16}.

 We point out that taking the complement of the constraints $W$  as vertical is essential to allows us to work with a principal connection and the so-called $\W$-curvature. This intrinsic formulation not only leads to a geometric understanding of the gauge transformation and characterization of the almost symplectic foliations but it also simplifies  
 computations when working with examples since, in general, it is not necessary to compute the structure functions $C_{IJ}^K$ nor orthogonal complements. This will be illustrated in the next sections.

On the other hand, in \cite{LGN-JM-16} the 2-form $B$ is constructed from a locally defined 3-form contracted with $X_\nh$, so its compatibility with the dynamics is manifest.

%which implies directly the dynamical condition for $B$. 

% 
% In our notation, they get, 
% \begin{equation}
% \label{B_LGN_JM}
% \mathcal{G} = \frac{1}{2} p_A C_{ij}^A  \tilde Y^i \wedge \tilde Y^j +  p_A C_{i\alpha}^A  \tilde Y^i \wedge \tilde X^\alpha + \frac{1}{2} p_A \mathcal{G}_{\alpha\beta}^A  \tilde X^\alpha \wedge \tilde X^\beta,
% \end{equation}
% where $ \mathcal{G}_{\alpha \beta}^A$ are free functions such that $\mathcal{G}_{\alpha \beta}^A = - \mathcal{G}_{\beta \alpha}^A$. Comparing the coefficients in \eqref{B_LGN_JM} with our local expressions $\Lambda_{ij}$ and $\Lambda_{i \alpha}$ in Lemma \ref{L:Lambda_alpha}, we see that our 2-form $B$ in Theorem \ref{P:NumberEqs} agrees with \eqref{B_LGN_JM} when $\mathcal{G}_{\alpha \beta}^A$ are fixed to zero.
%(or when $a_{\alpha \beta} = p_AB_{\alpha\beta}^A$ following Remark \ref{R:B:a_(alphabeta)}).
\end{remark}

Next, we write the corresponding bivectors $\pi_1$ and $\pi_\B$ on $\M$ and their corresponding reductions $\pi_\red^1$ and $\pi_\red^\B$ on $\M/G$.
Following with the notation \eqref{Eq:basisT*M} and taking into account that $\tilde X_\alpha$ are $\rho$-projectable vector fields and that the functions $p_\alpha$ and $p_i$ are $G$-invariant, we denote by 
\begin{equation}\label{Eq:d-s:basisTMG}
{\mathfrak B}_{\mbox{\tiny{$T(\M/G)$}}} = \{{\mathcal X}_\alpha:=T\rho(\tilde{X}_\alpha), \partial_{\bar p_\alpha}, \partial_{\bar p_i} \} \qquad \mbox{and} \qquad 
{\mathfrak B}_{\mbox{\tiny{$T^*(\M/G)$}}} = \{{\mathcal X}^\alpha, d \bar p_\alpha, d \bar p_i \}, 
\end{equation}
the dual basis on $T(\M/G)$ and $T^*(\M/G)$, respectively, where $\bar p_\alpha$ and $\bar p_i$ are the associated reduced functions on $\M/G$, i.e. $p_\alpha = \rho^* \bar p_\alpha$ and $p_i = \rho^* \bar p_i$. 
One can check, following Lemma \ref{L:Invariance}$(ii)$ and since the basis \eqref{eq:basis_TQ} is taken to be $G$-invariant, that $C_{\alpha\beta}^{\mbox{\tiny{$D$}}}$ and $C_{j\beta}^A$  are $G$-invariant functions for index {\scriptsize{$D$}} including indices $\alpha,i,a$ and index $A$ including only $i,a$. Hence we denote by $\bar{C}_{\alpha\beta}^{\mbox{\scriptsize{$D$}}}$, $\bar{C}_{j\beta}^A$ the functions on $\M/G$ such that   $\rho^*\bar{C}_{\alpha\beta}^{\mbox{\scriptsize{$D$}}} = C_{\alpha\beta}^{\mbox{\scriptsize{$D$}}}$ and $\rho^*\bar{C}_{j\beta}^A = C_{j\beta}^A$ respectively.

\begin{corollary} Under the hypothesis of Prop.~\ref{P:B-Coord} and working in the basis \eqref{Eq:basisT*M} and \eqref{Eq:d-s:basisTMG} on $M$ and $\M/G$ respectively, we have
\begin{enumerate}
 \item[$(i)$] The gauged related bivector field $\pi_1$ described in Prop.~\ref{L:B_1} is given by $ \pi_1 = \tilde Y_i\wedge \partial_{p_i} + \tilde X_\alpha \wedge \partial_{p_\alpha} -\frac{1}{2} p_\gamma C_{\alpha\beta}^{\gamma} \partial_{p_\alpha}\wedge \partial_{p_\beta}$
 and the reduced Poisson bivector field has the form 
 $$
 \pi_{\emph\red}^1 = \mathcal{X}_\alpha \wedge \partial_{\bar{p}_\alpha} + \frac{1}{2}\bar{p}_\gamma \bar{C}_{\alpha\beta}^{\gamma} \partial_{\bar{p}_\alpha}\wedge \partial_{\bar{p}_\beta},
 $$ 
 with symplectic form $\Omega_\mu$ on the leaf $L_\mu$ given by $\Omega_\mu = \iota^*_\mu ( \mathcal{X}^\alpha \wedge d\bar p_{\alpha} + \frac{1}{2}  \bar{p}_\gamma \bar{C}_{\alpha\beta}^{\gamma} \mathcal{X}^\alpha\wedge \mathcal{X}^\beta ) =  - \iota^*_\mu d(\bar p_\alpha \mathcal{X}^\alpha)$.

\item[$(ii)$] The dynamically gauge related bivector field $\pi_\B$ described in Theorem \ref{T:GlobalB}, is given by 
 $$
 \pi_\B = \tilde Y_i\wedge \partial_{p_i} + \tilde X_\alpha \wedge \partial_{p_\alpha} + \frac{1}{2} \left( p_{\mbox{\tiny{$D$}}} C_{\alpha\beta}^{\mbox{\tiny{$D$}}} + v^i\kappa_{\alpha A} C_{j \beta}^A  \right) \partial_{p_\alpha}\wedge \partial_{p_\beta},
 $$
 where indices {\scriptsize{$D$}} include indices $\alpha,i,a$ while indices $A$ include $i,a$.    Moreover,  the reduced bivector field $\pi_{\emph\red}^\B$ also described in Theorem \emph{\ref{T:GlobalB}} is written in the basis \eqref{Eq:d-s:basisTMG} as
 $$
 \pi_{\emph\red}^\B = \mathcal{X}_\alpha \wedge \partial_{\bar p_\alpha} + \frac{1}{2}\left( \bar p_{\mbox{\tiny{$D$}}} \bar C_{\alpha\beta}^{\mbox{\tiny{$D$}}} + v^i\kappa_{\alpha A} \bar{C}_{j \beta}^A  \right) \partial_{\bar p_\alpha} \wedge \partial_{\bar p_\beta},
 $$
  On each leaf $L_\mu$ the almost symplectic form is given by 
$$
\tilde \Omega_\mu = \Omega_\mu + \iota^*_\mu(\bar{\mathcal{B}}) =  \iota^*_\mu \left( -  d(\bar p_\alpha \mathcal{X}^\alpha) + \bar{\mathcal{B}} \right),
$$
where $\bar{\mathcal B}$ is the 2-form on $\M/G$ such that $\rho^*\bar{\mathcal B}= \mathcal{B}$.%, which is coordinates is given by $
%\bar{\mathcal{B}} =  \left( \bar{p}_i \bar C_{\alpha\beta}^{i} + \bar{p}_a \bar C_{\alpha\beta}^{a} + v^i\kappa_{\alpha A} {\bar C}_{j \beta}^A  \right) \mathcal{X}^\alpha \wedge \mathcal{X}^\beta.$
 \end{enumerate}

\end{corollary}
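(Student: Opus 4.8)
The plan is to do everything in the adapted basis \eqref{Eq:basisT*M}, invert the relevant nondegenerate $2$-sections on the characteristic distribution $\C$, and then push forward by the orbit projection. Throughout I may use the coordinate formulas for $B_1$, $\mathcal B$ and $B$ from Proposition~\ref{P:B-Coord}, the $G$-invariance facts of Lemma~\ref{L:Invariance}, and the statements of Propositions~\ref{L:B_1} and~\ref{Prop:gauge2}.

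\textbf{Step 1: the $2$-section $(\Omega_\subM+B_1)|_\C$ in coordinates.} Since $\Theta_\subM = p_\alpha\tilde X^\alpha + p_i\tilde Y^i + \iota^*(p_a)\tilde\epsilon^a$ and $\Omega_\subM = -d\Theta_\subM$, I restrict to $\C$ (where $\tilde\epsilon^a|_\C=0$) and expand $d^\C\tilde X^\gamma$, $d^\C\tilde Y^i$, $d^\C\tilde\epsilon^a$ through the structure functions, exactly as in the proof of Proposition~\ref{P:B-Coord}. This gives $\Omega_\C$ with coefficients $\tfrac12 p_{\mbox{\tiny{$D$}}}C^{\mbox{\tiny{$D$}}}_{jk}$ on $\tilde Y^j\wedge\tilde Y^k$, $p_{\mbox{\tiny{$D$}}}C^{\mbox{\tiny{$D$}}}_{j\alpha}$ on $\tilde Y^j\wedge\tilde X^\alpha$, and $\tfrac12 p_{\mbox{\tiny{$D$}}}C^{\mbox{\tiny{$D$}}}_{\alpha\beta}$ on $\tilde X^\alpha\wedge\tilde X^\beta$ (indices $D$ over $\alpha,i,a$), plus $-dp_\alpha\wedge\tilde X^\alpha - dp_i\wedge\tilde Y^i$. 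Adding $B_1$ from Proposition~\ref{P:B-Coord}$(i)$ removes precisely the $D=\gamma$ contributions in the first two wedge types, because $C^\gamma_{jk}=X^\gamma([Y_j,Y_k])=0$ (the bracket of vertical fields is vertical) and $C^\gamma_{j\alpha}=X^\gamma([Y_j,X_\alpha])=0$ (write $\eta_j$ in a constant basis of $\g$; by Lemma~\ref{L:Invariance}$(ii)$, $[X_\alpha,\chi_Q]=0$, so $[Y_j,X_\alpha]=[(\eta_j)_Q,X_\alpha]$ is vertical). Hence
\[
(\Omega_\subM+B_1)|_\C \;=\; -dp_\alpha\wedge\tilde X^\alpha - dp_i\wedge\tilde Y^i + \tfrac12\, p_\gamma C^\gamma_{\alpha\beta}\,\tilde X^\alpha\wedge\tilde X^\beta .
\]

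\textbf{Step 2: inversion.} Using $\C=\textup{span}\{\tilde X_\alpha,\tilde Y_i,\partial_{p_\alpha},\partial_{p_i}\}$ and the duality \eqref{Eq:basisT*M}, I verify directly that the claimed $\pi_1$ satisfies $\pi_1^\sharp\big(\mathbf i_X(\Omega_\subM+B_1)\big)=-X$ for $X$ running over this basis; this is the only genuine computation and it amounts to checking $\mathbf i_{\tilde Y_i}(\Omega_\subM+B_1)=dp_i$, $\mathbf i_{\partial_{p_i}}(\Omega_\subM+B_1)=-\tilde Y^i$, $\mathbf i_{\tilde X_\alpha}(\Omega_\subM+B_1)=dp_\alpha+p_\gamma C^\gamma_{\alpha\beta}\tilde X^\beta$, $\mathbf i_{\partial_{p_\alpha}}(\Omega_\subM+B_1)=-\tilde X^\alpha$, combined with $\pi_1^\sharp(dp_i)=-\tilde Y_i$, $\pi_1^\sharp(dp_\alpha)=-\tilde X_\alpha-p_\gamma C^\gamma_{\alpha\beta}\partial_{p_\beta}$, $\pi_1^\sharp(\tilde X^\alpha)=\partial_{p_\alpha}$. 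Since $B_1$ is semi-basic with respect to $\tau_\subM$, $(\Omega_\subM+B_1)|_\C$ is nondegenerate (Remark~\ref{R:remarks_def_of_B}$(i)$), so this identifies $\pi_1$ uniquely. Repeating the argument with $B$ in place of $B_1$, whose coordinate expression is \eqref{Eq:B-Coord}, replaces the $\tilde X^\alpha\wedge\tilde X^\beta$ coefficient by $\tfrac12(p_{\mbox{\tiny{$D$}}}C^{\mbox{\tiny{$D$}}}_{\alpha\beta}+v^i\kappa_{\alpha A}C^A_{i\beta})$ and yields the stated $\pi_\B$ (equivalently, one gauges $\pi_1$ by $\mathcal B$ using Proposition~\ref{P:B-Coord}$(ii)$).

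\textbf{Step 3: reduction.} The $\tilde X_\alpha$ are $\rho$-projectable with $T\rho(\tilde X_\alpha)=\mathcal X_\alpha$, while $\tilde Y_i=(\eta_i)_\subM$ is $\rho$-vertical, so $T\rho(\tilde Y_i)=0$; the functions $p_\alpha,p_i$ and the structure functions $C^{\mbox{\tiny{$D$}}}_{\alpha\beta},C^A_{i\beta}$ are $G$-invariant (Lemma~\ref{L:Invariance}$(ii)$ and the remark preceding the corollary). Pushing $\pi_1$ and $\pi_\B$ forward through $\rho$ via \eqref{definition_reduced_bracket} therefore drops the $\tilde Y_i\wedge\partial_{p_i}$ term and turns each remaining coefficient into its barred counterpart, giving the stated formulas for $\pi_{\red}^1$ and $\pi_{\red}^\B$.

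\textbf{Step 4: leafwise forms, and the main obstacle.} By Proposition~\ref{L:B_1}, $\pi_{\red}^1$ is Poisson with symplectic leaves $L_\mu=\{\bar J_i=\textup{const}\}=\{\bar p_i=\textup{const}\}$; restricting $\pi_{\red}^1$ to such a leaf gives $\mathcal X_\alpha\wedge\partial_{\bar p_\alpha}-\tfrac12\bar p_\gamma\bar C^\gamma_{\alpha\beta}\partial_{\bar p_\alpha}\wedge\partial_{\bar p_\beta}$, and the same inversion check as in Step~2 (now in dimension $2\dim(Q/G)$) produces $\Omega_\mu=\iota^*_\mu\big(\mathcal X^\alpha\wedge d\bar p_\alpha+\tfrac12\bar p_\gamma\bar C^\gamma_{\alpha\beta}\mathcal X^\alpha\wedge\mathcal X^\beta\big)$; since $\iota^*_\mu d\mathcal X^\gamma=-\tfrac12\bar C^\gamma_{\alpha\beta}\,\iota^*_\mu(\mathcal X^\alpha\wedge\mathcal X^\beta)$, this equals $-\iota^*_\mu d(\bar p_\alpha\mathcal X^\alpha)$. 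Finally, by Proposition~\ref{Prop:gauge2} and Remark~\ref{R:Gauge}$(i)$, $\pi_{\red}^\B$ is the gauge transform of $\pi_{\red}^1$ by the basic $2$-form $\bar{\mathcal B}$, so on each leaf the almost symplectic form gains the summand $\iota^*_\mu\bar{\mathcal B}$, i.e. $\tilde\Omega_\mu=\Omega_\mu+\iota^*_\mu\bar{\mathcal B}=\iota^*_\mu\big(-d(\bar p_\alpha\mathcal X^\alpha)+\bar{\mathcal B}\big)$. The only real work is the inversion in Step~2 together with the bookkeeping of which structure-function indices survive after adding $B_1$ (resp.\ $B$); a minor point to track carefully is the sign convention $\Omega_\subM=-d\Theta_\subM$, which propagates through the inversion into the signs of the quadratic terms.
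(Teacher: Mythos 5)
Your proposal is correct and follows essentially the same route as the paper's proof: express $\Omega_\subC$ in the adapted basis using that $C^\beta_{ij}=C^\beta_{i\alpha}=0$ (vertical integrability and $\rho$-projectability of $\tilde X_\alpha$), add the coordinate expressions of $B_1$ and $B$ from Prop.~\ref{P:B-Coord}, invert the resulting nondegenerate $2$-sections via \eqref{Eq:GaugedNH}, and project using the $G$-invariance of $p_\alpha$, $p_i$ and of the structure functions. Your Step 4 moreover spells out the leafwise identities for $\Omega_\mu$ and $\tilde\Omega_\mu$ (via $\iota^*_\mu d\mathcal{X}^\gamma$ and the gauge by $\bar{\mathcal B}$), a verification the paper's proof leaves implicit by appealing to Prop.~\ref{L:B_1}, Prop.~\ref{Prop:gauge2} and Remark~\ref{R:Gauge}$(i)$ exactly as you do.
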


\begin{proof}
First, observe that, since the vector fields $\tilde X_\alpha$ are $\rho$-projectable we have that $[\tilde X_\alpha, \tilde Y_i] \in \Gamma(\V)$ and thus $C_{\alpha i}^\beta=0$. On the other hand,  by integrability of the vertical distribution we also have $[\tilde Y_i, \tilde Y_j] \in \Gamma(\V)$ which implies that $C_{ij}^\beta = 0$.  Then, we have that \begin{equation*}%\label{Eq:Omega-Coord}
\Omega_\subC = \left( \tilde X^\alpha \wedge d p_\alpha + \tilde Y^i \wedge d p_i + \frac{1}{2} ( p_A C_{ij}^A \tilde Y^i \wedge \tilde Y^j + p_{\mbox{\tiny{$D$}}} C_{\alpha \beta}^{\mbox{\tiny{$D$}}} \tilde X^\alpha \wedge \tilde X^\beta) + p_A C_{i\alpha}^A \tilde Y^i \wedge \tilde X^\alpha   \right)|_\C,
 \end{equation*}
where, index $A$ includes only the indices $i$ and $a$ while ${\mbox{\scriptsize{$D$}}}$ includes $\alpha$ too.
Then, from Prop.~\ref{P:B-Coord}$(i)$ we get that 
$\Omega_\C + B_1|_\C = \tilde X^\alpha \wedge dp_\alpha + \tilde Y^i \wedge dp_i + \frac{1}{2} p_{\gamma} C_{\alpha\beta}^{\gamma} \tilde X^\alpha \wedge \tilde X^\beta$. Then, using \eqref{Eq:GaugedNH} we get the local form for $\pi_1$.  
Next, we observe that 
 $\Omega_\C + B|_\C = \left( \tilde X^\alpha \wedge dp_\alpha + \tilde Y^i \wedge dp_i + \frac{1}{2} ( p_{\mbox{\tiny{$D$}}} C_{\alpha\beta}^{\mbox{\tiny{$D$}}} + v^i\kappa_{\alpha A} C_{j \beta}^A  ) \tilde X^\alpha \wedge \tilde X^\beta\right)|_\C$ and thus we get $\pi_\B$. By projecting the bivectors $\pi_1$ and $\pi_\B$ we get the local formula of $\pi_\red^1$ and $\pi_\red^B$. 
\end{proof}

\section{Examples revisited}
\label{ExamplesRevisited}

\subsection{The snakeboard}\label{Example_Snake}

The snakeboard is a variation of a standard skateboard permitting the axis of the wheels to rotate by the effect of a human rider creating a torque allowing the board to spin about a vertical axis, see \cite{Bloch:Book} and \cite{Bal-Fer, Oscar-11}. We denote by $m$ the mass of the board, by $R$ the distance from the center of the board to the pivot point of the wheel axle, by $J$ the inertia of the rotor and by $J_1$ the inertia of the each wheel.
The configuration manifold is $Q=SE(2) \times S^1 \times S^1$, with coordinates $q=(x,y,\theta, \phi,\psi)$, 
and the Lagrangian is the kinetic energy of the system given by
\begin{equation*}
L(q,\dot{q}) = \frac{1}{2} m (\dot{x}^2 + \dot{y}^2 + r^2 \dot{\theta}^2) + \frac{1}{2} J \dot{\psi}^2 + J \dot{\psi} \dot{\theta} + J_1 \dot{\phi}^2.
\end{equation*}

Following \cite{Bloch:Book}, the non-sliding condition induce the constraint 1-forms 
\begin{equation*}
\epsilon^1 = dx + R \cot \phi \cos \theta d\theta \qquad \text{and} \qquad
\epsilon^2 = dy + R \cot \phi \sin \theta d\theta ,
\end{equation*}
for $\phi \neq 0, \pi$, and thus the constraint distribution $D$ is given by
\begin{equation}
\label{D_snakeboard}
D = \textup{span} \{ X_\theta := \partial_{\theta} - R \cot \phi \cos \theta \partial_x - R \cot \phi \sin \theta \partial_y , \, X_\phi := \partial_\phi , \, Y_\psi := \partial_{\psi} \}.
\end{equation}

The free and proper action of the Lie group $G=\R^2 \times S^1$ on $Q$ given at each $(x,y,\theta,\phi,\psi) \in Q$ and $(a,b,\alpha)\in G$, by $(x+a,y+b,\theta,\phi,\psi+\alpha)$ defines a symmetry of the nonholonomic system.  Representing by $\g\simeq \R^2\times \R$ the Lie algebra of $G$, with $(1,0;0)_Q = \partial_x$ , $(0,1;0)_Q= \partial_y$ and $(0,0;1)_Q =Y_\psi$,  the distribution $S= D\cap V$ is given by $S=\textup{span} \{ Y_\psi\}$ and we can choose the vertical complement of the constraints to be $W = \textup{span}\{ \partial_x , \partial_y \}$.  Hence, we also see that $\{ \xi_1=(1,0;0), \xi_2=(0,1;0), \eta = (0,0;1)\}$ is an adapted basis to the splitting $\g =  \g_W \oplus \g_S$ where $\g_W \simeq \R^2$ and  $\g_S \simeq \R$.
Therefore, we have an adapted basis of $TQ = D\oplus W$ and its dual basis given by 
$$
\mathfrak{B}_{TQ} = \{ X_\theta, X_\phi, Y_\psi , \partial_x, \partial_y \} \qquad \text{and} \qquad
\mathfrak{B}_{T^*Q} =  \{ d \theta, d \phi, d \psi, \epsilon^1, \epsilon^2 \} ,
$$
with associated coordinates $(p_\theta,p_\phi,p_\psi, p_x, p_y)$ on $T^*Q$.
The constraint manifold $\mathcal{M} \subset T^*Q$ is given by
\begin{equation*}
\mathcal{M} = \{ (q, p_\theta,p_\phi,p_\psi, p_x, p_y) \  : \ p_x = - F(\phi) \cos\theta  (p_\theta - p_\psi) , \ p_y = - F(\phi)  \sin\theta  (p_\theta - p_\psi)\},
\end{equation*}
where $F(\phi) = mR\frac{\sin\phi \cos\phi }{m R^2 - J \sin^2 \phi }$. 
Then, following \eqref{Eq:basisT*M}, we denote by 
$$
\mathfrak{B}_{T^*\M}= \{\tilde d\theta, \tilde d\phi, \tilde d\psi, \tilde \epsilon^1, \tilde \epsilon^2, dp_\theta, dp_\phi, dp_\psi\} \qquad \mbox{and} \qquad
\mathfrak{B}_{T\M}= \{\tilde X_\theta, \tilde X_\phi, \tilde Y_\psi, \tilde \partial_x, \tilde \partial_y, \partial_{p_\theta}, \partial_{p_\phi}, \partial_{p_\psi}\},
$$
the respectively dual basis where $\tilde d\theta = \tau_\subM^*d\theta$ and analogously with the other elements of the basis. 

The lifted $G$-action to $T^*Q$ leaves the manifold $\M$ invariant and thus it is induced a smooth reduced manifold $\M/G$ diffeomorphic to $\mathbb{T}^2 \times \R^3$ and also $Q/G \simeq \mathbb{T}^2$. %Since the coordinates on $\M$ are $G$-invariant, we can consider the oordinates $(\theta,\phi,\bar{p}_\theta,\bar{p}_\phi,\bar{p}_\psi)$ on $\M/G$ where $\rho^*\bar{p}_\theta=p_\theta$ (using the analogous notations for the other coordinates).

Now we observe that the function $J_\eta = {\bf i}_{\tilde Y_\psi} \Theta_\subM =  p_\psi$ is a horizontal gauge momentum of the nonholonomic system, see e.g., \cite{Bloch:Book}. Therefore, by Theorem \ref{T:GlobalB} (and Cor.~\ref{C:l+1}) we conclude that there is  a (unique)  regular reduced bracket $\pi_\red^\B$ on $\M/G$ that describes the dynamics and is a twisted Poisson bracket with almost symplectic leaves of dimension 4 (note that $\textup{rank}(H) = 2$).  
In order to compute the 2-form $B$,  we start considering the principal connection $\mathcal{A}_\V = \tilde\epsilon^1 \otimes \xi_1 +\tilde \epsilon^2\otimes \xi_2 + \tilde{d\psi} \otimes \eta$ and then, for $X,Y\in T\M$
$$
\langle J, \mathcal{K}_\subW \rangle = p_x d^\C \tilde \epsilon^1 + p_x  d^\C \tilde\epsilon^2 \quad \mbox{and} \quad \langle J,\mathcal{K}_\V\rangle(X,Y) = \langle J, \mathcal{K}_\subW \rangle (P_{\mathcal H}X, P_{\mathcal H}Y),.
$$
since $d^{\mathcal{H}}(\tilde{d\psi}) = 0$. Moreover, since 
$$
\langle J , \mathcal{K}_\subW \rangle = - R \frac{ F(\phi)}{ \sin^2 \phi} ({p_\theta}-{p_\psi}) \tilde d\theta \wedge \tilde d\phi
$$
is a basic 2-form with respect to the bundle $\rho: \M \to \M/G$ we conclude that $\langle J,\mathcal{K}_\V\rangle = \langle J, \mathcal{K}_\subW \rangle$.  Therefore, using again that $d^\C\mathcal{A}_\subS^\psi = d^\C(\tilde{d\psi}) = 0$ and that ${\bf i}_{P_\V(X_\nh)}\mathcal{K}_\subW = 0$ (since $\mathcal{K}_\subW$ is semi-basic w.r.t. $\rho:\M\to \M/G$), we obtain that  
\begin{equation*}
 B_1 = \langle J,\mathcal{K}_\subW\rangle \quad \mbox{and} \quad \mathcal{B} = - \langle J, \mathcal{K}_\subW \rangle
 \end{equation*}
and therefore $B = 0$. Hence, the reduced bracket $\pi_\red$ induced by the nonholonomic bracket $\pi_\nh$ is twisted Poisson in agreement with \cite{Bal-14}.  Moreover, from Theorem \ref{T:GlobalB}, $\pi_\red$ is a twisted Poisson bracket for the 3-form $(-d\bar{\mathcal B})$ where $\rho^*\bar{\mathcal B}= \langle J, \mathcal{K}_\subW\rangle$ with (a 4-dimensional) almost symplectic foliation described by the level sets of $J_\eta = p_\psi$, as it was already observed in \cite{Bal-14} (see also Prop. \ref{Prop:JacB}).

%----------------------------------------- BALL --------------------------------------

\subsection{The Chaplygin ball}
\label{Example_Chaplygin_ball}

An important example of a rigid body with nonholonomic constraints is the Chaplygin sphere which consists of an inhomogeneous sphere of radius $R$ whose geometric center coincides with its center of mass and which is allowed to roll without sliding over a horizontal plane. This example was first studied by Chaplygin in \cite{Chap-03} but the hamiltonian structure of the reduced equations of motion has been proved only in 2002 by Borisov and Mamaev \cite{BorMa2001}. Later the Hamiltonization of this example has been geometrically understood using gauge transformations \cite{PL2011,Naranjo2008}. Next we recall briefly the geometrical framework of this example following \cite{Bal-16,PL2011,Naranjo2008} and we write the 2-form $B$ following Diagram \eqref{Diagram} and Theorem \ref{T:GlobalB}.

The configuration manifold, given by $Q=SO(3)\times\R^2$, is determined by the coordinates $(g,x,y)$ where $g\in SO(3)$ is the orthogonal matrix that represents the orientation of the
ball by relating the orthogonal frame attached to the body with the one fixed in space and $(x,y)$ is the projection of the  center of mass of the ball to the horizontal plane.

%$SO(3)$ is a 3-dimensional Lie group admitting various parametrizations, e.g. the Euler angles. 
%For our purposes it will be more interesting to work with 'moving frames'. 
%We identify, as usual, the Lie algebra of $SO(3)$, denoted $\mathfrak{so}(3)$, with $\R^3$ using the 'hat-map'.

If $\bm{\omega}=(\omega_1,\omega_2,\omega_3)$ and  $\bm{\Omega} = (\Omega_1, \Omega_2, \Omega_3)$ denote the angular velocity of the ball with respect to the space frame and body frame respectively, then $\bm{\omega} = g \bm{\Omega}$ and the non-sliding constraints can be written as 
$$
\dot{x} = R \langle \vecbeta , \bm{\Omega} \rangle \quad \text{and} \quad \dot{y} = - R \langle \vecalpha , \bm{\Omega} \rangle,
$$
where $\vecalpha$, $\vecbeta$ denote the first and second rows of the matrix $g$.

We assume that the body frame is aligned with the principal axes of inertia of the body.  Thus, the {\it inertia tensor} $\I$ is represented by a diagonal matrix with positive entries $I_1,I_2,I_3$ (which are the principal moments of inertia of the ball). The Lagrangian is the total kinetic energy, which after left-trivialization $TQ \simeq Q \times \mathfrak{so}(3) \times T \R^2 \simeq Q \times \R^3 \times \R^2$, takes the form,
\begin{equation*}
L(g,x,y,\bm{\Omega},\dot{x},\dot{y}) = \frac{1}{2} \bm{\Omega}^t \I \bm{\Omega} + \frac{1}{2} m( \dot x^2 + \dot y^2).
\end{equation*}
where $\bm{\Omega}^t$ denotes the transpose of $\bm{\Omega}$. 
Denoting by $\{X_1^L, X_2^L,X_3^L\}$ the left-invariant frame of $SO(3)$, then the constraint distribution $D$ is given by
\begin{equation*}
%\label{D_chaplygin_ball}
D = \textup{span} \{ X_1:=  X_1^L + R \beta_1 \partial_x - R \alpha_1 \partial_y , X_2:=X_2^L + R \beta_2 \partial_x - R \alpha_2 \partial_y, X_3:= X_3^L  + R \beta_3 \partial_x - R \alpha_3 \partial_y \}.
\end{equation*}

The nonholonomic system admits a symmetry group $G = \{(h,a) \in SO(3) \times \R^2 : h \bm{e_3} = \bm{e_3} \} \simeq SO(2) \ltimes \R^2$ which acts on a point $(g,x,y) \in Q$ as
$(h,a) \cdot (g,x,y) = (h g, \tilde h (x,y)^t + a)$, where $\tilde h$ is the $2\times 2$ matrix defined by $h = {\mbox{\scriptsize{$\left(\begin{array}{cc} \tilde h & 0 \\[-6pt] 0 & 1 \end{array}\right)$}}} $.  Using the identification $\g \simeq \R \times \R^2$ and denoting by $\bm{X} = (X_1^L, X_2^L, X_3^L )$ and $\vecgamma = (\gamma_1, \gamma_2, \gamma_3)$ the third row of the matrix $g$, we compute the infinitesimal generators and thus vertical space $V$ is given, at each $q\in Q$, by
\begin{equation*}
V_q = \textup{span} \{ (1;0,0)_Q = \langle \bm{\gamma} , \bm{X} \rangle - y \partial_x + x \partial_y, (0;1,0)_Q = \partial_x, (0;0,1)_Q = \partial_y \},
\end{equation*}
Hence, the system verifies the dimension assumption \eqref{dimension_assumption} and we choose the vertical complement of the constraints to be $W := \textup{span} \{ \partial_x, \partial_y \}$. 
Then, following \eqref{Eq:basisT*M} we get the dual basis (adapted to the decomposition $TQ= D\oplus W$)
$$
\mathfrak{B}_{TQ} = \{X_1, X_2, X_3, \partial_x, \partial_y\} \qquad \mbox{and} \qquad \mathfrak{B}_{T^*Q} = \{ \lambda_1,\lambda_2,\lambda_3, \epsilon^x , \epsilon^y \},
$$
where  $\{\lambda_1,\lambda_2,\lambda_3\}$ is the dual basis  of $\{X_1^L, X_2^L, X_3^L\}$ (the left Maurer-Cartan 1-forms) and 
\begin{equation*}
\epsilon^x = dx - R \langle \vecbeta, \bm{\lambda} \rangle \quad \text{and} \quad \epsilon^y = dy + R \langle \vecalpha, \bm{\lambda} \rangle 
\end{equation*}
are the constraint 1-forms, for $\bm{\lambda} = (\lambda_1, \lambda_2, \lambda_3)$. Let us denote by  $(M_1, M_2, M_3,p_x, p_y)$ the coordinates on $T^*Q$ associated to this basis.
Then the constraint manifold $\M = Leg(D) \subset T^*Q$ is given by
\begin{equation*}
%\label{M_chaplygin_ball}
\M = \{ (g,x,y; M_1, M_2, M_3 ,p_x, p_y) : \: p_x = mR\langle \vecbeta, \vecOm \rangle, \ \ \ p_y = -mR \langle \vecalpha, \vecOm\rangle \} ,
\end{equation*}
where $\bm{M} =  (\I + mr^2\textup{Id}) \vecOm + mr^2 \langle \vecgamma , \vecOm \rangle \vecgamma$ for $\bm{M}= (M_1, M_2, M_3)$ and $\textup{Id}$ the $3\times 3$ identity matrix.  %Following \eqref{Eq:basisT*M} we consider the dual basis
%$$
%\mathfrak{B}_{T^*\M}= \{\tilde \lambda_1, \tilde \lambda_2, \tilde \lambda_3, \tilde \epsilon^x, \tilde \epsilon^y, dM_1, dM_2, dM_3\} \qquad \mbox{and} \qquad
%\mathfrak{B}_{T\M}= \{\tilde X_1, \tilde X_2, \tilde X_3, \tilde \partial_x, \tilde \partial_y, \partial_{M_1}, \partial_{M_2}, \partial_{M_3}\},
%$$
%the respectively dual basis where $\tilde \lambda_i = \tau_\subM^*\lambda_i$ and analogously with the other elements of the basis. 

The $G$-action on $T^*Q$ leaves the manifold $\M$ invariant and thus reduced dynamics takes place in $\M/G \simeq \mathbb{S}^2\times \R^3$ with (redundant) coordinates given by $(\gamma_1, \gamma_2, \gamma_3, M_1, M_2,M_3).$ As it was proven in \cite{Bal-14}, the reduced bracket $\pi_\red$ is not Poisson (nor twisted Poisson) since the 3-form $d\langle J, \mathcal{K}_\subW\rangle$ is not basic with respect to the principal bundle $\rho:\M\to \M/G$ (see Prop.~\ref{Prop:Jac}), however this example is hamiltonizable via a gauge transformation, see \cite{Naranjo2008,PL2011}. Next, we recover the dynamical gauge transformation using our theory.  

In particular, observe that $S:=D\cap V$ is given, at each $q\in Q$, by $S_q = \textup{span}\{Y_{(q)}:=\langle \vecgamma, \bm{X}  \rangle\}$ while $(\g_S)_q = \textup{span}\{\eta_{(q)} := (1;-y,x)\}$ and $\g_W = \textup{span}\{\xi_x := (0;1,0), \xi_y:=(0;0,1)\}$.  Observe that $\eta_Q = Y$ and that the function $J_\eta = \langle \bm{\gamma} , \bm{M}\rangle$ is a $G$-invariant horizontal gauge momentum with horizontal gauge symmetry given by $\eta$.  
Therefore by Theorem \ref{T:GlobalB} we conclude that there is a (unique) dynamical gauge transformation by the 2-form $B$ defined in \eqref{Eq:GlobalB} generating a regular twisted Poisson bracket on $\M/G$ with a 4-dimensional foliation given by the level sets of $J_\eta$ (observe that $\textup{rank}(H) = 2$, see Corollary \ref{C:l+1}). In order to write the 2-form $B$, we first  consider the principal connection $A_V:TQ \to \g$ given by $A_V = \epsilon^v\otimes \eta + \epsilon^x \otimes \xi_1 + \epsilon^y \otimes \xi_2,$
where $\epsilon^v = \langle \bm{\gamma}, \bm{\lambda}\rangle$. Then, we define the horizontal space $H : = \textup{Ker}A_V$ so that $TQ = H \oplus S \oplus W$. We also consider the $\g$-valued 1-forms $\mathcal{A}_\V := \tau_\subM^* A : T\M \to \g$ and $\mathcal{A}_\subS:T\M \to \g$ given by $\mathcal{A}_\subS = \tilde \epsilon^v\otimes \eta$
where $\tilde \epsilon^v = \tau_\subM^*\epsilon^v$. Following \eqref{Eq:B1} and \eqref{Eq:mathcalB}, we get that 
\begin{equation*}
%\begin{split}
B_1  = \langle J, \mathcal{K}_\subW \rangle + J_\eta d^\C\! \tilde \epsilon^v \quad \mbox{and} \quad 
\mathcal{B} = -\langle J, \mathcal{K}_\V\rangle -  {\textstyle \frac{1}{2}} (\kappa_\g\wedge {\bf i}_{P_\V(X_\nh)} [\mathcal{K}_\subW + d^\C \tilde \epsilon^v \otimes \eta ] )_{\mathcal H} .
%\end{split}
\end{equation*}
Observe that, for $U_1, U_2\in T\M$,  $\mathcal{K}_\V (U_1, U_2)= [ \mathcal{K}_\subW+ d^\C \epsilon^v \otimes \eta](P_{\mathcal H}U_1, P_{\mathcal H}U_2) $.   Now, using the identity $d\bm{\gamma} = \bm{\gamma} \times \bm{\lambda}$ and the  notation $d\bm{\lambda} = (d\lambda_1, d\lambda_2, d\lambda_3)$ we observe that $d^\C \tilde \epsilon^v = - \tau^*_\subM\langle \bm{\gamma},d \bm{\lambda}\rangle = \rho^* ( \gamma_1 d\gamma_2\wedge d\gamma_3 +  \gamma_2 d\gamma_3\wedge d\gamma_1 + \gamma_3 d\gamma_1\wedge d\gamma_2 )$ showing that $d^\C \tilde\epsilon^v$ is a basic form with respect to $\rho:\M\to \M/G$ and therefore 
\begin{equation}\label{Eq:ExChap:Bglobal}
B (U_1, U_2) =  \langle J, \mathcal{K}_\subW \rangle (U_1, U_2) -  \langle J,\mathcal{K}_\subW\rangle(P_{\mathcal H}U_1, P_{\mathcal H}U_2) -   \langle \bm{\gamma} , \bm{\Omega} \rangle \langle \kappa_\g (P_{\mathcal H}U_1)  ,  \mathcal{K}_\subW (Y , P_{\mathcal H}U_2) \rangle,
\end{equation}
where we also use that $P_{\V}(X_\nh) = \tilde\epsilon^v(X_\nh) \tilde Y$ for $\tilde Y = \eta_\subM$.

Next, we compute the 2-form $B$ given in \eqref{Eq:ExChap:Bglobal} in local coordinates.  First, using Lemma \ref{lemma:K_W_in_sections} we write $\langle J, \mathcal{K}_\subW \rangle = R^2 m \langle \bm{\Omega}, d\tilde{\bm{\lambda}}  \rangle - R^2 m \langle \vecgamma, \bm{\Omega} \rangle \langle \bm{\gamma}, d\tilde{\bm{\lambda}}  \rangle,$ where $\tilde \lambda_i = \tau_\subM^*\lambda_i$.
Now let us choose a basis of generators of the distribution $H$ given, at each $q\in Q$ such that $\gamma_3 \neq 0$, by $\mathbb{X}_1:= X_1 - \gamma_1Y$ and $\mathbb{X}_2 := X_2 - \gamma_2 Y$. Then we have a (local) basis of $\Gamma(TQ)$ given by $\{\mathbb{X}_1, \mathbb{X}_2, Y, \partial_x, \partial_y\}$ and its dual basis given by $\{\mathbb{X}^1, \mathbb{X}^2, \epsilon^v, \epsilon^x, \epsilon^y\}$ for $\mathbb{X}^1 := \gamma_3^{-1}(-\gamma_1\lambda_3 + \gamma_3\lambda_1)$ and $\mathbb{X}^2 := \gamma_3^{-1}(\gamma_2\lambda_3+\gamma_3\lambda_2)$.
Therefore, following the notation of \eqref{Eq:basisT*M}, i.e., $\tilde{\mathbb{X}}^i= \tau_\subM^*\mathbb{X}^i$ and $\tilde{\mathbb{X}}_i , \tilde Y \in \mathfrak{X}(\M)$ such that $T\tau_\subM(\tilde{\mathbb{X}}_i) = \mathbb{X}_i, T\tau_\subM(\tilde Y) = Y$,  from \eqref{Eq:ExChap:Bglobal} we see that
\begin{equation*}
 \begin{split}
  \ & B(\tilde{\mathbb{X}}_1,\tilde Y) =  \langle J, \mathcal{K}_\subW \rangle(\tilde{\mathbb{X}}_1,\tilde Y) = R^2 m \langle \bm{\Omega}, d\bm{\lambda}  \rangle (\mathbb{X}_1,Y)= -R^2m (\bm{\gamma}\times \bm{\Omega})_{2} \mbox{ \ and \ } B(\tilde{\mathbb{X}}_2,Y) = - R^2m (\bm{\gamma}\times \bm{\Omega})_{1} \\
\ &  B(\tilde{\mathbb{X}}_1, \tilde{\mathbb{X}}_2) = - \langle \bm{\gamma} , \bm{\Omega} \rangle [\kappa(\mathbb{X}_1, \partial_x) d\epsilon^x + \kappa(\mathbb{X}_1, \partial_y) d\epsilon^y]( Y  , \mathbb{X}_2) = - R^2m \langle \bm{\Omega}, \bm{\gamma}\rangle \gamma_3.  
 \end{split}
\end{equation*}
Then, since $B = B(\tilde{\mathbb{X}}_i,\tilde{Y})\tilde{\mathbb{X}}^i \wedge \tilde \epsilon^v + B(\tilde{\mathbb{X}}_1, \tilde{\mathbb{X}}_2)\tilde{\mathbb{X}}^1\wedge \tilde{\mathbb{X}}^2$, it is straightforward to obtain that $B = R^2m\langle \bm{\Omega}, d\tilde{\bm{\lambda}}\rangle$ coinciding with the result in \cite{Naranjo2008}. 

% 
% \ \\
% 
% 
% 
% Next, we will check that $B$ coincides with $\tilde B$   in concordance with \cite{Naranjo2008}. In fact, for $U\in\Gamma(\mathcal{H})$ and $Y\in \Gamma(\S)$, we get that $B(U,Y) = \langle J,\mathcal{K}_\subW\rangle (U,Y) = \tilde B(U,Y)$.  Now, for $U_1, U_2$ sections in $\mathcal{H}$ such that $T\tau_\subM (U_1) = u_1 = \gamma_2X_1-\gamma_1X_2 \in \Gamma(H)$ and $T\tau_\subM (U_2) = u_2 = \gamma_1X_3-\gamma_3X_1 \in \Gamma(H)$ it can be checked that 
% $$
% B(U_1, U_2) = - \langle \bm{\gamma} , \bm{\Omega} \rangle [\kappa(u_1, \partial_x) d\epsilon^x + \kappa(u_1, \partial_y) d\epsilon^y]( \langle \bm{\gamma} , {\bf X} \rangle , u_2) = -Rm^2\gamma_1\langle \bm{\Omega}, \bm{\gamma}\rangle = \tilde B(U_1,U_2).
% $$

Using the almost symplectic structure on each leaf, it is possible to compute a conformal factor and extend it to the whole foliation obtaining a conformal factor for the twisted Poisson bracket as it was done in \cite{Bal-Fer}.

%--------------------------------- FIN BOLA --------------------

\subsection{A solid of revolution rolling on a plane}\label{Sec:Solids}

Following \cite{Bal-16, CDS-2010} we consider a strictly convex body of revolution rolling without sliding over a horizontal plane which is described by $\{ z = 0 \}$. We denote by $m$ the mass of the body and by $\I=(I_1,I_1,I_3)$ the moment of inertia with respect to an orthonormal frame $(\bm{e_1},\bm{e_2},\bm{e_3})$ attached to the body where $\bm{e_3}$ is oriented along the axis of symmetry of the body. This set of examples includes the celebrated Routh sphere \cite{Cushman-RouthSph, Routh-55} and the axisymmetric ellipsoid \cite{Bol-Kil-Kaz-14, BorMa2002b}.  In this case, we will follow the geometric considerations and notations from \cite{Bal-16,CDS-2010}.  This group of examples is known to be hamiltonizable by a gauge transformation by a 2-form (see \cite{Bal-16,LGN-JM-16}). In this section we recover the 2-form $B$ using our theory.

Let us denote by $\mathbf{x}=(x,y,z) \in \R^3$ the coordinates of the center of mass of the body and by $g$ the orthogonal matrix indicating its orientation. During this section we maintain, when possible, the same notation as in Sec.~\ref{Example_Chaplygin_ball}. The configuration manifold of the free mechanical system is given by $Q_0 = \R^3 \times SO(3)$ with coordinates $(\mathbf{x},g)$ and the Lagrangian $L : TQ_0 \to \R$ (of mechanical type) is given by
\begin{equation}
\label{Lagrangian_body}
L_0(\mathbf{x}, g , \dot{\mathbf{x}} , \bm{\Omega} ) = \frac{1}{2} m \langle \dot{\mathbf{x}} , \dot{\mathbf{x}} \rangle + \frac{1}{2} \langle \I \bm{\Omega} , \bm{\Omega} \rangle - m {\bf g} \langle \mathbf{x} , \bm{e_3} \rangle ,
\end{equation}
where $\bm{\Omega} = (\Omega_1,\Omega_2,\Omega_3)$ is the angular velocity in the body frame, $\langle \cdot,\cdot \rangle$ denotes the standard scalar product in $\R^3$ and ${\bf g}$ is the acceleration of gravity.   In this case, denoting by $\bm{\gamma} = (\gamma_1 , \gamma_2 , \gamma_3 )$ the third row of the matrix $g$, the configuration manifold $Q$ (representing the solid on the plane) is given by 
\begin{equation*}
%\label{Q_body_revolution}
Q = \{ (\mathbf{x}, g) \in \R^3 \times SO(3) : z = - \langle \bm{s} , \bm{\gamma} \rangle \}, 
\end{equation*}
where $\bm{s} : S^2 \to \bm{S}$ denotes the inverse of the Gauss map, for $\bm{S}$ the surface describing the body; that is
\begin{equation*}
\bm{s} = \bm{s}(\bm{\gamma}) = (\varrho(\gamma_3)\gamma_1, \varrho(\gamma_3)\gamma_2, \zeta(\gamma_3)),
\end{equation*} 
with $\varrho: (−1, 1) \to \R$ and $\zeta: (−1, 1) \to \R$ smooth functions depending on the parametrization of the body of revolution, see \cite[Sec. 6.7.1]{CDS-2010}.

As usual, we denote by $\{ X_1^L , X_2^L , X_3^L ,\partial_{x} , \partial_{y} \}$ the local basis of $TQ$ where  $\{X_1^L , X_2^L , X_3^L\}$ are the left-invariant vector fields on $SO(3)$ with $(\bm{\Omega} , \dot{x}, \dot{y} )$ representing the corresponding velocity coordinates. Then, the non-sliding constraints are written as $\dot{x} = - \langle \bm{\alpha} , \bm{\Omega} \times \bm{s} \rangle$ and $\dot{y} = - \langle \bm{\beta} , \bm{\Omega} \times \bm{s} \rangle$ (recall that $\bm{\alpha}$ and $\bm{\beta}$ are the first and second rows of the matrix $g$) and thus the constraint 1-forms are given by 
\begin{equation*}
\epsilon^x = dx + \langle \bm{\alpha} , \bm{\lambda} \times \bm{s} \rangle, \quad
\epsilon^y = dy + \langle \bm{\beta} , \bm{\lambda} \times \bm{s} \rangle,
\end{equation*}
where $\bm{\lambda} =  (\lambda_1, \lambda_2, \lambda_3)$ and $\{\lambda_1, \lambda_2, \lambda_3\}$ is the dual basis of $\{ X_1^L , X_2^L , X_3^L\}$.  Then, the constraint distribution $D$ is generated by the vector fields $X_i := X_i^L + (\bm{\alpha} \times \bm{s})_i \partial_{x} + (\bm{\beta} \times \bm{s})_i \partial_{y} + (\bm{\gamma} \times \bm{s})_i \partial_{z}$, for $i=1,2,3$, that is 
\begin{equation*}
%\label{D_body_revolution}
D = \textup{span} \{ X_1, X_2, X_3 \}.
\end{equation*}

\noindent {\bf The symmetries and the vertical complement $W$}.  Consider the action of the Lie group $SE(2) \times S^1$ on $Q$ given, for each $(a,b,\phi,\theta) \in SE(2) \times S^1$ and $(x, y, g) \in Q$, by
$$
(a,b,\phi,\theta) \cdot (x, y, g) = (R_\phi (x,y)^t + (a,b)^t, \hat{R_\phi} g \hat{R_{-\theta}}),
$$
where $R_\phi$ is a $2 \times 2$ rotation matrix of angle $\phi$ and $\hat{R_\phi}$ and $\hat{R_{-\theta}}$ denote $3 \times 3$ orthogonal matrices representing rotations about the $z$-axis of angles $\phi$ and $-\theta$ respectively. This $G$-action defines a symmetry of the nonholonomic system, \cite{CDS-2010}.

The Lie algebra of $G$ denoted by $\g$ is identify with $\g \simeq \R^2 \times \R \times \R$ with the infinitesimal generator relative to the $S^1$-action given by $((0,0), 0;1)_Q = - X_3^L - y \partial_{x} + x \partial{y}$, while relative to the $SE(2)$-action are
$$
((1,0),0;0)_Q =  \partial_{x}, \qquad ((0,1),0,0)_Q = \partial_{y}, \qquad ((0,0),1,0)_Q =  \langle \vecgamma, \bm{X}^L \rangle - y \partial_{x} + x \partial_{y}.
$$

We observe that the rank of the vertical space $V$ is not constant. In fact, for  $\bm{\gamma}= (0,0,\pm 1)$ we have $\textup{rank}(V)= 3$. However, we see that the dimension assumption is satisfied and we can also compute the (generalized) distribution $S  = D \cap V = span \{ \mathcal{Y}_1:= - X_3, \mathcal{Y}_2:=\langle \vecgamma, \bm{X} \rangle \}$, where $\bm{X} = (X_1,X_2,X_3)$. Moreover, following \cite{Bal-16} we choose the vertical complement $W$ defined by $W=\textup{span} \{ \partial_{x},  \partial_{y} \}$.  Hence the bundle $\g_\subW \to Q$ is generated by the sections $\xi_1 = ((1,0),0;0)$ and $\xi_2 = ((0,1),0,0)$ while the bundle $\g_S \to Q$ is generated by the sections $\zeta_1 = P_{\g_S}(((0,0), 0;1))$ and $\zeta_2 = P_{\g_S}(((0,0), 1;0))$ where $P_{\g_S}: \g \to \g_\subS$ is the projection associated to the splitting $\g\times Q = \g_S \oplus \g_W$. Observe that $(\zeta_1)_Q = \mathcal{Y}_1$, and $(\zeta_2)_Q = \mathcal{Y}_2$.   

\begin{remark} \label{R:Solids:VertSym}
The bundle $\g_W \to Q$ is generated by the Lie algebra $\R^2$ and thus we see that this vertical complement satisfies the {\it vertical symmetry condition} \cite{Bal-16}. Even though this condition may simplify some computations, we are not using it here to build the 2-form $B$.    
\end{remark}

Now, we set a basis of $TQ$ adapted to the splitting $TQ=D\oplus W$ and its corresponding dual basis, 
$$
\mathfrak{B}_{TQ} = \{X_1, X_2,X_3, \partial_{x}, \partial_{y}\} \quad \mbox{and} \quad \mathfrak{B}_{T^*Q} = \{\lambda_1, \lambda_2,\lambda_3, \epsilon^x, \epsilon^y\},
$$
with coordinates on $T^*Q$ given by $(M_1, M_2,M_3, p_x, p_y)$. Using the kinetic energy metric given in \eqref{Lagrangian_body} we compute the constraint manifold $\M = \kappa^\sharp(D)$,
$$
\M = \{ (x,y, g , M_1, M_2, M_3 , p_x, p_y) : p_x = m \langle \bm{\alpha} , \bm{s} \times \bm{\Omega} \rangle , \quad p_y = m \langle \bm{\beta} , \bm{s} \times \bm{\Omega} \rangle \}, 
$$
where ${\bm{M}} = \I \bm{\Omega} + m \bm{s} \times (\bm{\Omega} \times \bm{s})$ for $\bm{M} =(M_1, M_2, M_3)$.

The lifting of the $G$-action to $T^*Q$ leaves the manifold $\M$ invariant and thus the restricted action on $\M$ is given, at each  $(a,b,\phi,\theta) \in G$ and $((x,y), g, \bm{M}) \in \M$, by
\begin{equation*}
\label{body_action_on_M}
(a,b,\phi,\theta) \cdot ((x,y), g, \bm{M}) = (R_\phi (x,y)^T + (a,b)^T, \hat{R_\phi} g \hat{R_\theta}, \hat{R_\theta} \bm{M}).
\end{equation*}

Since the $G$-action on $\M$ is proper the quotient $\M / G$  is a stratified differential space. Reducing by stages, observe that $SE(2)$ is a normal subgroup of $G$ and the $SE(2)$-action is free and proper. Then, the quotient $\M / SE(2)$ is a manifold which is diffeomorphic to $S^2 \times \R^3$ with coordinates $(\vecgamma, \bm{M})$. The $S^1$-action on $\M / SE(2)$ is not free and we describe the resulting differential space $\M / G$ using {\it invariant theory}, see \cite{Bal-16,CDS-2010}. The ring of $S^1$-invariant polynomials in  $S^2 \times \R^3$ is generated by
\begin{equation*}
\begin{split}
\tau_1 = & \gamma_3, \quad \tau_2 = \gamma_1 M_2 - \gamma_2 M_1, \quad \tau_3 = \gamma_1 M_1 + \gamma_2 M_2, \\
\tau_4 = & M_3, \quad \tau_5 = M_1^2 + M_2^2,
\end{split}
\end{equation*}
and the quotient space $\M /G$ is represented by the following semi-algebraic set of $\R^5$
\begin{equation*}
\M/G = \{ (\tau_1,\tau_2,\tau_3,\tau_4,\tau_5) \in \R^5 \: : \: |\tau_1|\leq 1, \tau_5 \geq 0, \tau_2^2 + \tau_3^2 = (1-\tau_1^2) \tau_5 \}.
\end{equation*}
Thus, the 1-dimensional {\it singular stratum} of $\M / G$ associated to $S^1$-isotropy type is given by
\begin{equation*}
%\bar{\M}_{sing} = 
\{ (\pm 1, 0, 0, \tau_4, 0) \in \M/G \: | \: \tau_4 \in \R \},
\end{equation*}
and corresponds to the configuration where the body of revolution is spinning over one of the two poles which remains fixed on the plane.
Since there are no other isotropy types, the regular stratum is the complementary 4-dimensional manifold $\M_{reg} / G$, where $\M_{reg}$ denotes the submanifold where the action is free.

We recall from \cite{Bal-16} that the reduced nonholonomic bracket $\{\cdot, \cdot\}_\red$ is not Poisson on $\M/G$, however, this example is known to admit a dynamical gauge transformation so that the resulting bracket $\{\cdot, \cdot\}_\red^\B$ is Poisson.  In what follows, we apply Theorem \ref{T:GlobalB} to recover the corresponding 2-form $B$.

\noindent {\bf First Integrals}.   First, observe that the sections $\mathcal{Y}_1 = -X_3$ and $\mathcal{Y}_2= \langle \bm{\gamma}, {\bf X} \rangle $ generating $S$ do not induce horizontal gauge symmetries since, for $\tilde{\mathcal Y}_i = (\zeta_i)_\subM$, the $G$-invariant functions
$$
\mathcal{J}_1 = {\bf i}_{\tilde{\mathcal{Y}}_1} \Theta_\subM = -M_3 \qquad \mbox{and} \qquad \mathcal{J}_2 = {\bf i}_{\tilde{\mathcal{Y}}_2}\Theta_\subM = \langle \bm{\gamma},{\bf M}\rangle   
$$
are not first integrals of $X_\nh$.  However, the system admits two $G$-invariant horizontal gauge momenta $J_1$ and $J_2$, that depend linearly (over $Q$) on $\mathcal{J}_1$ and $\mathcal{J}_2$, that is
$$
\left(  \begin{smallmatrix}  J_1\\ J_2  \end{smallmatrix} \right)  = F  
\left(  \begin{smallmatrix}  \mathcal{J}_1\\ \mathcal{J}_2  \end{smallmatrix} \right) \qquad \mbox{where} \quad F=F(\gamma_3) =
\left(  \begin{smallmatrix}  f_1(\gamma_3) & \ g_1(\gamma_3) \\ f_2(\gamma_3) & \ g_2 (\gamma_3)   \end{smallmatrix} \right)
$$
where $(f_1, g_1)$ and $(f_2, g_2)$ are the two independent solutions on $Q/G$ of a linear system of ordinary  differential equations on $Q/G$, that is, $f_i, g_i\in C^\infty(Q)^G$ (recall that $\textup{dim}(Q/G) =1$, see \cite{Bal-16} to see the detailed ODE system).

Then, on $\M_{reg}$, we are under the hypothesis of Theorem \ref{T:GlobalB}. Moreover, in this case, the horizontal space associated to a principal connection has rank 1 and thus, following Corollary \ref{C:l+1}, the gauge transformation by $B_1$ (defined in \eqref{Eq:B1}) generates a reduced bracket $\{\cdot, \cdot\}_\red^1$ that describes the (reduced) dynamics and is Poisson with 2-dimensional symplectic leaves.

\noindent {\bf The dynamical gauge transformation}. In order to use the formula \eqref{Eq:B1} to write the 2-form $B=B_1$ on $\M_{reg}$, we recall from \cite{Bal-14} (and Lemma \ref{lemma:K_W_in_sections}) that 
\begin{equation*}
  \langle J, \mathcal{K}_\subW \rangle  = p_x d^\C \tilde \epsilon^x + p_y d^\C \tilde \epsilon^y = m\varrho \langle \bm{\gamma}, \bm{s}\rangle \langle \bm{\Omega},d\tilde{\bm{\lambda}}\rangle +  \mathcal{Q} \langle \bm{\gamma}, d\tilde{\bm{\lambda}}\rangle + \mathcal{P} d\tilde{\lambda}_3,
\end{equation*}
where $\mathcal{Q}$ and $\mathcal{P}\in C^\infty(\M)^G$ are given by $\mathcal{Q} = −m( \varrho^2 \langle \bm{\Omega},\bm{\gamma} \rangle + \varrho' c_3)$ and $\mathcal{P}  = m(L \varrho \langle \bm{\Omega},\bm{\gamma} \rangle + L' c_3)$ for $L = L(\gamma_3) = \varrho . \gamma_3 -\zeta$,  $c_3$ the third component of $\bm{\gamma} \times (\bm{\Omega} \times \bm{s})$ and $L' = \frac{d}{d\gamma_3} L$, $\varrho' = \frac{d}{d\gamma_3} \varrho$ (where also, as usual, $\tilde{\lambda}_i = \tau^*_\subM \lambda_i$ and thus $\tilde{\bm{\lambda}} = \tau^*_\subM \bm{\lambda}$).

Then it remains to compute  the 2-form $\langle J, d^\C\mathcal{A}_\subS^1 \otimes \eta_1+ d^\C\mathcal{A}_\subS^2 \otimes \eta_2 \rangle$ where $\eta_i = f_i \zeta_1 + g_i \zeta_2$ are the corresponding horizontal gauge symmetries and  $\mathcal{A}_\subS^i((\eta_j)_\subM) = \delta_j^i$ for each $i,j=1,2$.  The problem with this example (and also with the example in Section \ref{Example_BallOnSurface}) is that the horizontal gauge symmetries can not be explicitly written except in a few particular cases.  So, we will write the 2-form $\langle J, d\mathcal{A}_\subS^i \otimes \eta_i\rangle$ in the basis $\{\mathcal{Y}_1, \mathcal{Y}_2\}$. Let us denote by $\mathcal{Y}^1, \mathcal{Y}^2$ the 1-forms such that $\mathcal{Y}^i(\mathcal{Y}_j) = \delta_j^i$ and $\mathcal{Y}^i|_W =0$.

\begin{lemma} \label{L:B_1-Basis}
If the functions $(f_i, g_i)$ satisfy the system of ordinary differential equations $\left(  \begin{smallmatrix}  f'\\ g'  \end{smallmatrix} \right)  = \Phi  \left(  \begin{smallmatrix}  f\\ \ \\ g  \end{smallmatrix} \right)$  on $Q_{reg}/G$ then 
\begin{equation}\label{Ex:Solids}
B_1 = \langle J, \mathcal{K}_\subW\rangle - \mathcal{J}_i \Phi_{ij} d\gamma_3 \wedge \tilde{\mathcal{Y}}^j.
\end{equation}
\end{lemma}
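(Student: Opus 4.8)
The plan is to start from the global formula \eqref{Eq:B1}, namely $B_1 = \langle J, \mathcal{K}_\subW \rangle + \langle J, d^\C\!\mathcal{A}^i_\subS \otimes \eta_i\rangle$, and to rewrite the second summand in the (non-adapted, but explicit) basis $\{\mathcal{Y}_1,\mathcal{Y}_2\}$ of $S$ rather than in the adapted basis $\{(\eta_1)_Q,(\eta_2)_Q\}$ of horizontal gauge symmetries, which is not available in closed form. Since $\eta_i = f_i\zeta_1 + g_i\zeta_2$ with $(\zeta_j)_Q = \mathcal{Y}_j$, the $\g_S$-valued 2-form $d^\C\!\mathcal{A}^i_\subS \otimes \eta_i$ can equally be described by a change of frame on $\g_S$: if $\{\mathcal{A}_\subS^1,\mathcal{A}_\subS^2\}$ is dual to $\{(\eta_1)_\subM,(\eta_2)_\subM\}$ and $\{\mathcal{Y}^1,\mathcal{Y}^2\}$ is the (semi-basic) dual of $\{\tilde{\mathcal{Y}}_1,\tilde{\mathcal{Y}}_2\}$, then $\mathcal{A}_\subS^i\otimes\eta_i = \mathcal{Y}^j\otimes\zeta_j = \tilde{\mathcal{Y}}^j\otimes \zeta_j$ as $\g_S$-valued $1$-forms on $\M$ (both equal the identity endomorphism of $\S$ pulled back, vanishing on $\W$ and on $\mathcal{H}$). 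So $\langle J, d^\C\!\mathcal{A}^i_\subS\otimes\eta_i\rangle = \langle J, d^\C(\mathcal{Y}^j)\otimes\zeta_j\rangle = \mathcal{J}_j\, d^\C \tilde{\mathcal{Y}}^j$, where $\mathcal{J}_j = \langle J,\zeta_j\rangle = {\bf i}_{\tilde{\mathcal{Y}}_j}\Theta_\subM$.

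The heart of the computation is then to evaluate $\mathcal{J}_j\, d^\C\tilde{\mathcal{Y}}^j$ and identify it with $-\mathcal{J}_i\Phi_{ij}\, d\gamma_3\wedge\tilde{\mathcal{Y}}^j$ modulo terms already absorbed in $\langle J,\mathcal{K}_\subW\rangle$. First I would observe that $d^\C\tilde{\mathcal{Y}}^j|_\C$ decomposes according to $\C = \mathcal{H}\oplus\S$: the $\S\wedge\S$ and $\mathcal{H}\wedge\mathcal{H}$ pieces contribute to the structure constants of $S$ inside $D$, and since $\textup{rank}(H)=1$ the $\mathcal{H}\wedge\mathcal{H}$ part vanishes for dimensional reasons, while the $\mathcal{H}\wedge\S$ part is controlled by how the horizontal gauge symmetry sections vary along $Q/G$ — precisely the matrix $\Phi$ appearing in the ODE $(f',g')^t = \Phi(f,g)^t$. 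Concretely, because $\textup{dim}(Q/G)=1$ with coordinate $\gamma_3$, the exterior derivative of $\mathcal{Y}^j$ restricted to $\C$ has an $\mathcal{H}$-direction captured by $d\gamma_3$; writing $\eta_i = F_{ij}\zeta_j$ (with $F$ the matrix of $(f_i,g_i)$) and using that $\{(\eta_i)_\subM\}$ is, by Prop.~\ref{L:B_1}$(i)$, a frame whose dual $\mathcal{A}_\subS^i$ is $d^\C$-closed in the appropriate sense (this is what distinguishes horizontal gauge symmetries), one shows the $\S\wedge\S$ part of $\mathcal{J}_j d^\C\tilde{\mathcal{Y}}^j$ combines with the $\langle \bm{\gamma}, d\tilde{\bm{\lambda}}\rangle$ and $d\tilde\lambda_3$ terms already present in $\langle J,\mathcal{K}_\subW\rangle$ (as recorded in the displayed formula just before the lemma), leaving only the mixed term $-\mathcal{J}_i\Phi_{ij}\,d\gamma_3\wedge\tilde{\mathcal{Y}}^j$. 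The differentiation of the relation $\mathcal{A}_\subS^i = F^{-1}_{ij}\mathcal{Y}^j$ and of $\eta_i = F_{ij}\zeta_j$, together with $dF\cdot F^{-1}$ being governed by $\Phi\, d\gamma_3$, is the source of the factor $\Phi_{ij}$.

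The main obstacle I expect is bookkeeping the overlap between $d^\C\tilde{\mathcal{Y}}^j$ and the terms already contained in $\langle J,\mathcal{K}_\subW\rangle$: one must be careful that $\langle J,\mathcal{K}_\subW\rangle = \mathcal{J}_j\, d^\C$-type terms are \emph{not} double-counted, and that the splitting $\mathcal{Y}^j = \tilde{\mathcal{Y}}^j$ (semi-basic w.r.t.\ $\tau_\subM$) versus the non-semi-basic pieces is handled correctly when applying $d^\C$, which only sees the $\C$-directions. A clean way to organize this is: (i) fix the adapted basis $\{X_\alpha, Y_i, Z_a\}$ of \eqref{eq:basis_TQ} with $Y_i = (\eta_i)_Q$, apply Prop.~\ref{P:B-Coord}$(i)$ which already gives $B_1$ in terms of structure functions $C^A_{ij},C^A_{i\alpha},C^A_{\alpha\beta}$; (ii) change frame from $\{Y_i\}$ to $\{\mathcal{Y}_j\}$ via $Y_i = F_{ij}\mathcal{Y}_j$, tracking how the structure functions transform — the extra term coming from differentiating $F$ is exactly $\mathcal{J}_i\Phi_{ij}$ since $\Phi$ encodes $F^{-1}dF$ along the base direction $d\gamma_3$; and (iii) recognize that the remaining frame-independent combination reassembles into $\langle J,\mathcal{K}_\subW\rangle$ as written above the lemma statement. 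This reduces the lemma to the linear-algebra identity relating the ODE matrix $\Phi$ to the transformation of the connection forms $\mathcal{A}_\subS^i$, which is routine once set up.
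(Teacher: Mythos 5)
Your plan has a genuine gap, and it sits in the very first step. From the (correct) identity of $\g_S$-valued $1$-forms $\mathcal{A}_\subS^i\otimes\eta_i=\tilde{\mathcal{Y}}^j\otimes\zeta_j$ you conclude that $\langle J, d^\C\!\mathcal{A}_\subS^i\otimes\eta_i\rangle=\mathcal{J}_j\,d^\C\tilde{\mathcal{Y}}^j$. But the operation ``differentiate the coefficient $1$-forms in a chosen frame of $\g_S$'' is frame-dependent: the two frames differ by the matrix $F(\gamma_3)$, which is not constant along $Q/G$, and the discrepancy between $J_i\,d^\C\!\mathcal{A}_\subS^i$ and $\mathcal{J}_j\,d^\C\tilde{\mathcal{Y}}^j$ is exactly the term $-\mathcal{J}_i\Phi_{ij}\,d\gamma_3\wedge\tilde{\mathcal{Y}}^j$ that the lemma asserts. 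Concretely, since $(\eta_i)_\subM=F_{ij}\tilde{\mathcal{Y}}_j$, the dual coframes satisfy $\mathcal{A}_\subS^i=T_{ij}\tilde{\mathcal{Y}}^j$ with $T=(F^t)^{-1}$, and $J_i=F_{ik}\mathcal{J}_k$, so $J_i\,d^\C\!\mathcal{A}_\subS^i=\mathcal{J}_j\,d^\C\tilde{\mathcal{Y}}^j+\mathcal{J}_k\,(F^tT')_{kj}\,d\gamma_3\wedge\tilde{\mathcal{Y}}^j$; the ODE $(F^t)'=\Phi F^t$ gives $F^tT'=-\Phi$, which is the whole content of the paper's (two-line) proof. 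Your paragraph-one identity denies this correction term, and your paragraph-two plan to recover the $\Phi$-contribution out of $\mathcal{J}_j\,d^\C\tilde{\mathcal{Y}}^j$ itself cannot succeed: in this example $d\tilde{\mathcal{Y}}^j|_\C=0$ (as the paper verifies right after the lemma using the explicit $\mathcal{Y}^1,\mathcal{Y}^2$), so under your first identity you would be left with $B_1|_\C=\langle J,\mathcal{K}_\subW\rangle|_\C$, losing the nonzero term $-\mathcal{J}_i\Phi_{ij}\,d\gamma_3\wedge\tilde{\mathcal{Y}}^j=-\mathcal{Q}\,d\gamma_3\wedge\tilde{\mathcal{Y}}^1-\mathcal{P}\,d\gamma_3\wedge\tilde{\mathcal{Y}}^2$ that is essential for the cancellation leading to \eqref{Ex:Solids:B}.

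Your closing organization, item (ii), does gesture at the correct mechanism (``the extra term coming from differentiating $F$''), but it contradicts your first paragraph, and the auxiliary claims it leans on are unsupported: there is no statement, and no truth, to the effect that the $\mathcal{A}_\subS^i$ are ``$d^\C$-closed because the $\eta_i$ are horizontal gauge symmetries'' (if they were, the lemma's correction term would again vanish), and no recombination of the $\S\wedge\S$ pieces with $\langle J,\mathcal{K}_\subW\rangle$ occurs in \eqref{Ex:Solids} --- that summand is left untouched, and the correction is purely the $d\gamma_3\wedge\tilde{\mathcal{Y}}^j$ term produced by $d^\C$ acting on the coefficients $T_{ij}(\gamma_3)$. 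Note also that the identity is to be read on $\C$: the paper's computation produces the additional summand $\mathcal{J}_i\,d^\C\tilde{\mathcal{Y}}^i$, which is only afterwards shown to vanish on $\C$ (via $d\mathcal{Y}^i|_D=0$), so it is not available to generate the $\Phi$-term either. To repair your argument, replace the first step by the coframe change $\mathcal{A}_\subS^i=T_{ij}\tilde{\mathcal{Y}}^j$ and differentiate $T(\gamma_3)$; this reproduces the paper's proof.
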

\begin{proof}
 Let us denote by $T=T(\gamma_3)$ the inverse transpose of the matrix $F$.  Then $(\mathcal{A}^1_\subS, \mathcal{A}_\subS^2)  = (\mathcal{Y}^1, \mathcal{Y}^2 ) T^t$ for $T^t$ denoting the transpose of $T$.   Then, denoting by $T'$ the matrix with elements $(T_{ij})' = \frac{d}{d\gamma_3} T_{ij}$, we get
\begin{equation*}
 \begin{split}
  \langle J, d\mathcal{A}_\subS^i \otimes \eta_i\rangle |_\C & =  (J_1, J_2) \, (d^\C\!\mathcal{A}_\subS^1 , d^\C\!\mathcal{A}_\subS^2 ) = (\mathcal{J}_1, \mathcal{J}_2) F^t d^\C(T \left(  \begin{smallmatrix}  \tilde{\mathcal{Y}}^1\\ \tilde{\mathcal{Y}}^2  \end{smallmatrix} \right) ) \\
 & =  \left( \mathcal{J}_i d\tilde{\mathcal Y}^i +  (\mathcal{J}_1, \mathcal{J}_2) F^t T' \left(  \begin{smallmatrix}  d\gamma_3\wedge\tilde{\mathcal{Y}}^1\\ d\gamma_3\wedge\tilde{\mathcal{Y}}^2  \end{smallmatrix} \right)   \right)\!|_\C =  \left( \mathcal{J}_i d\tilde{\mathcal Y}^i -  (\mathcal{J}_1, \mathcal{J}_2) \Phi \left(  \begin{smallmatrix}  d\gamma_3\wedge\tilde{\mathcal{Y}}^1\\ d\gamma_3\wedge\tilde{\mathcal{Y}}^2  \end{smallmatrix} \right)   \right)\!|_\C
\nonumber
 \end{split}
\end{equation*}
where $\tilde{\mathcal Y}^i = \tau_\subM^*{\mathcal Y}^i$ is the corresponding 1-form on $\M$. 
\end{proof}

Following \cite[Sec.~3.6]{Bal-16} we have that $(\mathcal{J}_1, \mathcal{J}_2 ) \Phi = (\mathcal{Q}, \mathcal{P})$ and then  \eqref{Ex:Solids} becomes
$$
\langle J, d^\C\mathcal{A}_\subS^i \otimes \eta_i\rangle = {\mathcal J}_i d^\C\tilde{\mathcal Y}^i -\mathcal{Q} d\gamma_3\wedge \tilde{\mathcal Y}^1 - \mathcal{P}d\gamma_3 \wedge \tilde{\mathcal Y}^2.
$$
Next, we observe that  the sections $\mathcal{Y}_i$ can be written as $\mathcal{Y}_1 = (\zeta_1)_Q = (P_{\g_\subS}((0,0),0;1))_Q=  ((0,0),0;1)_Q + h_1^1 (\xi_1)_Q + h_1^2(\xi_2)_Q$ and analogously with $\mathcal{Y}_2 =  ((0,0),1;0)_Q + h_2^i (\xi_i)_Q$ for functions $h_i^j\in C^\infty(Q)$. Hence, from Lemma \ref{L:Invariance} we conclude that $[\mathcal{Y}_1, \mathcal{Y}_2]$ and $[X,\mathcal{Y}_i]$ are sections in $W$ for $X\in \Gamma(H)$, $\rho$-projectable, and thus we obtain that $d\mathcal{Y}^i|_D=0$ or equivalently $d\tilde{\mathcal{Y}}_i|_\C = 0$.  Moreover, taking $\mathcal{Y}^1 = {\textstyle \frac{1}{1-\gamma_3^2}} (\gamma_3 \langle \bm{\gamma},\bm{\lambda}\rangle - \lambda_3)$ and $\mathcal{Y}^2 = {\textstyle \frac{1}{1-\gamma_3^2}} ( \langle \bm{\gamma},\bm{\lambda}\rangle - \gamma_3\lambda_3)$ we get that 
$$
\langle J, d^\C\mathcal{A}_\subS^i \otimes \eta_i\rangle = - \mathcal{Q}\langle \bm{\gamma}, d\tilde{\bm{\lambda}}\rangle - \mathcal{P} d\tilde{\lambda}_3.
$$

Finally, following Theorem \ref{T:GlobalB} we conclude that 
\begin{equation}\label{Ex:Solids:B}
B = B_1 = \langle J, \mathcal{K}_\subW \rangle +\langle J, d^\C\mathcal{A}_S^i \otimes \eta_i\rangle =  m\varrho \langle \bm{\gamma}, \bm{s}\rangle \langle \bm{\Omega},d\tilde{\bm{\lambda}}\rangle
\end{equation}
induces a Poisson bracket $\{\cdot, \cdot\}_\red^1$ on $\M_{reg}/G$ that describes the (reduced) dynamics, recovering the previous results in \cite{Bal-16, LGN-JM-16}. 
In fact,  since the 2-form $B$ given in \eqref{Ex:Solids:B} is well defined in the whole manifold $\M$, then the reduced bracket $\{\cdot,\cdot\}_\red^\B$ can be defined on the differential space $\M/G$ (moreover, it was proven in \cite{Bal-16}, that $\{\cdot,\cdot\}_\red^\B$ is a Poisson bracket on the differential space $\M/G$ using formulation given Prop.~\ref{Prop:JacB}).

\begin{remark}
 The authors in \cite{Bol-Kil-Kaz-14} analyze the possibility of having a bracket in the reduced space with a 4-dimensional almost symplectic foliation given by the level set of only one of the horizontal gauge momenta.  In order to have a 4-dimensional foliation, we need to reduced only by the group $SO(3)$  in which case $S$ is generated by $\mathcal{Y}_2$. However, as we saw, $\zeta_2$ does not generate a horizontal gauge symmetry, in fact we need $\zeta_1$ coming with the $S^1$-symmetry.  That is why, the dynamics cannot be described by a bracket with a 4-dimension foliation only one of the first integrals as Casimir functions.  
\end{remark}

\subsection{A homogeneous ball rolling in the interior side of a cylinder}

Consider the mechanical system formed by a homogeneous ball rolling in the interior side of a vertical circular cylinder \cite{Bal-14,BGM-96,BKM-02,Marle-03,Ramos-04}.
This example shares many properties with the mechanical system studied in Sec.~\ref{Example_BallOnSurface}. In this case, we will only explain the theory and the computations can be found in \cite[Sec.~7.4]{Bal-14} where it is shown that the system admits a dynamical gauge transformation inducing a Poisson bracket on the reduced space $\M/G$ with the horizontal gauge symmetries becoming Casimirs.

Indeed, this nonholonomic system has a symmetry given by a free and proper action of a Lie group $G$ satisfying the dimension assumption so that $\textup{rank}(S)=2$. Since the mechanical system has also 2 horizontal gauge momenta \cite{BGM-96} which are $G$-invariant we can apply Theorem \ref{T:GlobalB}.  Moreover, we can also observe from \cite{Bal-14} that, since $\textup{rank}(D)=3$ then $\textup{rank}(H)=1$ and by Corollary \ref{C:l+1}$(i)$, we conclude that there is a unique dynamical gauge transformation given by the 2-form $B=B_1$ (defined in \eqref{Eq:B1}) such that the reduced bracket $\{\cdot,\cdot\}_\red^B$ on $\M/G$ is Poisson with a 2-dimensional sympletic foliation defined by the level sets of both (reduced) horizontal gauge momenta. Following \eqref{Eq:B1} we recover the 2-form $B$ computed in \cite{Bal-14}.

\section{Example: The homogeneous ball in a convex surface of revolution} \label{Example_BallOnSurface}

In this section we  study the hamiltonization problem of the mechanical system formed by a homogeneous ball rolling without sliding on a convex surface of revolution  \cite{BKM-02, FGS-05, Hermans-95, Ramos-04, Routh-55, Zenkov-95}. 
In \cite{BKM-02}, it was shown that, in local coordinates, the reduced system is described by a Poisson bracket
(after time reparametrization). The properties of this Poisson bracket have also been studied by Ramos \cite{Ramos-04} 
who observed, by a dimensional argument, that reparametrization of time is not necessary (see also \cite{FGS-05}).
However, the geometry underlying the existence of such a Poisson structure was not studied yet. 

In particular, we study the geometric background of this nonholonomic system following Sec.~\ref{Sec:geometric_approach_nh} and afterwards we show, using Theorem \ref{T:GlobalB}, that
there exists an almost Poisson bracket $\{ \cdot, \cdot \}_\B$ describing the dynamics of the nonholonomic system
such that, after reduction by symmetries, it induces a Poisson bracket $\{ \cdot, \cdot \}_\red^\B$ on $\M/G$.  Moreover, we end this section writing explicitly the 2-form $B$ inducing a dynamical gauge transformation and the Poisson bracket on $\M/G$.

%---------------------- The model and the Lagrangian ------------
\subsection{The model}
Consider the motion of a homogeneous ball of mass $m$ and radius $R$ rolling without sliding under the influence of gravity on the interior side of a convex surface of revolution $\Sigma \subset \R^3$  with vertical axis of symmetry (parallel to the gravity force). We denote by $(x,y)$ the coordinates of the projection of the center of mass of the ball to the plane $z=0$. The homogeneity means that the inertia tensor of the ball has the form $\mathbb{I} = I \cdot \textup{Id}$, where $I$ is a positive constant and $\textup{Id}$ the $3 \times 3$ identity matrix. This mechanical system has two kinds of constraints: the holonomic constraint imposing the motion of the ball on the surface $\Sigma$ and the nonholonomic one given by the non-sliding condition.  

We fix an orthonormal frame 
%$\{ \hat{e}_x, \hat{e}_y, \hat{e}_z \}$ 
in the ambient 3-dimensional space (space frame), a moving orthonormal frame 
%$ \{ \hat{e}_1, \hat{e}_2, \hat{e}_3 \}$ 
attached to the ball (body frame) and denote by $g$ the orthogonal matrix relating both frames. Therefore the configuration space $Q$ of the mechanical system is $\Sigma \times SO(3) \simeq \R^2 \times SO(3)$ with coordinates $(x,y,g)$.
Since the ball is homogeneous the system has a $SO(3)$-symmetry given by the freeness in choosing the body frame and an extra $S^1$-symmetry induced by the axial symmetry of the surface $\Sigma$ where the ball moves.

%The mechanical Lagrangian of the free system $\mathbf{L} : T(\R^3 \times SO(3)) \rightarrow \R$ has kinetic energy given by
%\begin{equation}
%\label{kinetic_energy}
%\kappa = m (\dot{x}^2 + \dot{y}^2 + \dot{z}^2) + I (\omega_1^2 + \omega_2^2 + \omega_3^2),
%\end{equation}
%using coordinates $(x,y,z) \in \R^3$ and $(\omega_1, \omega_2,\omega_3) \in \R^3$ denoting the angular velocities.
% associated to the factor $SO(3)$.
%The potential energy is $m a_g z$, where $a_g$ is the acceleration of gravity, and $z$ the height (vertical position) of the center of mass of the ball which is constrained to lie on the surface $\Sigma$.

The nonholonomic constraints are described by the following equation relating the angular velocity $\vec{\omega}$ and the velocity of the center of mass $\vec{v}$,
\begin{equation}
\label{NH_constr_vector}
\vec{\omega} \times \vec{a} = - \vec{v},
\end{equation}
where $\times$ denotes the usual vector product in $\R^3$ and $\vec{a}$ is a vector joining the center of mass of the ball with the contact point of the ball with the surface. 
Denoting by $\vec{n}$ the exterior unit normal vector to the surface, we have that $\vec{a} = R \vec{n}$. On the other hand, following Borisov, Mamaev and Kilin \cite{BKM-02}, the angular momentum with respect to the contact point is given by
$
\vec{M} = I \vec{\omega} + m R^2 \vec{n} \times ( \vec{\omega} \times \vec{n} ) = I \vec{\omega} + m R^2 \vec{\omega} - m R^2 (\vec{\omega} \cdot \vec{n} ) \vec{n}.
$

The equations of motion are found considering Newton's second law for translations and rotations,
\begin{equation}
\label{Newton's_law}
%\begin{split}
m \dot{\vec{v}} = \vec{N} + \vec{F} , \qquad
I \dot{\vec{\omega}} = R \vec{n} \times \vec{N},
%\end{split}
\end{equation}
where $\vec{N}$ denotes the reaction force at the contact point and $\vec{F}$ the external force applied to the center of mass. The force $\vec{F}$ is the gradient of the potential energy $U(x,y) = m {\bf g} z$, where ${\bf g}$ denotes the acceleration of gravity and $z = z(x,y)$ indicates the height (vertical position) of the center of mass of the ball which is constrained to lie on the surface $\Sigma$. Eliminating $\vec{N}$ from (\ref{Newton's_law}) and using \eqref{NH_constr_vector} one gets (see \cite{BKM-02}),
\begin{equation}
%\begin{split}
\label{equations_motion_physics}
\dot{\vec{M}} = m R^2 \dot{\vec{n}} \times ( \vec{\omega} \times \vec{n} ) + \vec{M_F},  \qquad
\dot{\vec{r}} + R \dot{\vec{n}} = - R \vec{\omega} \times \vec{n},
%\end{split}
\end{equation}
where $\vec{M_F}$ denotes the moment of force (torque) of $\vec{F}$. 
The equations of motion in \eqref{equations_motion_physics} do not depend on the orientation of the ball, so they can be considered as the equations of motion after reduction by the $SO(3)$-symmetry. Moreover, following \cite{BKM-02}, the equations  (\ref{equations_motion_physics}) are valid on any smooth surface, however we consider a smooth convex surface of revolution with vertical axis of symmetry so that we have the extra $S^1$-symmetry.
%The associated $S^1$-action leaves invariant the configurations where the ball is spinning in the bottom of the surface, thus the action is {\it not free}. Therefore, the reduced space is not a smooth manifold and the tangent space to the $G$-orbits may change of rank.

\subsection{Geometric aspects of the system and the nonholonomic bivector field}
%----------------------------------------------------------------
\noindent{\bf The constraints and the nonholonomic bivector field.}
Let $\Sigma$ denote the convex surface of revolution described by the center of mass of the ball ${\bf x} = (x,y,z)$. We suppose that $\Sigma$ is parametrized in Cartesian coordinates by
\begin{equation}
\label{equation_Sigma}
\Sigma = \{ {\bf x} \in \R^3 : z = \phi(x^2+y^2) \} ,
\end{equation}
where the smooth function $\phi : \R^+ \to \R$ describes the profile curve of the surface $\Sigma$. To guarantee smoothness and convexity of the surface, we assume that $\phi$ verifies that $\phi'(0^+)=0$, $\phi'(s)>0$ and $\phi''(s)>0$ when $s>0$.  
%The function $\phi : \R^+ \to \R$ is related to the function $\varphi : \R \to \R$ given by $\varphi(s) = \phi(s^2)$, verifying the smoothness and convexity conditions: $\varphi'(0^+)=\varphi'(0^-)=0$, $\varphi''(s) \geq 0$. 
To ensure that the ball has only one contact point with the surface we ask the curvature of $\phi(s)$ to be at most $1/R$. 
Denoting by $\vec{n}=\vec{n}(x,y) = (n_1,n_2,n_3)$ the exterior unit normal to the surface $\Sigma$, due to the smoothness of $\phi$ we have that $n_3$ is never zero and
\begin{equation}
\label{formulas_ni_phi'}
\frac{n_1}{n_3} = 2x \phi ' , \quad \frac{n_2}{n_3} = 2y \phi ' , \quad n_3 = - \frac{1}{(1 + 4 (x^2 + y^2) (\phi ')^2 )^{1/2}}.
\end{equation}

Then the configuration manifold $Q$ is identify with $Q \simeq \R^2\times SO(3)$ with coordinates $q =(x,y,g)$ and let us 
denote by $\bm{\omega} = (\omega_1,\omega_2,\omega_3)$ the angular velocity associated to the right invariant frame $\{ X_1^R, X_2^R, X_3^R \}$ of $T(SO(3))$.  Hence, from \eqref{Lagrangian_body} we obtain the mechanical Lagrangian $L : TQ \rightarrow \R$ given by 
\begin{eqnarray}
L (q;\dot x, \dot y, \bm{\omega}) &=& L_0 ({\bf x},g,\dot{\bf x}, \bm{\omega})|_{TQ} = \left( \textstyle{\frac{1}{2}} m \langle \dot{\bf x}, \dot{\bf x}\rangle + \textstyle{\frac{1}{2}} \langle \I\bm{\omega}, \bm{\omega}\rangle - m{\bf g} z\right)|_{TQ} \\ \nonumber 
%\frac{m}{2} \left( (1 + \phi'^2 4 x^2) \dot{x}^2 + \phi'^2 8 x y \dot{x} \dot{y} + (1 + \phi'^2 4 y^2) \dot{y}^2 \right) + \frac{I}{2} (\omega_1^2 + \omega_2^2 + \omega_3^2) - m {\bf g} \phi(x^2 + y^2)  \nonumber\\
&=& \frac{m}{2 n_3^2} \left( (1 - n_2^2) \dot{x}^2 + 2 n_1 n_2 \dot{x} \dot{y} + (1 - n_1^2) \dot{y}^2 \right) + \frac{I}{2} (\omega_1^2 + \omega_2^2 + \omega_3^2) - m {\bf g} \phi(x^2 + y^2), \label{Lagrangian}
\end{eqnarray}
where we used (\ref{equation_Sigma}) and (\ref{formulas_ni_phi'}).

The non-sliding constraints (\ref{NH_constr_vector}) are written, in terms of the coordinates $(\dot x,\dot y, \omega_1,\omega_2,\omega_3)$ of $TQ$, as $\dot{x} = - R (\omega_2 n_3 - \omega_3 n_2)$ and $\dot{y} = - R (\omega_3 n_1 - \omega_1 n_3)$,
defining the associated constraint 1-forms
\begin{equation*}
\epsilon^1 = dx - R (n_2 \rho_3 - n_3\rho_2 ), \qquad
\epsilon^2 = dy - R (n_3 \rho_1 - n_1\rho_3 ),
\end{equation*}
where we denoted by $\{ \rho_1, \rho_2, \rho_3 \}$ the right Maurer-Cartan 1-forms on $SO(3)$, dual to right invariant vector fields $\{ X_1^R, X_2^R, X_3^R \}$. The constraint distribution $D$ on $Q$, defined by the annihilator of $\epsilon^1$ and $\epsilon^2$, is given by
\begin{equation} \label{Eq:ExBall:D}
D = \textup{span} \left \{ Y_x := \partial_{x} + \frac{n_2}{R n_3} X_n - \frac{1}{R n_3} X^R_2, \:\: Y_y := \partial_{y} - \frac{n_1}{R n_3} X_n + \frac{1}{R n_3} X^R_1 , \:\: X_n \right  \},
\end{equation}
where $X_n := \langle \vec{n} , {\bf X}^R \rangle$ where ${\bf X}^R = (X_1^R, X_2^R, X_3^R)$ and $\langle \cdot, \cdot\rangle$ denotes, as usual, the canonical inner product.  
Consider now the dual basis of $TQ$ and $T^*Q$ given by
\begin{equation}
\label{Eq:ExBall:BasisTQ}
\mathfrak{B}_{TQ} = \{ Y_x , Y_y, X_n, Z_1, Z_2 \} \quad \text{and} \quad \mathfrak{B}_{T^*Q} = \{ dx, dy, \beta_n, \epsilon^1, \epsilon^2 \},
\end{equation} 
where $\beta_n = \langle \vec{n} , \bm{\rho} \rangle$ for $\bm{\rho} = (\rho_1, \rho_2, \rho_3)$ and where $Z_1$ and $Z_2$ are vector fields defined by
\begin{equation}
\label{formula_Zi}
Z_1 := \frac{1}{R n_3} X^R_2 - \frac{n_2}{R n_3} X_n, \quad
Z_2 := - \frac{1}{R n_3} X^R_1 + \frac{n_1}{R n_3} X_n.
\end{equation}
We denote by $(\dot{x},\dot{y},\omega_n,v_1,v_2)$ and $(p_x,p_y,M_n,M_1,M_2)$ the coordinates in $TQ$ and $T^*Q$ associated to the frames \eqref{Eq:ExBall:BasisTQ}, respectively, where $\omega_n = \langle \vec{n} , \bm{\omega} \rangle$ denotes the normal component of the angular velocity in the space frame $\bm{\omega}$.

\begin{remark}
\label{rmk:choice_Z_i}
Later we will see that the vector fields $Z_1$ and $Z_2$ induce  a $G$-invariant vertical complement of the constraints.
%On the other hand we observe that that $[Z_1, Z_2] = - \frac{1}{R n_3} X_n$, then $W$ is not integrable.
\end{remark}

The constraint manifold $\M = Leg(D)  \subset T^*Q$ defined in (\ref{definition_M})
is given by
\begin{equation}
%\label{Ex:M_constrained_manifold}
\M = \Big\{ (x, y , g , p_x , p_y, M_n, M_1 , M_2) \: : \: M_1 = - {\textstyle \frac{I}{E}}p_x , \: M_2 = - {\textstyle \frac{I}{E}} p_y  \Big\},
\end{equation}
where $E := I + m R^2$. Following \eqref{eq:basis_TQ}, we consider a basis of $T^*\M$ given by 
\begin{equation}
\label{basis_of_T*M}
\mathfrak{B}_{T^* \M} = \Big\{ \tilde{dx}, \tilde{dy}, \tilde{\beta}_n, \tilde{\epsilon}^1, \tilde{\epsilon}^2, d p_x, d p_y, d M_n \Big\},
\end{equation}
where $\tilde{dx} = \tau_\subM^* dx$, $\tilde{dy} = \tau_\subM^* dy$, $\tilde{\beta_n} = \tau_\subM^* \beta_n$, $\tilde{\epsilon}^1 = \tau_\subM^* \epsilon^1$, $\tilde{\epsilon}^2 = \tau_\subM^* \epsilon^2$ with, as usual, $\tau_\subM : \M \rightarrow Q$ the canonical projection.  Its corresponding dual basis is given by 
\begin{equation}
\label{Ex:basis_of_TM}
\mathfrak{B}_{T \M} = \Big\{ \tilde{Y_x}, \tilde{Y_y}, \tilde{X_n}, \tilde{Z_1}, \tilde{Z_2}, \partial_{p_x}, \partial_{p_y}, \partial_{M_n} \Big\}, 
\end{equation}
where, as usual, we denote with a tilde the vector fields on $\M$ to distinguish them from their corresponding vector fields on $Q$.
Hence, from \eqref{definition_C}, the constraint distribution $\C$ on $\M$ is given by $\C = \textup{span} \Big\{ \tilde{Y_x}, \tilde{Y_y}, \tilde{X_n}, \partial_{p_x}, \partial_{p_y}, \partial_{M_n} \Big\}.$

From \eqref{definition_pi_nh} and \eqref{nonholonomic_dynamics}, we obtain the following Proposition. 

\begin{proposition}
\label{Prop:ExBall:pi_nh}
The nonholonomic bivector field $\pi_{\emph{\nh}}$ on $\M$ is given by
\begin{equation*} % \label{pi_nh}
\begin{split}
\pi_{\emph{\nh}} = & \ \tilde{Y_x} \wedge \partial_{p_x} + \tilde{Y_y} \wedge \partial_{p_y} + \tilde{X_n} \wedge \partial_{M_n} \\
& + M_n D_{xy}^n \partial_{p_x} \wedge \partial_{p_y}
+ {\textstyle \frac{I}{E}} ( p_x D_{yn}^x + p_y D_{yn}^y ) \partial_{p_y} \wedge \partial_{M_n} 
- {\textstyle \frac{I}{E}} ( p_x D_{xn}^x + p_y D_{xn}^y ) \partial_{M_n} \wedge \partial_{p_x},
\end{split}
\end{equation*}
where $D_{xy}^n$, $D_{xn}^x$, $D_{yn}^y$, $D_{xn}^y$ and $D_{yn}^x$ are basic functions on $\M$ (with respect to the bundle $\tau_\subM : \M \rightarrow Q$) given by
\begin{equation}
\label{formulas_Dxyn}
\begin{split}
D_{xy}^n &= \frac{1}{R n_3} \left( n_1^x + n_2^y +\frac{1}{R}  \right),  \qquad \qquad \qquad \qquad
D_{xn}^x = - D_{yn}^y = R \left( - n_1^y n_3 + n_1 n_3^y + \frac{n_1 n_2}{R n_3} \right) , \\
D_{xn}^y &= R \left( n_1^x n_3 - n_1 n_3^x - \frac{n_1^2 + n_3^2}{R n_3} \right), \qquad \qquad \ \ 
D_{yn}^x =  R \left( - n_2^y n_3 + n_2 n_3^y + \frac{n_2^2 + n_3^2}{R n_3} \right) ,
\end{split}
\end{equation}
with $n_i^x$ and $n_i^y$ the partial derivatives of $n_i$, $i=1,2,3$, with respect to $x$ and $y$, respectively; i.e., $n_i^x =\partial_xn_i$ and $n_i^y = \partial_yn_i$.
\end{proposition}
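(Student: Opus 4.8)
The plan is to compute the nonholonomic bivector $\pi_\nh$ directly from its defining property, namely that $\pi_\nh^\sharp(\alpha) = -X$ iff $\mathbf{i}_X \Omega_\subM|_\C = \alpha|_\C$, by exploiting the adapted bases \eqref{basis_of_T*M}--\eqref{Ex:basis_of_TM}. First I would write down $\Omega_\subM = -d\Theta_\subM$ where $\Theta_\subM = \iota_\subM^*\Theta_Q$. In the coordinates $(x,y,g;p_x,p_y,M_n)$ on $\M$, using $\Theta_\subM = p_x\,\tilde{dx} + p_y\,\tilde{dy} + M_n\,\tilde\beta_n + \iota_\subM^*(M_1)\,\tilde\epsilon^1 + \iota_\subM^*(M_2)\,\tilde\epsilon^2$ and the constraint relations $M_1 = -\frac{I}{E}p_x$, $M_2 = -\frac{I}{E}p_y$, one gets
\begin{equation*}
\Theta_\subM = p_x\,\tilde{dx} + p_y\,\tilde{dy} + M_n\,\tilde\beta_n - {\textstyle\frac{I}{E}}p_x\,\tilde\epsilon^1 - {\textstyle\frac{I}{E}}p_y\,\tilde\epsilon^2.
\end{equation*}
Since the bivector only depends on $\Omega_\subC = \Omega_\subM|_\C$, and by Remark \ref{R:remarks_def_of_B}$(iii)$-type reasoning it suffices to know $d\tilde{dx}|_\C$, $d\tilde{dy}|_\C$, $d\tilde\beta_n|_\C$, $d\tilde\epsilon^1|_\C$, $d\tilde\epsilon^2|_\C$. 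These are pullbacks of $d(dx)=0$, $d(dy)=0$, $d\beta_n$, $d\epsilon^1$, $d\epsilon^2$ on $Q$, restricted to $\C$, i.e. governed by the structure functions of the frame $\mathfrak{B}_{TQ}$ in \eqref{Eq:ExBall:BasisTQ}.

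The key computational step is therefore to extract, from the explicit vector fields $Y_x, Y_y, X_n, Z_1, Z_2$ in \eqref{Eq:ExBall:D} and \eqref{formula_Zi}, the Lie brackets among $Y_x, Y_y, X_n$ (these give the $\C$-part of the exterior derivatives of the dual 1-forms), together with the components of those brackets along $Y_x,Y_y,X_n$ needed in $\Omega_\subC$. Concretely, I would compute $[Y_x,Y_y]$, $[Y_x,X_n]$, $[Y_y,X_n]$ using $[X_i^R,X_j^R] = -\epsilon_{ijk}X_k^R$ and the dependence of the $n_i$ on $(x,y)$ via \eqref{formulas_ni_phi'}; the surviving structure constants are exactly the functions $D_{xy}^n, D_{xn}^x, D_{yn}^y, D_{xn}^y, D_{yn}^x$ of \eqref{formulas_Dxyn}. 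One then assembles
\begin{equation*}
\Omega_\subC = \big(\tilde X^a\wedge dp_a + (\text{terms } d\tilde\beta_n, d\tilde\epsilon^i)\big)\big|_\C
\end{equation*}
with $(p_a) = (p_x,p_y,M_n)$, substitutes the structure functions, and inverts the resulting nondegenerate $2$-section on $\C$ to read off $\pi_\nh$. The inversion is straightforward because $\Omega_\subC$ is block-structured: the pairings $\langle \tilde Y_x,\partial_{p_x}\rangle$, etc., are ``canonical'' up to the correction terms, so the inverse is $\tilde Y_x\wedge\partial_{p_x} + \tilde Y_y\wedge\partial_{p_y} + \tilde X_n\wedge\partial_{M_n}$ plus the quadratic-in-momenta corrections coming from the $D$'s.

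The main obstacle I expect is bookkeeping: correctly tracking the many terms arising from differentiating the $n_i(x,y)$ and from the brackets of right-invariant vector fields, and in particular verifying the antisymmetry relation $D_{xn}^x = -D_{yn}^y$ and the precise coefficients in \eqref{formulas_Dxyn}. A useful consistency check along the way is that $\pi_\nh$ must have characteristic distribution exactly $\C$ (rank $6$) and must satisfy $\pi_\nh^\sharp(d\Ham_\subM) = -X_\nh$ with $X_\nh$ a second-order equation; one can also cross-check the reduced expression against the known equations of motion \eqref{equations_motion_physics} of \cite{BKM-02}. Once the structure functions are pinned down, the statement follows by direct substitution, so the proof is essentially a (careful) computation with no conceptual difficulty beyond the setup already in place.
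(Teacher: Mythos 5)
Your plan is correct and follows essentially the same route as the paper: write $\Theta_\subM$ in the adapted basis using the constraint relations $M_1=-\frac{I}{E}p_x$, $M_2=-\frac{I}{E}p_y$, restrict $\Omega_\subM=-d\Theta_\subM$ to $\C$, identify the functions $D_{xy}^n$, $D_{xn}^x$, etc.\ from the $\C$-restrictions of $d\tilde\beta_n$, $d\tilde\epsilon^1$, $d\tilde\epsilon^2$, and invert the nondegenerate $2$-section. The only cosmetic difference is that you phrase the identification via Lie brackets and structure functions of the frame, whereas the paper computes $d\beta_n|_D$, $d\epsilon^1|_D$, $d\epsilon^2|_D$ directly (using $\langle\vec n,\vec n^{\,x}\rangle=\langle\vec n,\vec n^{\,y}\rangle=0$), which is the same computation.
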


\begin{proof}
First observe that the Liouville 1-form on $\M$  is written in the adapted basis (\ref{basis_of_T*M}) as $\Theta_\subM = \iota^*\Theta_Q =  p_x \tilde{dx} + p_y \tilde{dy} + M_n \tilde{\beta_n} -\frac{I}{E} p_x \tilde{\epsilon}^1 -\frac{I}{E} p_y \tilde{\epsilon}^2.$ Hence, $\Omega_\C = \Omega_\subM |_\C$ is given by
\begin{equation*}
\Omega_\subC = \left( - d p_x \wedge \tilde{dx} - d p_y \wedge \tilde{dy} - d M_n \wedge \tilde{\beta_n} - M_n d \tilde\beta_n + \frac{I}{E} p_x  d \tilde{\epsilon}^1 + \frac{I}{E} p_y d \tilde{\epsilon}^2 \right) |_\subC.
\end{equation*}

Now, using that $\langle \vec{n} , n^x \rangle = \langle \vec{n}, n^y \rangle = 0$, we get 
%\begin{equation*}
%\label{d_beta_n_C}
$d \beta_n |_D = D_{xy}^n  dx \wedge dy |_D,$ 
%\end{equation*}
and
\begin{equation*}
%\label{d_epsilon_C}
d \epsilon^1 |_D = - (D_{xn}^n dx \wedge \beta_n  + D_{yn}^x dy \wedge \beta_n  ) |_D \quad \mbox{and} \quad d\epsilon^2 |_D =- (D_{xn}^y dx \wedge \beta_n + D_{yn}^y dy \wedge \beta_n ) |_D.
\end{equation*}
Therefore, the 2-section $\Omega_\subC$ becomes
\begin{equation}
\label{Omega_C}
\begin{split}
\Omega_\subC = \: & ( - d p_x \wedge \tilde{dx} - d p_y \wedge \tilde{dy} - d M_n \wedge \tilde{\beta_n} \\
&- M_n D_{xy}^n \tilde{dx} \wedge \tilde{dy} - {\textstyle \frac{I}{E}} ( p_x D_{xn}^x + p_y D_{xn}^y ) \tilde{dx} \wedge \tilde{\beta_n} - {\textstyle \frac{I}{E}}  ( p_x D_{yn}^x + p_y D_{yn}^y ) \tilde{dy} \wedge \tilde{\beta_n} ) |_\subC.
\end{split}
\end{equation}
Finally, the nonholonomic bivector is computed using \eqref{definition_nh_bracket} and \eqref{definition_pi_nh} and we get the desired expression.

\end{proof}

The hamiltonian 
\begin{equation*}
\Ham_\subM = {\textstyle \frac{R^2}{2 E}} ( (1-n_1^2) p_x^2 + (1-n_2^2) p_y^2 - 2 p_x p_y n_1 n_2  ) +  {\textstyle \frac{1}{2 I}} M_n^2 + m a_g \phi(x^2 + y^2) ,
\end{equation*}
induced by the Lagrangian \eqref{Lagrangian} defines the nonholonomic vector field $X_\nh = - \pi_\nh^\sharp (d \Ham_\subM)$ that is given by
\begin{equation}
\label{X_nh_formula1}
\begin{split}
X_\nh &= \dot{x} \tilde{Y_x} + \dot{y} \tilde{Y_y} + \omega_n \tilde{X_n} \\
&+ \left( \dot{y} M_n D_{xy}^n + (p_x n_1 + p_y n_2) (p_x n_1^x + p_y n_2^x ) + \omega_n {\textstyle \frac{I}{E}} ( p_x D_{xn}^x + p_y D_{xn}^y)  - 2m {\bf g} x \phi'  \right) \partial_{p_x} \\
&+ \left( - \dot{x} M_n D_{xy}^n + (p_x n_1 + p_y n_2) (p_x n_1^y + p_y n_2^y ) + \omega_n {\textstyle \frac{I}{E}} ( p_x D_{yn}^x + p_y D_{yn}^y )  - 2m {\bf g} y \phi' \right) \partial_{p_y} \\
&+ {\textstyle \frac{I}{E}} \left( - \dot{x} ( p_x D_{xn}^x + p_y D_{xn}^y)  - \dot{y} ( p_x D_{yn}^x + p_y D_{yn}^y)  \right) \partial_{M_n},
\end{split}
\end{equation}
where we used the following expressions relating velocities and momenta given by the Legendre transformation,
\begin{equation}
\label{velocities-momenta}
\dot{x} = {\textstyle \frac{R^2}{E}}  \left( p_x (1 - n_1^2) - p_y n_1 n_2 \right) , \qquad
\dot{y} = {\textstyle \frac{R^2}{E}} \left( p_y (1 - n_2^2) - p_x n_1 n_2 \right) , \qquad
\omega_n = {\textstyle \frac{M_n}{I}}.
\end{equation}

%We are not aware of a previous computation of the bivector field $\pi_\nh$ for our mechanical example. 
Previous works treating this example, such as \cite{BKM-02, FGS-05, Hermans-95, Ramos-04, Zenkov-95}, computed  the equations of motion from physical principles as recalled in the beginning of this section.
Dynamical properties of $X_\nh$ were studied in \cite{Hermans-95} where it was shown that the motion is quasi-periodic on tori of dimension at most three. That result was obtained using reconstruction from the dynamics of the reduced system by the symmetries.

%--------------------------- Reduction by symmetries -------------
%\subsubsection{Reduction by symmetries}
%-----------------------------------------------------------------
\noindent{\bf Reduction by symmetries.} 
Consider the compact Lie group $G = S^1 \times SO(3)$, where $SO(3)$ defines a right action and $S^1$ a left action on $Q = \R^2 \times SO(3)$. More precisely, the action by an element $(\varphi, h) \in S^1 \times SO(3)$ on $(x,y,g) \in Q$ is given by
\begin{equation*}
(\varphi , h) \cdot (x, y, g) = (R_\varphi (x,y)^t, \hat R_\varphi \, g \, h ),
\end{equation*}
where $R_\varphi$ denotes the $2 \times 2$ rotation matrix of angle $\varphi$ and $\hat R_\varphi$ denotes the $3 \times 3$ rotation matrix of angle $\varphi$ with respect to the $z$-axis. It is straightforward to prove that $G$ is a symmetry of the nonholonomic system.

%Recall that the vertical space $V \subset TQ$ is the (generalized) distribution given by the tangent spaces to the $G$-orbits. 
The Lie algebra $\g$ of $G$ is isomorphic to $\R \times \R^3$ with the infinitesimal generator with respect to the $S^1$-action given by
\begin{equation*}
U_0 := (1; \mathbf{0})_Q = - y \partial_{x} + x \partial_{y}+ X^R_3,
\end{equation*}
and with respect to  the $SO(3)$-action are given by
\begin{equation*}
%\label{infinitesimal_generator_SO(3)_Q}
(0;\mathbf{e}_i)_Q = \alpha_i X^R_1 + \beta_i X^R_2 + \gamma_i X^R_3, \quad \mbox{for } i=1,2,3,
\end{equation*}
where $\mathbf{e}_i$ denotes the $i$-th canonical basis vector of $\R^3$ and, as usual $\vecalpha = (\alpha_1,\alpha_2,\alpha_3)$, $\vecbeta=(\beta_1,\beta_2,\beta_3)$ and $\vecgamma=(\gamma_1,\gamma_2,\gamma_3)$ for the rows of the matrix $g \in SO(3)$.
Consequently the vertical (generalized) distribution $V_q = T_q(Orb(q))$ is given by 
\begin{equation}\label{Eq:ExBall:V}
V = \textup{span} \{ U_0, X_1^R, X_2^R, X_3^R \}, 
\end{equation}
and then the $G$-symmetry satisfies the dimension assumption (\ref{dimension_assumption}). We observe that the rank of $V$ is 3 for $(x,y)=(0,0)$ and it is 4 elsewhere, showing that the action is not free (not even locally free).

%-------------------- The quotient \M / G --------------------
Now we describe the reduced space $\M / G$ as a stratified differential space and write the reduced dynamics.
Considering the lifted $G$-action to the invariant manifold $\M$ we get, for $(\varphi , h)\in G$ and $(x, y, g, p_x, p_y, M_n ) \in \M$, that
\begin{equation}
\label{cotangent_action}
(\varphi , h) \cdot (x, y, g, p_x, p_y, M_n ) = (R_\varphi (x,y)^t, \hat R_\varphi \, g \, h , R_\varphi (p_x, p_y)^t,  M_n ).
\end{equation}

Since the actions of $S^1$ and $SO(3)$ commute, the reduction of $\M$ by the symmetry group $G = S^1 \times SO(3)$ is performed by stages as in \cite{FGS-05, Hermans-95} (see also Sec.~\ref{Sec:Solids}).
The reduction by $SO(3)$ induces the smooth manifold $\M / SO(3)$ and results in the elimination of the coordinate $g$ of $\M$. Furthermore, from \eqref{cotangent_action} we see that $S^1$ acts on $\M / SO(3)$ by 
$$
\varphi \cdot (x,y,p_x,p_y,M_n) = (R_\varphi (x,y)^t , R_\varphi (p_x,p_y)^t, M_n).
$$

Since $(0,0,0,0, M_n)$ is a fixed point for any rotation $R_\varphi$, the $S^1$-action is not free and the reduction is performed using {\em invariant theory} as in \cite{FGS-05, Hermans-95, Ramos-04}. The $S^1$-invariant polynomials in $\M / SO(3)$ for this action are given by
\begin{equation}
\label{invariant_p_i}
\begin{split}
p_0 &= p_x^2 + p_y^2 , \qquad
p_1 = x^2 + y^2 , \qquad
p_2 = x p_x + y p_y , \\
p_3 &= x p_y - y p_x , \qquad
p_4 = M_n,
\end{split}
\end{equation}
and they form a basis of $C^\infty(\M)^G$.   As we saw in Section \ref{symmetries_reduction_proper_actions}, the (proper) $G$-action induces on the reduced space $\M / G$ a differential structure where the invariant polynomials $p_i$, $i=0,...,4$ are considered as ambient coordinates for $\M/G$ in the sense that $\M / G$ is described by the following semi-algebraic subset of $\R ^5$,
\begin{equation*}
\{ (p_0, p_1, p_2, p_3, p_4) \in \R^5 : p_0 \geq 0, \: p_1 \geq 0, \:  p_0 p_1 = p_2^2 + p_3^2 \}.
\end{equation*}

The stratified orbit space $\M / G$ has two strata corresponding to the orbit types. The $1$-dimensional {\it singular stratum} associated to the $S^1$ isotropy is given by 
\begin{equation*}
%\label{singular_stratum_example}
M_1 = \{ (p_0, p_1, p_2, p_3, p_4) \in \R^5 : p_0 = p_1 = p_2 = p_3 = 0 \},
\end{equation*}
and corresponds to the situation where the ball lies at the origin of the surface and spins about the vertical axis. 
The other $4$-dimensional stratum, called {\it regular stratum}, is the complement of $M_1$ in $\M / G$ and is given by
\begin{equation*}
%\label{definition_M4}
M_4 = \{ (p_0, p_1, p_2, p_3, p_4) \in \R^5 : p_0 \geq 0, \: p_1 \geq 0, \: p_0 p_1 = p_2^2 + p_3^2, \: p_0^2 + p_1^2 > 0 \}.
\end{equation*}

The manifold $M_4$ corresponds to the orbit space of the submanifold $\M_{reg}$ of $\M$ where the $S^1$-action is free, i.e. $M_4 = \M_{reg}/G$.
% and thus, in what follows, we use the notation $\bar{\M}_{reg}$ instead of $M_4$. 
It was observed in \cite{FGS-05, Hermans-95} that the regular stratum $\M_{reg}/G$ is diffeomorphic to $S^2 \times \R^2$.
Observe that possible trajectories in the regular stratum include the case when the ball passes through the bottom of the surface $\Sigma$ but with nonzero velocity. On the other hand, from (\ref{velocities-momenta})
%we obtain the following relations
%\begin{equation*}
%p_x = \dot{x} \frac{1-n_2^2}{n_3^2 R^2} E + \dot{y} \frac{n_1 n_2}{n_3^2 R^2} E , \qquad
%p_y = \dot{x} \frac{n_1 n_2}{n_3^2 R^2} E + \dot{y} \frac{1-n_1^2}{n_3^2 R^2} E ,
%\end{equation*}
we can rewrite the polynomials $p_i$, $i=0,...,4$ in the velocity coordinates, and we observe that trajectories verifying $p_3 = 0$ are those where the ball moves only in the radial direction, i.e. in the intersection of the surface with a vertical plane passing through the origin, while when $p_2 = 0$ the ball moves in a circular trajectory at a constant height. The variables $p_3$ and $p_4$ will be important to compute first integrals.

%-------------------- Reduced dynamics ------------------------
The reduced dynamics $X_\red$ on $\M/G$ is computed projecting the nonholonomic vector field $X_\nh$ given in \eqref{X_nh_formula1} by the quotient map $\rho : \M \to \M\ / G$ and has been presented in \cite{Hermans-95} (see also \cite{FGS-05}). In our reduced variables $p_i$, $i=0,...,4$, the reduced nonholonomic vector field $X_\red$ is given by
\begin{equation}
\label{Eq:ExBall:RedDyn}
X_\red = \bar{F_0} p_2 \partial_{p_0}  + \bar{F}_1 p_2 \partial_{p_1} + \bar{F_2} \partial_{p_2} + \frac{R (n_3)^2}{E} \bar{F_3} p_2 p_4 \partial_{p_3} + \frac{R^3 I (n_3)^2}{E^2} p_2 p_3 \bar{F_4} \partial_{p_4} ,
\end{equation}
where $n_3, \bar{F}_1, \bar{F_3}, \bar{F_4}$ are basic functions with respect to the map $\M / G \to Q/G$ (thinking of $Q / G$ as a differential space) given by
% i.e. it only depends on the variable $p_1$, 
\begin{equation}
\label{functions_F3_F4}
\begin{split}
n_3 & = n_3(p_1) =  - \frac{1}{ (1 + 4 p_1 \phi'(p_1)^2 )^{1/2}}, \qquad \qquad \quad \bar{F_1} = \bar{F_1}(p_1) = \frac{2 R^2 (n_3)^2}{E}, \\
\bar{F_3} &= \bar{F_3}(p_1) = 2( \phi '(p_1) + 2 p_1 \phi ''(p_1) ) (n_3)^2, \qquad \ 
\bar{F_4} = \bar{F_4}(p_1) = 4(2 \phi '(p_1) ^3 - \phi '' (p_1)) (n_3)^2 ,
\end{split}
\end{equation}
and $\bar{F_0}$ and $\bar{F_2}$ are functions on $\M /G$ (see \cite{FGS-05} for the explicit expressions of $\bar{F}_0$ and $\bar{F}_2$).

As it has been remarked in \cite{FGS-05}, the reduced dynamics is well defined for $p_1=0$ and $p_0 \neq 0$ (on the regular stratum), that is for orbits passing through the origin of the surface with non-zero momentum. 
Note that $X_\red$ is a vector field in $\R^5$ which is tangent to the space $\M / G$ as a stratified space, that is $X_\red$ is tangent to each strata \cite{FGS-05}. In the singular stratum $M_1$ it reduces to the equation $\dot{p_4}=0$ with trivial solution and on the regular stratum it defines a smooth vector field on the smooth manifold $\M_{reg} /G$. 

The equilibria of the reduced equations of motion are of two types: the singular equilibria which are all the points of the singular stratum with $p_4 = M_n$ constant, and the points of the regular stratum verifying $p_2 = 0$ and $\bar{F_2}(p_0,p_1,p_2,p_3,p_4) = 0$, which describe circular motions at constant height, see \cite{FGS-05, Hermans-95, Routh-55,  Zenkov-95}. Hermans \cite{Hermans-95} has shown that away from the equilibrium points all the orbits of the reduced dynamics are periodic. 
The qualitative study of the reduced equilibrium points was started by Routh \cite{Routh-55}, where he gave necessary conditions for their stability by linearizing the dynamics. Afterwards, Zenkov \cite{Zenkov-95} proved that linear stability imply nonlinear (orbital) stability.

\begin{remark}
Since the surface $\Sigma$ is of revolution, the functions $\bar{F_3}$ and $\bar{F_4}$ can be expressed in terms of the principal curvatures $\mu_1$ and $\mu_2$ of $\Sigma$. In fact, since
\begin{equation*}
\mu_1 = - \frac{2 \phi '}{ (1 + 4 p_1 (\phi ')^2)^{1/2} } \quad \mbox{and} \quad \mu_2 = - \frac{2 \phi ' + 4 p_1 \phi ''}{ (1 + 4 p_1 (\phi ')^2)^{3/2} },
\end{equation*}
we obtain, on $Q_{reg}$ (i.e., $p_1 \neq 0$), that $ \bar{F_3}(p_1) = \displaystyle{\frac{\mu_2}{n_3}}$ and $\bar{F_4}(p_1) = \displaystyle{\frac{\mu_1 - \mu_2}{n_3 p_1}}$.

\end{remark}

\noindent {\bf The reduced bracket $\{\cdot, \cdot\}_\red$ on $\M/G$.} \ Following \eqref{definition_reduced_bracket}, the nonholonomic bracket $\{\cdot, \cdot\}_\nh$ (given in Prop.~\ref{Prop:ExBall:pi_nh}) induces a reduced bracket $\{\cdot, \cdot\}_\red$ on $\M/G$.
Next, using Prop.~\ref{Prop:Jac} we show that $\{\cdot, \cdot\}_\red$ is not Poisson.  In order to compute the 3-form $d^\C\langle J, \mathcal{K}_\subW\rangle$ we start by choosing a $G$-invariant vertical complement of the constraints (see Def.~\ref{Def:W}).

Let us consider the vector fields $Z_1$ and $Z_2$ on $Q$ given in \eqref{formula_Zi} and define the distribution $W$ on $Q$ by 
\begin{equation*}
W = \textup{span}\{Z_1, Z_2\}.
\end{equation*}
From \eqref{Eq:ExBall:BasisTQ} we see that $W$  is a vertical complement of the constraints $D$ in $TQ$, i.e.,  $TQ = D\oplus W$ with $W\subset V$. Moreover, it is straightforward to check that $W$ is a $G$-invariant distribution and it is constant rank.   Let us denote by $\xi_1$ and $\xi_2$ the sections of the bundle $\g \times Q \to Q$ given, at each point $q = (x,y,g)$, by
\begin{equation}
\label{sections_of_gw}
\xi_1 |_q = (0;  \mathbf{C}_1\, g) \qquad \mbox{and} \qquad  \xi_2 |_q = (0; \mathbf{C}_2\, g),
\end{equation}
where  $\mathbf{C}_1$, $\mathbf{C}_2$ are the vectors
\begin{equation*} %\label{vectors_A_B}
\textbf{C}_1 = - \frac{1}{R n_3}
\begin{pmatrix}
n_1 n_2 , & n_2^2 - 1 \: , & n_2 n_3
\end{pmatrix},
\qquad
\textbf{C}_2 = \frac{1}{R n_3}
\begin{pmatrix}
n_1^2 - 1 \: , & n_1 n_2 \: , & n_1 n_3
\end{pmatrix}.
\end{equation*}
Observe that $Z_1 = (\xi_1)_Q$ and $Z_2 = (\xi_2)_Q$ and hence the subbundle $\g_W \to Q$ of $\g\times Q\to Q$ defined by 
\begin{equation*}%\label{Ex:Def:gW}
\g_W = \textup{span}\{\xi_1, \xi_2\}
\end{equation*}
induces $W$ and is $Ad$-invariant. Analogously, the distribution $\W = \textup{span} \{ (\xi_1)_\subM, (\xi_2)_\subM \}$ is a vertical complement of the constraints $\C$.

\begin{remark}
\label{vertical_symmetry_condition}
Following Rmk.~\ref{R:Solids:VertSym}, we enforce the fact that this examples does not admit a vertical complement $W$ of the constraints satisfying the {\it vertical symmetry condition} since  the Lie group $SO(3)$ is simple and therefore does not have any normal subgroup which could act on the system as a symmetry group, see \cite[Remark 2.4]{Bal-16}.

%let us recall that the complement $W$ defined in Def.~\ref{Def:W} satisfies the {\it vertical symmetry condition} if the bundle $\g_\subW \to Q$ is given by %a $Ad$-invariant 
%a subalgebra $\mathfrak{w} \subset \g$, i.e. $(\g_\subW)_q \simeq \mathfrak{w}$ for all $q\in Q$. In fact, the vertical symmetry condition is equivalent to the existence of a normal Lie subgroup $G_W \subset G$, such that $W$ is the vertical space associated to the corresponding $G_W$-action, i.e. $T_q Orb_{G_W}(q) \simeq W_q$, for any $q \in Q$, where $Orb_{G_W}(q)$ is the $G_W$-orbit passing through $q$. As it was seen in \cite{Bal-14}, some aspects are simplified under the vertical symmetry condition assumption.  However, in the example we are considering there is no complement of $D$ satisfying the vertical symmetry condition since the Lie group $SO(3)$ is simple and therefore does not have any normal subgroup which could act on the system as a symmetry group, see \cite[Remark 2.4]{Bal-16}.
\end{remark}

\begin{theorem}
\label{reduced_bracket_is_not_Poisson}
The reduced bracket $\{ \cdot , \cdot \}_{\emph{\red}}$ on $\M / G$ is not Poisson.
\end{theorem}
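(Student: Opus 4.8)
The plan is to apply the Jacobiator formula of Proposition~\ref{Prop:Jac}. Since that formula holds also when $\M/G$ is a differential space, the reduced bracket $\{\cdot,\cdot\}_\red$ is Poisson on $\M/G$ if and only if
\begin{equation*}
d\langle J , \mathcal{K}_\subW\rangle\big(\pi_\nh^\sharp (d f), \pi_\nh^\sharp (d g), \pi_\nh^\sharp (d h)\big) = 0
\end{equation*}
for every triple of $G$-invariant functions $f,g,h\in C^\infty(\M)^G$. So it is enough to produce three such functions --- I would take them among the invariant polynomials $p_0,\dots,p_4$ of \eqref{invariant_p_i} --- and one point of $\M_{reg}$ at which this contraction does not vanish: a failure of the Jacobi identity on the open dense regular stratum already prevents $\{\cdot,\cdot\}_\red$ from being Poisson on all of $\M/G$.

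First I would make $\langle J,\mathcal{K}_\subW\rangle$ explicit. By Lemma~\ref{lemma:K_W_in_sections}(i), for the $G$-invariant vertical complement $\W=\textup{span}\{(\xi_1)_\subM,(\xi_2)_\subM\}$ we have $\langle J,\mathcal{K}_\subW\rangle = M_1\,d^\C\tilde\epsilon^1 + M_2\,d^\C\tilde\epsilon^2$, and on $\M$ one substitutes $M_1=-\tfrac{I}{E}p_x$, $M_2=-\tfrac{I}{E}p_y$. Using the identities for $d\epsilon^1|_D$, $d\epsilon^2|_D$ and $d\beta_n|_D = D_{xy}^n\,dx\wedge dy|_D$ from the proof of Proposition~\ref{Prop:ExBall:pi_nh}, this yields, modulo $1$-forms vanishing on $\C$,
\begin{equation*}
\langle J,\mathcal{K}_\subW\rangle = \tfrac{I}{E}\big[(p_x D_{xn}^x + p_y D_{xn}^y)\,\tilde{dx} + (p_x D_{yn}^x + p_y D_{yn}^y)\,\tilde{dy}\big]\wedge\tilde\beta_n .
\end{equation*}
Then I would apply $d^\C$ (which on the $\C$-valued arguments $\pi_\nh^\sharp(df)$ agrees with $d\langle J,\mathcal{K}_\subW\rangle$). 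Since $d\tilde{dx}=d\tilde{dy}=0$ and $d^\C\tilde\beta_n|_\C = D_{xy}^n\,\tilde{dx}\wedge\tilde{dy}|_\C$ (so the term in which $d^\C$ lands on $\tilde\beta_n$ drops by antisymmetry), one gets an explicit $3$-section on $\C$ of the shape
\begin{equation*}
d^\C\langle J,\mathcal{K}_\subW\rangle = \tfrac{I}{E}\Big[(D_{xn}^x\,dp_x + D_{xn}^y\,dp_y)\wedge\tilde{dx} + (D_{yn}^x\,dp_x + D_{yn}^y\,dp_y)\wedge\tilde{dy} + C_0\,\tilde{dx}\wedge\tilde{dy}\Big]\wedge\tilde\beta_n ,
\end{equation*}
with $C_0 = p_x(\partial_x D_{yn}^x - \partial_y D_{xn}^x) + p_y(\partial_x D_{yn}^y - \partial_y D_{xn}^y)$ and the basic functions $D^\bullet_{\bullet n}$ as in \eqref{formulas_Dxyn}.

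The third step is to contract this $3$-section with the hamiltonian vector fields $\pi_\nh^\sharp(dp_i)$, which are read off directly from the explicit $\pi_\nh$ of Proposition~\ref{Prop:ExBall:pi_nh}. Since $\langle J,\mathcal{K}_\subW\rangle$ sits entirely in the ``normal'' slot $\tilde\beta_n$ (equivalently $M_n=p_4$), a convenient triple is $\{p_4,p_2,p_0\}$: here $\pi_\nh^\sharp(dp_4) = -\tilde{X_n} -\tfrac{I}{E}(p_x D_{xn}^x + p_y D_{xn}^y)\partial_{p_x} -\tfrac{I}{E}(p_x D_{yn}^x + p_y D_{yn}^y)\partial_{p_y}$, while $\pi_\nh^\sharp(dp_2)$ and $\pi_\nh^\sharp(dp_0)$ carry the directions $\tilde{Y_x},\tilde{Y_y}$ (with coefficients linear in $(x,y)$, resp.\ $(p_x,p_y)$) together with $\partial_{p_x},\partial_{p_y},\partial_{M_n}$-terms. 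Evaluating the $3$-section on this triple produces an explicit function on $\M_{reg}$: a contribution from $C_0\,\tilde{dx}\wedge\tilde{dy}\wedge\tilde\beta_n$ picked out by the $\tilde{Y_x},\tilde{Y_y},\tilde{X_n}$-components, plus contributions from the $dp_\bullet\wedge\tilde{dx}\wedge\tilde\beta_n$ and $dp_\bullet\wedge\tilde{dy}\wedge\tilde\beta_n$ parts picked out by the $\partial_{p_x},\partial_{p_y}$-components. I would then check that this function is not identically zero --- evaluating at a generic point of $\M_{reg}$ with $p_1\neq 0$, $p_2\neq 0$, $p_3\neq 0$ suffices --- using the standing hypotheses $\phi'>0$, $\phi''>0$ on the profile curve and $n_3\neq 0$, which keep $D_{xy}^n$ and the relevant combinations of the $D^\bullet_{\bullet n}$ and their partial derivatives from all vanishing. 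This contradicts the criterion of the first step.

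The only genuine obstacle is the bookkeeping in this last step: one must pair the $3$-section with three not particularly short vector fields and verify that no accidental cancellation occurs. This becomes transparent after localizing to $\M_{reg}/G\simeq S^2\times\R^2$ and, if one wishes, specializing $\phi$ to a concrete strictly convex profile, where the nonvanishing reduces to a short direct computation.
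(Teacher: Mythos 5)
Your strategy is the same as the paper's: invoke Proposition~\ref{Prop:Jac}, compute $\langle J,\mathcal{K}_\subW\rangle$ via Lemma~\ref{lemma:K_W_in_sections}, take $d^\C$, and contract with hamiltonian vector fields of invariant polynomials; your expression for $d^\C\langle J,\mathcal{K}_\subW\rangle$ agrees with \eqref{Ex:Ball:dJK}, your $C_0$-term being exactly the restriction to $\C$ of the semi-basic remainder $\Psi$. The genuine gap is that the decisive step --- exhibiting a point where the contraction is nonzero --- is precisely what you defer as ``bookkeeping'', and that step is the whole content of the theorem. Your assurance that the hypotheses $\phi'>0$, $\phi''>0$, $n_3\neq 0$ keep ``the relevant combinations of the $D^\bullet_{\bullet n}$ and their partial derivatives from all vanishing'' is not correct as stated: along the locus $y=0$ where one would naturally evaluate, \eqref{formulas_Dxyn} gives $D_{xn}^x=D_{yn}^y\equiv 0$, the coefficient $\partial_xD_{yn}^y-\partial_yD_{xn}^y$ appearing in your $C_0$ also vanishes there, and real cancellations occur, so non-vanishing must be computed, not presumed. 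Moreover, your fallback of ``specializing $\phi$ to a concrete strictly convex profile'' would only prove the statement for that particular surface, whereas the theorem concerns an arbitrary admissible profile. Finally, your triple $\{p_0,p_2,p_4\}$ appears workable (for the paraboloid a direct check at $y=0$ gives a value proportional to $p_xp_y x^2$ times a sign-definite function), but it forces you to carry the $C_0$-term, which involves second derivatives of the normal, so the general-$\phi$ verification is substantially heavier than necessary and is not done.

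The paper avoids all of this by choosing the triple $\{p_1,p_3,p_4\}$: since $\pi_\nh^\sharp(d\rho^*p_1)=2(x\partial_{p_x}+y\partial_{p_y})$ is vertical with respect to $\tau_\subM:\M\to Q$, it annihilates the entire semi-basic part $\Psi$, and the contraction collapses to $-\tfrac{2I}{E}\bigl(2xyD_{xn}^x-x^2D_{yn}^x+y^2D_{xn}^y\bigr)$, which at $y=0$, $x\neq 0$ equals $\tfrac{2I}{E}x^2n_3(1-Rn_2^y)$ and is nonzero for every admissible $\phi$ because $n_2^y|_{y=0}=-2\phi'/\sqrt{1+4x^2(\phi')^2}<0$. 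To close your argument, either adopt this triple (so the nonvanishing becomes a one-line estimate valid for all admissible profiles) or actually carry out, for arbitrary $\phi$, the evaluation you postponed.
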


\begin{proof}
We check the Jacobi identity of $\{\cdot, \cdot\}_\red$ on the functions $p_1, p_3, p_4$ using Prop.~\ref{Prop:Jac}.  In order to do that, following \eqref{Def:JK} and Lemma \ref{lemma:K_W_in_sections} we compute
\begin{equation}\label{Ex:Ball:JK}
\begin{split}
  \langle J, \mathcal{K}_\subW\rangle & =  - {\textstyle \frac{I}{E}} p_x d^\C \tilde \epsilon^1 - {\textstyle \frac{I}{E}} p_y d^\C \tilde \epsilon^2 \\
  &  = \ {\textstyle \frac{I}{E}}  (p_x D_{xn}^x +p_yD_{xn}^y ) \tilde dx \wedge \tilde \beta_n + {\textstyle \frac{I}{E}}  (p_xD_{yn}^x + p_yD_{yn}^y) \tilde dy \wedge \tilde \beta_n,
  \end{split}
\end{equation}
where $D_{xn}^x$, $D_{xn}^y$, $D_{yn}^x$ and $D_{yn}^y$ are the functions defined in \eqref{formulas_Dxyn} and hence 
\begin{equation} \label{Ex:Ball:dJK}
\begin{split}
d^\C \langle J , \mathcal{K}_\subW \rangle &  = \left( - {\textstyle \frac{I}{E}}  d^\C p_x \wedge d^\C \tilde{\epsilon}^1 - {\textstyle \frac{I}{E}}  d^\C p_y \wedge d^\C \tilde{\epsilon}^2 + \Psi \right),  \\
& = {\textstyle \frac{I}{E}}  \left( D_{xn}^x d p_x \wedge \tilde{dx}+ D_{xn}^y d p_y \wedge \tilde{dx}  + D_{yn}^x d p_x \wedge \tilde{dy} \wedge + D_{yn}^y d p_y \wedge \tilde{dy}\right)  \wedge \tilde{\beta_n}  + \Psi,
\end{split}
\end{equation}
 where $\Psi$  is a semi-basic 3-form with respect to the bundle $\tau_\subM : \M \rightarrow Q$.

On the other hand,  using Prop.~\ref{Prop:ExBall:pi_nh}, we also get that
\begin{equation*}
\begin{split}
X_{p_1} = \pi_\nh^\sharp(d \rho^* p_1) = & 2(x \partial_{p_x} + y\partial_{p_y} ), \\
X_{p_3} = \pi_\nh^\sharp (d \rho^* p_3) = & - x \tilde Y_y + y \tilde Y_x -
(x D_{xy}^n M_n - p_y ) \partial_{p_x} - (y D_{xy}^n M_n +  p_x ) \partial_{p_y} \\
&  + {\textstyle \frac{I}{E}}  \left(p_x (x D_{yn}^x - y D_{xn}^x) +  p_y(x D_{yn}^y - y D_{xn}^y) \right) \partial_{M_n} ,
\\
X_{p_4} = \pi_\nh^\sharp (d \rho^* p_4) =& - \tilde{X_n}  -{\textstyle \frac{I}{E}} (p_x D_{xn}^x + p_y D_{xn}^y) \partial_{p_x} - {\textstyle \frac{I}{E}} (p_x D_{yn}^x + p_y D_{yn}^y) \partial_{p_y}. 
\end{split}
\end{equation*}
Now, since $\Psi$ is semi-basic, we have that ${\bf i}_{\pi_\nh^\sharp(d\rho^*p_1)} \Psi =0$ and thus we obtain that
\begin{equation*}
d^\C \langle J , \mathcal{K}_\subW \rangle (X_{p_1},X_{p_3},X_{p_4})  =  -\frac{2I}{E} \left( x\, d\tilde \epsilon^1(X_{p_3}, X_{p_4}) + y \, d\tilde \epsilon^2(X_{p_3}, X_{p_4}) \right) 
= -\frac{2I}{E} \left( 2xy D_{xn}^x - x^2 D_{yn}^x + y^2 D_{xn}^y \right). 
\end{equation*}
Moreover, if we suppose that $y = 0$ with $x\neq 0$, then $n_2 =0$ and using \eqref{formulas_Dxyn} we get
$$
d^\C \langle J , \mathcal{K}_\subW \rangle (\pi_\nh^\sharp(d \rho^* p_1), \pi_\nh^\sharp(d \rho^* p_3), \pi_\nh^\sharp(d \rho^* p_4)) |_{y=0}  =  \frac{2I}{E} x^2 n_3(- R n_2^y + 1) |_{y=0},
$$
which is different from zero since $R n_2^y |_{y=0} \neq 1$ (in fact, we have $n_2^y |_{y=0} = - \frac{2 \phi'}{\sqrt{1+4x^2(\phi')^2}} <0$).

\end{proof} 

% 
% \begin{remark}
% The reduced bracket $\{\cdot,\cdot\}_\red$ does not admit 2-dimensional leaves. In fact, the leaves should have dimensions 0, 2 or 4. The existence of 2-dimensional leaves should imply that the bracket $\{\cdot,\cdot\}_\red$ is Poisson, in contradiction with Thm.~\ref{reduced_bracket_is_not_Poisson}.
% \end{remark}

For completeness we compute the reduced bracket $\{ \cdot , \cdot\}_\red$ which is given in our reduced variables $p_i$ by
\begin{equation*}
\begin{split}
  \{p_0, p_1\}_\red &= 4p_2, \quad \{p_0, p_2\}_\red = 2p_0 - 2 p_3p_4D_{xy}^n, \quad \{p_0,p_4\}_\red = -2p_2p_4D_{xy}^n, \quad \{p_0, p_4\}_\red = {\textstyle \frac{I}{E}}  8  p_2 p_3 R n_3^2, \\
\{p_1, p_2\}_\red &= -2p_1, \qquad \{p_2, p_3\}_\red = - p_1 p_4 D_{xy}^n, \qquad \{p_2, p_4\}_\red = -{\textstyle \frac{IR}{E}} (n_3)^2 (\Phi_1(p_1) + 4 p_1 \Phi_2(p_1)) p_3, \\  
\{p_3, p_4\}_\red &  = {\textstyle \frac{IR}{E}}  (n_3)^2 \Phi_1(p_1) p_2,
\end{split}
\end{equation*}
with other combinations being equal to zero and where $\Phi_1$ and $\Phi_2$ are given by
$\Phi_1(p_1) = 2 \phi ' - \frac{1}{Rn_3}$ and 
$\Phi_2(p_1) = \phi '' - \frac{(\phi ')^2}{R n_3}$. We recall that $n_3$ and $D_{xy}^n$ are functions depending only on $p_1$ (that is, they are well defined functions on $Q/G$). 

% In what follows, following Sections \ref{NonholonomicMomentMap} and \ref{section_system_for_B}, we study the existence of two $G$-invariant horizontal gauge momenta $J_1$ and $J_2$ and then we use Theorem \ref{T:GlobalB} to 
% compute a $2$-form $B$ inducing a dynamical gauge transformation so that $\{ \cdot , \cdot \}_\red^\B$ is Poisson with symplectic leaves given by the common level sets of the two Casimirs induced  by $J_1$ and $J_2$.

\subsection{The dynamical gauge transformation}
\label{Sec:Ex:DynGauge}

In this section, we follow Diagram \eqref{Diagram} and Theorem \ref{T:GlobalB} to see that the system admits a dynamical gauge transformation $B$ so that the reduced induced bracket $\{\cdot, \cdot\}_\red^\B$ on $\M_{reg}/G$ is, in fact, Poisson.  Moreover, we write the explicit formulations for the 2-form $B$ and we study the Poisson bracket on the differential space $\M/G$ (giving also explicit formulations). In order to apply Theorem \ref{T:GlobalB} we start studying the horizontal gauge momenta associated to the system. 

\medskip

\noindent {\bf Horizontal gauge momenta}. 
%Following Sections \ref{NonholonomicMomentMap} and \ref{section_system_for_B}, we study the existence of two $G$-invariant horizontal gauge momenta $J_1$ and $J_2$ to apply Theorem \ref{T:GlobalB}.
From \eqref{Eq:ExBall:D} and \eqref{Eq:ExBall:V}, the distribution $S = D \cap V$ is given by $S = \textup{span} \{\mathcal{Y}_1 := -y Y_x + x Y_y, \ \mathcal{Y}_2 := X_n \}$ and we observe that its rank is  nonconstant: it is equal to $1$ at $(x,y)=(0,0)$ and equal to $2$ elsewhere. The subbundle $\g_S \to Q$ of $\g \times Q \to Q$ defined in (\ref{defbundle_gS}) is then generated by the sections
\begin{equation}
\label{sections_of_gs}
\zeta_1 |_q = (1; \mathbf{0}) + y \xi_1 |_q - x \xi_2 |_q - (0; \vecgamma) |_q ,
\qquad
\zeta_2 |_q = (0; \vec{n} g) |_q ,
\end{equation}
where we consider the normal $\vec{n}$ as a row vector and  $\xi_1$ and $\xi_2$ are the sections of $\g_W\to Q$ defined in \eqref{sections_of_gw}.  Then we have that $(\zeta_1)_Q = \mathcal{Y}_1$ and $(\zeta_2)_Q = \mathcal{Y}_2$ and we can check that $\g_S$ has constant rank equal to $2$ while the distribution $S$ varies its rank (see Sec.~\ref{section_splittings} and \cite{Bal-16}).

Let us now consider the components of the nonholonomic momentum map in the basis $\{ \zeta_1 , \zeta_2 \}$ of $\Gamma(\g_S)$, i.e.,
\begin{equation}\label{Ex:Ball:J1J2}
\mathcal{J}_1 = \mathbf{i}_{(\zeta_1)_\subM} \Theta_\subM = -y p_x + x p_y = \rho^*p_3 \quad \mbox{and} \quad \mathcal{J}_2 = \mathbf{i}_{(\zeta_2)_\subM} \Theta_\subM = M_n = \rho^*p_4.
\end{equation}
Observe that $\mathcal{J}_1$ and $\mathcal{J}_2$ are $G$-invariant functions, and thus we denote by $\bar{\mathcal J}_1 = p_3$ and $\bar{\mathcal J}_2 = p_4$ the induced functions on $\M/G$.  Then, it is straightforward to see that $\mathcal{J}_1$ and $\mathcal{J}_2$ are not first integrals of the nonholonomic dynamics $X_\nh$ since $X_\red(\bar{\mathcal J}_1) \neq 0$ and $X_\red(\bar{\mathcal J}_2) \neq 0$ for $X_\red$ given in \eqref{Eq:ExBall:RedDyn}.

% Consider the basis of sections of the bundle $\g \times Q \to Q$ adapted to the splitting $\g \times Q = \g_\subS \oplus \g_\subW$ given by 
% \begin{equation}
% \label{basis_sections_g_Q}
% \mathfrak{B}_{\g \times Q} = \{ \zeta_1, \zeta_2, \xi_1, \xi_2 \} ,
% \end{equation}
% where $\zeta_1$ and $\zeta_2$ are sections of $\g_\subS \to Q$ given in (\ref{sections_of_gs}) and $\xi_1$ and $\xi_2$ are sections of $\g_\subW \to Q$ given in (\ref{sections_of_gw}).
% The infinitesimal generators written in the basis  \eqref{basis_of_TM} of $T\M$ are
% \begin{equation}
% \label{Eq:infinitesimal_generator_onM}
% (\zeta_1)_\subM = - y \tilde{Y_x} + x \tilde{Y_y} - p_y \partial_{p_x} + p_x \partial_{p_y}, \quad  (\zeta_2)_\subM = \tilde{X_n}  \quad \text{and} \quad  (\xi_1)_\subM = \tilde Z_1 , \quad (\xi_2)_\subM = \tilde Z_2.
% \end{equation}
% 
% Therefore, we obtain that the distributions $\S = \C\cap \V$ and $\W$ defined in \eqref{distribution_W_in_M} are given by 
% \begin{equation*}
% \S = span\{ (\zeta_1)_\subM , (\zeta_2)_\subM \} \qquad \mbox{and} \qquad \W = span \{ (\xi_1)_\subM, (\xi_2)_\subM \}. 
% \end{equation*}
% 
% 
% 

Following the ideas of Sec.~\ref{NonholonomicMomentMap} (and Sec.~\ref{Sec:Solids}), we look for $G$-invariant horizontal gauge momenta $J_\eta$ on $\M$ for  $\eta$ a $G$-invariant section of $\S \rightarrow Q$.  Since an arbitrary ($G$-invariant) section $\eta$ of $\g_S \rightarrow Q$ is written as
$\eta = f \zeta_1 + g \zeta_2$,
for functions $f, g \in C^\infty (Q)^G$, then the nonholonomic momentum map associated to the section $\eta$ is given by
$$
J_\eta = \langle J^\nh , \eta \rangle = f \mathcal{J}_1 + g \mathcal{J}_2, 
$$
That is, we look for functions $\bar{f} = \bar{f}(p_1)$ and $\bar{g} = \bar{g}(p_1)$  on $Q / G$ so that $\bar{J}_\eta = \bar{f} p_3 + \bar{g} p_4$ is a first integral of the reduced dynamics $X_\red$
\begin{equation*}
%\label{EDO_X_red_J}
X_\red (\bar{J}_\eta) = p_3 \bar{f} ' dp_1(X_\red) + p_4 \bar{g} ' dp_1(X_\red) + \bar{f} dp_3 (X_\red) + \bar{g} d p_4 (X_\red) = 0.
\end{equation*}
Using \eqref{Eq:ExBall:RedDyn}, the functions $\bar{f}$ and $\bar{g}$ satisfy the following linear system of ordinary differential equations defined on $Q_{reg}/G$, where $Q_{reg}$ denotes the manifold where the $G$-action on $Q$ is free,
\begin{equation}
\label{EDO_fi}
\bar{f} ' + \bar{g} \frac{R I}{2 E} \bar{F_4} (p_1)  = 0, \qquad \bar{g} ' + \bar{f} \frac{1}{2 R}   \bar{F_3} (p_1)  = 0,
\end{equation}
where $\bar{F_3}$ and $\bar{F_4} \in C^\infty (Q / G)$ are given in (\ref{functions_F3_F4}). The linear differential system (\ref{EDO_fi}) has two solutions in the domain of continuity of the functions $\bar{F_3}$ and $\bar{F_4}$ (that is, starting with two independent initial conditions, the solutions remain independent as long as they exist).  We denote by
$(\bar{f_1} (p_1) , \bar{g_1} (p_1))$ and $(\bar{f_2} (p_1) , \bar{g_2} (p_1))$ two independent solutions of \eqref{EDO_fi}, 
which verify $\bar f_1(p_1) \bar g_2(p_1) - \bar g_1(p_1) \bar f_2 (p_1) \neq 0$ in a interval containing the origin, and we get two first integrals in $\M / G$ of the form
\begin{equation*}
%\label{eq:formulas_Ji_in_M/G}
%\begin{split}
\bar{J}_1 (p_1, p_3, p_4 ) = \bar{f_1}(p_1) p_3 + \bar{g_1}(p_1) p_4 , \qquad
\bar{J}_2 (p_1, p_3, p_4 ) = \bar{f_2}(p_1) p_3 + \bar{g_2}(p_1) p_4 .
%\end{split}
\end{equation*}
% 
% It is straightforward to see that the first integrals in  (\ref{eq:formulas_Ji_in_M/G}) are functionally independent in $\M_{reg} / G$ (as it was observed in \cite{FGS-05}).
% Hence the corresponding $G$-invariant horizontal gauge momenta on $\M$ are
% \begin{equation}
% \label{eq:formulas_Ji_in_M}
% %\begin{split}
% J_1 (x,y,g,\tilde{p_x},\tilde{p_y},\tilde{M_n}) = f_1 \cdot (x \tilde{p_y} - y \tilde{p_x}) + g_1 \cdot \tilde{M_n}, \quad
% J_2 (x,y,g,\tilde{p_x},\tilde{p_y},\tilde{M_n}) = f_2 \cdot  (x \tilde{p_y} - y \tilde{p_x}) + g_2 \cdot \tilde{M_n},
% %\end{split}
% \end{equation}
% where $f_i = f_i(x^2 + y^2) = \rho^* \bar{f_i}(p_1)$ and $g_i = g_i(x^2 + y^2) = \rho^* \bar{g_i}(p_1)$, $i=1,2$, 
% 
%
These first integrals were known since the work of Routh \cite{Routh-55}, see also \cite{FGS-05, Hermans-95}.
Hence, $J_1 = \rho^*\bar{J}_1$ and $J_2= \rho^*\bar{J}_2$ are $G$-invariant horizontal gauge momenta for $X_\nh$ with horizontal gauge symmetries given by
\begin{equation}
\label{gauge_symmestries_eta_i}
 \left(  \begin{smallmatrix}  \eta_1\\[4pt] \eta_2  \end{smallmatrix} \right)  = F  
\left(  \begin{smallmatrix}  \zeta_1\\ \zeta_2  \end{smallmatrix} \right)   \qquad \mbox{where} \quad F=F(p_1) =
\left(  \begin{smallmatrix}  f_1 & g_1 \\ f_2 & g_2    \end{smallmatrix} \right) ,
\end{equation}
where $\zeta_1$ and $\zeta_2$ are given in (\ref{sections_of_gs}). In general these first integrals are not explicitly known except for some particular cases such as the circular paraboloid \cite{BKM-02}.

Therefore, on $\M_{reg}$, the system satisfies the hypothesis of Theorem \ref{T:GlobalB}.  Moreover, since $\textup{rank}(D) -\textup{rank}(S) =1$, then any principal connection determines a horizontal distribution $H$ of rank 1 and thus, by Corollary \ref{C:l+1}, the 2-form $B=B_1$ induces a reduced Poisson bracket $\{\cdot , \cdot\}_\red^1$ on $\M_{reg}/G$ that describes the dynamics and has a 2-dimensional symplectic foliation determined by the level sets of $\bar{J}_1, \bar{J}_2$.

\medskip

\noindent{\bf The explicit expressions of the dynamical gauge transformation and conclusions.}
Next, we use Theorem \ref{T:GlobalB} and, in particular, Corollary \ref{C:l+1} to compute the 2-form $B=B_1$ given in \eqref{Eq:B1}. As a result, we obtain a Poisson bracket $\{\cdot , \cdot\}_\red^1$ on $\M_{reg}$ with two Casimirs induced by the $G$-invariant horizontal gauge momenta.    

Following formula \eqref{Eq:B1} and recalling \eqref{Ex:Ball:JK}, it remains to compute the 2-form $\langle J, d^\C \mathcal{A}_\subS^1 \otimes \eta_1 + d^\C \mathcal{A}_\subS^2\otimes \eta_2\rangle$. We have the same observation as in Example \ref{Sec:Solids}: the horizontal gauge symmetries $\eta_1, \eta_2$ are not explicitly known since the functions $f_i,g_i$ are given as a solution of a linear system of differential equations. Analogously as we did in the previous example, we will work in our basis $S = \textup{span}\{\mathcal{Y}_1,\mathcal{Y}_2\}$ on $Q_{reg}$. % , where \eqref{gauge_symmestries_eta_i} relates $\{{\mathcal Y}_1= (\eta_1)_Q, {\mathcal Y}_2 =(\eta_2)_Q\}$ with $\{y_1, y_2\}$ and where $\left(  \begin{smallmatrix}  {J}_1\\ {J}_2  \end{smallmatrix} \right)  = F \left(  \begin{smallmatrix}  \mathcal{J}_1\\ \mathcal{J}_2  \end{smallmatrix} \right)$. 
Denoting by $\{{\mathcal Y}^1, {\mathcal Y}^2\}$ the 1-forms such that ${\mathcal Y}^i({\mathcal Y}_j) = \delta_j^i$ and ${\mathcal Y}^i|_W = 0$ for $i, j=1,2$, from Lemma \ref{L:B_1-Basis}, and using \eqref{EDO_fi} we get
\begin{equation*}
  \langle J, d^\C\mathcal{A}_\subS^i \otimes \eta_i\rangle  = \mathcal{J}_i d\tilde{\mathcal Y}^i + {\mathcal J}_2 {\textstyle\frac{1}{2R}}F_3  d\tilde p_1 \wedge \tilde{\mathcal Y}^1 + \mathcal{J}_1 {\textstyle \frac{RI}{2E}} F_4 d\tilde p_1\wedge \tilde{\mathcal Y}^2,
\end{equation*}
where $\tilde{\mathcal Y}^i = \tau_\subM^*{\mathcal Y}^i$ and $\tilde p_1 = \rho^*p_1$ are the corresponding 1-forms and function on $\M$. 

On $Q_{reg}$, we can define the 1-forms ${\mathcal Y}^1= \frac{-ydx +xdy}{x^2+y^2}$ and ${\mathcal Y}^2=\beta_n$ and obtain that 
$$
d{\mathcal Y}^1 |_D = 0, \qquad d{\mathcal Y}^2 |_D= D_{xy}^ndx\wedge dy|_D, \qquad dp_1\wedge {\mathcal Y}^1 = 2dx\wedge dy \quad \mbox{and} \quad dp_1\wedge {\mathcal Y}^2 = 2(xdx+ydy)\wedge \beta_n.
$$
Therefore, using also \eqref{Ex:Ball:J1J2} we finally obtain  
$$
\langle J, d^\C\mathcal{A}_\subS^i \otimes \eta_i\rangle = M_n (D_{xy}^n + {\textstyle \frac{F_3}{R}} )\tilde dx\wedge \tilde dy + (-yp_x+ xp_y){\textstyle \frac{I}{E}} R F_4 (x\tilde dx + y\tilde dy)\wedge \tilde \beta_n.
$$
Then, from Theorem \ref{T:GlobalB} and Corollary \ref{C:l+1} and using \eqref{Ex:Ball:JK} we obtain

\begin{theorem} \label{T:ExBall:Gauge}
The nonholonomic system  describing the dynamics of a homogeneous ball rolling without sliding on a convex surface of revolution with symmetry group $S^1 \times SO(3)$  has a description of the reduced dynamics on $\M_{reg}/G$ given by a Poisson bracket $\{\cdot, \cdot \}_{\emph\red}^\B$.  This bracket is obtained by the reduction of an almost Poisson bracket $\pi_\B$ dynamically gauge related to $\pi_{\emph\nh}$ by the 2-form
\begin{eqnarray}
B \!\!\!\! & = & \!\!\!\! B_1  =  \langle J, \mathcal{K}_\subW\rangle + \langle J, d^\C\mathcal{A}_\subS^i \otimes \eta_i\rangle \\ \label{B_bola_teorema}
 \!\!\!\!& = & \!\!\!\! M_n ( D_{xy}^n + \frac{F_3}{R}) \tilde dx \wedge \tilde dy + \frac{I}{E}\left( K_{xn} + x (x  p_y - y p_x) R F_4 \right) \tilde dx \wedge \tilde \beta_n + \frac{I}{E} \left( K_{yn}  + y (x  p_y - y  p_x) R F_4 \right) \tilde dy \wedge \tilde \beta_n\nonumber
\end{eqnarray}
where $K_{xn} =  p_x D_{xn}^x +  p_y D_{xn}^y$ and $K_{yn} =  p_x D_{yn}^x +  p_y D_{yn}^y$.  Moreover, this 2-form $B$ is uniquely defined and the Poisson bracket $\{\cdot, \cdot \}_{\emph\red}^\B$ on $\M_{reg}/G$ has 2-dimensional symplectic leaves given by the level sets of the (reduced) horizontal gauge momenta $\bar{J}_1$ and $\bar{J}_2$.
\end{theorem}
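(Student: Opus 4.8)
The plan is to apply Theorem~\ref{T:GlobalB}, and more specifically its refinement Corollary~\ref{C:l+1}$(i)$, in the concrete setting of this example; the bulk of the work consists in verifying the hypotheses and assembling the explicit formula \eqref{B_bola_teorema} from the pieces already computed. The key structural facts have been established in the preceding subsections: the $G=S^1\times SO(3)$-action is proper, the dimension assumption \eqref{dimension_assumption} holds (shown around \eqref{Eq:ExBall:V}), the distribution $W=\textup{span}\{Z_1,Z_2\}$ is a $G$-invariant vertical complement of the constraints with $\g_W$ an $Ad$-invariant subbundle (established around \eqref{sections_of_gw}), the system admits $l=\textup{rank}(\g_S)=2$ $G$-invariant horizontal gauge momenta $J_1,J_2$ whose symmetries $\eta_1,\eta_2$ form a basis of $\Gamma(\g_S)$ (established around \eqref{gauge_symmestries_eta_i}), and $\textup{rank}(D)-\textup{rank}(S)=3-2=1$ so that any principal connection yields a horizontal distribution $H$ of rank $1$. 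Restricting attention to $\M_{reg}$, where the action is free, we are thus exactly under the hypotheses of Theorem~\ref{T:GlobalB} with $\textup{rank}(H)=1$, so Corollary~\ref{C:l+1}$(i)$ applies verbatim: $B=B_1$ is the unique semi-basic (w.r.t.\ $\tau_\subM$) $G$-invariant $2$-form defining a dynamical gauge transformation with $\pi_\B^\sharp(dJ_i)=-(\eta_i)_\subM$, the reduced bracket $\{\cdot,\cdot\}_\red^\B$ on $\M_{reg}/G$ is Poisson with $2$-dimensional symplectic leaves, it describes the reduced dynamics, and the leaves are the common level sets of $\bar J_1,\bar J_2$. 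The claim that $B$ is \emph{uniquely} defined is the content of Corollary~\ref{C:l+1}$(i)$.

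Next I would assemble the explicit formula for $B=B_1=\langle J,\mathcal{K}_\subW\rangle+\langle J,d^\C\mathcal{A}_\subS^i\otimes\eta_i\rangle$. The first summand was already computed in \eqref{Ex:Ball:JK}:
$$
\langle J,\mathcal{K}_\subW\rangle = {\textstyle\frac{I}{E}}(p_xD_{xn}^x+p_yD_{xn}^y)\,\tilde dx\wedge\tilde\beta_n + {\textstyle\frac{I}{E}}(p_xD_{yn}^x+p_yD_{yn}^y)\,\tilde dy\wedge\tilde\beta_n = {\textstyle\frac{I}{E}}K_{xn}\,\tilde dx\wedge\tilde\beta_n+{\textstyle\frac{I}{E}}K_{yn}\,\tilde dy\wedge\tilde\beta_n.
$$
For the second summand I would invoke Lemma~\ref{L:B_1-Basis} (applied with the ODE system \eqref{EDO_fi} playing the role of the matrix $\Phi$) to write, on $Q_{reg}$,
$$
\langle J,d^\C\mathcal{A}_\subS^i\otimes\eta_i\rangle = \mathcal{J}_i\,d\tilde{\mathcal Y}^i + {\mathcal J}_2\,{\textstyle\frac{F_3}{2R}}\,d\tilde p_1\wedge\tilde{\mathcal Y}^1 + {\mathcal J}_1\,{\textstyle\frac{RI}{2E}}F_4\,d\tilde p_1\wedge\tilde{\mathcal Y}^2,
$$
then substitute the explicit choices ${\mathcal Y}^1=\frac{-ydx+xdy}{x^2+y^2}$, ${\mathcal Y}^2=\beta_n$, use $d{\mathcal Y}^1|_D=0$, $d{\mathcal Y}^2|_D=D_{xy}^n\,dx\wedge dy|_D$, $dp_1\wedge{\mathcal Y}^1=2\,dx\wedge dy$, $dp_1\wedge{\mathcal Y}^2=2(x\,dx+y\,dy)\wedge\beta_n$, together with $\mathcal{J}_1=-yp_x+xp_y$, $\mathcal{J}_2=M_n$ from \eqref{Ex:Ball:J1J2}, to obtain
$$
\langle J,d^\C\mathcal{A}_\subS^i\otimes\eta_i\rangle = M_n\bigl(D_{xy}^n+{\textstyle\frac{F_3}{R}}\bigr)\tilde dx\wedge\tilde dy + (xp_y-yp_x){\textstyle\frac{I}{E}}RF_4\,(x\,\tilde dx+y\,\tilde dy)\wedge\tilde\beta_n.
$$
Adding the two contributions and grouping the $\tilde dx\wedge\tilde\beta_n$ and $\tilde dy\wedge\tilde\beta_n$ terms yields exactly \eqref{B_bola_teorema}. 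The statement about the symplectic leaves is then immediate from Theorem~\ref{T:GlobalB}$(iii)$ specialized to $\textup{rank}(H)=1$, i.e.\ Corollary~\ref{C:l+1}$(i)$, since the level sets of $\bar J_1,\bar J_2$ are $2$-dimensional in the $4$-dimensional space $\M_{reg}/G$.

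The one genuine subtlety — and the part I would treat most carefully — is the passage from $Q_{reg}$ to the full $\M_{reg}$, and the fact that the $1$-forms $\tilde{\mathcal Y}^1,\tilde{\mathcal Y}^2$ are only defined on $Q_{reg}$ while the horizontal gauge symmetries $\eta_1,\eta_2$ (and hence $B$) are defined on all of $\M_{reg}$. The resolution is that the final assembled expression \eqref{B_bola_teorema} is written entirely in terms of globally defined objects on $\M$ ($M_n$, $p_x$, $p_y$, $x$, $y$, the basic functions $D_{xy}^n,D_{xn}^x,\dots,F_3,F_4$, and the global $1$-forms $\tilde dx,\tilde dy,\tilde\beta_n$), so although it is \emph{derived} on $Q_{reg}$ it visibly extends smoothly. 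One should also note that, because $B$ in \eqref{B_bola_teorema} is in fact globally well defined on all of $\M$, the reduced bracket $\{\cdot,\cdot\}_\red^\B$ makes sense on the whole differential space $\M/G$ (as in the treatment of the solid of revolution in Sec.~\ref{Sec:Solids}), though the Poisson property with $2$-dimensional leaves is claimed only on the regular stratum $\M_{reg}/G$ where Theorem~\ref{T:GlobalB} applies directly. I expect no real obstacle beyond bookkeeping: the hard analytic content (existence of the gauge transformation, uniqueness, the Poisson property, identification of the foliation) is entirely supplied by Theorem~\ref{T:GlobalB} and Corollary~\ref{C:l+1}, and Lemma~\ref{L:B_1-Basis} already packages the computation of $\langle J,d^\C\mathcal{A}_\subS^i\otimes\eta_i\rangle$ in the basis $\{\mathcal Y_1,\mathcal Y_2\}$, so the proof is essentially a substitution exercise checking that the hypotheses hold and that the pieces combine to \eqref{B_bola_teorema}.
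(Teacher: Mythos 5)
Your proposal is correct and follows essentially the same route as the paper: verify the hypotheses of Theorem~\ref{T:GlobalB}, invoke Corollary~\ref{C:l+1}$(i)$ since $\textup{rank}(H)=1$, and assemble $B=B_1$ explicitly by combining \eqref{Ex:Ball:JK} with Lemma~\ref{L:B_1-Basis} applied to the ODE system \eqref{EDO_fi} and the choices $\mathcal{Y}^1,\mathcal{Y}^2$. Your extra remark on passing from $Q_{reg}$ to $\M_{reg}$ and the global extension of the final formula is consistent with how the paper later treats $\{\cdot,\cdot\}_\red^\B$ on the full differential space $\M/G$.
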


\begin{remark} \label{R:Ex:Ball:3form}
Using the coordinates $\dot{x}$, $\dot{y}$ and $\omega_n$  in (\ref{velocities-momenta}), the $2$-form $B$ is also written as
$$
B = \Phi(x,y) \left( \omega_n \: \tilde{dx} \wedge \tilde{dy} + \dot{x} \: \tilde{dy} \wedge \tilde{\beta_n} + \dot{y} \: \tilde{\beta_n} \wedge \tilde{dx} \right),
$$
where the function $\Phi$ is given by $\Phi(x,y) = (1 - 2 \phi ' R n_3) \frac{I}{R^2 n_3}$
and it is $S^1$-invariant because the surface is of revolution ($\phi '$ and $n_3$ are $S^1$-invariant).  Following \cite{LGN-JM-16} we see that $B$ can also be written as a contraction of the $3$-form
\begin{equation*}
\Pi = \Phi(x,y) \: \tilde{dx} \wedge \tilde{dy} \wedge \tilde{\beta_n},
\end{equation*}
with the nonholonomic vector field $X_\nh$, that is, $B = \mathbf{i}_{X_\nh} \Pi$.
\end{remark}

\noindent {\bf The reduced Poisson bracket $\{\cdot, \cdot\}_\red^\B$ on the differential space $\M/G$}. 
From \eqref{B_bola_teorema} (or Rmk.~\ref{R:Ex:Ball:3form}) we see that the 2-form $B$ is well defined in the whole manifold $\M$ and thus, the bivector field $\pi_\B$ is well defined on $\M$. Using the 2-section 
$\Omega_\subC$ in \eqref{Omega_C}, from \eqref{Eq:GaugedNH},  we compute explicitly the bivector field $\pi_\B$
\begin{equation}
\label{pi_B}
\begin{split}
\pi_\B &= \tilde{Y_x} \wedge \partial_{p_x} + \tilde{Y_y} \wedge \partial_{p_y} + \tilde{X_n} \wedge \partial_{M_n} \\
&- M_n \frac{F_3}{R} \partial_{p_x} \wedge \partial_{p_y}
- \frac{I}{E} y (x p_y - y p_x ) R F_4  \partial_{p_y} \wedge \partial_{M_n} 
+ \frac{I}{E} x (x p_y - y p_x ) R F_4  \partial_{M_n} \wedge \partial_{p_x} ,
\end{split}
\end{equation}
which can be compared with $\pi_\nh$ given in Prop.~\ref{Prop:ExBall:pi_nh}.

Since $B$ is $G$-invariant, then the almost Poisson bivector field $\pi_\B$ induces a bracket $\{\cdot, \cdot\}_\red^\B$ on the whole differential space $\M/G$ as in \eqref{eq:reduced_bracket_B} (that, on $\M_{reg}/G$, coincides with the Poisson bracket described in Theorem \ref{T:ExBall:Gauge}).

\begin{theorem}
\label{poisson_in_M/G}
On the differential space $\M /G$ the reduced bracket $\{ \cdot , \cdot \}_{\emph\red}^\B$, induced by \eqref{pi_B}, is Poisson.
\end{theorem}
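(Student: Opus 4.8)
The plan is to deduce the statement from the already established fact that $\{\cdot,\cdot\}_{\red}^\B$ is Poisson on the regular stratum $\M_{reg}/G$ (Theorem \ref{T:ExBall:Gauge}), and then to propagate the Jacobi identity to the whole differential space by a density argument. Recall that the differential structure on $\M/G$ is determined by $C^\infty(\M/G)=C^\infty(\M)^G$, so proving that $\{\cdot,\cdot\}_{\red}^\B$ is Poisson amounts to checking $\textup{cyclic}[\{\bar f,\{\bar g,\bar h\}_{\red}^\B\}_{\red}^\B]=0$ for arbitrary $\bar f,\bar g,\bar h\in C^\infty(\M)^G$; the remaining axioms (bilinearity, skew-symmetry, Leibniz) are inherited automatically from $\pi_\B$.

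First I would record that the bivector field $\pi_\B$ is globally defined and smooth on the whole of $\M$: this is exactly the content of the explicit expression \eqref{pi_B} (equivalently, $B=\mathbf{i}_{X_\nh}\Pi$ is globally smooth, see Remark \ref{R:Ex:Ball:3form}). Consequently, for any $f,g,h\in C^\infty(\M)^G$, the function $\Phi_{f,g,h}:=\textup{cyclic}[\{f,\{g,h\}_\B\}_\B]$ is a smooth function on all of $\M$, and by the definition of the reduced bracket \eqref{eq:reduced_bracket_B} it satisfies $\rho^*\big(\textup{cyclic}[\{\bar f,\{\bar g,\bar h\}_{\red}^\B\}_{\red}^\B]\big)=\Phi_{f,g,h}$. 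Next I would identify the free locus: since $SO(3)$ acts freely on $Q=\R^2\times SO(3)$, the non-free points of the $G$-action on $\M$ are precisely those with $(x,y)=(0,0)$ (where the $S^1$-factor has nontrivial isotropy, cf.\ the description of the singular stratum $M_1$), and this is a closed submanifold of positive codimension. Hence its complement $\M_{reg}=\{(x,y)\neq(0,0)\}$ is open and dense in $\M$. By Theorem \ref{T:ExBall:Gauge}, $\{\cdot,\cdot\}_{\red}^\B$ is Poisson on $\M_{reg}/G$, which says precisely that $\Phi_{f,g,h}$ vanishes on $\M_{reg}$; being continuous and vanishing on a dense subset, $\Phi_{f,g,h}$ vanishes identically on $\M$. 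Therefore $\textup{cyclic}[\{\bar f,\{\bar g,\bar h\}_{\red}^\B\}_{\red}^\B]=0$ on $\M/G$ for all $\bar f,\bar g,\bar h$, so the bracket is Poisson on the differential space $\M/G$.

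I do not expect a serious obstacle; the only points that require care are (i) confirming that $\pi_\B$, and hence the triple Jacobiator of invariant functions, is genuinely smooth across the singular set $\{x=y=0\}$, which is guaranteed by the explicit formula \eqref{pi_B}; (ii) checking that the singular locus is nowhere dense; and (iii) using that ``Poisson'' for the differential space $\M/G$ is by definition the Jacobi identity on $C^\infty(\M)^G$. As an alternative to the density argument, one could give a direct verification using Proposition \ref{Prop:JacB}$(ii)$: since the invariant polynomials $\{p_0,\dots,p_4\}$ generate $C^\infty(\M/G)$ and the Jacobiator is a derivation in each argument, it suffices to show that $(d\langle J,\mathcal{K}_\subW\rangle-dB)(\pi_\B^\sharp(d\rho^*p_i),\pi_\B^\sharp(d\rho^*p_j),\pi_\B^\sharp(d\rho^*p_k))=0$ for all $i,j,k\in\{0,\dots,4\}$; this is a finite, if tedious, coordinate computation carried out with the formulas for $\pi_\B$ and $B$ displayed above together with the relations \eqref{formulas_Dxyn}.
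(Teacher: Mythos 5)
Your argument is correct, but it proceeds along a genuinely different route than the paper. The paper proves the theorem by a direct computation: since $p_0,\dots,p_4$ generate $C^\infty(\M)^G$, it suffices to check the Jacobiator on these; Prop.~\ref{Prop:JacB}$(ii)$ expresses it as $(d^\C\langle J,\mathcal{K}_\subW\rangle-d^\C B)$ contracted with the vector fields $\pi_\B^\sharp(d\rho^*p_i)$, the Casimir property $\pi_\B^\sharp(dJ_i)=-(\eta_i)_\subM$ together with the linear dependence (over $Q/G$) of $\bar J_1,\bar J_2$ on $p_3,p_4$ reduces the check to the triple $p_0,p_1,p_2$, and the vanishing there follows from the structure of the two 3-forms (a semi-basic piece killed by $\pi_\B^\sharp(d\rho^*p_1)$, and pieces containing $\tilde\beta_n$ or $dM_n\wedge\tilde{dx}\wedge\tilde{dy}$ annihilated because $p_0,p_1,p_2$ are independent of $M_n$). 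Your density argument bypasses all of this: global smoothness of $\pi_\B$ (equivalently of $B=\mathbf{i}_{X_\nh}\Pi$) makes the cyclic Jacobiator of invariant functions a smooth function on all of $\M$, it vanishes on the free locus by Theorem~\ref{T:ExBall:Gauge}, and the free locus is open and dense, so it vanishes identically; this is legitimate since ``Poisson'' on the differential space means exactly the Jacobi identity on $C^\infty(\M)^G$, and there is no circularity because Theorem~\ref{T:ExBall:Gauge} is established independently. What your approach buys is a soft, reusable principle (whenever the gauge-transformed bivector extends smoothly across the singular set and the reduced bracket is Poisson on the dense regular stratum, the differential-space bracket is Poisson); what the paper's computation buys is an explicit, self-contained verification on the invariants which in addition produces the displayed bracket relations among the $p_i$. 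One small factual slip: the non-free locus of the $G$-action on $\M$ is not all of $\{(x,y)=(0,0)\}$ but the smaller set $\{x=y=0,\ p_x=p_y=0\}$ (points over the origin with nonzero momentum are free, which is why the regular stratum contains trajectories through the bottom of $\Sigma$); this does not affect your argument, since either description gives a closed set with dense complement, and vanishing of the Jacobiator on any dense subset suffices.
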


\begin{proof} Using that $p_0,p_1,p_2,p_3,p_4$, given in \eqref{invariant_p_i}, form a basis of $G$-invariant functions on $\M$,  we have to check that $cyclic [ \{p_i , \{ p_j, p_k\}_\red^\B \}_\red^\B ] = 0$ for all $i,j,k=0,...,4$. 
Following Prop.~\ref{Prop:JacB} we have that 
\begin{equation}
\label{jacobiator_pi}
cyclic [ \{p_i , \{ p_j, p_k\}_\red^\B \}_\red^\B ]\circ \rho = (d^\C \langle J , \mathcal{K}_\subW \rangle - d^\C B) (\pi_\B^\sharp (d \rho^* p_i), \pi_\B^\sharp (d \rho^* p_j), \pi_\B^\sharp (d \rho^* p_k) ),
\end{equation}
for $i,j,k=0,...,4$. First observe that $\pi_\B^\sharp(dJ_i)_{(m)} = -(\eta_i)_\subM(m)$ for all $m\in\M$ and $i=1,2$, and then $\{p_j, \bar{J}_i\}_\red^\B \circ \rho = \{\rho^*p_j, J_i\}_\B = 0$ for all $j=0,...,4$.
Now, since $\bar{J}_1$ and $\bar{J}_2$ depend linearly (on $Q/G$) on $p_3$ and $p_4$, it remains to compute (\ref{jacobiator_pi}) on $p_0$, $p_1$ and $p_2$.
From \eqref{Ex:Ball:dJK} and \eqref{B_bola_teorema} we have that 
$$
d^\C\langle J, \mathcal{K}_\subW\rangle = \Upsilon_1 \wedge \tilde \beta_n + \Psi \qquad \mbox{and} \qquad dB = \Upsilon_2 \wedge \tilde \beta_n + {\textstyle \frac{\Phi(x,y)}{I}} d \tilde M_n \wedge \tilde dx \wedge \tilde dy,
$$
where $\Upsilon_1$ and $\Upsilon_2$ are 2-forms on $\M$, $\Psi$ is the semi-basic 3-form defined in \eqref{Ex:Ball:dJK} and $\Phi$ is the function defined in Remark \ref{R:Ex:Ball:3form}.

On the one hand, the vector field $\pi_\B^\sharp (d \rho^* p_1) = 2 ( x \partial_{p_x} + y \partial_{p_y})$ lies in the kernels of $d M_n \wedge \tilde{dx} \wedge \tilde{dy}$ and $\Psi$.  On the other hand, from \eqref{invariant_p_i}, we see that $p_0, p_1$ and $p_2$ do not depend on the variable $M_n$ and thus, following \eqref{pi_B}, the vector fields $\pi_\B^\sharp(d\rho^*p_i)$ are independent of $\tilde X_n$ (i.e., $\tilde\beta_n (\pi_\B^\sharp(d\rho^*p_i)) = 0$), for $i=0,1,2$, arriving to the fact that 
$$
(d^\C \langle J , \mathcal{K}_\subW \rangle - d^\C B) (\pi_\B^\sharp (d \rho^* p_0), \pi_\B^\sharp (d \rho^* p_1), \pi_\B^\sharp (d \rho^* p_2) ) = 0.
$$ 

%Hence the reduced bracket $\{ \cdot , \cdot \}_\red^\B$ is Poisson on $\M / G$.\\
\end{proof}

%-------- (4)-------------------------------
For completeness let us compute the reduced bracket $\{ \cdot , \cdot \}_\red^\B$ on $\M / G$:  
\begin{equation*}
 \begin{split}
  \{p_0, p_1\}_\red^\B & = 4p_2, \qquad \{p_0, p_2\}_\red^\B = 2p_0 - 2 p_3p_4\frac{F_3}{R}, \qquad \{p_0, p_4\}_\red^\B = \frac{I}{E} 2  p_2 p_3 R {F_4}, \\
\{p_1, p_2\}_\red^\B & = -2p_1, \qquad \{p_2, p_3\}_\red^\B =  p_1 p_4 \frac{{F_3}}{R}, \qquad \{p_2, p_4\}_\red^\B = \frac{I}{E}  p_1 p_3 R {F_4},
 \end{split}
\end{equation*}
with other combinations being equal to zero.   We see that the Poisson structure on the singular stratum $M_1$ is trivial.

%------------------------------------------------------------------
%\subsubsection{Integrability}
%\label{Integrability_FGS_Ramos}
\noindent{\bf Integrability.}
%\textit{Study integrability of the system in the sense of Nicola and Arturo Ramos. There is nothing new to say but we can try to link their results with ours. }\\
As mentioned in the introduction, the nonholonomic system treated in this last example has been studied by several authors without reducing a bivector in $\M$ but instead working directly with the properties of the reduced dynamics in $\M / G$. Fass\`{o}, Giacobbe and Sansonetto \cite{FGS-05} used a specific dynamical property of the reduced system in order to find a Poisson bracket in $\M /G$. In fact, outside the equilibrium points, the orbits of the reduced dynamics are the fibers of a (locally trivial) fibration with fiber $S^1$ and the period of the flow is a continuous (and smooth) function of the initial data. Then, each symplectic leaf is densely filled by periodic orbits, which, from \cite{Hermans-95}, reconstruct to tori of dimension at most 3 in $\M$.


\begin{thebibliography}{99}

\setlength{\itemsep}{0cm}

\begin{small}

\bibitem{Bal-14} P. Balseiro; The Jacobiator of Nonholonomic Systems and the Geometry of Reduced Nonholonomic Brackets, {\it Archive for Rational Mechanics and Analysis}, Volume 214, Issue 2, (2014), 453-501.

\bibitem{Bal-16} P. Balseiro; Hamiltonization of solids of revolution through reduction, {\it Journal of Nonlinear Science}, Volume 27(3) (2017), 2001-2035.

\bibitem{Bal-15} P. Balseiro;  A Global Version of the Koon-Marsden Jacobiator Formula, {\it Geometry, Mechanics, and Dynamics, The Legacy of Jerry Marsden},  Fields Institute Communications, Volume 73 (2015),1-17.

\bibitem{Bal-Fer} P. Balseiro, O.E. Fernandez; Reduction of Nonholonomic Systems in Two Stages, {\it Nonlinearity}, Volume 28 (2015) 2873-2912.

\bibitem{PL2011} P. Balseiro, L. Garcia-Naranjo; Gauge Transformations, Twisted Poisson
Brackets and Hamiltonization of Nonholonomic Systems, {\it Archive for Rational Mechanics and Analysis}, {\bf 205}(1) (2012), 267-310.

\bibitem{Bal-San-2016} P. Balseiro, N. Sansonetto; A Geometric Characterization of Certain First Integrals for Nonholonomic Systems with Symmetries, {\it SIGMA} 12, 018 (2016), 14 pages.

\bibitem{BGM-96} L. Bates, H. Graumann, C. MacDonell; Examples of Gauge Conservations Laws in Nonholonomic Systems, {\it Reports on Mathematical Physics}, Vol. 37 (1996), 295-308.

\bibitem{BS-93} L. Bates and J. Sniatycki; Nonholonomic reduction,  \emph{Reports on Mathematical Physics}, Vol. 32 (1) (1993) 99-115.   

%\bibitem{BS-15} L. Bates and J. Sniatycki; Not-quite-Hamiltonian Reduction,  \emph{Reports on Mathematical Physics}, Vol. 76(1), (2015), 41-52.

% \bibitem{Bierstone-80} E. Bierstone. The structure of orbit spaces and the singularities of equivariant mappings, volume 35 of Monografias de matemática. Instituto de matemática pura e applicada, Rio de Janiero, 1980.

\bibitem{Bloch:Book} A.M. Bloch; {\it Nonholonomic Mechanics and Control}. Interdisciplinary Applied Mathematics, Springer-Verlag,
Berlin (2000).

\bibitem{BKMM-96} A.~M. Bloch, P.~S.Krishnapasad , J.~E. Marsden, R.~M. Murray;  Nonholonomic mechanical systems with symmetry, {\em Archive for Rational Mechanics and Analysis}, Vol. 136 (1996), 21--99.

\bibitem{Bol-Bor-Ma-11} A.V. Bolsinov, A.V. Borisov, I.S. Mamaev; Hamiltonisation of non-holonomic systems in the neighborhood of
invariant manifolds, {\it Regular and Chaotic Dynamics}, Vol. 116(5) (2011), 443-464.

\bibitem{Bol-Kil-Kaz-14} A.V. Bolsinov, A.A. Kilin, A.O. Kazakov; Topological monodromy as an obstruction to Hamiltonization of nonholonomic systems: Pro or contra?, {\it Journal of Geometry and Physics}, Vol. 87 (2014), 61-75.

\bibitem{BorMa2001}  A. V. Borisov,  I. S. Mamaev; Chaplygin's Ball Rolling Problem Is Hamiltonian, {\it Mathematical Notes}, Vol. 70(5) (2001), 720-723.

\bibitem{BorMa2002a} A. V. Borisov, I. S. Mamaev; On the history of the development of the nonholonomic dynamics, {\it Regular and Chaotic Dynamics}, Vol.7(1) (2002), 43–47.

\bibitem{BorMa2002b} A.V. Borisov, I. S. Mamaev; The Rolling Body Motion Of a Rigid Body on a Plane and a Sphere. Hierarchy of Dynamics, {\it Regular and Chaotic Dynamics}, Vol. 7(2) (2002), 177–200.

%\bibitem{BorMa2007} A. V. Borisov and I. S. Mamaev; Isomorphism and Hamiltonian Representation of Some Nonholonomic Systems. {\em Siberian Math. Journal}, {\bf 48}(1) (2007), 26--36.

\bibitem{BKM-02} A. V. Borisov, A. A. Kilin, I. S. Mamaev; Rolling of a ball on a surface. New integrals and hierarchy of dynamics, {\em Regular and Chaotic Dynamics}, Vol. 7(2) (2002), 201–219.

%\bibitem{BKM-02-Disk} A.V. Borisov, A.A. Kilin, I.S. Mamaev; Dynamics of rolling disk, {\em Regular and Chaotic Dynamics} Vol. 8(2) (2002), 201-212.


\bibitem{CdLMD} F. Cantrijn, M. De Le\'on, D. Mart\'\i n de Diego; On almost-Poisson structures in nonholonomic mechanics, \emph{Nonlinearity}, Vol. 12(3) (1999), 721.

%\bibitem{CdLMD-02} F. Cantrijn, M. De Le\'on, D. Mart\'\i n de Diego; On the geometry of generalized Chaplygin systems. \emph{Mathematical Proceedings of the Cambridge Philosophical Society}, Vol. 132 (2002), 323–351.


%\bibitem{CMR} H. Cendra, J. E. Marsden , T.S. Ratiu; Lagrangian Reduction by Stages, {\em Memoirs of the American Mathematical Society} 152, AMS, (2001).

\bibitem{Chap-03} S.A. Chaplygin; On a ball’s rolling on a horizontal plane, {\it Regular and Chaotic Dynamics}, Vol. 7(2) (2002), 131-148; original paper in {\it Mathematical Collection of the Moscow Mathematical Society}, Vol. 24 (1903), 139-168.

\bibitem{Chaplygin} S. A. Chaplygin; On the theory of motion of nonholonomic systems. Theorem on the reducing multiplier, {\em Mat. Sbornik}, Vol. 28(2) (1911), 303--314 (in Russian); {\em Regular and Chaotic Dynamics}, Vol. 13(4) (2008), 369-376 (in English).

\bibitem{Cortes-2002} J. Cort\'{e}s Monforte; {\it Geometric, Control and Numerical aspects of non-holonomic systems}, Springer-Verlag (2002).

\bibitem{Cortes-Martinez-01} J. Cortes, S. Martinez; Nonholonomic integrators, {\it Nonlinearity}, Vol.14 (2001), 1365-1392.

\bibitem{Cushman-RouthSph} R. Cushman; Routh’s sphere, {\em  Reports on Mathematical Physics}, Vol.42(1–2) (1998), 47-70.

\bibitem{Cushman-Bates-15}  R. Cushman, L. Bates; {\it Global aspects of classical integrable systems},  Birkh\"auser, 2nd edition (2015).

\bibitem{CDS-2010} R. Cushman, J. Duistermaat, J. Sniatycki; {\it Geometry of nonholonomically constrained systems}. World Scientific, Singapore (2010).

\bibitem{DuisKo2000:Book}  J. J. Duistermaat, J. A. C. Kolk; {\it Lie Groups},  Universitext, Springer-Verlag, Berlin (2000).

\bibitem{EhlersKoiller} K. Ehlers, J. Koiller, R. Montgomery  and P.~M. Rios;  Nonholonomic  systems via moving frames: Cartan equivalence and Chaplygin Hamiltonization.   {\em The breath of Symplectic and Poisson Geometry, Progress in  Mathematics},  Vol. 232 (2005),  75--120.

\bibitem{FGS-05} F. Fasso, A. Giacobbe, N. Sansonetto; Periodic flows, rank-two Poisson structures, and nonholonomic mechanics, {\em Regular and Chaotic Dynamics}, Vol.10(3) (2005), 267–284.

\bibitem{FGS-08} F. Fasso, A. Giacobbe, N. Sansonetto; Gauge conservation laws and the momentum equation in nonholonomic mechanics, {\em  Reports on Mathematical Physics}, Vol.62(3) (2008), 345-367.

\bibitem{FGS-2012} F. Fasso , A. Giacobbe, N. Sansonetto; Linear weakly Noetherian constants of motion are horizontal gauge
momenta, {\it Journal of Geometric Mechanics}, Vol. 4 (2012), 129-136.

%\bibitem{FRS-07} F. Fasso, A. Ramos, N. Sansonetto, The reaction-annihilator distribution and the nonholonomic Noether theorem for lifted actions, {\em Regul. Chaotic Dyn}. 12 (2007), 579–588.

\bibitem{FedorovJova-2004}  Y.~N.~Fedorov, B.~Jovanovic; Nonholonomic LR Systems as Generalized Chaplygin Systems with an Invariant Measure and Flows on Homogeneous Spaces. {\em  Journal of Nonlinear Science}, Vol. 14(4) (2004),  341-381.

%\bibitem{Fernandez2} O.E. Fernandez and A. Bloch, Equivalence of the dynamics of nonholonomic and variational nonholonomic systems for certain initial data. {\em J. Phys. A: Math. Theor.},  vol. 41, no. 34, (2008), 344005.

%\bibitem{Fernandez} O.E. Fernandez, T. Mestdag and A. Bloch, A generalization of Chaplygin's reducibility Theorem \emph{Regul. Chaotic Dyn.} {\bf 14} (2009), %no. 6, 635--655.

\bibitem{Gal-17} J. Gallier, {\it Notes on Differential Geometry and Lie Groups}, Preprint (2016).

\bibitem{GNThesis} L.~C.~Garc\'\i a-Naranjo; Almost Poisson brackets for nonholonomic systems on Lie
groups, Ph.D. dissertation, University of Arizona (2007).

\bibitem{Naranjo2008} L.~C.~Garc\'\i a-Naranjo; Reduction of  almost Poisson brackets and hamiltonization of the Chaplygin sphere, {\em Discrete \& Continuous Dynamical Systems. Series S}, Vol.3(1) (2010), 37--60.

\bibitem{LGN-JM-16} L.C. Garc\'ia-Naranjo, J. Montaldi; Gauge momenta as Casimir functions of nonholonomic systems, {\it Archive for Rational Mechanics Analysis}, Vol. 228(2) (2018), 563-602.

\bibitem{LGN-Ma-18}L.C. Garc\'\i a-Naranjo, J.C. Marrero; The geometry of nonholonomic Chaplygin systems revisited, {\it Preprint}. %https://arxiv.org/abs/1812.01422

\bibitem{LGN-18} L. C. Garc\'\i a-Naranjo; Generalisation of Chaplygin’s Reducing Multiplier Theorem with an application to multi-dimensional nonholonomic dynamics, {\it Preprint}. %https://arxiv.org/pdf/1805.06393.pdf

\bibitem{Hall-Leok-2015} J. Hall,  M. Leok; Spectral Variational Integrators, {\it Numerische Mathematik}, Vol. 130(4) (2015), 681–740.

\bibitem{Hall-Leok-2017} J. Hall, M. Leok; Lie Group Spectral Variational Integrators, {\it Foundations of Computational Mathematics}, Vol. 17(1) (2017), 199-257.

\bibitem{Hermans-95}J. Hermans; A symmetric sphere rolling on a surface, {\em Nonlinearity}, Vol. 8(4) (1995),  493-515.

%\bibitem{Hoch} S. Hochgerner  and L.~C. Garc\'ia-Naranjo,  $G$-Chaplygin systems with internal symmetries, truncation, and an (almost) symplectic view of Chaplygin's ball. {\em J. Geom. Mech.} {\bf 1} (2009), %no.1,
% 35--53.

\bibitem{IdLMM} A. Ibort, M. de Le\'on, J. C. Marrero, D. Mart\'\i n de Diego; Dirac brackets in constrained dynamics,   \emph{Fortschritte der Physik}, Vol.47 (1999), 459--492.

\bibitem{Jost-2011} J. Jost, {\it Riemannian Geometry and Geometric Analysis}, Springer-Verlag. Sixth edition (2011).

\bibitem{Jova-2003} B. Jovanovic; Some multidimensional integrable cases of nonholonomic rigid body dynamics, {\it Regular and Chaotic Dynamics}, Vol.8(1) (2003), 125-132.

\bibitem{Jova-2010}  B. Jovanovi{\'c}; Hamiltonization and integrability of the Chaplygin sphere in $\R^n$.  \emph{Journal of Nonlinear Science}, Vol.20(5) (2010), 569--593.

\bibitem{KlimStro-2002} C. Klimˇ\v{c}\'ik, T. Str\"{o}bl; WZW-Poisson manifolds. \emph{Journal of Geometry and Physics}, Vol.43 (2002), 341–-344.

\bibitem{Knapp-2002} A.W. Knapp; {\it Lie groups beyond an introduction}, Progress in Mathematics, Vol. 140, Birh\"auser (2002).

\bibitem{Koiller1992} J. Koiller; Reduction of some classical nonholonomic systems with symmetry, \emph{Archive for Rational Mechanics and Analysis},  Vol. 118  (1992), 113--148.

%\bibitem{MovingFrames} J. Koiller, P. P. M. Rios, K. M. Ehlers, Moving Frames for Cotangent Bundles. {\it Reports on Mathematical Physics} Vol. 49 No. 2/3 (2002), 225-238.

%\bibitem{KoMa1997} W.S. Koon and J.E. Marsden. The {H}amiltonian and {L}agrangian approaches to the dynamics of   nonholonomic systems, {\em Reports on Math Phys.} {\bf 40} (1997), 21--62.

%\bibitem{KoMa1998} W.S. Koon and J.E. Marsden; The Poisson reduction of nonholonomic mechanical systems, {\em Reports on Math Phys.} {\bf 42} (1998), 101--134.

\bibitem{Kozlov-2002} V.V. Kozlov; On the integration theory of equations of nonholonomic mechanics, {\it Regular and Chaotic Dynamics}, Vol.7(2) (2002), 161-176.

%\bibitem{Bavo} B. Langerock, T. Mestdag, J. Vankerschaver; Routh Reduction by Stages, {\em Symmetry, Integrability and Geometry: Methods and Applications (SIGMA)},  {\bf 7}, (2011), 109-140.

%\bibitem{Bavo} B. Langerock, Mestdag, T.,  Vankerschaver, J.. Routh reduction by stages. SIGMA, 7(109),  (2011).

\bibitem{dL-MdD-SM-04} M. de Leon, D. Martin de Diego, A. Santamaria-Merino; Geometric integrators and nonholonomic mechanics, {\it Journal of Mathematical Physics}, Vol. 45(3) (2004), 1042-1064.

\bibitem{Marle} C. M. Marle; Various approaches to conservative and nonconservative nonholonomic systems, \emph{Reports on Mathematical Physics}, Vol. 42 (1998), 211–229.

\bibitem{Marle-03} C.~M.~Marle; On symmetries and constants of motion in Hamiltonian systems with nonholonomic constraints, {\em Classical and quantum integrability, Banach Center Publications}, Vol. 59 (2003), 223--242.


%\bibitem{MMORR}  J.E.~Marsden, G.~Misiolek, J.-P.~Ortega, M.~Perlmutter, T.S.~Ratiu, {\em Hamiltonian Reduction by Stages}. Lecture Notes in Mathematics, Springer Verlag,  Vol. 1913, (2003).

%\bibitem{MarsdenRatiu} J.~E.~Marsden, T.~S.~Ratiu, {\em Introduction to Mechanics and Symmetries}, Texts in Applied Mathematics, vol. 17, Springer-Verlag (1994).

%\bibitem{Marsden2000} J.~E.~Marsden, T.~S.~Ratiu, J.~Scheurle, {\em Reduction Theory and the Lagrange-Routh Equations}, J. Math. Phys. 41, (2000), 3379-3429.

%\bibitem{MW} J.~E.~Marsden, A.~Weinstein {\em Reduction of symplectic manifolds with symmetry}, Reports on Math. Phys., 5, (1974), 121-130.

%\bibitem{Co} J. C. Monforte, {\em Geometric, Control and Numerical Aspects of Nonholonomic Systems}. Lecture Notes in Mathematics, Springer Verlag, Vol. 1793, (2002).

\bibitem{OhsawaBloch-09} T. Ohsawa, A. M. Bloch; Nonholonomic Hamilton–Jacobi equation and integrability, {\it Journal of Geometric Mechanics}, Vol. 1 (2009), 461–481.

\bibitem{Oscar-11} T. Ohsawa, O. E. Fernandez, A. M. Bloch, D. V. Zenkov;  Nonholonomic Hamilton-Jacobi Theory via Chaplygin Hamiltonization, {\em Journal of Geometry and Physics}, Vol. 61(8) (2011), 1263--1291.

\bibitem{Ramos-04} A. Ramos; Poisson structures for reduced non–holonomic systems, {\em Journal of Physics A},  Vol. 37 (2004), 4821-4842.

\bibitem{Routh-55} E.J. Routh; {\it Treatise on the Dynamics of a System of Rigid Bodies (Advanced Part)}, Dover, New York (1955).

\bibitem{SeveraWeinstein} P. \v{S}evera, A. Weinstein; Poisson Geometry with a 3-form Background, {\it Progress of Theoretical Physics}, Vol. 144 (2001), 145-154.

\bibitem{SchaftMaschke1994} A. J. van der Schaft, B. M. Maschke; On the Hamiltonian formulation of nonholonomic mechanical systems, {\it Reports on Mathematical Physics}, Vol. 34 (1994), 225--233.

\bibitem{Sniatycki-03} J. Sniatycki; Orbits of families of vector fields on subcartesian spaces, {\it Annales de l'Institut Fourier}, Vol. 53  (2003), 2257-2296.

\bibitem{Sniatycki-13} J. Sniatycki; {\it Differential geometry of singular spaces and reduction of symmetries}, Cambridge University Press, Cambridge (2013).

%\bibitem{Stanchenko} S.V. Stanchenko, Non-holonomic Chaplygin systems, {\em J. of Applied Mathematics and Mechanics} 53 (1) (1989) 11--17.

%\bibitem{Vaisman-94} I. Vaisman. Lectures on the geometry of Poisson manifolds. Basel: Birkhäuser. 1994.

\bibitem{Veselov-a-1988} A.P. Veselov, L.E. Veselova; Integrable nonholonomic systems on Lie groups, {\it Mat. Zametki}, Vol. 44(5) (1988), 604–619  (in Russian); {\it Mathematical Notes}, Vol.44(5-6) (1989), 810-819 (in English).

\bibitem{Zenkov-95} D. V. Zenkov; The geometry of the Routh problem, {\em Journal of Nonlinear Science}, Vol.5 (1995), 503-519.

\end{small}

\end{thebibliography}
\end{document}